\newif\ifllncs  %
\newif\ifsubmit %
\newif\ifieeecs %
\newif\ifexabs  %
  \newtheorem{theorem}{Theorem}
  \newtheorem{theorem}{Theorem}
  \newtheorem{theorem}{Theorem}[section]
\newtheorem{definition}[theorem]{Definition}
\newtheorem{remark}[theorem]{Remark}
\newtheorem{lemma}[theorem]{Lemma}
\newtheorem{corollary}[theorem]{Corollary}
\newtheorem{proposition}[theorem]{Proposition}
\newtheorem{fact}[theorem]{Fact}
\newtheorem{openproblem}[theorem]{Open Problem}
\newtheorem*{theorem*}{Theorem}
\newtheorem*{lemma*}{Lemma}
\newtheorem*{remark*}{Remark}
\newenvironment{customtheorem}[1]
  {\innercustomtheorem}
  {\endinnercustomtheorem}
\newenvironment{customlemma}[1]
  {\innercustomlemma}
  {\endinnercustomlemma}
\pgfplotsset{compat=1.14}
    \NewDocumentCommand{\whiten}{ m }
    {
      \int_step_function:nnnN {1}{1}{#1} \white_text:n
    }
  \NewDocumentCommand{ \varul }{ D<>{5} O{0.2ex} O{0.1ex} +m } {%
    \begingroup
    \setul{#2}{#3}%
    \def\SOUL@uleverysyllable{%
      \setbox0=\hbox{\the\SOUL@syllable}%
      \ifdim\dp0>\z@
      \SOUL@ulunderline{\phantom{\the\SOUL@syllable}}%
      \whiten{#1}%
      \llap{%
        \the\SOUL@syllable
        \SOUL@setkern\SOUL@charkern
      }%
      \else
      \SOUL@ulunderline{%
        \the\SOUL@syllable
        \SOUL@setkern\SOUL@charkern
      }%
      \fi}%
    \ul{#4}%
    \endgroup
  }
\newcommand{\equad}{\mathrel{\phantom{=}}}
\newcommand{\E}{\mathop{\mathbb{E}}}
\newcommand{\Z}{\mathbb{Z}}
\newcommand{\As}{\mathcal{A}}
\newcommand{\Bs}{\mathcal{B}}
\newcommand{\Ds}{\mathcal{D}}
\newcommand{\owf}{{\sf OWF}}
\newcommand{\owfy}{{{\sf OWF}_y}}
\newcommand{\yb}{{\sf YB}}
\newcommand{\gyb}{{\sf YB0}}
\newcommand{\prg}{{\sf PRG}}
\newcommand{\crh}{{\sf CRH}}
\newcommand{\samp}{{\sf Samp}}
\newcommand{\query}{{\sf Query}}
\newcommand{\ver}{{\sf Ver}}
\newcommand{\ch}{{\sf ch}}
\newcommand{\ans}{{\sf ans}}
\renewcommand{\st}{{\sf st}}
\newcommand{\sto}{{\sf StO}}
\newcommand{\pho}{{\sf PhO}}
\newcommand{\csto}{{\sf CStO}}
\newcommand{\stddecomp}{{\sf StdDecomp}}
\newcommand{\aux}{{\sf aux}}
\newcommand{\err}{{\sf err}}
\newcommand{\vub}{{\mathbf{B}}}
\newcommand{\vux}{{\mathbf{X}}}
\newcommand{\vx}{{\mathbf{x}}}
\newcommand{\vuy}{{\mathbf{Y}}}
\newcommand{\vy}{{\mathbf{y}}}
\newcommand{\eps}{\varepsilon}
\newcommand{\Gmi}[1]{G^{\otimes #1}}
\newcommand{\Cmi}[1]{C^{\otimes #1}}
\newcommand{\poly}{{\sf poly}}
\newcommand{\negl}{{\sf negl}}
\DeclareMathOperator{\Tr}{Tr}
\newcommand{\innerprod}[1]{{\left\langle#1\right\rangle}}
\newcommand{\abs}[1]{{\left\lvert#1\right\rvert}}
\newcommand{\ltwonorm}[1]{{\abs{#1}_2}}
\newcommand{\norm}[1]{{\left\lVert#1\right\rVert}}
\newcommand{\trdist}[1]{{\norm{#1}_{\operatorname{tr}}}}
\newcommand{\qpmwinit}{{\sf QPMW.Init}}
\newcommand{\qpmwest}{{\sf QPMW.Est}}
    \newcommand{\luowen}[1]{}
    \newcommand{\qipeng}[1]{}
    \newcommand{\siyao}[1]{}
    \newcommand{\KM}[1]{}
    \newcommand{\luowen}[1]{{\color{magenta} Luowen: #1}}
    \newcommand{\qipeng}[1]{{\color{red} Qipeng: #1}}
    \newcommand{\siyao}[1]{{\color{blue} Siyao: #1}}
    \newcommand{\KM}[1]{{\color{green} KM: #1}}
\newcommand{\good}{{\sf good}}
\begin{document}

\ifexabs
\title{Tight Quantum Time-Space Tradeoffs for Function Inversion\footnote{This is an extended abstract. For more details, please refer to either the longer version of the extended abstract published in FOCS 2020, or the full version available online.}}
\else
\ifieeecs\else
\pagenumbering{gobble} %
\fi

\title{Tight Quantum Time-Space Tradeoffs for Function Inversion}
\fi
\newcommand{\email}[1]{\href{mailto:#1}{\texttt{#1}}}
\ifieeecs
  \author{\IEEEauthorblockN{Kai-Min Chung\IEEEauthorrefmark{1},
      Siyao Guo\IEEEauthorrefmark{2},
      Qipeng Liu\IEEEauthorrefmark{3} and
      Luowen Qian\IEEEauthorrefmark{4}}
    \IEEEauthorblockA{\IEEEauthorrefmark{1}Institute of Information Science, Academia Sinica\\
      Taipei, Taiwan\\
      \email{kmchung@iis.sinica.edu.tw}}
    \IEEEauthorblockA{\IEEEauthorrefmark{2} Department of Computer Science, New York University Shanghai\\
      Shanghai, China\\
      \email{siyao.guo@nyu.edu}}
    \IEEEauthorblockA{\IEEEauthorrefmark{3}
      Department of Computer Science, Princeton University \& NTT Research\\
      Princeton, USA \\
      \email{qipengl@cs.princeton.edu}}
    \IEEEauthorblockA{\IEEEauthorrefmark{4}Department of Computer Science, Boston University\\
      Boston, USA\\
      \email{luowenq@bu.edu}}}
\else
\author{
    Kai-Min Chung\footnote{Academia Sinica. Email: \email{kmchung@iis.sinica.edu.tw}}
    \and
    Siyao Guo\footnote{New York University Shanghai. Email: \email{siyao.guo@nyu.edu}}
    \and
    Qipeng Liu\footnote{Princeton University \& NTT Research. Email: \email{qipengl@cs.princeton.edu}}
    \and
    Luowen Qian\footnote{Boston University. Email: \email{luowenq@bu.edu}}
}
\date{}
\fi

\maketitle

\ifexabs
  \let\subsubsection\subsection
  \let\subsection\section
\else
\begin{abstract}
In function inversion, we are given a function $f: [N] \mapsto [N]$, and want to prepare some advice of size $S$, such that we can efficiently invert any image in time $T$. 
This is a well studied problem with profound connections to cryptography, data structures, communication complexity, and circuit lower bounds.
Investigation of this problem in the quantum setting was initiated by Nayebi, Aaronson, Belovs, and Trevisan (2015), who proved a lower bound of $ST^2 = \tilde\Omega(N)$ for random permutations against classical advice, leaving open an intriguing possibility that Grover's search can be sped up to time $\tilde O(\sqrt{N/S})$. %
Recent works by Hhan, Xagawa, and Yamakawa (2019), and Chung, Liao, and Qian (2019) extended the argument for random functions and quantum advice, but the lower bound remains $ST^2 = \tilde\Omega(N)$. 

In this work, we prove that even with quantum advice, $ST + T^2 = \tilde\Omega(N)$ is required for an algorithm to invert random
functions. %
This demonstrates that Grover's search is optimal for $S = \tilde O(\sqrt{N})$, ruling out any substantial speed-up for Grover's search even with quantum advice.
Further improvements to our bounds would imply new classical circuit lower bounds, as shown by Corrigan-Gibbs and Kogan (2019).

To prove this result, we develop a general framework for establishing quantum time-space lower bounds.
We further demonstrate the power of our framework by proving the following results.
\begin{itemize}
    \item Yao's box problem: We prove a tight quantum time-space lower bound for classical advice. %
          For quantum advice, we prove a first time-space lower bound using shadow tomography. %
          These results resolve two open problems posted by 
          Nayebi et al
           (2015).
    \item Salted cryptography: We show that ``salting generically provably defeats preprocessing,'' a result shown by Coretti, Dodis, Guo, and Steinberger (2018), also holds in the quantum setting.
          In particular, we prove quantum time-space lower bounds for a wide class of salted cryptographic primitives in the quantum random oracle model.
          This yields the first quantum time-space lower bound for salted collision-finding, which in turn implies that $\mathsf{PWPP}^{\mathcal O} \not\subseteq \mathsf{FBQP}^{\mathcal O}\mathsf{/qpoly}$ relative to a random oracle $\mathcal O$.%
\end{itemize}

\end{abstract}

\ifieeecs
  \begin{IEEEkeywords}
    time-space tradeoffs; quantum computation; quantum query complexity; quantum advice; post-quantum cryptography; function inversion
  \end{IEEEkeywords}

  \let\paragraph\subsubsection
  \let\subsubsection\subsection
  \let\subsection\section
\else
  \newpage
  \tableofcontents
  \newpage
  \pagenumbering{arabic}
  
  \section{Introduction}
\fi
\fi

\subsection{Function Inversion}
\label{sec:intro-owf}

The task of function inversion asks that given a function $f: [N] \mapsto [N]$ and a point $y$, find an $x$ such that $f(x) = y$.
\ifexabs\else
It is easy to show that a classical inversion algorithm requires $\Omega(N)$ queries (which trivially lower bounds time) to succeed with constant probability in inverting a random function $f$, even with the help of randomness.
Grover~\cite{grover1996fast} considered the same problem in the context of database search, and showed that quantum computers can invert any function in time only $\tilde O(\sqrt N)$, which was subsequently shown to be tight~\cite{bennett1997strengths}.

\fi
The situation becomes intriguing when preprocessing is allowed.
Namely, we allow the algorithm to take the entire truth table of $f$ and arbitrarily preprocess an $S$-bit advice string $\alpha = \alpha(f)$, and then we give the algorithm a random $f(x)$ and ask the algorithm to invert it using at most $T$ queries.
Understanding the tradeoff between $S$ and $T$ is referred to as time-space tradeoffs for function inversion.
This tradeoff is an important problem in cryptography.
Recent works~\cite{GGHPV19,KP19,corrigan2019function} showed its connections to well studied problems in data structures, communication complexity, and circuit lower bounds.

\ifexabs
Investigation of this problem in the quantum setting was initiated by Nayebi, Aaronson, Belovs, and Trevisan~\cite{NABT15}, who proved a lower bound of $ST^2 = \tilde\Omega(N)$ for random permutations against classical advice, leaving open an intriguing possibility that Grover's search can be sped up to time $\tilde O(\sqrt{N/S})$.
Recent works by Hhan, Xagawa, and Yamakawa~\cite{hxy19}, and Chung, Liao, and Qian~\cite{chung2019lower} extended the argument for random functions and quantum advice, but the lower bound remains $ST^2 = \tilde\Omega(N)$.
\else
For classical algorithms, the heuristic algorithm proposed by Hellman~\cite{hellman1980cryptanalytic}, and subsequently rigorously analyzed by Fiat and Naor~\cite{FN99}, uses $S$ bits of advice and $T$ queries to invert a random function with high probability for every $S,T$ satisfying $S^2T \geq \tilde{O}(N^2)$.
For the lower bound, Yao~\cite{yao1990coherent} and De et al.~\cite{DTT10} proved that any preprocessing algorithm that uses $S$ bits of advice and $T$ queries must satisfy $ST = \tilde{\Omega}(N)$, which remains the best general lower bound we know today.
Corrigan-Gibbs and Kogan~\cite{corrigan2019function} recently investigated possible improvements on the lower bound, and showed that any improvements on Yao's lower bound will lead to improved circuit lower bounds. 

The study of time-space tradeoffs in the quantum setting was initiated by Nayebi, Aaronson, Belovs and Trevisan~\cite{NABT15}.
When the preprocessing algorithm is quantum, it is natural to distinguish the cases of quantum versus classical advice, as analogous to the complexity classes of $\mathsf{BQP}/\mathsf{qpoly}$ versus $\mathsf{BQP}/\mathsf{poly}$ and $\mathsf{QMA}$ versus $\mathsf{QCMA}$. Nayebi et al.~\cite{NABT15} showed that any quantum preprocessing algorithm that uses $S$ bits of \emph{classical} advice and $T$ queries to invert a random permutation with a constant probability must satisfy $ST^2 = \tilde{\Omega}(N)$. Note that the $T^2$ term is necessary given Grover's search algorithm. Recently, with motivations from post-quantum cryptography, Hhan, Xagawa, and Yamakawa~\cite{hxy19} extended the lower bound to handle general function inversion (and other cryptographic primitives).
For the more challenging case of algorithms with \emph{quantum} advice, Hhan el al.~\cite{hxy19} and Chung et al.~\cite{chung2019lower} proved lower bounds for inverting random permutation and a restricted class of random functions, specifically, the lower bound only holds for functions with roughly the same domain and image size. However, these lower bounds remain $ST^2 = \tilde{\Omega}(N)$. As pointed out by Nayebi et al.~\cite{NABT15}, this leaves open the following intriguing possibility:  
\begin{quote}
 {\it Could a piece of preprocessed advice help speed up Grover's search algorithm?
 }
\end{quote}
\fi

In this work, we prove the following quantum time-space lower bound for function inversion, which shows that even quantum advice of size $S = O(\sqrt{N})$ does not help speed up Grover's search algorithm.
\begin{theorem}\label{thm:owf}
 Let $f: [N] \mapsto [N]$ be a random function. For any quantum oracle algorithm $\As$ with $S$-qubit oracle-dependent advice $\alpha = \alpha(f)$ and $T$ queries to $f$, 
 \begin{itemize}
     \item if $\alpha$ is classical (i.e. an $S$-bit string), then
     $$\Pr\Big[\As^f(\alpha,f(x))\in f^{-1}(f(x))\Big] = \tilde{O} \left(\frac{ST+T^2}{N}\right);$$
     \item if $\alpha$ is quantum (i.e. the general case), then
       $$\Pr\Big[\As^f(\alpha,f(x))\in f^{-1}(f(x))\Big] = \tilde{O} \left(\sqrt[3]{\frac{ST+T^2}{N}}\right),$$
 \end{itemize}
where both probabilities are over $f$, a uniformly random $x$ from $[N]$, and randomness of $\As$. 
\end{theorem}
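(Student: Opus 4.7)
The plan is a two-step reduction: first, translate the non-uniform advice-aided adversary into an equivalent advice-free adversary querying a ``bit-fixed'' version of $f$; second, prove a Grover-style lower bound against arbitrary bit-fixed oracles.

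\textbf{Step 1: presampling.} I would establish a quantum analog of the classical presampling lemma of Coretti, Dodis, Guo, and Steinberger, stating that any algorithm $\As$ with $S$ bits of classical advice $\alpha(f)$ and $T$ quantum queries to $f$ can be simulated, up to additive error $\delta$ in output distribution, by an advice-free algorithm that makes $T$ quantum queries to a $P$-bit-fixed version of $f$, where $P = \tilde O(ST/\delta)$; here, $P$ coordinates of $f$ are adversarially fixed and the remaining coordinates are uniformly random. Since a quantum query algorithm's output distribution is a deterministic functional of the underlying classical oracle, one can lift the CDGS argument directly by working at the level of the joint distribution over $(f, \alpha(f))$, possibly combined with Zhandry's compressed oracle to cleanly track the effect of bit-fixing under quantum queries.

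\textbf{Step 2: bit-fixed lower bound.} In the $P$-bit-fixed model, on a uniformly random target $y = f(x)$, either $y$ is one of the $P$ fixed images, which happens with probability at most $O(P/N)$ and allows the adversary to succeed for free; or $y$ lies in the uniform portion, where inverting it reduces to Grover search over a random function on at least $N - P$ coordinates, which by the BBBV lower bound succeeds with probability at most $O(T^2/N)$. Combining and choosing $\delta = \Theta(\epsilon)$ gives the classical-advice bound $\epsilon \leq \tilde O((ST + T^2)/N)$.

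For the quantum-advice case, the analogous presampling lemma can only be proved with a quadratic loss of the form $P = \tilde O(ST/\delta^2)$, and re-optimizing $\delta$ in the Grover bound of Step 2 then yields the cube-root dependence $\epsilon \leq \tilde O(((ST + T^2)/N)^{1/3})$. The main obstacle is proving this quantum-advice presampling lemma: a quantum advice state has no classical ``query distribution'' on which to iteratively condition, so CDGS's heavy-query fixing does not lift directly. A natural route is to emulate the adversary by performing a sequence of shadow-tomography-style gentle measurements on copies of $\alpha$ to extract classical side-information about its query behavior, then invoke the classical presampling lemma on that extracted information; the gentle-measurement cost of each extraction is precisely what inflates the loss from $1/\delta$ to $1/\delta^2$, and hence the final bound from linear to cube-root in $\epsilon$.
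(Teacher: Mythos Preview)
Your proposal takes a fundamentally different route from the paper. The paper does not use presampling or bit-fixing; it introduces a \emph{multi-instance game} framework. Roughly: (i) if an $(S,T)$ adversary inverts with probability $\delta$, then running it on $g$ independently sampled challenges \emph{in sequence} and replacing the $S$-bit advice by a uniform guess yields an advice-free adversary winning the $g$-fold game with probability $\gtrsim 2^{-S}\delta^g$; (ii) via Zhandry's compressed oracle together with a new indistinguishability lemma (random $y$ versus random $f(x)$), any advice-free adversary wins the $g$-fold \emph{sequential} game with probability at most $O((gT+T^2)/N)^g$; (iii) taking $g \approx S$ gives the bound. The sequential structure is essential: the paper shows that if all $g$ challenges are given up front, an iterated-Grover strategy achieves $(gT^2/N)^g$, which would only recover $ST^2 = \tilde\Omega(N)$. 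For quantum advice, the paper exploits that inversion is publicly verifiable: run $k = \tilde O(1/\delta)$ copies of the advice, verify each candidate with one extra query, and apply gentle measurement to reuse the advice across rounds; the factor-$k$ blowup in both $S$ and $T$ is what produces the cube root.

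The gap in your proposal is Step~1. The assertion that CDGS ``lifts directly'' to quantum online queries with $P = \tilde O(ST/\delta)$ is unjustified, and this is exactly where the difficulty sits. CDGS first decomposes $(f,\alpha)$ into a mixture of dense sources and then argues that a $T$-\emph{classical}-query algorithm cannot distinguish a dense source from a bit-fixed one; this second step is query-model-dependent. The standard hybrid argument for superposition queries (BBBV-style) incurs an extra factor of $T$ when translating per-coordinate closeness into output closeness, so the naive adaptation yields $P = \tilde O(ST^2/\delta)$ and hence only the old $ST^2$ bound---the same barrier the paper explicitly identifies for parallel multi-instance problems. Obtaining the sharper parameter for quantum queries is a genuine result in its own right, not a corollary. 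Your quantum-advice version is more speculative still: the shadow-tomography extraction you sketch does not obviously produce a bit-fixing decomposition, and the $1/\delta^2$ loss is asserted rather than derived. So while the outline is plausible, the theorem's content is hiding inside the unproved presampling lemmas; the paper's multi-instance route bypasses presampling entirely and places the work in bounding the sequential game.
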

Our lower bound implies that for a quantum preprocessing algorithm to invert a random function with a constant probability, it must satisfy $ST + T^2 = \tilde{\Omega}(N)$ even for the case of quantum advice.
This further shows that Grover's search is optimal for $S = \tilde O(\sqrt{N})$, ruling out any substantial speed-up for Grover's search even with quantum advice.
For $S = \tilde\omega(\sqrt{N})$, our lower bound matches Yao's lower bound for classical algorithms and is the best provable lower bound in light of the above-mentioned barrier results of~\cite{corrigan2019function}.

Furthermore, in the context of cryptography, a typical complexity measure to define the security is $S+T$ (corresponding to the program length and the running-time), with respect to which our lower bound implies a lower bound $S+T \geq \tilde\Omega(\sqrt{N})$. This matches Grover's search algorithm and gives a tight characterization for function inversion.

A comparison of our bounds with known upper and lower bounds is included in  Figure~\ref{fig:owf-bounds}.  We remark that our lower bounds can be extended to general functions $f:[N]\mapsto[M]$
\ifexabs\else\ifieeecs\else(see \Cref{thm:owfaiqrom} and \Cref{thm:owfqaiqrom})\fi\fi.%

\begin{figure}
    \centering
    \ifieeecs
      \input{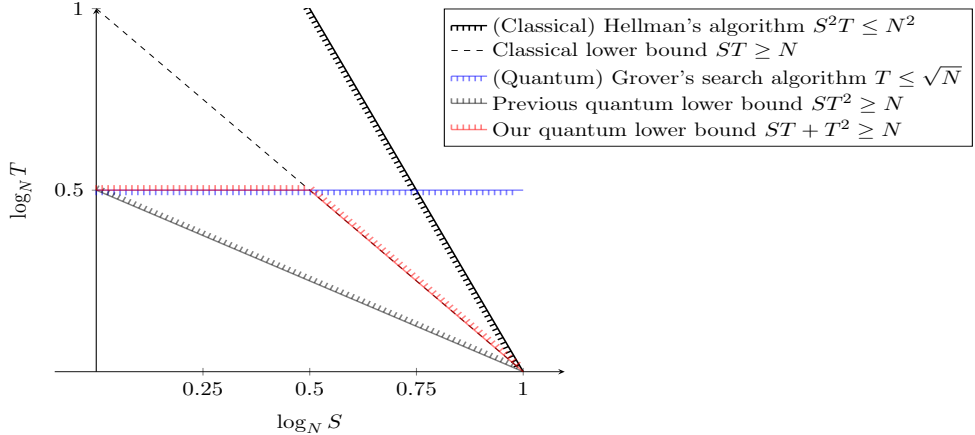}
    \else
      \resizebox{13cm}{5cm}{%
\begin{tikzpicture}[baseline]
\pgfplotsset{IneqStyleLess/.style = {%
    decoration={border,segment length=1mm, amplitude=1mm,angle=-90},
    postaction={decorate,draw},
    thick
  }
}

\pgfplotsset{IneqStyleGtr/.style = {%
    decoration={border,segment length=1mm, amplitude=1mm,angle=90},
    postaction={decorate,draw},
    thick
  }
}

\begin{axis}[
width=4in,
axis lines=middle,
axis equal,
xmin=0,xmax=1,
ymin=0,ymax=1,
domain=0:1,
xlabel=$\log_N S$,ylabel=$\log_N T$,
x label style={at={(axis description cs:.5,-.10)},anchor=north},
y label style={at={(axis description cs:-.03,.5)},rotate=90,anchor=south},
xtick={0, .25, .5, .75, 1},
ytick={0, .5, 1},
yticklabel={\ifdim\tick pt=0pt \else\pgfmathprintnumber{\tick}\fi},
legend style={at={(1.8,1)},cells={align=left}},
legend cell align=left,
clip=true,
]
\addplot[mark=none,IneqStyleLess,black!100!white]{2-2*x};
\addlegendentry{(Classical) Hellman's algorithm $S^2 T \le N^2$}
\addplot[mark=none,draw=black,dashed]{1-x};
\addlegendentry{Classical lower bound $ST \geq N$}
\addplot[mark=none,IneqStyleLess,blue!100!white,opacity=0.5]{0.5};
\addlegendentry{(Quantum) Grover's search algorithm $T \le \sqrt N$}
\addplot[mark=none,IneqStyleGtr,black!100!white,opacity=0.5]{1/2-x/2};
\addlegendentry{Previous quantum lower bound $ST^2 \ge N$}
\addplot[mark=none,IneqStyleGtr,red!100!white,opacity=0.5]
    coordinates{(0, 1/2) (1/2, 1/2) (1, 0)};
\addlegendentry{Our quantum lower bound $ST + T^2 \ge N$}
\end{axis}
\end{tikzpicture}
}
    \fi
    \caption{Time-space tradeoffs for inverting a random function $f:[N]\mapsto[N]$ with a constant successful probability.  For the classical setting,  the best upper bound is given by Hellman's algorithm~\cite{hellman1980cryptanalytic,FN99}, and the best lower bound is given by \cite{yao1990coherent, EC:DodGuoKat17,EC:CDGS18}. For the quantum setting, the best upper bound is given either by Grover's search algorithm~\cite{grover1996fast} or (classical) Hellman's algorithm, and the previous best lower bound is given by~\cite{hxy19,chung2019lower}. For simplicity, logarithmic terms and constant factors are omitted.} %
    \label{fig:owf-bounds}
\end{figure}

\subsection{Yao's Box Problem}

Yao's box problem~\cite{yao1990coherent} is another basic problem investigated in the literature of time-space tradeoffs that is closely related to the function inversion problem and non-uniform security of pseudorandom generators. In this problem, a preprocessing algorithm $\As$ can compute an $S$-bit advice for a function $f : [N] \mapsto \{0, 1\}$ in the preprocessing phase, and then in the online phase, it is required to compute the bit $f(x) \in \{0, 1\}$ for a random point $x\in [N]$ by making at most $T$ queries but without querying $f$ on $x$. 
Similar formulations of such problem in the classical setting are recently studied in the context of circuit complexity~\cite{ST18,MW19}, which led to a new result on depth $3$ circuits.

\ifexabs\else
In the classical setting, Yao~\cite{yao1990coherent} proved that for $\As$ to succeed with probability $2/3$, it must satisfy $ST = \Omega(N)$, which is known to be optimal.
In the quantum setting, Nayebi et al.~\cite{NABT15} showed an $S T^2 = \Omega(N)$ lower bound for solving Yao's box problem with \emph{classical} advice, which does not rule out the possibility that a preprocessing algorithm with $S$ bits of advice and $T = O(\sqrt{N/S})$ queries.
Subsequently, Hhan et al.~\cite[Lemma 6]{hxy19} refined the analysis and showed that for any $S, T, N$, any quantum algorithm with classical advice can only succeed with probability $1/2 + \tilde O(ST^2/N)^{1/6}$.
However, the following two problems posted by Nayebi et al. remained open.

\fi
\begin{openproblem}[{\cite[Section 3.4, Section 5]{nayebi2014quantum}}]
  Is there a quantum algorithm that solves Yao's box in time $T = O(\sqrt{N/S})$? Or equivalently, prove or disprove the optimality of the lower bound $ST^2 = \tilde\Omega(N)$ for Yao's box.
\end{openproblem}
\begin{openproblem}[{\cite[Section 5]{nayebi2014quantum}}]
  Extend the lower bound for Yao's box to the setting where the advice can be an arbitrary quantum state.
\end{openproblem}

We prove the following theorem for Yao's box problem, answering both open problems above.

\begin{theorem}
 Let $f: [N] \mapsto \{0,1\}$ be a random function. For any quantum oracle algorithms $\As$ with $S$-qubit oracle-dependent advice $\alpha = \alpha(f)$ and $T$ queries to $f$ except on the given challenge point $x$, 
 \begin{itemize}
     \item if $\alpha$ is classical, then
     $$\Pr[\As^f(\alpha,x)=f(x)] \le \frac{1}{2}+\tilde{O} \left(\sqrt[3]{\frac{ST}{N}}\right);$$
     \item if $\alpha$ is quantum, then
       $$\Pr[\As^f(\alpha,x)=f(x)] \le \frac{1}{2}+\tilde{O} \left(\sqrt[19]{\frac{S^5T}{N}}\right),$$
 \end{itemize}
where both probabilities are over $f$, a uniformly random $x$ from $[N]$, and randomness of $\As$. 
\end{theorem}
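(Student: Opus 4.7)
The plan is to adapt the multi-instance reduction framework developed earlier in the paper for function inversion, with two key modifications tailored to Yao's box: (a) a sharper per-round accounting that reflects the fact that only ``accidental'' prior queries can help predict $f(x_i)$, and (b) an application of shadow tomography to convert quantum advice to classical advice in the second part.

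\textbf{Step 1 (Multi-instance game).} I would first define an advice-free $k$-round game in which a challenger samples $f \uniformgets \{0,1\}^{[N]}$ and independent $x_1,\ldots,x_k \uniformgets [N]$, and the adversary, using $T$ quantum queries per round subject to the constraint of never querying the current challenge, must output $f(x_i)$ for every $i$. The key observation, sketched in the introduction, is that for Yao's box the only way to obtain information about $f(x_i)$ is to have ``accidentally'' queried $x_i$ in one of the $\le (i-1)T + T$ prior oracle calls. Using the compressed/recording oracle machinery that underlies the paper's general framework, I expect to show that conditioned on succeeding in the first $i-1$ rounds, the $i$-th round advantage is at most $O(\sqrt{(iT)\cdot T / N})$, so the overall multi-instance winning probability is bounded by $2^{-k}\bigl(1 + \tilde O(\sqrt{kT^2/N})\bigr)^k$.

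\textbf{Step 2 (Classical advice).} Assuming an adversary with $S$-bit classical advice that wins Yao's box with probability $1/2 + \eps$, I would apply the standard non-uniform-to-uniform reduction: guess the advice (paying a factor $2^{-S}$) and run $k$ independent copies to obtain a multi-instance solver with winning probability at least roughly $2^{-S}(1/2 + \eps)^k$. Combining with the bound from Step 1 and optimizing over $k$ gives the target bound $\eps = \tilde O\bigl(\sqrt[3]{ST/N}\bigr)$.

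\textbf{Step 3 (Quantum advice).} Here the ``guess the advice'' trick fails, so I would invoke shadow tomography: for each challenge $x \in [N]$, the ``adversary outputs $1$'' event is a two-outcome measurement on the advice state $\alpha(f)$, and Aaronson's shadow tomography learns all $N$ such expectations to additive precision $\eps'$ from $\tilde O(S^a / (\eps')^b)$ copies of $\alpha(f)$ for some small constants $a, b$. This yields an effective classical advice of size roughly $\tilde O(S^a \log N / (\eps')^b)$; plugging this into the classical-advice bound from Step 2 and balancing $\eps'$ against $\eps$ gives the claimed exponent $1/19$ (with the $S^5$ coming from the specific polynomial overhead of shadow tomography combined with the cube-root from Step 2).

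The hardest step will be Step 1: establishing the tight $O(\sqrt{gT/N})$ per-round advantage in the multi-instance setting. Unlike function inversion, where the success predicate is ``the output lies in $f^{-1}(y)$'' and the compressed-oracle analysis directly upper-bounds the relevant projection, here the success predicate is the nontrivial equality ``output $= f(x)$'', the advantage can only come from one of many possible prior-query coincidences, and the constraint that no queried point equals $x_i$ must be carefully incorporated into the recording oracle bookkeeping. Propagating this conditional advantage across $k$ adaptive rounds without losing tightness is the main technical obstacle.
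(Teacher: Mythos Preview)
Your Step~1 contains a quantitative slip that would propagate through both parts. The per-round advantage in the Yao's box multi-instance game is $O(\sqrt{gT/N})$, not $O(\sqrt{(iT)\cdot T/N}) = O(\sqrt{iT^2/N})$ as you wrote. The point (which you correctly intuit but mis-quantify) is that since the algorithm is forbidden from querying $x_i$ during round~$i$, the compressed-oracle database's knowledge of $f(x_i)$ cannot change at all during round~$i$: the probability $p_{x_i}$ that the database contains $x_i$ is frozen at its value after the $(i-1)T$ prior queries. Summing $p_{x_i}$ over $x_i$ gives at most $(i-1)(T+1)$, and Cauchy--Schwarz converts this into advantage $O(\sqrt{iT/N})$ with no extra factor of~$T$. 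With your formula, Step~2 would only yield $(ST^2/N)^{1/3}$, missing the target $(ST/N)^{1/3}$.

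Your Step~3 has a more serious conceptual gap. You propose to use shadow tomography offline to ``convert'' the $S$-qubit advice $\alpha(f)$ into classical advice of size $\tilde O(S^a/(\eps')^b)$, but shadow tomography does not produce a short classical description of the state: its output is the list of $N$ probability estimates (one per challenge $x \in [N]$), which would take $\tilde\Theta(N)$ bits to store---useless as advice. The paper's route is quite different and never converts quantum advice to classical. It takes $k = \tilde O(S^2/\eps^8)$ copies of $\alpha(f)$ as the multi-instance adversary's (still quantum) advice, and invokes \emph{online} shadow tomography (the QPMW procedure of Aaronson--Rothblum) inside the multi-instance game: at round~$i$, QPMW is handed the measurement ``run $\As_2$ on challenge $\ch_i$'' and returns an estimate $p_i$; the adversary then outputs a $p_i$-biased coin. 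This yields a $(g, kS, 4kT)$ adversary with quantum advice, and only at this point is the advice removed by substituting the maximally mixed state, at cost $2^{-kS}$. Balancing $g$ against $kS$ with $k \propto S^2/\eps^8$ is precisely what produces the $S^5$ in the numerator and the exponent $1/19$.
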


In particular, this theorem implies that any algorithm achieving success probability $2/3$ has to satisfy $ST = \tilde\Omega(N)$ for classical advice, which shows that the power of quantum query does not help solving Yao's box problem, and $S^5 T = \tilde\Omega(N)$ for quantum advice.

\ifexabs
\subsection{Techniques and Other Applications}

To prove these results, we develop a general framework for establishing quantum time-space lower bounds.
Our framework involves a reduction to what we call multi-instance games, which resembles direct product theorem in complexity theory and communication complexity, and the use of compressed oracle techniques introduced by Zhandry~\cite{zhandry2019record} for giving a bound on multi-instance games.
One of the main lemmas that enable us to prove lower bounds on multi-instance games for function inversion is given below, which we believe might be of independent interest.
\begin{lemma}
    For any quantum algorithm making $q_0 + q$ queries to a random function $f:[N] \to [N]$, if $f(x)$ is sampled and given after the $q_0$-th query, conditioned on arbitrary outcomes (with non-zero probability) of the algorithm's measurement during the first $q_0$ queries, the probability of inverting $f(x)$ is at most $O((q_0 + q^2)/N)$. 
\end{lemma}

We further demonstrate the power of our framework by leveraging it to show that ``salting generically provably defeats preprocessing,'' a result shown by Coretti, Dodis, Guo, and Steinberger~\cite{EC:CDGS18}, also holds in the quantum setting.
In particular, we prove quantum time-space lower bounds for a wide class of salted cryptographic primitives in the quantum random oracle model.
This yields the first quantum time-space lower bound for salted collision-finding, answering the open question posed by Hhan et al.~\cite{hxy19}, which in turn implies that $\mathsf{PWPP}^{\mathcal O} \not\subseteq \mathsf{FBQP}^{\mathcal O}\mathsf{/qpoly}$ relative to a random oracle $\mathcal O$.
\else
We note that while in the classical setting, time-space lower bounds for Yao's box problem and the function inversion problem can be proved using the same techniques, such as compression~\cite{DTT10, EC:DodGuoKat17} or presampling~\cite{C:Unruh07, EC:CDGS18}, proving quantum time-space lower bounds for Yao's box problem seems to be more challenging for quantum advice.
This is also the case for our framework since in Yao's box problem, the algorithm cannot verify the answer on its own.
For the case of quantum advice, we employ a novel use of online shadow tomography~\cite{AR19}, which enables us to prove the first time-space lower bound for algorithms with quantum advice.
We discuss the issue more carefully in \Cref{sec:tech-overview-yaos-box}.

Finally, we note that while we did not explicitly prove this, we believe that our techniques are sufficient for proving a lower bound $ST = \tilde\Omega(N)$ for quantum advice, if we restrict the algorithm such that it has to recover $f(x)$ for any $x \in [N]$ with probability $2/3$.
We discuss this further in \Cref{rem:yao-box-worst-case}.

\subsection{Post-Quantum Non-Uniform Security}

It turns out that our techniques for proving the two results above are fairly general, and we can use them to prove a variety of quantum non-uniform lower bounds.
We now turn our focus to proving non-uniform lower bounds in cryptographic ideal models, which is a topic that has gained a lot of momentum recently.

\paragraph*{Random oracle methodology and concrete security.}
While theoretically we can instantiate a lot of private-key cryptography assuming only the existence of any one-way function~\cite{Levin87}, the constructions are almost always way too inefficient to be any useful for practical purposes.
Practical cryptographic schemes, instead are usually designed under an ideal model and are proven secure under that model.
A popular model of choice is random oracle model (ROM)~\cite{BR93}, which ``heuristically'' models a function in question $f: [N] \mapsto [M]$, say a SHA3 hash function, as a truly random function that everyone has only oracle access to.
To determine the security parameter to use in the construction, usually concrete security bounds are derived in the ideal model, and we work out the calculations to make sure any adversary in the ideal model, using at most a reasonable amount of resources, can only succeed with small enough probability, usually $2^{-32}$~\cite{rfc5926,irtf-cfrg-xchacha-03}.

While the random oracle model ``heuristically'' captures all attacks that do not employ the structure of any particular instantiation of the random oracle, the model itself captures neither preprocessing attackers nor quantum attackers.
Moreover, security bounds obtained in the random oracle model are inaccurate or do not apply at all once preprocessing or quantum computation are allowed.

\paragraph*{Quantum random oracle model.}
Quantum algorithms are known to achieve various nontrivial speedups compared to classical algorithms, sometimes even an exponential speedup.
For example, Grover's search algorithm achieves $\sqrt N$ speedup over classical algorithms, and Shor's algorithm solves factoring in quantum polynomial time.

Motivated by assessing the post-quantum security of constructions in ROM, Boneh et al.~\cite{AC:BDFLSZ11} introduced the quantum random oracle model (QROM) where the attacker can make superposition queries to the oracle, as given a classical description of any function, a quantum algorithm can trivially perform such superposition queries.
Extending security proofs in classical ROM into the stronger QROM has been an active area of investigation~\cite{AC:BDFLSZ11,C:Zhandry12,C:BonZha13,EC:Unruh15,TarghiU16,AC:Unruh17,KiltzLS18,zhandry2019record,ambainis2019quantum,liu2019revisiting,don2019security}. %

\paragraph*{Auxiliary-input random oracle model.}
Consider the example where we use random oracle to instantiate an one-way function (OWF).
As shown by Hellman's algorithm~\cite{hellman1980cryptanalytic}, a preprocessing attacker can achieve a non-trivial saving of resources from $N$ to $N^{2/3}$ even in the random oracle model.
Because the instantiated function is usually public, there is no way of preventing the adversary from performing a heavy precomputation to speed up the online attack.

To address this mismatch, Unruh~\cite{C:Unruh07} introduced the auxiliary-input random oracle model (AI-ROM) where the adversary is allowed to obtain a bounded length advice about the random oracle before attacking the system.
Several works~\cite{C:Unruh07,EC:DodGuoKat17,EC:CDGS18,coretti2018non} have developed various techniques for analyzing the security in this model.

Hhan et al.~\cite{hxy19} first considered the auxiliary-input quantum random oracle model (AI-QROM) and quantum auxiliary-input quantum random oracle model (QAI-QROM), which is the natural generalization of the model above for quantum adversaries.
Despite the progress in QROM and AI-ROM respectively, proving post-quantum non-uniform security remains quite challenging.
Only a few security bounds have been proven against classical advice, i.e. under AI-QROM~\cite{hxy19}, and {\em no} security bounds are known against quantum advice, i.e. QAI-QROM, except for OWFs~\cite{hxy19,chung2019lower}.
The only known technique under these models is the (quantum) compression technique, introduced by Nayebi et al.~\cite{NABT15}, and its variants.
We note that this technique is only somewhat generic, and its effectiveness seems %
limited under QAI-QROM.

\paragraph*{New generic framework for security bounds in (Q)AI-QROM.}
We generalize our techniques for function inversion and Yao's box into a general framework for proving concrete post-quantum non-uniform security in AI-QROM and QAI-QROM.
In particular, we show the following theorem.

\begin{theorem*}[Informal]
  For any security game $G$, adversary time bound $T$, and any $g > 0$, consider its multi-instance game $\Gmi g$, which requires the adversary to break $g$ independent challenges \emph{sequentially}, and each instance is given time $T$.
  
  If the best winning probability for $\Gmi g$ is $\delta^g$ in the QROM, meaning that the best adversary can only win the game with probability at most $\delta^g$, then for adversaries with $S = g$ (qu)bits of advice:
  \begin{enumerate}
      \item $G$ is roughly $\tilde O(\delta)$-secure in AI-QROM;
      \item If $G$ is publicly verifiable, $G$ is $\negl(N)$-secure in QAI-QROM if $\delta = \negl(N)$;
      \item If $G$ is a decision game, $G$ is $1/2 + \negl(N)$-secure in QAI-QROM if $\delta = 1/2 + \negl(N)$.
  \end{enumerate}
\end{theorem*}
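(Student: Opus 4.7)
The plan is a contrapositive reduction from security in the (Q)AI-QROM to the multi-instance security in the plain QROM: assume an adversary $\As$ with $S$ (qu)bits of advice wins $G$ with probability noticeably exceeding the claimed bound, and build from $\As$ an MI adversary $\mathcal B$, without any advice, that wins $\Gmi{g}$ (with $g = S$) with probability exceeding the hypothesized $\delta^g$.

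For part (1), classical advice, the reduction is direct. Upon receiving the $g$ challenges of $\Gmi g$ sequentially, $\mathcal B$ picks one string $\alpha \uniformgets \{0,1\}^S$ once and runs $\As(\alpha, \cdot)$ on each challenge, using its $T$ queries per instance to simulate $\As$'s oracle access. Condition on the event that $\alpha$ matches the optimal advice $\alpha(f)$ for the sampled oracle $f$, which happens with probability exactly $2^{-S}$. Conditioned on this event, each instance of $\Gmi g$ is a faithful copy of the AI game, and since challenges and $\As$'s randomness are fresh and independent per instance, the conditional joint winning probability is $\E_f[p_f^g]$, where $p_f$ is the per-oracle success of $\As$ with correct advice. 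By Jensen's inequality applied to the convex map $x \mapsto x^g$, this is at least $(\E_f[p_f])^g = p^g$. Hence $\mathcal B$ wins $\Gmi g$ with probability at least $2^{-S} p^g = (p/2)^g$, which against the MI bound $\delta^g$ yields $p \le 2\delta$, i.e. $p = \tilde O(\delta)$.

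For parts (2) and (3), involving an $S$-qubit quantum advice state $\rho_f$, the naive ``guess once and reuse'' template fails because $\mathcal B$ cannot clone $\rho_f$. I would address this with a two-stage plan. First, compile $\As$ into a state-preserving adversary $\As'$ that, given one copy of $\rho_f$, succeeds on a single instance with probability (resp.\ advantage) close to $1$ while approximately restoring $\rho_f$ on its advice register: in the publicly verifiable case this uses the verifier's accept projector together with coherent repetition and gentle measurement, and in the decision case it uses the Marriott--Watrous / Jordan-lemma amplifier applied to the two-outcome projective advantage measurement. Second, $\mathcal B$ runs $\As'$ across the $g$ sequential challenges of $\Gmi g$ using a single $S$-qubit maximally mixed state as a proxy for $\rho_f$. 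Because the maximally mixed state is the uniform mixture over any orthonormal basis containing $\rho_f$, the joint winning probability is at least $2^{-S}$ times what it would be with the real advice, and the approximate-restoration property contributes only an additive $O(g \cdot \eps)$ error across the $g$ rounds. Matching this against $\delta^g$ forces $\delta = \negl$ (resp.\ $1/2 + \negl$) whenever $\As'$ achieves near-certainty (resp.\ near-constant advantage), which by the amplification step holds whenever the original $\As$ has non-negligible success probability (resp.\ advantage).

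The main obstacle will be the quantum reductions, and specifically the quantitative bookkeeping of the approximate-restoration error. The one-shot amplifier $\As'$ must be tight enough that after $g$ sequential uses the cumulative disturbance to the advice state is still much less than the success amplitude we need, which couples the choice of amplification parameters to $g$ and to the target bound $\delta$. The decision-game case is the most delicate, since there is no verifier to gentle-measure against and we must amplify a measured advantage $\eps$ without collapsing the witness; the Marriott--Watrous iteration together with the singular-value analysis provided by Jordan's lemma is the natural tool, but making it compose cleanly through the $g$-fold reduction is where I expect the bulk of the technical work. Once this one-shot quantum amplifier machinery is in place with suitable quantitative guarantees, the remainder follows the same template as the classical case.
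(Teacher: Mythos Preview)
Your Part (1) is correct and matches the paper's argument (Propositions~4.8--4.9 and Theorem~\ref{thm:reduction-classical-advice-general}); using Jensen in place of the averaging step is a harmless variation.

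For Parts (2) and (3) your template is right (boost, reuse gently across $g$ rounds, replace the advice by the maximally mixed state), but the single-copy amplification you propose is a genuine gap. With \emph{one} copy of an unknown $S$-qubit state you cannot in general boost an arbitrary success probability $p$ to $1-o(1)$: any Grover/fixed-point style amplification requires a reflection about the initial state, and reflecting only about the ancilla-$\ket{0}$ subspace gives you Jordan blocks with different angles $\theta_j$, so no uniform number of iterations takes them all near $1$; Marriott--Watrous lets you \emph{estimate} within a block but does not synchronize blocks to near-certain acceptance. The paper accordingly takes $k$ \emph{independent copies} of the advice (Section~4.3, with $k=\Theta(\log g/\delta')$), runs $\As$ on each, verifies each answer, outputs the first accepting one, and then uncomputes; gentleness is argued on the full $k$-copy register. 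This is why the quantitative theorem has $g=\tilde O(S)/\delta'$ rather than $g=S$.

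Part (3) has a second, more conceptual gap. Any amplifier that pushes the per-instance acceptance probability toward $0$ or $1$ is, operationally, majority vote, and the paper gives an explicit counterexample (Section~1.4.3): if $\As$ is correct with probability $1$ on $40\%$ of inputs and $0.45$ on the rest, the average success is $0.67$, yet majority vote drives it below $1/2$. So amplification is the wrong primitive here; what you actually need is to \emph{estimate} $p_i=\Pr[\As(\rho,\ch_i)=1]$ and then output a $p_i$-biased coin, which preserves the average success. Doing this gently for $g$ adaptively chosen measurements from few copies is exactly online shadow tomography, and the paper invokes the Aaronson--Rothblum procedure (Theorem~\ref{thm:shadow-tomography}) with $k=\tilde O(S^2/\eps^8)$ copies. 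A one-copy MW estimator does not give the across-round stability you need: after collapsing to a Jordan block for measurement $E_1$, the state need not decompose nicely for $E_2$, and the errors do not compose over $g$ rounds without the shadow-tomography machinery.
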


\ifieeecs\else
\noindent
The formal definition of security games and multi-instance games are given in \Cref{sec:security-games}, and the formal reduction is presented in \Cref{sec:advice2mis-reductions}.
\fi

We note that our bound for function inversion we show in \Cref{sec:intro-owf} directly translates to the concrete security bound for OWFs, and our techniques for OWFs can be additionally used to show security for pseudorandom generators (PRGs) with little efforts.
The concrete bounds,
\ifieeecs\else
proven in \Cref{thm:owfaiqrom}, \Cref{thm:owfqaiqrom}, and \Cref{thm:prg},
\fi
are summarized into the table below.
\begin{table}[H]
  \centering
  \begin{tabular}{ | l | c | c | }
    \hline
    & OWFs & PRGs \\ \hline
    AI-ROM\ifieeecs\else~\cite{EC:DodGuoKat17,EC:CDGS18}\fi   & $\frac{S T}{\alpha}$ & $\left(\frac{S T}{N}\right)^{1/2} + \frac{T}{N}$ \\ 
    AI-QROM~\cite{hxy19}   & $\left(\frac{S T^2}{\alpha}\right)^{1/2}$ & $\left( \frac{S T^4}{N} + \frac{T^4}{N} \right)^{1/6}$ \\ 
    AI-QROM (ours)  & $\frac{S T}{\alpha} + \frac{T^2}\alpha$ & $\left(\frac{S T}{N} + \frac{T^2}N\right)^{1/3}$ \\
    QAI-QROM (ours)  & $\left(\frac{S T}{\alpha} + \frac{T^2}\alpha\right)^{1/3}$ & $\left(\frac{S^5 T}{N} + \frac{S^4 T^2}N\right)^{1/19}$ \\
    \hline
  \end{tabular}
  \caption{Asymptotic security bounds on the security of OWFs and PRGs constructed from a random function $f:[N]\mapsto[M]$ against $(S, T)$-algorithms, where $\alpha := \min\{N, M\}$.}
\end{table}

\paragraph{Salting defeats preprocessing in the quantum setting.}
Instead of proving more concrete security bounds using the framework, we show that salting, a common mechanism used in practice for defeating auxiliary input, generically extends the security of applications proven in the QROM to the (Q)AI-QROM.
A similar statement was first shown to hold in the classical world by Coretti et al.~\cite{EC:CDGS18}, where they showed that the security in the ROM can be generically extended into salted security in the AI-ROM. %

In this work, we prove the hardness of salted multi-instance game under QROM.
\begin{lemma*}[Informal\ifieeecs\else statement of \Cref{lem:saltingmis}\fi]
  For any security game $G$ with security $\delta$ in the QROM against adversary running in time $T$, let $G_S$ denote its salted version with salt space $[K]$.
  Then the best winning probability for multi-instance salted game $\Gmi{g}_S$ is at most $(\delta + gT/K)^g$, which is tight\ifieeecs\else (\Cref{sec:salting-tight})\fi.
\end{lemma*}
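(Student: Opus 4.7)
The plan is to prove the bound by induction on the number of instances $g$ using Zhandry's compressed-oracle formalism~\cite{zhandry2019record}, arguing that a freshly sampled salt is with high probability ``untouched'' by the adversary's queries in previous instances. Concretely, I would track the compressed-oracle database $D$, a superposition of partial functions from $[K]\times\mathrm{dom}(G)$ to the codomain whose support grows by at most one per oracle query; in particular, before instance $i$ begins, $|D|\le (i-1)T$. After the salt $s_i$ is freshly sampled uniformly from $[K]$, define the event $\mathrm{Fresh}_i$ that $s_i$ does not appear among the salt-coordinates of the current entries in $D$. Since $s_i$ is independent of $D$, the standard guessing lemma in the compressed-oracle model yields $\Pr[\neg\mathrm{Fresh}_i] \le (i-1)T/K \le gT/K$.

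The key reduction is that, conditioned on $\mathrm{Fresh}_i$ and on previous successes, the winning probability for instance $i$ is at most $\delta$. After projecting the oracle state onto databases in which the $s_i$-slots are empty, queries made during instance $i$ to $(s',\cdot)$ with $s'\ne s_i$ do not touch the $s_i$-slots, while queries to $(s_i,\cdot)$ lazy-sample fresh values. Thus the adversary's interaction restricted to salt $s_i$ is indistinguishable from a single-instance QROM execution of $G$ with $T$ queries, additionally given as side information the residual database on other salts, which is independent of $H(s_i,\cdot)$ and hence does not help. By hypothesis, the winning probability for this reduced single-instance game is at most $\delta$. Combining via the chain rule, the conditional probability of winning instance $i$ given success in instances $1,\ldots,i-1$ is at most $\delta + gT/K$, and multiplying across all $g$ instances gives the claimed bound $(\delta + gT/K)^g$.

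The main obstacle is making the ``projection onto $\mathrm{Fresh}_i$'' quantitative in the compressed-oracle framework, since the database is a quantum superposition of partial functions rather than a classical object. I plan to handle this via the standard decomposition of the compressed-oracle state into orthogonal projectors indexed by the set of salts that appear in the database support, combined with a hybrid argument that, conditioned on $\mathrm{Fresh}_i$, replaces the $s_i$-fragment of $D$ by a freshly initialized oracle. A secondary subtlety is that the measurements that adjudicate the success of previous instances may slightly perturb the oracle state; this can be handled by observing that those measurements act on the adversary's output register and therefore commute with the salt-support projectors on the database. Putting these pieces together in an inductive argument yields the inequality $\Pr[\text{win all } g] \le (\delta + gT/K)\cdot \Pr[\text{win first } g-1]$, from which the lemma follows.
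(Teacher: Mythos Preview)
Your proposal is essentially correct and follows the same approach as the paper: track the compressed-oracle database so that after $(i-1)$ rounds at most $O(iT)$ salts can appear, project onto the event that the fresh salt $s_i$ is absent from the database, and reduce the ``fresh'' branch to a single-instance QROM adversary for $G$ (the paper makes this precise by building an explicit simulator $\Bs$ that internally plays rounds $1,\dots,i-1$ with a compressed oracle, checks $D(s)=\bot$, and then forwards salt-$s$ queries to the real challenger). The amplitude-level triangle inequality $\sqrt{p_{s,r}}\le \sqrt{q_s}+\bigl|P_{s,r}U_{s,r}(I-Q_s)\ket{\phi_0}\bigr|$ is exactly the ``quantitative projection'' you flag as the main obstacle, and the paper's explicit reduction is the hybrid you describe.

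One point to tighten: your claim that the challenger's verification measurement ``acts on the adversary's output register and therefore commutes with the salt-support projectors'' is not quite right, because $\ver^H$ may itself query the oracle and thus touch the database. The paper handles this by reparametrizing the oracle as $H_s:[K]\to\mathcal R_N$ (salt $\mapsto$ randomness for sampling the entire sub-oracle), so that verification costs a single additional query to $H_s$ per round; this is why the bound carries $T+1$ rather than $T$. With that fix your argument goes through.
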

\noindent
Combining this lemma with our reduction theorem above, we conclude that salting generically defeats preprocessing in the AI-QROM, and defeats preprocessing against publicly-verifiable and decision games in the QAI-QROM.
We believe our techniques are sufficient and both the reduction theorem and salted multi-instance bound can be generalized for proving quantum non-uniform salted security bounds under other idealized cryptographic models, e.g. the ideal-cipher model (which is by definition a salted permutation family), and the salted generic group model.
However, in order to simplify presentation, in this work we only focus on the ROM. 

Using this generic bootstrapping theorem, we can also easily obtain security bounds for a wide class of salted cryptographic primitives in the AI-QROM and QAI-QROM.
In particular, we use it to give the first bound for salted collision-resistant hash (CRH) in AI-QROM and QAI-QROM, resolving an open problem raised by Hhan et al.~\cite{hxy19}.
This in turn implies that $\mathsf{PWPP}^{\mathcal O} \not\subseteq \mathsf{FBQP}^{\mathcal O}\mathsf{/qpoly}$ relative to a random oracle $\mathcal O$\ifieeecs\else (\Cref{thm:oracle-separation})\fi.

Finally, we note that our proof can also be naturally downgraded to a classical reduction, which gives a new proof of the classical result first proven by Coretti et al.~\cite{EC:CDGS18}

\subsection{Technical Overview}

\ifieeecs
In this section, we give an overview of our techniques for proving the results.
Please refer to the full version for the formal proof.
\fi

\subsubsection{From Non-uniform Algorithms to Multi-Instance Games}
\label{sec:tech-overview-reduction}

We start by considering applying a general approach in the literature for proving lower bounds for non-uniform algorithms in the context of classical function inversion, and then generalize it to quantum non-uniformity.
In particular, the following argument is based on the work of Aaronson~\cite{Aaronson05}.

Let $f: [N] \mapsto [N]$ be a function.
Consider a classical (randomized) algorithm $\As$ with $S$-bit of advice $\alpha = \alpha(f)$, $T$ queries to $f$ (which is a lower bound on its running time), that can invert a random image for \emph{any} function\footnote{Since we are requiring the algorithm to invert any function, the lower bound we present here is slightly weaker. We intentionally make this omission for the overview to highlight the more important ideas in our proofs. Interested readers should refer to the formal proofs.} $f$ with probability at least $\delta$, i.e.
$$\delta := \Pr_{\As,x}[ \As^f(\alpha, f(x)) \in f^{-1}(f(x))],$$
where the probability is taken over the measurement randomness of the algorithm $\As$ and the random choice of $x$, and we want to give an upper bound on $\delta$.
At a high level, the approach is to reduce it to proving lower bound for the multi-instance version of the problem for algorithms \emph{without} advice.
Specifically, we reduce it to bound the success probability of an algorithm $\Bs$ with $gT$ queries to invert random $g$ inputs $f(x_1), \dots, f(x_g)$ simultaneously, for some parameter $g \in [N]$ on the number of instances.
For function inversion, it is easy to show that for any $g > 0$, the best success probability drops exponentially fast in $g$.
Specifically, for any (randomized) algorithm $\Bs$ and random functions $f$,
\ifieeecs
  \newcommand{\mathieeecs}[1]{#1}
\else
  \newcommand{\mathieeecs}[1]{}
\fi
\begin{equation}
  \label{eq:tech-overview-upper-bound}
  \begin{split}
    \Pr_{\Bs,f,x_1,\dots,x_g}[&\Bs^f(f(x_1),\dots, f(x_g)) \mathieeecs{\\
    &}\mbox{ inverts } f(x_1),\dots, f(x_g)] \leq O(gT/N)^g.
  \end{split}
\end{equation}

The reduction proceeds in two simple steps. We first use $\As$ to construct an algorithm $\Bs'$ using \emph{only one copy of the original advice} for the multi-instance problem (with decent success probability), and then get rid of the advice by simply guessing a uniformly random bitstring.
The algorithm $\Bs'^f(\alpha, f(x_1),\dots,f(x_g))$ simply invokes $\As$ to invert each $f(x_i)$, and succeeds when $\As^f(f(x_i))$ succeeds on all $g$ instances.
By independence, it is easy to see that
\begin{equation*}
  \begin{split}
    \Pr_{\Bs',x_1,\dots,x_g}[&\Bs'^f(\alpha,f(x_1),\dots, f(x_g)) \mathieeecs{\\
    &}\mbox{ inverts } f(x_1),\dots, f(x_g)] \ge \delta^g.
  \end{split}
\end{equation*}

Next we remove the advice by guessing. Consider an algorithm $\Bs$ first guesses a \emph{random} advice $\alpha \in \{0,1\}^S$ and then runs $\Bs'$. Clearly, $\Bs$ guesses the advice correctly with probability $2^{-S}$, in which case $\Bs$ simulates $\Bs'$ perfectly. Hence,
\begin{equation*}
  \begin{split}
    \Pr_{\Bs,x_1,\dots,x_g}[&\Bs^f(f(x_1),\dots, f(x_g)) \mathieeecs{\\
    &}\mbox{ inverts } f(x_1),\dots, f(x_g)] \geq 2^{-S} \delta^g.
  \end{split}
\end{equation*}
As this statement is for any function, the same conclusion holds for random functions.
Combining this with the above upper bound~\eqref{eq:tech-overview-upper-bound} on the success probability of $\Bs$ with $g = S$ shows that $\delta \leq O(ST/N)$, which matches the best known classical bound~\cite{yao1990coherent, EC:DodGuoKat17, EC:CDGS18}.

It should be clear from the above example that this approach is fairly general and reduces the non-uniform lower bound problems to analyze the success probability of the corresponding multi-instance games. Indeed, this approach and its variants have implicitly appeared in various contexts under the name of direct product theorems. We discuss this in more details in \Cref{sec:related-works}.

\paragraph*{First attempt using multi-instance problems.}
Seeing this as a promising start, we now consider the setting of quantum algorithms with \emph{classical} advice.
Indeed, the argument works out similarly, except that now we need to consider the success probability of the best \emph{quantum} algorithm that queries $gT$ locations and succeed inverting $g$ independently random images.
We could hope that analyzing the best success probability for quantum algorithms solving the multi-instance problem would lead us to the desired bound.

Unfortunately, it turns out that this approach is destined to fail to achieve any $\delta \ll ST^2/N$.
With $q$ quantum queries, the famous Grover's search can find one element in $\eta$ fraction of all elements  with probability $\approx \eta q^2$.
Consider the following algorithm $\tilde\Bs$, where it tries to find one pre-image among all the $g$ images using the first $T$ queries, which succeeds with probability $gT^2/N$. If it succeeds in finding the first pre-image, it then use the next $T$ queries to  find one pre-image among all remaining $(g-1)$ images, which succeeds with probability $(g-1) T^2 / N$ and so on. 
Using this algorithm, for any function, we can find all $g$ pre-images with probability at least roughly 
\begin{align*}
  (g T^2 / N) \cdot ((g-1) T^2/N) \cdot ... \cdot (T^2 / N) &\approx (T^2/N)^g \cdot g! \mathieeecs{\\
    &}\approx (gT^2/N)^g. 
\end{align*}
This implies that the best bound we can hope to achieve for the multi-instance problem would be $O(gT^2/N)^g$, which would in turn imply the bound $O(ST^2/N)$, which is the same bound as what we already had before. 

\paragraph*{Bypassing the barrier via multi-instance games.}
A natural question arises that whether we can go beyond $ST^2/N$.
We claim that this is actually the case.

To see this point, we first recall the high level ideas of the argument above -- we first bootstrap the best algorithm $\As$ with advice $\alpha$ for computing $f^{-1}$ with success probability $\delta$, into a multi-instance algorithm $\Bs$ with advice $\alpha$ with success probability $\delta^g$, and then remove the advice and incur a loss of $2^{-S}$, which we amortize into $\delta^g$.
The problem essentially reduces to proving a lower bound for the success probability of the resulting algorithm $\Bs$ that solves the multi-instance problem.

While it may seem like that the $(gT^2/N)^g$ algorithm above could be an upper bound for the new problem, we observe that this iterated Grover's search algorithm $\tilde\Bs$ actually never arises from our reduction from $\As$ to $\Bs$.
In particular, $\Bs$ always solves each instance one by one, while $\tilde\Bs$ in some sense solves \emph{all} $g$ instances at once.
To view this issue in terms of quantum queries, the first $T$ queries by $\Bs$ are only searching pre-image of $f(x_1)$, but the first $T$ queries by $\tilde\Bs$ are searching for all $g$ pre-images.

We formalize this intuition by strengthening the multi-instance problem into what we call a ``multi-instance game,'' where the algorithm (or adversary) is instead \emph{interacting} with a verifier (or challenger).
For each round $i \in [g]$, the challenger samples a new image $f(x_i)$, and the adversary is given $T$ queries to $f$, before producing an output $x'_i$, which the challenger checks whether $f(x_i) = f(x'_i)$.
The main change we make to the multi-instance \emph{games}, compared with multi-instance \emph{problems}, is that the adversary gets challenges one-by-one or \emph{sequentially}, rather than getting all challenges at once or in \emph{parallel}.

Observe that this multi-instance game seems to rule out the algorithm $\tilde\Bs$ above, as the adversary does not have any information about $f(x_i)$ until he has issued $(i - 1)T$ queries.

If we assume that the probability that any quantum adversary wins such multi-instance game for function inversion is $O\left(\frac{gT + T^2}N\right)^g$, using the same reduction as before, we would reach the conclusion that the best success probability for function inversion with $S$ bits of classical advice and $T$ quantum queries is $O(ST + T^2)/N$.
It turns out that we can indeed prove this assumption, but we will defer the discussion and consider quantum advice first.

\paragraph*{Beyond classical preprocessing.}
The reduction above requires the algorithm to solve multiple instances using only a single copy of the advice, which is problematic in the quantum setting due to no-cloning theorem.
We resolve this problem by constructing $\Bs$ similarly as before, and add gentle measurement to solve multiple instances.
To fill in the details, $\Bs$ does the following.
\begin{enumerate}
    \item Prepare $k = \Theta(\log g)$ copies of the quantum advice $\beta := \alpha^{\otimes k}$.
    \item Boost $\As$'s success probability from constant to $1 - o(1)$, by running $\As$ on each copy of the advice $\alpha$ $k$ times, and identify the correct answer using one additional query.
    \item To solve $g$ instances simultaneously, $\Bs$ simply runs boosted $\As$ for each instance, and applies measurement.
        As we have boosted the success probability high enough, the measurement will be ``gentle'' and we can recover an almost-as-good-as-new quantum advice for the next instance.
    \item Finally, to remove the quantum advice, we replace $\beta$ with a maximally mixed state, which gives us a multiplicative loss of $2^{-Sk}$ in success probability.
\end{enumerate}

However, for function inversion, this idea seems to fail as function inversion problem can have non-unique correct answers.
In particular, if $f^{-1}(f(x)) = \{x_1, x_2, ..., x_\ell\}$, and the algorithm somehow prepares the answer $\ket{x'_i}$ that is a uniform superposition over all the answers $x_1, ..., x_\ell$, while it succeeds with probability 1, performing a gentle measurement on this answer seems very difficult.
This difficulty of performing gentle measurements for function inversion was also acknowledged by the work of Hhan et al.~\cite{hxy19}, although under a different context.

We claim that using multi-instance \emph{games} (as opposed to multi-instance problems), there is actually a very elegant solution to this.
Instead of having the adversary submitting a classical answer $x'_i$, we will allow the adversary to submit a \emph{quantum state} $\ket{x'_i}$, and the challenger can compute in superposition whether $f(\ket{x'_i}) = f(x_i)$, and measure her decision (which will be gentle, as we boosted its success probability), and send back $\ket{x'_i}$.
On a high level, the idea is basically having the adversary and the challenger ``jointly'' perform this gentle measurement.
As the adversary cannot control challenger's behavior, this change should not impact the best winning probability of the multi-instance game.

\ifieeecs\else
We formally define multi-instance games in \Cref{sec:mis} and highlight the difference from multi-instance problems in more details there at \Cref{rem:comparison-mis-vs-dpt}.
\fi

\subsubsection{Analyzing Multi-Instance Game via ``Compressed Oracles''}

To complete our time-space tradeoff for function inversion, the only remaining step is to bound the best winning probability of the multi-instance game for function inversion.
We extends the techniques of Zhandry's compressed oracles~\cite{zhandry2019record} and combine with a new indistinguishability lemma to give a tight bound for this problem. 

\paragraph*{Compressed oracles.}
In the classical setting, there is a commonly used technique for arguing random functions called the lazy sampling of a random oracle.
The idea is that a simulator will maintain a partial truth table about the random function.  Upon an oracle query $x$, the simulator looks up $x$ in the table $D$ and returns it as the answer. If not found, the simulator freshly samples a new $y$ as the output of $x$ and inserts the pair $(x, y)$ into the table $D$. %

Zhandry observes that, if care is taken to implement the oracle correctly, a quantum analogy of the classical on-the-fly simulation is possible. Unlike the classical simulation, they simulate a random oracle as a superposition of tables, each of which partially instantiates a random function. Below is the high level idea of their simulation. The table is initialized as an empty table. Upon a quantum query is made by an algorithm, the simulator updates the database in superposition: for a query $\ket x$ and a table $\ket D$, the simulator look up $x$ in the table $D$; if not found, it initializes a superposition of all possible output $\sum_y \ket y$ (up to a normalization) as the value of $D(x)$ and updates $D$ in superposition to get $\ket {D \cup (x, y)}$; it then returns $D(x)$ as the output.

A major difference between quantum setting and classical setting is an algorithm may forget some query it made before. As an example, an algorithm can query the same input twice to un-compute everything and thus completely lose the information about the output. Therefore, to perfectly simulate a quantum random oracle, the simulator also checks after every query that if the algorithm loses all information about the query. With all these operations above, Zhandry shows a quantum random oracle can be efficiently simulated on-the-fly.

\paragraph*{Analyzing multi-instance game.}
To prove the success probability of multi-instance function inversion, we consider a stronger statement regarding the success probability of inverting for any round: assume an algorithm already makes $(i-1) T$ queries for the first $(i-1)$ rounds, and conditioned on it having passed the first $(i-1)$ rounds, what is the probability of succeeding in the $i$-th round by making $T$ queries? 
\begin{lemma}%
    For any quantum algorithm making $q_0 + q$ queries to a random function $f:[N] \to [N]$, if $f(x)$ is sampled and given after the $q_0$-th query, conditioned on arbitrary outcomes (with non-zero probability) of the algorithm's measurement during the first $q_0$ queries, the probability of inverting $f(x)$ is at most $O((q_0 + q^2)/N)$. 
\end{lemma}

We consider the lemma above as remarkable and perhaps even surprising, as intuitively, it is saying that quantum power can achieve a quadratic speed up for search \emph{only} if you know what you are looking for, and there is no classical analogue of this.

With the above lemma, the probability of succeeding in the $i$-th round is at most $O((i T + T^2)/N)$.   
Therefore, the probability of succeeding in inverting all random images is at most $O((g T + T^2)/N)^g$.

To prove this lemma, let us start by assuming that a uniformly random image $y$ is instead given as a challenge and without conditioning on the intermediate measurements.
By Zhandry's techniques, after making $q_0$ queries to a random oracle, the knowledge of an algorithm about the random oracle can be viewed as a superposition of some tables with at most $q_0$ entries which specifies  partial random functions over at most $q_0$ inputs.
Since $y$ is sampled uniformly, the amplitude (square root of probability) of database containing $y$ is about $\sqrt{q_0/N}$. After given the image $y$, each query can increase the amplitude by at most $\sqrt{1/N}$. Therefore, the final amplitude of of database containing $y$ is about $(\sqrt{q_0} + q)/\sqrt{N}$ which gives us the lemma above. 

There are two challenges we need to overcome for proving the full lemma.

The first challenge comes from the fact that our lemma statement requires an algorithm that is conditioned on some fixed measurement outcomes.
To address this, we extend Zhandry's techniques to such settings\ifieeecs\else in \Cref{sec:compressed-oracle} and complete the proofs in \Cref{appendix:compressed-oracle}\fi, which is a very natural but non-trivial extension and crucial for analyzing the multi-instance game.

The second challenge seems even more difficult.
For function inversion, the image is sampled by first sampling a random pre-image $x$ and computing $f(x)$.
The natural application of compressed oracle only gives us a probability bound when a random $y$ is sampled instead of $f(x)$.
This is a nontrivial issue as with high probability, the distribution of $f(x)$ is only supported on a constant fraction of $[N]$, so the statistical difference of the two distributions is significant.
The issue is more prominent when we consider the general random functions $f: [N] \mapsto [M]$ where $N \ll M$, where a uniform sample of $y$ is with high probability not an image of any $x$.

Towards this challenge, we prove the following indistinguishability lemma to bridge the gap. The proof of the lemma does not require the compressed oracle technique and we believe that both lemmas are of independent interest. 
\begin{lemma}[Indistinguishability]%
    For any quantum algorithm making $q_0 + q$ queries to a random function $f:[N] \to [N]$, if $f(x)$ or a uniformly  random $y$ is sampled and given after the $q_0$-th query, conditioned on arbitrary outcomes (with non-zero probability) of the algorithm's measurement during the first $q_0$ queries, the advantage of distinguishing is at most $O((q_0 + q^2)/N)$.
\end{lemma}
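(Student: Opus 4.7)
My plan is to couple the two experiments and then reduce the distinguishing problem to the inversion problem just handled by the previous lemma. Sample a uniform function $f : [N] \to [N]$, a uniform point $x \in [N]$, and a uniform value $y \in [N]$ independently. In the coupled version of game A, run the algorithm against the reprogrammed oracle $f_A := f[x \mapsto y]$ and use $y$ as the challenge; in the coupled version of game B, use the original $f$ as the oracle and the same $y$ as the challenge. A routine check confirms that these couplings induce the correct marginals for each game---$f_A$ is marginally uniform and $y = f_A(x)$ for uniform $x$ in game A, while $f$ and $y$ are independent uniforms in game B---and under the coupling the two executions are literally identical except that the oracle responds differently on the hidden input $x$. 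The intermediate measurement outcomes conditioned upon in the lemma statement are well-defined jointly across both games since the same code runs in both.

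With the coupling in place, the distinguishing advantage is controlled by how much ``weight'' the algorithm's $q_0 + q$ queries place on the hidden point $x$. I would split this into a pre-challenge and a post-challenge contribution. The pre-challenge piece contributes at most $O(q_0/N)$: conditioned on any fixed measurement transcript from the first $q_0$ queries, the point $x$ is still uniform and independent of that transcript, so the expected total query weight placed on $x$ during those queries is at most $q_0 / N$, and this directly bounds their contribution to distinguishing. The post-challenge piece, intuitively, is bounded by the algorithm's ability to search for a pre-image of $y$ using its last $q$ queries; this is precisely the inversion probability, which the preceding lemma bounds by $O((q_0 + q^2)/N)$. Summing the two contributions yields the claimed bound.

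The main obstacle is obtaining a tight post-challenge reduction without the square-root loss that off-the-shelf BBBV or one-way-to-hiding arguments introduce (which would only give $O(\sqrt{(q_0+q^2)/N})$ and be too weak). The plan to circumvent this is a measure-and-reprogram style analysis: consider inserting a measurement that attempts to extract a candidate $x$ from the algorithm's state, and argue that either this measurement already succeeds with probability comparable to the distinguishing advantage---in which case the inversion lemma closes the argument by contradiction---or it is sufficiently gentle that the two post-challenge states are directly trace-close so that no subsequent unitary query sequence can separate them by more than the bound. The conditioning on intermediate measurement outcomes is handled uniformly throughout, since both the coupling and the inversion lemma itself are stated conditionally on such outcomes.
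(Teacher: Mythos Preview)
Your coupling is exactly right and matches the paper's setup: both experiments receive the same challenge $y$, and the two oracles differ only at the hidden uniform point $x$. The pre-challenge analysis is also essentially the paper's---using the phase-oracle database representation, after $q_0$ queries the database has at most $q_0$ nonzero entries, so the two joint states differ in $L^2$ norm by $2\sqrt{q_0/N}$.

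The divergence is in the post-challenge analysis, and this is where your proposal has a genuine gap. The paper does \emph{not} reduce to inversion at all. Instead it makes one observation: in the ``random $y$'' branch of the coupling, the $\ket{x}$ register was never touched and therefore remains perfectly unentangled with the algorithm's state even after the challenge is issued. Consequently, applying $U_f$ versus $U_{f[x\mapsto y]}$ to the running state differs only on the component where the query input equals $x$, and that component has norm exactly $1/\sqrt{N}$. A plain hybrid over the $q$ post-challenge queries then bounds the additional $L^2$ drift by $2q/\sqrt{N}$. Combined with the pre-challenge $2\sqrt{q_0/N}$, the paper obtains $|\sqrt{p_A}-\sqrt{p_B}|\le 2(\sqrt{q_0}+q)/\sqrt{N}$; this is the formal statement (Lemma~\ref{lem:prgind}), and the intro's $O((q_0+q^2)/N)$ is simply this amplitude bound squared, as later used via Cauchy--Schwarz.

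So the square-root loss you worry about never arises: the paper works at the amplitude level throughout and never needs one-way-to-hiding, measure-and-reprogram, or any reduction to inversion. Your proposed reduction is not only unnecessary but risks circularity---the ``preceding lemma'' about inverting $f(x)$ is itself proved \emph{using} this indistinguishability lemma (it is obtained by combining indistinguishability with the separately-established $\owfy$ bound), so you cannot invoke it here. The missing idea in your proposal is precisely the direct hybrid: once you notice that $\ket{x}$ stays unentangled in the Game-B branch, the post-challenge bound is a two-line BBBV-style argument with no loss, and no extractor or gentle-measurement machinery is needed.
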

We can then assume a random $y$ is sampled instead of a random $f(x)$ with only an additive loss of $O((gT + T^2)/N)$ in each round, giving us the bound that we desire.

\paragraph*{Proving the indistinguishability lemma.}
We first convert the problem from distinguishing samples (random $f(x)$ or random $y$) into distinguishing oracles.
Assume the oracle is sampled as follows: first, a uniformly random input $x$ is sampled; then a random function $f_{-x}$ defined on all inputs except $x$ is sampled, together with two independently sampled $y_0, y_1$; define $f_{-x} || y$ as a  function that outputs $f_{-x}(x')$ on all inputs that are not $x$ and $y$ on input $x$; the distinguisher is ask to given either oracle access to $f_{-x} || y_0$ or $f_{-x} || y_1$ and the same challenge $y_0$ after the $q_0$-th query, distinguish which oracle is given (without knowing $x$). When the function $f_{-x}||y_0$ and challenge $y_0$ is given, it corresponds to the case a random $f(x)$ is given; while giving the second function corresponds to the second case that a random $y$ is given. 
It can be shown that the two problems have the exact same difficulty.

Intuitively, every quantum query to a function  entangles a quantum algorithm with one of the output  of that function in superposition. By making $q_0$ queries, the algorithm is entangled with at most $q_0$ outputs of the function in superposition. 
When $y_0$ is given, the only way to tell which oracle is given is by already making  a query on $x$ and entangling the algorithm with either $y_0$ or $y_1$ respectively. Since $y_0$ has not been given during the first $q_0$ queries,  $x$ is perfectly hidden and completely uniformly random from the algorithm's view. Thus, such entanglement only happens with probability $q_0/N$. 
For the remaining $q$ queries, knowing the information $y_0$ does help build the entanglement faster. One strategy is to use Grover's search to check if $y_0$ is an image of the function since with constant probability, $y_0$ is not an image of $f_{-x}||y_1$. By using Grover's search, the advantage of distinguish is about $q^2/N$ and we show such advantage is the best one can get for these $q$ queries. Combining with these two separate analysis, we conclude the indistinguishability lemma.

\subsubsection{Yao's Box Problem}
\label{sec:tech-overview-yaos-box}

We now focus our attention on Yao's box problem.
Assume that an algorithm $\As$, given any function $f: [N] \mapsto \{0, 1\}$, prepares an $S$-qubit advice $\alpha = \alpha(f)$ such that $\As$ can recover $f(x)$ for a random $x$ using $\alpha$ without ever querying $f(x)$ in time $T$ with probability $1/2 + \eps$.

We claim that for classical advice, our reduction from function inversion with advice to multi-instance game can also be generalized to Yao's box, with a more careful amortizing analysis; and the multi-instance game for Yao's box is fairly straightforward to argue using similar techniques as for function inversion.
Intuitively, for Yao's box, the only non-trivial strategy is that the adversary's first $(i - 1)T$ quantum queries predicted the challenge $x_i$, so the best advantage of any algorithm, i.e. the best winning probability minus $1/2$, can only be $O(\sqrt{gT/N})$ instead of $O(\sqrt{gT^2/N})$.
This leads us to the final bound $1/2 + \tilde O(ST/N)^{1/3}$, where the additional exponent loss comes from the new amortizing argument.

For quantum advice, things are a lot trickier.
While Aaronson~\cite{Aaronson05} proved a similar lower bound for a different problem against quantum advice, the techniques there only allow us to prove lower bounds against algorithms that find the correct answer with probability at least $2/3$ for \emph{all} $f$ and $x$ (and indeed under this setting, our techniques combined with \cite{Aaronson05} are sufficient to give query lower bound $ST \ge \tilde\Omega(N)$ even for quantum advice).
However, this is insufficient in our settings where we need to consider the stronger lower bound where $x$ is sampled randomly, and the algorithm can only predict some $f(x)$ and output a random guess for others.

We first revisit the idea of Aaronson~\cite{Aaronson05}, which is to prepare $\poly(\log g)$ copies of the advice, and use majority vote\footnote{This idea was implicitly given, where they called it ``boosting'' under the context of randomized algorithms in complexity theory.} to boost the success probability from $2/3$ to $1 - o(1)$ to make the measurement gentle.
We show that majority vote cannot possibly boost success probability under the average-case (over $x$) setting, by considering the following example: the algorithm $\As$ outputs the correct answer with probability $1$ on 40\% of the inputs, and with probability $0.45$ on the other 60\% of the inputs.
Overall, the success probability is 67\%, but with majority vote, the success probability goes down, and can go down arbitrarily close to 40\%, which is much worse than random guessing!

One way to resolve this is to instead ``gently'' measure the success probability of $\As$ for each instance $f(x)$, and throw a biased random coin according to this distribution as our answer.
We observe that the problem of ``gently'' measuring this probability can be reduced to shadow tomography, which is a problem introduced by Aaronson~\cite{aaronson2018shadow}.
As our multi-instance game requires that each challenge is given sequentially, we also require the shadow tomography to be able to handle online queries.
Aaronson and Rothblum~\cite{AR19} showed that online shadow tomography indeed can be done using $S^2 \log^2 g$ copies of the advice, which leads us to the final bound $S^5 T = \tilde\Omega(N)$ for constant $\eps > 0$.

\begin{remark}[Improving the lower bound for Yao's box with quantum advice]
    \label{rem:yao-box-worst-case}
    We note that if we make the restriction so that for most functions $f$, $\As$ succeeds to output $f(x)$ for \emph{every} input $x$ with probability at least $2/3$, then the idea of using majority vote still works, since in this case, boosting will give us the correct outcome with overwhelming probability.
    Using the same reduction, we can prove that $\As$ has to satisfy $ST = \tilde\Omega(N)$.
    Therefore, we think that $ST = \tilde\Omega(N)$ should be the optimal lower bound.
\end{remark}

\subsection{Related Works}
\label{sec:related-works}

In this section, we compare our techniques with other related works.

The approach of reducing time-space tradeoff lower bounds to multi-instance \emph{problems}, which is outlined in \Cref{sec:tech-overview-reduction}, has appeared implicitly in various works~\cite{Beame91, klauck2003quantum, KlauckSW07}, where they usually refer to the (exponential) hardness of multi-instance problems as ``(strong) direct product theorems.''
While the different approaches presented in different works share some similar high-level ideas, the context and details in each work are slightly different.
In this work, to avoid confusion, we use the term ``multi-instance problem'' instead of direct products.
Recently Hamoudi and Magniez~\cite{hamoudi2020quantum} independently applied the similar technique along with Zhandry's compressed oracles to prove quantum time-space tradeoffs for finding multiple collisions.  

Classically, this approach has also been considered in various works~\cite{goldreich1993existence, impagliazzo2011relativized} for proving non-uniform lower bounds.
Aaronson~\cite{Aaronson05} first showed how to employ such ideas when the non-uniform lower bounds need to hold even against quantum advice.
While the problem they consider is quite simple and somewhat arbitrary, the starting point we outline in \Cref{sec:tech-overview-reduction} is based on this work.
However, to the best of our knowledge this technique has not been explored in the AI-ROM literature (possibly due to the fact that same bounds can already be achieved using other -- possibly more complicated -- techniques), and the reductions presented in this work can also be easily ``dequantized'' to the classical setting.

As far as we are concerned, our work is the first one to consider the stronger variant ``multi-instance games'' and show a separation of the two variants for function inversion in \Cref{sec:tech-overview-reduction}.
Additionally, we present our reduction under a general framework for the stronger variant of multi-instance games.

The idea of using gentle measurements is almost ubiquitous for proving lower bounds against quantum advice.
To the best of our knowledge, Aaronson~\cite{Aaronson05} first showed how to combine boosting and gentle measurements for quantum advice lower bounds, which we briefly discuss in \Cref{sec:tech-overview-yaos-box}.
In particular, this technique was also employed by Hhan et al.~\cite{hxy19} to prove an asymptotic lower bound of $ST^2 \ge N$ for inverting random permutations, although under a different context.

\subsection{Open Problems}
\label{sec:open-problems}

\paragraph*{Quantum time-space tradeoff lower bounds for permutation inversion.}
While our work provides substantial evidence that the quantum time-space tradeoff bound for inverting permutations is $ST + T^2 = \tilde\Theta(N)$, which would have resolved the open problem posted by Nayebi et al.~\cite[Section 4.3]{NABT15}, we are not able to formally prove this due to a lack of ``compressed permutation oracles''.
This is especially interesting, considering that it is easier to argue about permutations than functions using the compression argument, which is used in prior works~\cite{NABT15,hxy19, chung2019lower}.

\paragraph*{Nontrivial quantum speed ups for function inversion.}
While our lower bound is best possible, it still leaves open the possibility that some nontrivial quantum speed ups exist under the following asymptotic regime:
\begin{equation*}
    \left\{
        \begin{aligned}
            ST &\gg N, \\
            T^2 &\ll N, \\
            S^2 T &\ll N^2.
        \end{aligned}
    \right.
\end{equation*}
An especially interesting case is that there might be a quantum algorithm with advice under this regime, but it might seem extremely hard to be ``dequantized.''

Our paper also gives many lower bounds, improving any of the bounds that are not tight (for example, the query bound for average-case Yao's box against quantum advice), or showing new non-trivial attacks, are all interesting possibilities.

\paragraph*{Suboptimal exponent on success probability for quantum advice, or is it?}
All the bounds we have achieved for quantum advice are not tight in terms of the exponent on the success probability for quantum advice.
In particular, we note that for function inversion, our classical advice lower bound has exponent $1$, which is tight -- while the quantum advice bound has exponent $1/3$.
Can this loss be avoided, or is there any speed up in terms of $S$ and $T$ for sub-constant success probability?

We make the observation that the loss in exponent ultimately comes from the use of gentle measurements.
Looking back at the literature, all the quantum advice lower bound techniques~\cite{Aaronson05,hxy19,chung2019lower} we have seen so far always require ``reusing'' of the advice, which in turns require gentle measurements.
Is there a way to avoid reusing the quantum advice to escape this cost?
We suspect that this is the case, as in the work of Chung et al.~\cite{chung2019lower}, they presented a reduction to quantum random access code, which by definition seems to avoid this issue, albeit in the end, the advice reusing issue somehow kicks back in.

\fi

\ifexabs\else
\ifieeecs\else

\section{Preliminaries}
For any $n \in \mathbb N$, we denote $[n]$ to be the set $\{1, 2, ..., n\}$.
We denote $\Z/n\Z = \{0, 1, ..., n - 1\}$ as the ring of integer modulo $n$, and $\mathbb F_2 = \{0, 1\}$ as the binary finite field.
For a complex vector $\vx \in \mathbb C^n$, we denote the $L^2$-norm $|\vx| = \ltwonorm{\vx} = \sum_{i \in [n]} x_i \overline{x_i}$.
In algorithms, we denote $a \gets_\$ A$ to be taking $a$ as a uniformly (fresh) random element of $A$.

Next, we recall some basic facts about quantum computation, and review the relevant literature on the quantum random oracle model.

\subsection{Quantum Computation}
A quantum system $Q$ is defined over a finite set $B$ of classical states.
A \textbf{pure state} over $Q$ is a unit vector in $\mathbb{C}^{|B|}$, which assigns a complex number to each element in $B$. In other words, let $|\phi\rangle$ be a pure state in  $Q$, we can write $|\phi\rangle$ as a column vector:
\begin{equation*}
    |\phi\rangle = \sum_{x \in B} \alpha_x |x\rangle
\end{equation*}
where $\sum_{x \in B} |\alpha_x|^2 = 1$ and $\{|x\rangle\}_{x \in B}$ is called 
the ``\emph{computational basis}'' of $\mathbb{C}^{|B|}$. The computational basis forms an orthonormal basis of $\mathbb{C}^{|B|}$. 
We define $\bra \phi$ to be the row vector that is the conjugate of $\ket \phi$.

Given two quantum systems $Q_1$ over $B_1$ and $Q_2$ over $B_2$, we can define a \emph{product} quantum system $Q_1 \otimes Q_2$ over the set $B_1 \times B_2$. Given $|\phi_1\rangle \in Q_1$ and $|\phi_2\rangle \in Q_2$, we can define the product state $|\phi_1\rangle \otimes |\phi_2\rangle \in Q_1 \otimes Q_2$. 

We say $|\phi\rangle \in Q_1 \otimes Q_2$ is \emph{entangled} if there does not exist 
$|\phi_1\rangle \in Q_1$ and $|\phi_2\rangle \in Q_2$ such that $|\phi\rangle = |\phi_1\rangle \otimes |\phi_2\rangle$. For example, consider $B_1 = B_2 = \{0,1\}$
and $Q_1 = Q_2 = \mathbb{C}^2$, $|\phi\rangle = \frac{|00\rangle + |11\rangle}{\sqrt{2}}$ is entangled. Otherwise, we say $|\phi\rangle$ is unentangled. 

A pure state $|\phi\rangle \in Q$ can be manipulated by a unitary operator $U \in \mathbb C^{|B| \times |B|}$. The resulting state $|\phi'\rangle = U |\phi\rangle$. 
We denote the trace norm $\trdist{U}$ to be $\frac 1 2 \Tr\sqrt{U^\dagger U}$.

We extract classical information from a quantum state $|\phi\rangle$ by performing a \emph{measurement}. A measurement is specified by an orthonormal basis, typically the computational basis, and the probability of getting result $x$ is $|\langle x | \phi \rangle|^2$. After the measurement, $|\phi\rangle$ ``collapses'' to the state $|x\rangle$ if the result is $x$. 
                
For example, given the pure state $|\phi\rangle = \frac{3}{5} |0\rangle + \frac{4}{5} |1\rangle$ measured under $\{|0\rangle ,|1\rangle \}$, with probability $9/25$ the result is $0$ and $|\phi\rangle$ collapses to $|0\rangle$; with probability $16/25$ the result is $1$ and $|\phi\rangle$ collapses to $|1\rangle$.

    We assume quantum circuits can  implement any unitary transformation (by using these basic gates, Hadamard, phase, CNOT and $\frac{\pi}{8}$ gates), in particular the following two unitary transformations:
        \begin{itemize}
            \item \textbf{Classical Computation:} Given a function $f : X \to Y$, one can implement a unitary $U_f$ over $\mathbb{C}^{|X|\cdot |Y|} \to \mathbb{C}^{|X| \cdot |Y|}$ such that for any $|\phi\rangle = \sum_{x \in X, y \in Y} \alpha_{x, y} |x, y\rangle$, 
            \begin{equation*}
                U_f |\phi\rangle = \sum_{x \in X, y \in Y} \alpha_{x, y} |x, y \oplus f(x)\rangle
            \end{equation*}
            
            Here, $\oplus$ is a commutative group operation defined over $Y$.
            In particular, if $f$ is given as a classical circuit $C$, there exists an efficient implementation of the unitary $U_f$ using $|C|$ ancillas, and each gate is evaluated at most twice.
            
            \item \textbf{Quantum Fourier Transform:} For every $n \in \mathbb N$, the quantum Fourier transform ${\sf QFT}_n$ is a unitary operation, that is given a quantum state $|\phi\rangle = \sum_{j \in \Z/n\Z} x_j |j\rangle$, %
	outputs $|\psi\rangle =  \sum_{k \in \Z/n\Z} y_k |k\rangle$ where the sequence $\{y_k\}_{k}$ is the Fourier transform to the sequence $\{x_j\}_{j}$, i.e.
		\begin{equation*}
			y_k = \frac{1}{\sqrt{n}} \sum_{j \in \Z/n\Z} \omega_n^{jk} x_j
		\end{equation*} 
	where $\omega_n = e^{2 \pi i / n}$, and $i = \sqrt{-1}$ is the imaginary unit. 
	
        \end{itemize}

\subsection{Quantum Random Oracle Model}

An oracle-aided quantum algorithm can perform quantum computation as well as quantum oracle query. 
A quantum oracle query for an oracle $f:[N] \to [M]$ is modeled as a unitary $U_f: \ket x \ket u = \ket x \ket {u + f(x)}$, where $+$ denotes addition in integer ring $\Z/M\Z$ (we take the natural bijection that $M \simeq 0$, but any bijection $[M] \leftrightarrow \Z/M\Z$ suffice for our purposes). 

A random oracle is a random function $H : [N] \to [M]$. The random function $H$ is chosen at the beginning. A quantum algorithm making $T$ oracle queries to $H$ can be modeled as the following: it has three registers $\ket x, \ket u, \ket z$, where $x \in [N], u \in \Z/M\Z$ and $z$ is the algorithm's internal working memory; it starts with some input state $|0\rangle \ket 0 |\psi\rangle$, then it applies a sequence of unitary to the state: $U_0$, $U_H$, $U_1$, $U_H$, $\cdots$, $U_{T-1}$, $U_H$, $U_T$ and a final measurement over computational basis. Each $U_H$ is the quantum oracle query unitary  $U_H: \ket x \ket u = \ket x \ket {u + H(x)}$ and $U_i$ is the local quantum computation that is independent of $H$. We can always assume there is only one measurement which is a measurement over computational basis and applied at the last step of the algorithm. 

\subsection{Compressed Oracle}
\label{sec:compressed-oracle}

In this subsection, we recall the technique introduced by Zhandry \cite{zhandry2019record}. We will explain how to purify a random oracle in the quantum setting first, and then give equivalent forms of a quantum random oracle, namely standard oracle ${\sf StO}$, phase oracle ${\sf PhO}$ and compressed standard oracle $\csto$.  All these oracles are equivalent in the sense that for every (even unbounded) algorithm making queries to one of these oracles, the output distribution of the algorithm is exactly identical regardless of which oracle is given. 
Then Zhandry shows dealing with compressed standard oracle is usually easier. 
Roughly speaking, Zhandry shows that with compressed standard oracle, one could quantify the amount of the information about the random oracle learned by any quantum algorithm, analogous to the lazy sampling technique that is very commonly used for classical random oracles.

Note that in Zhandry's work \cite{zhandry2019record}, they originally only considered output of size $M = 2^m$, and implementing a quantum random oracle as $U_H : \ket x \ket u = \ket x \ket {u \oplus H(x)}$ where $\oplus$ is bit-wise XOR, or the addition over $\mathbb F_2^m$. Therefore, their description of the compressed oracle technique is different since the range is defined as $\mathbb{F}_2^{m}$ instead of $\Z/M\Z$ considered in this paper.
Two oracles are equivalent as we can simulate one with the other using two queries.
For the completeness of the paper, we will reprove some of the useful lemmas under the integer ring $\Z/M\Z$.

Also note that Zhandry also showed that a compressed oracle can be efficiently implemented by a quantum computer, i.e. the running time is only polynomial in the number of queries and $\log N, \log M$.
In this work, since we mainly consider query complexity and for presentation, we ignore the issue of efficiency for a simpler presentation.

\paragraph{Purification: standard oracle.} Let $H$ be a random oracle $[N] \to [M]$. 
The function $H$ is sampled at the very beginning, or equivalently, initially we prepare a maximally mixed state $\eta \sum_H \ket H \bra H$ up to some normalization factor $\eta$, and each query can be implemented by another unitary $U$, which reads the function $H$ and applies $U_H$.
However, we can ``purify'' the random oracle, meaning that we can replace the mixed state of $\ket {H}$ with a uniform superposition of all possible functions, i.e. $\sqrt\eta \sum_H \ket H$.
Consider the truth table of $H$, that is $\ket H = \ket {H(1)} \ket {H(2)} \cdots \ket {H(N)}$. Let $\As$ be any quantum algorithm. We say the algorithm can query the standard oracle if we treat the algorithm's registers and $\ket H$ as a whole system, initialized as $\ket 0 \ket \psi \otimes \frac{1}{M^{N/2}} \sum_H \ket H$. An oracle query ${\sf StO}$ in this purified state is defined as, 
\begin{align*}
    {\sf StO} \ket {x} \ket {u} \ket {z} \otimes \ket {H} = \ket {x} \ket {u + H(x)} \ket {z} \otimes \ket {H},  
\end{align*}
where $\ket x, \ket u$ are the input and output register, $\ket z$ is an arbitrary working register and $\ket H$ is the random oracle. Each local quantum computation is $U_i \otimes I$ which only operates on $\As$'s registers $\ket x \ket u \ket z$. 
Therefore the computation of any $\As$ can be described as a sequence of:  $U_0 \otimes I$, $\sto$, $\cdots$, $U_{T-1} \otimes I$, $\sto$,  $U_T \otimes I$, and a final computational measurement over $\As$'s register. The following proposition tells that the output distribution using a standard oracle is exactly the same as using a random oracle. 

\begin{lemma}[{\cite[Lemma 2]{zhandry2019record}}] \label{prop:purifyoracle}
    Let $\As$ be an (unbounded) quantum algorithm making oracle queries. The output of $\As$ given a random function $H$ is exactly identical to the output of $\As$ given access to a standard oracle. 
    Therefore, a random oracle with quantum query access can be perfectly simulated as a standard oracle. 
\end{lemma}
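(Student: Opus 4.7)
The plan is to prove that the two experiments yield identical output distributions by exploiting the fact that the \textsf{StO} unitary is ``classically controlled'' by the $\ket H$ register and never modifies it. More precisely, I would argue via linearity that the state in the purified experiment is a superposition over all truth tables, each branch of which evolves exactly as in the random oracle experiment for that particular $H$; tracing out the oracle register then recovers the classical mixture over $H$.

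Concretely, here is how I would carry out the argument. First I would fix any oracle-aided algorithm $\As$ specified by a sequence of local unitaries $U_0, U_1, \ldots, U_T$ acting on its registers $\ket x \ket u \ket z$, together with an initial state $\ket 0 \ket 0 \ket \psi$ and a final computational-basis measurement. For the random oracle experiment, for each fixed $H$ let
\[
\ket{\phi_H} \;=\; (U_T)\,U_H\,(U_{T-1})\,U_H\,\cdots\,U_H\,(U_0)\,\ket 0 \ket 0 \ket \psi
\]
be the algorithm's state immediately before the final measurement. The output distribution is then
\[
p(y) \;=\; \frac{1}{M^N}\sum_{H} \bigl| \langle y | \phi_H\rangle \bigr|^2,
\]
or equivalently, is obtained by measuring the mixed state $\rho_{\mathrm{RO}} = \tfrac{1}{M^N}\sum_H \ket{\phi_H}\bra{\phi_H}$.

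Next, for the standard oracle experiment I would track the global state on $\As$'s registers tensored with the oracle register $\ket H$. The key observation is that \textsf{StO} acts by
\[
\textsf{StO}\,\ket x \ket u \ket z \otimes \ket H \;=\; U_H\,(\ket x \ket u \ket z) \otimes \ket H,
\]
so on each branch $\ket H$ of the superposition, \textsf{StO} coincides with the classical oracle unitary $U_H$, and the oracle register is never altered. By linearity, applying the whole sequence $(U_0\otimes I),\,\textsf{StO},\,(U_1\otimes I),\,\ldots,\,(U_T\otimes I)$ to $\ket 0 \ket 0 \ket \psi \otimes \tfrac{1}{M^{N/2}}\sum_H \ket H$ yields
\[
\ket{\Psi_T} \;=\; \frac{1}{M^{N/2}}\sum_H \ket{\phi_H}\otimes\ket H.
\]
Since the final measurement acts only on $\As$'s registers, the induced distribution is that of the reduced density operator $\Tr_{\text{oracle}}\ket{\Psi_T}\bra{\Psi_T}$. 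Because $\{\ket H\}_H$ is orthonormal, the partial trace collapses the cross terms and yields exactly $\tfrac{1}{M^N}\sum_H \ket{\phi_H}\bra{\phi_H} = \rho_{\mathrm{RO}}$. Hence the output distributions agree.

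I do not expect a serious obstacle here; the proof is essentially a ``deferred measurement / purification'' argument, and the only thing to double-check is that the integer-ring version of $U_H$ used in this paper, namely $\ket x \ket u \mapsto \ket x \ket{u + H(x)}$ with addition in $\mathbb{Z}/M\mathbb{Z}$, is genuinely a unitary that leaves the $\ket H$ register invariant and thus admits the exact same linearity-of-branches argument as the XOR version in Zhandry's original proof. Writing out the controlled form of \textsf{StO} explicitly and noting orthonormality of distinct truth tables $\ket H$ makes this transparent.
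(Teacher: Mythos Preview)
Your proposal is correct and is exactly the standard purification argument. The paper does not actually supply its own proof of this lemma; it simply cites \cite[Lemma~2]{zhandry2019record}, and the paragraph preceding the lemma statement sketches precisely the same idea you formalize (replace the classical mixture $\eta\sum_H\ket{H}\bra{H}$ by the pure superposition $\sqrt{\eta}\sum_H\ket{H}$ and observe that $\textsf{StO}$ is controlled by and does not disturb the $\ket{H}$ register).
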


\paragraph{Phase kickback: phase oracle.} %
Define a unitary $V$ as $(I_x \otimes {\sf QFT}_M^\dagger \otimes I_{H})$ which applies ${\sf QFT}_M^\dagger$ on the output register $\ket u$.  Define the phase oracle operator $\pho := V^\dagger \cdot \sto \cdot V$. 
\begin{align*}
    \pho \ket x \ket u \otimes \ket H &=  V^\dagger \cdot  \sto \cdot  \frac{1}{\sqrt{M}}  \sum_y  \omega_M^{- u y} \ket x \ket y \otimes \ket H  \\ 
    &= V^\dagger  \cdot   \frac{1}{\sqrt{M}}  \sum_y \omega_M^{- u y} \ket x \ket {y + H(x)} \otimes \ket H  \\ 
     &=   \frac{1}{{M}} \sum_{y, y'}  \omega_M^{- u y + (y + H(x)) y'}  \ket x \ket {y'} \otimes \ket H  \\ 
     &=   \frac{1}{{M}} \omega_M^{u H(x)} \sum_{y, y'}  \omega_M^{(y + H(x)) (y' - u)}  \ket x \ket {y'} \otimes \ket H  \\ 
    &=  \ket x \ket u \otimes \omega_M^{u H(x)} \ket {H}. 
\end{align*}
Similarly, we override the notation $\pho$ such that for any auxiliary register $\ket z$, $\pho \ket x \ket u \ket z \otimes \ket H = \ket x \ket u \ket z \otimes \omega_M^{u H(x)} \ket {H}$.

Observing that $V V^\dagger = I$, the following lemma tells that we can efficiently convert between a standard oracle algorithm and a phase oracle algorithm. 
\begin{lemma}[{\cite[Lemma 3]{zhandry2019record}}] 
    \label{lem:pho-simulate}
    Let $\As$ be an (unbounded) quantum algorithm making queries to a standard oracle. 
    Let $\Bs$ be the algorithm that is identical to $\As$, except it performs $V$ and $V^\dagger$ before and after each query.
    Then the output distributions of $\As$ (given access to a standard oracle) and $\Bs$ (given access to a phase oracle) are identical.
    Therefore, a quantum random oracle can be perfectly simulated as a phase oracle.
\end{lemma}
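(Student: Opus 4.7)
The plan is to observe that the equation $\pho := V^\dagger \cdot \sto \cdot V$ can be inverted. Since $V$ is unitary (being a tensor product of the identity with the inverse quantum Fourier transform $\mathsf{QFT}_M^\dagger$ and another identity), $V V^\dagger = V^\dagger V = I$. Conjugating both sides of the defining identity by $V$ from the left and $V^\dagger$ from the right gives
\[
V \cdot \pho \cdot V^\dagger \;=\; V V^\dagger \cdot \sto \cdot V V^\dagger \;=\; \sto.
\]
So the phase oracle combined with $V$ and $V^\dagger$ on the output register is literally the same unitary as the standard oracle.

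Next I would recall that algorithm $\As$'s execution, once the oracle is purified per \Cref{prop:purifyoracle}, is just a fixed sequence of unitaries alternating between $\As$-local operations $U_i\otimes I$ and standard oracle queries $\sto$:
\[
(U_T\otimes I)\,\sto\,(U_{T-1}\otimes I)\,\cdots\,\sto\,(U_0\otimes I),
\]
followed by a measurement in the computational basis of $\As$'s registers. Now substitute $\sto = V\cdot\pho\cdot V^\dagger$ for each of the $T$ occurrences. Because $V$ and $V^\dagger$ only act on the output register (never on $\ket{H}$ or on $\As$'s internal memory, and commuting with $U_i\otimes I$ only where it matters — actually we do not need commutation, we just re-bracket), we obtain precisely the circuit implemented by $\Bs$: the $U_i\otimes I$'s unchanged, with every phase oracle call sandwiched by $V$ on the right and $V^\dagger$ on the left.

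Thus, as joint unitaries on the algorithm-plus-oracle state, $\As$'s standard-oracle execution and $\Bs$'s phase-oracle execution agree. In particular the reduced state on $\As$'s output register just before measurement is identical, so the final measurement distributions coincide. Since \Cref{prop:purifyoracle} already tells us that the standard-oracle distribution matches the true random-oracle distribution, we conclude that a random oracle can be perfectly simulated by a phase oracle at the cost of two extra $\mathsf{QFT}_M^{(\dagger)}$ layers per query. The only ``obstacle'' is really a bookkeeping one — making sure that $V$ acts as the identity on $\As$'s ancilla register $\ket{z}$ and on the oracle register $\ket{H}$ so that the substitution commutes cleanly through every $U_i\otimes I$ — which is immediate from the definition $V = I_x \otimes \mathsf{QFT}_M^\dagger \otimes I_H$.
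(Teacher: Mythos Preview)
Your proof is correct and matches the paper's approach: the paper does not give a full proof of this lemma (it is cited from Zhandry), but immediately before the statement it notes ``Observing that $VV^\dagger = I$,'' which is exactly the key identity you exploit. Your expansion of that hint into the substitution $\sto = V\cdot\pho\cdot V^\dagger$ inside $\As$'s unitary sequence is the intended argument.
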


We then have the following lemma for the phase oracle, that formulates the behavior of a quantum algorithm making at most $T$ queries to the phase oracle. We have seen that every $\pho$ query will add a phase to the $\ket H$ register, i.e., $\pho \ket x \ket u \otimes \ket H = \ket x \ket u \otimes \omega_M^{u H(x)} \ket {H}$. Define $D$ as a truth table, or equivalently a vector in $(\Z/M\Z)^N$ and $D(x)$ be the $x$-th entry of $D$. Define $|D|$ be the number of non-zero entries in $D$. For any $D$, we define $\ket {\phi_D} = \frac{1}{M^{N/2}} \sum_{H} \omega_M^{\langle D, H\rangle}  \ket H$ for all $D \in (\Z/M\Z)^N$ where  $\langle D, H\rangle$ is defined to be the inner product $\sum_{x \in [N]} D(x) H(x)$. Note that we will only use this inner product on the exponent of $\omega_M$ so it is irrelevant whether we are computing it on the integer ring or the ring modulo $M$.
\newcommand{\statelemboundeddatabasephase}{
    Let $\As$ be a quantum algorithm making at most $T$ queries to a phase oracle. The overall state of $\As$ and the phase oracle can be written as $\sum_{z, D: |D| \leq T} \alpha_{z, D} \ket z \otimes \frac{1}{M^{N/2}} \sum_{H} \omega_M^{\langle D, H\rangle} \ket H = \sum_{z, D: |D| \leq T} \alpha_{z, D} \ket z \otimes \ket {\phi_D}$. 
    
    Moreover, it is true even if the state is conditioned on arbitrary outcomes (with non-zero probability) of $\As$'s intermediate measurements. 
}
\begin{lemma}
    \label{lem:bounded-database-phase}
    \statelemboundeddatabasephase
\end{lemma}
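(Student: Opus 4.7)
The plan is to prove the lemma by induction on the number $t$ of phase-oracle queries made so far, with the stronger inductive hypothesis that at \emph{every} intermediate time step (even between local unitaries and measurements), the joint state of the algorithm's registers and the oracle register has the form $\sum_{z,D : |D|\le t} \alpha_{z,D}\ket{z}\otimes\ket{\phi_D}$. The base case $t=0$ is immediate: the oracle register is initialized as the uniform superposition $\frac{1}{M^{N/2}}\sum_H \ket H = \ket{\phi_0}$, where the empty database $D=0$ has $|D|=0$, so the initial joint state is $\ket{0,0,\psi}\otimes\ket{\phi_0}$, which fits the template.

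For the inductive step, I would analyze the three kinds of operations the algorithm can perform between consecutive queries. A local unitary $U_i \otimes I$ acts only on the $\ket z$ register and hence preserves the form of the decomposition, merely reshuffling the coefficients $\alpha_{z,D}$ for each fixed $D$. An intermediate measurement on the algorithm's register corresponds to applying a projector $P_m \otimes I$ and renormalizing; because $P_m$ acts only on $\ket z$, each $\ket{\phi_D}$ component survives independently, so the post-measurement state is still a superposition of $\ket{z}\otimes\ket{\phi_D}$'s with $|D|\le t$. The crucial case is a phase-oracle query: using the identity $\pho\ket{x}\ket{u}\otimes\ket H = \ket{x}\ket{u}\otimes \omega_M^{uH(x)}\ket H$, I would compute
\begin{align*}
\pho \ket{x}\ket{u}\ket{z}\otimes\ket{\phi_D}
&= \ket{x}\ket{u}\ket{z}\otimes \frac{1}{M^{N/2}}\sum_H \omega_M^{\langle D,H\rangle + uH(x)}\ket H \\
&= \ket{x}\ket{u}\ket{z}\otimes\ket{\phi_{D+u\cdot e_x}},
\end{align*}
where $e_x$ denotes the unit vector in $(\Z/M\Z)^N$ supported at coordinate $x$. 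Since $|D+u\cdot e_x|\le |D|+1 \le t+1$, extending by linearity over the superposition in $(x,u,z)$ yields the desired form at time $t+1$.

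The argument is essentially bookkeeping once the single-query identity above is established, so I do not expect any real obstacles beyond carefully verifying that identity and making sure the normalization conventions in the definition of $\ket{\phi_D}$ match those of the purified standard oracle. The only subtle point is the handling of intermediate measurements: I want to emphasize that conditioning on a particular nonzero-probability outcome merely restricts and renormalizes the coefficients $\{\alpha_{z,D}\}$, without mixing different $D$'s or inflating the maximum value of $|D|$, since the measurement operator factors as $P_m \otimes I_H$. Combining the three cases finishes the induction and establishes the claimed decomposition for the final state after at most $T$ queries.
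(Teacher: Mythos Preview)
Your proof is correct and follows essentially the same approach as the paper's: induction on the number of queries, verifying that local unitaries preserve the decomposition, computing the key identity $\pho\ket{x,u,z}\otimes\ket{\phi_D}=\ket{x,u,z}\otimes\ket{\phi_{D+u\cdot e_x}}$ (the paper writes $D\oplus(x,u)$ for your $D+u\cdot e_x$), and noting $|D+u\cdot e_x|\le|D|+1$. The only minor stylistic difference is that you handle intermediate measurements inline via $P_m\otimes I$, whereas the paper defers all measurements to the end and then conditions; both are valid and yield the same conclusion.
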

\noindent
For completeness, we provide the proof for this lemma in \Cref{appendix:compressed-oracle}.

\paragraph*{Compressed standard oracle.} Intuitively, compressed oracle is an analogy of classical lazy sampling method. Instead of recording all the information of $H$ in the registers (like what it does in the standard oracle or the phase oracle), Zhandry provides a better solution which is useful to argue the amount of the information an algorithm knows about the random oracle. 

The oracle register records a database/list that contains the output on each input $x$, the output is an element in $\Z/M\Z \cup \{\bot\}$, where $\bot$ is a special symbol denoting that the value is ``uninitialized''.
The database is initialized as an empty list $D_0$ of length $N$, in other words, it is initialized as the pure state $\ket{\emptyset} := \ket{\bot, \bot, \cdots, \bot}$. Let $|D|$ denote the number of entries in $D$ that are not $\bot$. Define $D(x)$ to be the $x$-th entry.%

For any $D$ and $x$ such that $D(x) = \bot$, we define $D \cup (x, u)$ to be the database $D'$, such that for every $x' \ne x$, $D'(x') = D(x)$ and at the input $x$, $D'(x) = u$. 

The compressed standard oracle is the unitary $\csto := \stddecomp \circ \csto' \circ \stddecomp$, where
\begin{itemize}
    \item $\csto'\, |x, u\rangle |D\rangle = |x, u + D(x) \rangle |D\rangle$ when $D(x) \ne \bot$, which writes the output of $x$ defined in $D$ to the $u$ register. This operator will never be applied on an $x, D$ where $D(x) = \bot$.
    \item $\stddecomp(\ket x \otimes \ket D) := \ket x \otimes \stddecomp_x \ket D$, where ${\sf StdDecomp}_x\, |D\rangle$ works on the $x$-th register of the database $D(x)$. Intuitively, it swaps a uniform superposition $\frac 1{\sqrt M} \sum_y \ket y$ with $\ket \bot$ on the $x$-th register. 
    Formally,
        \begin{itemize}
            \item If $D(x) = \bot$, $\stddecomp_x$ maps $\ket \bot$ to $\frac{1}{\sqrt{M}} \sum_y \ket y$, or equivalently, $\stddecomp_x |D\rangle = \frac{1}{\sqrt{M}} \sum_y |D \cup (x, y)\rangle$. Intuitively, if the database does not contain information about $x$, it samples a fresh $y$ as the output of $x$. 
            
            \item If $D(x) \ne \bot$, $\stddecomp_x$ works on the $x$-th register, and it is an identity on $\frac{1}{\sqrt{M}} \sum_y \omega_M^{u y} \ket y$ for all $u \ne 0$; it maps the uniform superposition $\frac{1}{\sqrt{M}} \sum_y \ket y$ to $\ket \bot$. 
            
            More formally, for a $D'$ such that $D'(x) = \bot$, 
            \begin{align*}
                \stddecomp_x  \frac{1}{\sqrt{M}} \sum_y \omega_M^{u y} |D' \cup (x, y)\rangle 
                            = \frac{1}{\sqrt{M}} \sum_y \omega_M^{u y}  |D' \cup (x, y)\rangle  \text{ for any } u \ne 0  ,
            \end{align*}
            and, 
            \begin{align*}   
                \stddecomp_x  \frac{1}{\sqrt{M}} \sum_y |D' \cup (x, y)\rangle  =|D'\rangle      .
            \end{align*}
        \end{itemize}
        Since all $\frac{1}{\sqrt{M}} \sum_y \omega_M^{u y} \ket y$ and $\ket \bot$ form a basis, these requirements define a unique unitary operation. %
\end{itemize}
Zhandry proves that, ${\sf StO}$ and \csto{} are perfectly indistinguishable by any \textit{unbounded} quantum algorithm. 
\begin{lemma}[{\cite[Lemma 4]{zhandry2019record}}] 
    Let $\As$ be an (unbounded) quantum algorithm making oracle queries. The output of $\As$ given access to the standard oracle is exactly identical to the output of $\As$ given access to a compressed standard oracle. 
\end{lemma}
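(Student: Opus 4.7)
The plan is to exhibit an explicit isometry $V$ from the compressed oracle's Hilbert space into the standard oracle's Hilbert space such that $V$ maps the initial compressed database $\ket\emptyset$ to the uniform superposition $\tfrac{1}{M^{N/2}} \sum_H \ket H$ and, crucially, intertwines the two oracle unitaries: $\sto \circ (I \otimes V) = (I \otimes V) \circ \csto$. Once this is in hand, we argue by induction on the number of queries that the joint state of the algorithm and the oracle register evolves identically up to $V$. Since the algorithm's local unitaries $U_i \otimes I$ act as identity on the oracle register, they trivially commute with $I \otimes V$; and since the final measurement acts only on the algorithm's register, the marginal distribution on outcomes is the same whether we run against $\sto$ or against $\csto$.

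To construct $V$, it is most convenient to work register by register. For a single register of dimension $M+1$, consider the change of basis into the Fourier modes $\bigl\{\tfrac{1}{\sqrt M}\sum_y \omega_M^{uy}\ket y\bigr\}_{u \in \Z/M\Z}$ together with $\ket\bot$. Define $V_x$ to be the partial isometry that is identity on every nonzero Fourier mode ($u \ne 0$) and sends $\ket\bot \mapsto \tfrac{1}{\sqrt M}\sum_y \ket y$; equivalently, after composition with $\stddecomp_x$ (which is its own inverse and swaps $\ket\bot$ with the zero-frequency mode), one recovers the full $M$-dimensional subspace of ``no $\bot$ entry.'' Setting $V := \bigotimes_x V_x$ then gives $V \ket\emptyset = \tfrac{1}{M^{N/2}}\sum_H \ket H$ as required for the initial state.

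The heart of the argument is verifying the intertwining $\sto \circ V = V \circ \csto$. Using $\csto = \stddecomp \circ \csto' \circ \stddecomp$ and the observation that $\stddecomp_x$ acts on the $x$-th register exactly so as to convert between the $\ket\bot / $zero-mode description and the fully expanded description, we check case by case on what $\csto'$ does: on any $\ket{x,u}\otimes \ket{D}$ with $D(x) = y \ne \bot$, $\csto'$ adds $y$ to $u$, which matches the action of $\sto$ on the corresponding decompressed basis vector $\ket{x,u}\otimes \ket{H}$ whenever $H(x)=y$; the outer $\stddecomp$ operations reshuffle $\ket\bot$ with the zero Fourier mode so that the ``not yet queried'' and ``queried once in superposition'' cases are treated uniformly. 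A direct computation on each of the basis states $\tfrac{1}{\sqrt M}\sum_y \omega_M^{uy}\ket{D'\cup(x,y)}$ and on $\ket{D'}$ (with $D'(x)=\bot$) confirms the identity on a spanning set, hence everywhere.

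With the intertwining in place, an easy induction on $t$ shows that if $\ket{\Phi_t^\sto}$ and $\ket{\Phi_t^\csto}$ denote the joint states after $t$ queries in the two models, then $\ket{\Phi_t^\sto} = (I \otimes V)\ket{\Phi_t^\csto}$. Tracing out the oracle register (which is what the final computational-basis measurement on the algorithm's register amounts to) yields identical reduced density matrices on the algorithm's side, and thus identical output distributions. The main technical obstacle, as anticipated, is the case analysis needed to verify $\sto \circ V = V \circ \csto$ cleanly; once the basis is chosen so that $\stddecomp_x$ is literally a swap between $\ket\bot$ and the zero-frequency Fourier mode, this reduces to a short, finite computation.
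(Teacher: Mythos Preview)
The paper does not give its own proof of this lemma; it simply cites \cite[Lemma 4]{zhandry2019record} and uses the statement as a black box. So there is no in-paper argument to compare against. Your outline is essentially the standard way this equivalence is established (Zhandry's original proof proceeds through a chain of hybrid oracles, but unwinding those hybrids yields exactly the conjugation-by-an-isometry picture you describe).

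One point to tighten: the compressed oracle register has dimension $(M+1)^N$, strictly larger than the standard oracle's $M^N$, so there is no isometry in the direction you state. What you have is a partial isometry whose support is the ``invariant subspace'' spanned by databases in which each cell is either $\ket\bot$ or orthogonal to the zero Fourier mode; on that subspace your $V_x$ coincides with $\stddecomp_x$. For the induction to go through you therefore also need that the $\csto$ evolution, started from $\ket\emptyset$, never leaves this subspace. That is true---the trailing $\stddecomp$ in $\csto=\stddecomp\circ\csto'\circ\stddecomp$ is precisely what re-absorbs any zero mode back into $\ket\bot$---but it is a separate (easy) claim, not a consequence of the intertwining relation alone. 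Once you record this, the intertwining check on register $x$ collapses: $V_x\circ\stddecomp_x$ is the identity on the ``no $\bot$'' subspace (both swap $\ket\bot$ with the zero mode), so $(I\otimes V)\circ\csto$ reduces to $\csto'\circ\stddecomp_x$ on that register together with $V_{x'}$ on the untouched registers, which matches $\sto\circ(I\otimes V)$ because $\csto'$ and $\sto$ act identically when no $\bot$ is present.
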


Combining this lemma with \Cref{prop:purifyoracle}, we obtain the following corollary.
\begin{corollary}
    \label{cor:csto-simulate}
    A quantum random oracle can be perfectly simulated as a compressed standard oracle.
\end{corollary}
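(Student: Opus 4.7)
The plan is to chain the two previously established equivalences. First I would apply \Cref{prop:purifyoracle}, which asserts that for any (even unbounded) quantum algorithm $\As$, the output distribution when $\As$ interacts with a uniformly random function $H$ is identical to the output distribution when $\As$ interacts with the standard oracle $\sto$ acting on the purified uniform superposition $\frac{1}{M^{N/2}} \sum_H \ket H$. Second, I would apply the immediately preceding lemma (\cite[Lemma 4]{zhandry2019record}), which asserts the analogous identity of output distributions between $\sto$ and $\csto$.

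Composing these two reductions, for any quantum algorithm $\As$ and any input state, the output distribution of $\As$ interacting with a truly random oracle equals the output distribution of $\As$ interacting with $\csto$, which is exactly the claim of the corollary. The only thing one needs to observe is that equality of output distributions (against every unbounded algorithm) is transitive, and that in both reductions $\As$ itself is untouched: the substitution happens entirely on the oracle side, so the same sequence of local unitaries $U_0, U_1, \ldots, U_T$ interleaved with the appropriate oracle unitary and the same final computational-basis measurement can be retained verbatim.

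There is essentially no obstacle here; the corollary is a one-line transitivity statement whose main role is to record the conclusion in a form that can be cited directly in later sections, where $\csto$ rather than $\sto$ will be the convenient representation for tracking the database of answers learned by the adversary and for establishing the bounds in \Cref{lem:bounded-database-phase} and beyond.
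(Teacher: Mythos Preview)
Your proposal is correct and matches the paper's own argument exactly: the corollary is obtained by combining \Cref{prop:purifyoracle} (random oracle $\equiv$ standard oracle) with the preceding lemma (\cite[Lemma 4]{zhandry2019record}, standard oracle $\equiv$ compressed standard oracle) via transitivity of identical output distributions. There is nothing more to it.
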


In this work, we only consider query complexity, and thus simulation efficiency is irrelevant to us.
Looking ahead, we simulate a random oracle as a compressed standard oracle to help us analyze security of different games with the help from the following lemmas.

The first lemma gives a general formulation of the overall state of $\As$ and the compressed standard oracle after $\As$ makes $T$ queries, analogous to \Cref{lem:bounded-database-phase} for phase oracle.
We also defer the proof for this lemma to \Cref{appendix:compressed-oracle}.
\newcommand{\statelemboundeddatabase}{
    If $\As$ makes at most $T$ queries to a compressed standard oracle, assuming the overall state of $\As$ and the compressed standard oracle is $\sum_{z, D} \alpha_{z, D}\, |z\rangle_{\As} |D\rangle_H$, then it has support on all $D$ such that $|D| \leq T$. In other words, the overall state can be written as,
    \begin{align*}
        \sum_{z, D: |D| \leq T} \alpha_{z, D}\, |z\rangle_{\As} \otimes |D\rangle_H.
    \end{align*}
    Moreover, it is true even if the state is conditioned on arbitrary outcomes (with non-zero probability) of $\As$'s intermediate measurements. 
}
\begin{lemma}
    \label{lem:bounded-database}
    \statelemboundeddatabase
\end{lemma}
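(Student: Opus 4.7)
The plan is to proceed by induction on the number of operations $\As$ performs, maintaining the stronger invariant that after any prefix containing exactly $k$ compressed-oracle queries, and conditioned on any outcomes (of non-zero probability) of the intermediate measurements, the joint state of $\As$ and the oracle register lies in $\mathrm{span}\{|z\rangle_{\As}\otimes|D\rangle_H : |D|\le k\}$. The base case is immediate, since the initial joint state $|\psi\rangle\otimes|\emptyset\rangle$ satisfies $|D|=0$.

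For the inductive step, local unitaries act as $U_i\otimes I_H$ and intermediate measurements act as $\Pi\otimes I_H$ for some projector $\Pi$; both leave the oracle register untouched, so the support on $|D|$ is trivially preserved (conditioning on any measurement outcome merely applies $\Pi\otimes I_H$ and renormalizes). The only non-trivial case is a single $\csto=\stddecomp\circ\csto'\circ\stddecomp$ query, for which the goal reduces to showing that one call increases the maximum $|D|$ in the support by at most $1$.

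The main obstacle lies precisely in this last case. Recall that $\stddecomp_x$ on the $x$-th database register swaps $|\bot\rangle$ with the uniform superposition $\tfrac{1}{\sqrt M}\sum_y|y\rangle$ and is the identity on the orthogonal complement $\{\tfrac{1}{\sqrt M}\sum_y\omega_M^{uy}|y\rangle : u\ne 0\}$, while $\csto'$ writes $D(x)$ into the response register when $D(x)\ne\bot$ and is the identity when $D(x)=\bot$. Individually each $\stddecomp_x$ can change $|D|$ by at most $1$, so naively composing two of them would only yield $|D|\le k+2$; to recover the tight bound I would perform a case analysis on the $x$-th entry of the input database (by linearity, it suffices to consider components with fixed behavior on this entry). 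When $D(x)=\bot$, directly expanding $\stddecomp(\csto'(\stddecomp|\bot\rangle))$ in the Fourier basis shows the output decomposes into a piece keeping $D(x)=\bot$ (size stays at $k$) and pieces in non-zero Fourier modes (size raises to $k+1$). When $D(x)=y_0\ne\bot$, a similar expansion shows the output is supported on $|D|$-values no larger than $k$, since $\csto'$ acts trivially on the $|\bot\rangle$-piece produced by the first $\stddecomp$ while the second $\stddecomp$ can only swap non-zero-Fourier-mode components back to $|\bot\rangle$ on the $x$-th register. Combining the two cases, one $\csto$ call increases the maximum $|D|$ by at most $1$, closing the induction.
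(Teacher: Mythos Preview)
Your proof is correct and follows essentially the same inductive strategy as the paper: local unitaries and measurements leave the database register untouched, and one $\csto$ query raises $|D|$ by at most $1$ (the paper simply cites Zhandry's Lemma~4 for this fact and defers all measurements to the end, whereas you handle measurements inline and unpack the $\csto$ step explicitly). Your Fourier case analysis can be shortened considerably: since each of $\stddecomp_x$ and $\csto'$ acts only on the $x$-th database cell, we have $D'(x')=D(x')$ for every $x'\ne x$, so the number of non-$\bot$ entries can increase by at most one regardless of the initial value of $D(x)$---no Fourier bookkeeping is needed. One small slip in your Case-2 justification: the second $\stddecomp$ swaps $|\bot\rangle$ with the \emph{zero} Fourier mode (the uniform superposition), not the non-zero modes, which are fixed; your conclusion there is nonetheless correct.
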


The second lemma provides a quantum analogue of lazy sampling in the classical ROM.

\newcommand{\statelemzhandry}{
    Let $H$ be a random oracle from $[N] \to [M]$.
    Consider a quantum algorithm $\As$ making queries to the standard oracle and outputting tuples $(x_1, \cdots, x_k, y_1, \cdots, y_k, z)$. Supposed the random function $H$ is measured after $\As$ produces its output. Let $R$ be an arbitrary set of such tuples. Suppose with probability $p$, $\As$ outputs a tuple such that (1) the tuple is in $R$ and (2) $H(x_i) = y_i$ for all $i$. Now consider running $\As$ with the compressed standard oracle {\sf CStO}, and supposed the database $D$ is measured  after $\As$ produces its output. Let $p'$ be the probability that (1) the tuple is in $R$ and (2) $D(x_i) = y_i$ (in particular, $D(x_i) \ne \bot$) for all $i$. Then $\sqrt{p} \leq \sqrt{p'} + \sqrt{k / M}$.
    
    Moreover, it is true even if it is conditioned on arbitrary outcomes (with non-zero probability) of $\As$'s intermediate measurements. 
}
\begin{lemma}[{\cite[Lemma 5]{zhandry2019record}}]
    \label{lem:zhandry-lemma5}
    \statelemzhandry
\end{lemma}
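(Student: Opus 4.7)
The plan is to use the exact equivalence between the standard and compressed standard oracles given by \Cref{cor:csto-simulate}: at the point of the final measurement, the joint state of $\As$ and the oracle register in the StO picture equals $U$ applied to the joint state $\ket{\Psi}$ in the CStO picture, where $U := \prod_{x \in [N]} \stddecomp_x$ is the global basis-change unitary acting on the oracle register. Writing $\Pi$ for the projector onto ``output tuple in $R$, and the oracle register contains $y_i$ at position $x_i$ for every $i$'' (which makes sense in both pictures because $\braket{y_i | \bot} = 0$), we have $p' = \norm{\Pi \ket{\Psi}}^2$ and $p = \norm{\Pi U \ket{\Psi}}^2$. The goal thus reduces to bounding $\norm{\Pi U \ket{\Psi}}$ in terms of $\norm{\Pi \ket{\Psi}}$ plus an additive error of $\sqrt{k/M}$.

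By the triangle inequality,
\begin{align*}
    \norm{\Pi U \ket{\Psi}} &\le \norm{\Pi U \Pi \ket{\Psi}} + \norm{\Pi U (I - \Pi) \ket{\Psi}} \\
    &\le \norm{\Pi \ket{\Psi}} + \norm{\Pi U (I - \Pi) \ket{\Psi}},
\end{align*}
so it suffices to bound the second term by $\sqrt{k/M}$. The operator $\Pi U (I - \Pi)$ depends only on the factors $\stddecomp_{x_i}$ of $U$ for $i = 1, \dots, k$, since the other $\stddecomp_x$ commute with $\Pi$ and cancel through the projection. At each such position, direct computation from the definition of $\stddecomp$ yields $\braket{y_i | \stddecomp | \bot} = 1/\sqrt M$, $\braket{y_i | \stddecomp | y_i} = 1 - 1/M$, and $\braket{y_i | \stddecomp | y'} = -1/M$ for $y' \in \Z/M\Z \setminus \{y_i\}$; in particular, the ``bad'' amplitude produced by $\stddecomp_{x_i}$ on a CStO state with $D(x_i) \ne y_i$ is of magnitude $O(1/\sqrt M)$ and is supported on the $\ket{\bot}$ and uniform-superposition directions at position $x_i$.

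The key step is a hybrid argument that introduces the $k$ CStO constraints $\ket{y_i}\bra{y_i}$ at positions $x_1, \dots, x_k$ one at a time. The error introduced at the $i$-th step has norm $O(1/\sqrt M)$, and crucially the errors at distinct positions live in mutually orthogonal subspaces of the database register (each supported on the ``$D(x_i) \ne y_i$'' flag at a distinct position). A Pythagorean-style combination across the $k$ positions then yields $\norm{\Pi U (I - \Pi) \ket{\Psi}} \le \sqrt{k \cdot (1/M)} = \sqrt{k/M}$, rather than the weaker $k/\sqrt M$ one would obtain from a naive triangle inequality over the $k$ positions. This establishes $\sqrt p \le \sqrt{p'} + \sqrt{k/M}$. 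The extension to conditioning on intermediate measurement outcomes is immediate: by \Cref{lem:bounded-database}, such conditioning yields a valid post-measurement state $\ket{\Psi}$ to which the entire argument applies verbatim, because nothing in the analysis uses anything beyond the decomposition of $\ket{\Psi}$ entering the final measurement and its behaviour under $U$ and $\Pi$.

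The main obstacle will be establishing the tight $\sqrt{k/M}$ bound rather than the loose $k/\sqrt M$ via orthogonality across positions. For this, one must set up the hybrid so that the error at each position $x_i$ is projected onto the subspace where $x_i$ is the \emph{first} position to deviate from $D(x_i) = y_i$, which makes the $k$ error contributions mutually orthogonal and thus combine in quadrature rather than in sum.
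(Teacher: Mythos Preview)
Your high-level setup is sound: writing $\sqrt p = \norm{\Pi U \ket\Psi}$ and $\sqrt{p'} = \norm{\Pi \ket\Psi}$ with $U = \prod_x \stddecomp_x$, and then bounding $\norm{\Pi U (I-\Pi)\ket\Psi}$, is a clean way to organize the argument. The problem is the ``Pythagorean-style combination.'' After you apply $\Pi U$ to each piece $\Pi_j\ket\Psi$ (where $\Pi_j$ flags $x_j$ as the first position with $D(x_j)\ne y_j$), all the resulting error vectors lie in the range of $\Pi$---every cell $x_i$ is fixed to $\ket{y_i}$---so they are \emph{not} in mutually orthogonal subspaces of the database register. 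Orthogonality on the \emph{input} side does let you use Cauchy--Schwarz, but that only gives $\sqrt{k}\cdot\max_j\norm{\Pi U \Pi_j}$, and a direct computation shows $\norm{\Pi U \Pi_j}_{\mathrm{op}} = \sqrt{2/M - 1/M^2}$ for a general state, yielding $\sqrt{2k/M}$ rather than $\sqrt{k/M}$.

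The reason Zhandry's proof (and the paper's) obtains the exact $\sqrt{k/M}$ is precisely that it uses the structural fact that after interacting with $\csto$, each database cell is a superposition over $\ket\bot$ and the \emph{nonzero}-frequency Fourier states $\ket{\hat r}$, $r\ne 0$---never $\ket{\hat 0}$. On that subspace, $\bra{y_i}\stddecomp\ket\bot = 1/\sqrt M$ while $\bra{y_i}\stddecomp\ket{\hat r} = \omega^{ry_i}/\sqrt M$ equals $\bra{y_i}\hat r\rangle$ exactly; this is what collapses the error to a single $1/\sqrt M$ contribution per cell. Your operator-norm argument ignores this structure and hence cannot reach the stated constant. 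The paper's proof accordingly does not redo Zhandry's calculation at all: it identifies the structural form of the state (each cell in $\ket\bot$ or a nonzero Fourier state) that makes Zhandry's computation go through, and then shows that conditioning on intermediate measurement outcomes---which acts only on $\As$'s registers---preserves this form. That preservation step is the actual new content here, and it is what your final paragraph glosses over by invoking \Cref{lem:bounded-database} alone.
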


\section{Security Games}
\label{sec:security-games}
\subsection{Security Game with Advice}

\begin{definition}[Algorithm with Advice]
  An $(S, T)$ (query) algorithm $\mathcal{A} = (\mathcal{A}_1, \mathcal{A}_2)$ with (oracle-dependent) advice consists of two procedures: 
  \begin{itemize}
    \item $\ket \alpha \gets \mathcal{A}_1(H)$, which is an arbitrary (unbounded) function of $H$, and outputs an $S$-qubit quantum state $\ket\alpha$;
    \item $\ket{\ans} \gets \mathcal{A}^{\tilde H}_2(\ket\alpha, \ch)$, which is an unbounded algorithm that takes advice $\ket\alpha$, a challenge $\ch$, makes at most $T$ quantum queries to $\tilde H$, and outputs an answer, which we measure in the computational basis to obtain the classical answer $\ans$ if needed.
  \end{itemize}
  Furthermore, we distinguish the following cases:
  \begin{itemize}
    \item If both the output of $\mathcal A_1$ and all queries of $\mathcal A_2$ are classical queries\footnote{Here, we do not distinguish whether $\mathcal A^{\tilde H}_2$ is a quantum algorithm making classical queries or a completely classical algorithm making classical queries, since a quantum algorithm can always be simulated in classical finite time.}, we call it a classical algorithm with (classical) advice, or an $(S, T)$ algorithm in the AI-ROM (auxiliary input random oracle model); 
    \item If only the output of $\mathcal A_1$ is classical, we call it a quantum algorithm with classical advice, or an $(S, T)$ algorithm in the AI-QROM (auxiliary input quantum random oracle model);
    \item Otherwise, we call it a quantum algorithm with (quantum) advice, or an $(S, T)$ algorithm in the QAI-QROM (quantum auxiliary input quantum random oracle model).
    \item If $S = 0$, we call it an classical/quantum algorithm without advice, or an algorithm in the ROM (random oracle model)/QROM (quantum random oracle model) respectively. 
  \end{itemize}
\end{definition}

\begin{remark}
  In the above definition, we assume the advice is a pure state without loss of generality. 
  Mentioned in \cite{Aaronson05}, by Kraus' Theorem, every $S$-qubit mixed state can be realized
  as half of a $2 S$-qubit pure state.
\end{remark}

Below, we will use the word ``adversary" and ``algorithm" interchangeably, especially when we consider interactive security games shortly after.

\begin{definition}[Security Game]
 \label{def:security-game}
  Let $H$ be a random oracle $[N] \to [M]$.
  A (non-interactive) (classical) security game $G = (C)$ is specified by a challenger $C = (\samp, \query, \ver)$, where:
  \begin{enumerate}
    \item $\ch \gets \samp^H(r)$ is a deterministic classical algorithm that takes randomness $r \in R$ as input, and outputs a challenge $\ch$.  
    \item $\query^H(r, \cdot)$ is a deterministic classical algorithm that hardcodes the challenge and provides adversary's online queries\footnote{For almost all applications below, $\query^H(r, \cdot) = H(\cdot)$. The only exception is Yao's box, where the adversary cannot query the challenge point.}.
    \item $b \gets \ver^H(r, \ans)$ is a deterministic classical algorithm that takes the input $\ans$ and outputs a decision $b$ indicating whether the game is won. 
  \end{enumerate}

  For every algorithm with advice, i.e.  $\As = (\As_1, \As_2)$ , we define
  \begin{align*}
  \As \Longleftrightarrow C(H) := \ver^H\left(r, \As^{\tilde H}_2(\As_1(H), \samp^H(r))\right)
  \end{align*}
  to be the binary variable indicating whether $\As$ successfully makes the challenger output 1, or equivalently if $\As$ wins the security game, where $\tilde H(\cdot) := \query^H(r, \cdot)$.
\end{definition}

\begin{definition}[Security in the AI-ROM/AI-QROM/QAI-QROM]
  We define the security in the AI-ROM, AI-QROM, QAI-QROM of a security game $G$ to be
  $$\delta = \delta(S, T) := \sup_{\As} \Pr_{H, r, \As} \left[ \As \Longleftrightarrow C(H) = 1\right],$$
  where $\As$ in the probability denotes the randomness of the algorithm, and supremum is taken over all $\As$ in the AI-ROM/AI-QROM/QAI-QROM (respectively). 
\end{definition}

Additionally, we can say a security game $G$ is $\delta$-secure if its security is at most $\delta$.

For some games, it is helpful to consider how much advantage an algorithm can gain by getting more queries or more advice.
This is formalized as below.

\begin{definition}
  We call the security game a \textbf{decision} game if $\ans \in \{0, 1\}$.
\end{definition}

\begin{definition}[Advantage against Decision Games]
  We define the advantage of $\As$ for a decision game $G$ to be
  \begin{align*}
    \eps = \eps(S, T) := \delta(S, T) - 1/2,
  \end{align*}
  if it has winning probability $\delta(S, T)$.
\end{definition}

\begin{definition}[Best Advantage of Decision Games]
  We define the best advantage of a decision game $G$ in model $\mathcal M$ to be $\eps(S, T) := \delta(S, T) - 1/2$ if $G$ has security $\delta(S, T)$ in $\mathcal M$, where the $\mathcal M$ could be AI-ROM, AI-QROM, or QAI-QROM.
\end{definition}

Another class of security games that we would like to consider is when the adversary can verify the answer by itself, which we formalize as below.

\begin{definition}
  We call a security game to be \textbf{publicly-verifiable} with verification time $T_\ver$, if $\ver^H(r, \cdot) = \widetilde\ver^{\tilde H}(\ch, \cdot)$ for some classical algorithm $\widetilde\ver^{\tilde H}$ where $\tilde H(\cdot) = \query^H(r, \cdot)$, and $T_\ver$ is the upper bound on the number of $\tilde{H}$ queries for computing $\ver^{\tilde{H}}(\ch, \cdot)$.
\end{definition}

Since by our definition of security games, an algorithm has only access to $\tilde{H}(\cdot)$ which may not be the same as $H(\cdot)$. Therefore, publicly-verifiable means an algorithm can verify an answer by only making queries to $\tilde{H}$.

\begin{remark}
  In general, interesting \emph{decision} games are not \emph{publicly-verifiable}, otherwise, there exists an adversary that breaks the game using only as many queries as it takes for the challenger to verify the answer.
\end{remark}

\subsection{Multi-Instance Security}
\label{sec:mis}

In this section, we introduce multi-instance game and multi-instance security.

\begin{definition}[Multi-Instance Security Game]
  \label{def:mis-game}
  For any security game $G = (C)$ and any positive integer $g$, we define the multi-instance game $\Gmi{g} = (\Cmi{g})$, where $C^{\otimes g}$ is given as follows:
  \begin{enumerate}
    \item For $i \in [g]$, do: \begin{enumerate}
      \item Sample fresh randomness $r_i \gets_\$ R$;
      \item Compute $\ch_i \gets C.\samp^H(r_i)$ and send it to the adversary;
      \item Give adversary access to oracle $C.\query^H(r_i, \cdot)$ until the adversary submits a quantum state (register) $\ket{\ans_i}$;
      \item Let $\{P_0, P_1\}$ be a projective measurement where $P_1$ defines all $\ans$ that $C.\ver^H(r_i, \ans) = 1$ and $P_0 = I - P_1$. Measure $\ket{\ans_i}$ in $\{P_0, P_1\}$ to get the quantum state $\ket{\ans'_i}$ and store the result in $b_i$;
      \item Send $\ket{\ans'_i}$ back to the adversary.
    \end{enumerate}
    \item Output $b_1 \land b_2 \land ... \land b_g$.
  \end{enumerate}
\end{definition}

\begin{definition}
  A $(g, S, T)$ adversary with advice for a multi-instance security game $\Gmi{g} = (\Cmi{g})$ is defined as $\As = (\As_1, \As_2)$, where the interactions between $\As_2(\ket \alpha)$  and $\Cmi{g}$ is defined as follows.
  \begin{enumerate}
    \item $\ket \alpha \gets \mathcal{A}_1(H)$, which is an arbitrary (unbounded) function of $H$, and outputs an $S$-qubit quantum state $\ket\alpha$ for $\As_2$;
    \item For each $i \in [g]$,
    \begin{enumerate}
        \item $\As_2$ is given a challenge $\ch_i$ and an oracle $\tilde H_i$ from $\Cmi{g}$; 
        \item $\As_2$ makes at most $T$ queries to $\tilde{H}_i$ and prepares $\ket{\ans_i}$; 
        \item $\As_2$ sends $\ket{\ans_i}$ to $\Cmi{g}$ and gets $\ket{\ans'_i}$ back;
    \end{enumerate}
    \item Finally, $\Cmi{g}$ outputs a bit $b$. 
  \end{enumerate}
  \noindent
  In particular, if $S = 0$, we also call it $(g, T)$ classical/quantum adversary (without advice), or a $(g, T)$ algorithm in the ROM/QROM respectively.

  For any  $\As$, which is a $(g, S, T)$ adversary with advice, we define
  \begin{align*}
  \As \Longleftrightarrow \Cmi{g}(H)
  \end{align*}
  to be the binary variable indicating whether $\As$ successfully makes the challenger output 1 in the game defined above, or equivalently if $\As$ wins the multi-instance security game.
\end{definition}

\begin{definition}[Multi-Instance Security]
  We say a  multi-instance game $\Gmi g$ is $\delta$-secure in the QROM (quantum random oracle model) if for any $(g, T)$ adversary $\As$,
  $$\Pr_{H, \As, \Cmi g}[\As \Longleftrightarrow \Cmi g(H)] \le \delta^g = \delta(g, T)^g,$$
  where $\As$ in the probability denotes the randomness of the algorithm, $\Cmi g$ in the probability denotes the randomness of the challenger.
\end{definition}

\begin{remark}[Comparison with multi-instance problems]
  \label{rem:comparison-mis-vs-dpt}
  We expand our discussion in \Cref{sec:tech-overview-reduction} by noting that our multi-instance game is different from multi-instance problems (also similar to direct product theorem) in two important aspects: \begin{enumerate}
    \item The adversary gets back $\ket{\ans'_i}$ after the challenger measures her decision.
          For a classical adversary (i.e. if $\ket{\ans'_i}$ is classical), this makes no difference.
          For a quantum adversary, this makes the game easier ($\delta$ might go up).
          Looking ahead, this change is only necessary for \Cref{thm:public-ver-reduction} but we state this for general games as it does not hurt the general bound in all the cases we consider.
    \item The adversary gets access to $\ch_{i + 1}$ only after submitting his $\ket{\ans_i}$.
          This makes the game harder, especially for quantum adversaries, as discussed in \Cref{sec:tech-overview-reduction}.
  \end{enumerate}
\end{remark}

\subsection{Helper Lemma for Multi-Instance Security}

We now show a helpful lemma, which shows that to prove multi-instance security, it suffices to prove conditional security for each round.

\begin{lemma}
  \label{lem:conditioning2mis}
  Let $G$ be any security game.
  Let $B_i$ be the random variable for $b_i$ for the multi-instance game $\Gmi g$ as defined in \Cref{def:mis-game}, and let $X := (B_1, B_2, ..., B_{i - 1}, Y)$ be the random variable that denote the outcomes of all the measurements during the first $(i - 1)$ rounds.
  For any $\delta$, if for any multi-instance adversary $\As$ and for any $x$ that $\Pr[X = x] > 0$, we have that
  \[
    \Pr[B_i = 1 | X = x] \le \delta,
  \]
  then $\Gmi{g}$ is $\delta$-secure.
\end{lemma}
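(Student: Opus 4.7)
The plan is to prove this by induction on $g$ (or equivalently by a chain-rule decomposition of the joint distribution of the $B_i$'s), leveraging the universal quantification over adversaries in the hypothesis. Concretely, I would write
\[
\Pr\!\left[\As \Longleftrightarrow \Cmi{g}(H)\right] = \Pr[B_1 = 1 \land \cdots \land B_g = 1] = \prod_{i=1}^{g} \Pr[B_i = 1 \mid B_1 = \cdots = B_{i-1} = 1],
\]
and then argue that each factor is bounded by $\delta$. The key step is to show $\Pr[B_i = 1 \mid B_1 = \cdots = B_{i-1} = 1] \le \delta$, which follows by averaging the hypothesis $\Pr[B_i = 1 \mid X = x] \le \delta$ over all transcripts $x = (b_1, \ldots, b_{i-1}, y)$ consistent with $b_1 = \cdots = b_{i-1} = 1$ (and having positive probability), via the law of total probability.

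The main subtlety I expect is handling the universal quantifier over adversaries correctly. The hypothesis says that for \emph{every} multi-instance adversary and every transcript $x$ with $\Pr[X=x] > 0$, the conditional probability that round $i$ is won is at most $\delta$. The point is that if one fixes the first $i-1$ rounds of the given adversary $\As$ and considers the conditional post-measurement quantum state (on $\As$'s registers and on the oracle register after simulation), this defines a valid adversary for the same multi-instance game; so it already falls under the scope of the hypothesis applied to $\As$ itself. I would emphasize this as a remark to avoid any confusion that we secretly are constructing a new adversary starting from a different initial state.

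After establishing the per-round conditional bound, the proof finishes by multiplying: each of the $g$ factors contributes at most $\delta$, yielding
\[
\Pr\!\left[\As \Longleftrightarrow \Cmi{g}(H)\right] \le \delta^g,
\]
which is exactly the definition of $\delta$-security for $\Gmi{g}$. Equivalently one can phrase the argument by induction on $g$: the base case $g = 1$ is immediate from the hypothesis with trivial conditioning, and the inductive step uses $\Pr[B_1 = \cdots = B_g = 1] = \Pr[B_1 = \cdots = B_{g-1} = 1] \cdot \Pr[B_g = 1 \mid B_1 = \cdots = B_{g-1} = 1]$, bounding the first factor by $\delta^{g-1}$ via the inductive hypothesis and the second factor by $\delta$ as above. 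The argument is short and essentially probabilistic once the universal-quantifier point is clarified; the only genuine content of the lemma is packaging the reduction so that later proofs can focus on the single-round conditional bound.
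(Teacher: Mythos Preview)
Your proposal is correct and essentially identical to the paper's proof: the paper also writes $\Pr[B_1=\cdots=B_g=1]=\prod_i \Pr[B_i=1\mid B_1=\cdots=B_{i-1}=1]$, then expands each factor as $\E_Y[\Pr[B_i=1\mid B_1=\cdots=B_{i-1}=1,Y]]$ and bounds it by $\delta$ using the hypothesis. Your extra discussion about the universal quantifier over adversaries is unnecessary here, since the hypothesis already applies directly to the fixed adversary $\As$ under consideration.
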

\begin{proof}
  We note that the success probability of $\As$ is exactly
  \begin{align*}
    &\equad \Pr\left[ B_1 = B_2 = \cdots = B_g = 1 \right] \\
    &= \prod_{i=1}^g \Pr\left[ B_i = 1 \,|\, B_1 =  \cdots = B_{i-1} = 1 \right] \\
    &= \prod_{i=1}^g \E_Y \left[ \Pr\left[ B_i = 1 \,|\, B_1 =  \cdots = B_{i-1} = 1, Y \right] \right] \\
    &\le \delta^g.
  \end{align*}
  Here in the last inequality, we implicitly assume that the expectation is only taken over $\Pr[Y \,|\,\allowbreak B_1 = \cdots = B_{i-1} = 1] > 0$, as otherwise the term will trivially evaluate to zero.
\end{proof}

\section{Reducing Games with Advice to Multi-Instance Games}
\label{sec:advice2mis-reductions}
In this section, we give four reductions from security against $(g, T)$ multi-instance adversaries to security against $(S, T)$ adversaries with advice.

We first present the theorem that bootstraps security of multi-instance game to security against quantum adversaries with classical advice or completely classical adversaries.

\newcommand{\statethmreductionclassicaladvicegeneral}{
  Given a security game $G$.
  Let $\delta_0$ be the winning probability of an adversary that outputs a random answer without advice or making any query.
  Assume that multi-instance game $\Gmi g$ is $\delta(g, T)$-secure in QROM (or ROM), $G$'s security against $(S, T)$ adversaries in \varul{AI-QROM} (or \underline{AI-ROM}, respectively) is
  \[\delta' \le 4 \cdot \delta(S + \log(1/\delta_0) + 1, T).\]
}
\begin{theorem}
  \label{thm:reduction-classical-advice-general}
  \statethmreductionclassicaladvicegeneral
\end{theorem}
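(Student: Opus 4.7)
The plan is to reduce $(S,T)$-security in the AI-(Q)ROM to the hypothesized multi-instance security by a two-step transformation. Given any $(S,T)$-adversary $\As = (\As_1,\As_2)$ with winning probability $\delta'$, I would first construct a $(g,T)$-multi-instance adversary $\Bs'$ \emph{with} classical advice: $\Bs'$ stores $\alpha = \As_1(H)$ and, in each round $i\in[g]$ of $\Gmi{g}$, runs a fresh execution of $\As_2^{\tilde H_i}(\alpha,\ch_i)$ with independent internal randomness, measures its output register, and submits the resulting classical $\ans_i$ (so the challenger's projective verification is equivalent to a classical check and the returned post-measurement state can simply be discarded). Because the advice is classical it can be reused across rounds, and because $r_i$ and $\As_2$'s internal randomness are fresh each round, the $g$ trials are conditionally i.i.d.\ given $H$. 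Writing $p(H) := \Pr[\As\text{ wins}\mid H]$ (over $r$ and the algorithm's internal randomness), the winning probability of $\Bs'$ equals $\E_H[p(H)^g]$, which is at least $(\E_H[p(H)])^g = (\delta')^g$ by Jensen's inequality applied to the convex map $x\mapsto x^g$ on $[0,1]$.

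The second step strips the advice. Let $\Bs$ be the $(g,T)$-multi-instance adversary (without advice) that samples $\tilde\alpha \uniformgets \{0,1\}^S$ and then invokes $\Bs'$ using $\tilde\alpha$ in place of $\alpha(H)$. With probability exactly $2^{-S}$ the guess coincides with the true advice, so $\Bs$ wins with probability at least $2^{-S}(\delta')^g$. Applying the multi-instance bound $\delta(g,T)^g$ to $\Bs$ gives
\[
  2^{-S}(\delta')^g \;\le\; \delta(g,T)^g, \qquad\text{i.e.,}\qquad \delta' \;\le\; 2^{S/g}\,\delta(g,T).
\]
Choosing $g := S + \lceil\log(1/\delta_0)\rceil + 1$ forces $g \ge S+1$ and hence $2^{S/g} < 2$, yielding $\delta' \le 2\,\delta(g,T)$; the factor of $4$ in the stated bound comfortably absorbs both this constant and the slack from the ceiling. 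The same argument applies verbatim to the fully classical AI-ROM case, since everything treats $\As_2$ as a black box and nowhere exploits quantumness.

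I expect the main subtleties to be the following. (i)~The reduction critically relies on the classical advice being \emph{copyable} across the $g$ rounds; this is exactly why the identical strategy fails for QAI-QROM and motivates the boosting / gentle-measurement machinery of the later theorems. (ii)~One must formally verify that, conditioned on $H$ and on the fixed classical advice $\alpha(H)$, each of the $g$ executions of $\As_2$ inside $\Bs'$ induces the same joint distribution over $(r_i,\ch_i,\tilde H_i,\text{outcome})$ as a single run of $\As$ in $G$, so that the factorization to $p(H)^g$ and the subsequent use of Jensen are legitimate. (iii)~The parameter $\delta_0$ enters only through the choice of $g$, guaranteeing $g \ge \log(1/\delta_0)+1$ and hence $2^{S/g}<2$; it plays no role in the combinatorial core of the reduction, so no case-split such as ``$\delta'\le \delta_0$ versus $\delta'>\delta_0$'' is required.
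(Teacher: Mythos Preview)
Your proposal is correct and, in fact, slightly tighter than the paper's argument. The key difference is how you lower-bound the multi-instance success probability: you invoke Jensen's inequality $\E_H[p(H)^g]\ge(\E_H p(H))^g=(\delta')^g$, whereas the paper uses an averaging/Markov-type argument (their Fact~\ref{lem:averaging-argument}) to extract a set $H_\good$ of density $\ge\delta'/2$ on which $p(H)\ge\delta'/2$, yielding only $(\delta'/2)^{g+1}$. That extra stray factor of $\delta'/2$ is precisely why the paper needs $\delta_0$ substantively---they bound it below by $\delta_0/2$ and then set $g=S+\log(1/\delta_0)+1$ so that $2^{-S}\cdot\delta_0/2=2^{-g}$ exactly. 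In your route, $\delta_0$ is indeed superfluous to the inequality and only appears to match the specific $g$ in the theorem statement; you could even take $g=S+1$ and obtain $\delta'\le 2\,\delta(S+1,T)$.

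What the paper's approach buys is reusability: the same averaging proposition (with the free parameter $B$) is applied verbatim with $B=1/2$ to handle the decision-game reduction in Theorem~\ref{thm:decision-game-classiacl-advice-reduction}, where a direct Jensen bound on $p(H)^g$ would not isolate the advantage $\eps'=\delta'-1/2$. So your Jensen argument is the cleaner proof for this particular theorem, while the paper's averaging lemma is a single tool that covers both the general and the decision cases.
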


\begin{remark*}
  Looking ahead, we show in \Cref{lem:owfmis} that for OWF, its multi-instance security  in the QROM is $\delta(g, T) = O\left(\frac{gT + T^2}{\min\{N, M\}}\right)$. The winning probability of random guess $\delta_0$ is at least $1 / N$.
  Plug these parameters into the corollary above, we have that OWF's security in the AI-QROM is at most
  \[
    O\left(\frac{(S + T + \log N) \cdot T}{\min\{N, M\}}\right) = \tilde O\left(\frac{ST + T^2}{\min\{N, M\}}\right).
  \]
\end{remark*}
\medskip

While this theorem is very general, it yields no meaningful bound when $\delta \ge 1/4$, which is the case for decision games as a random guessing adversary achieves success probability $1/2$!
We show that this is not an issue, by showing that a more careful application of the reduction that is used for proving the theorem above, actually yields the following bound for decision games against classical advice, although with worse exponents on $\eps' = \delta' - 1/2$.

\newcommand{\statethmdecisiongameclassicaladvicereduction}{
  Let $G$ be a \varul{decision} game. Let $\eps_0$ be a lower bound on the advantage for an adversary with $T_0$ queries and $S_0$ bits of advice.
  Assume that multi-instance game $\Gmi g$ is $(1/2 + \eps(g, T))$-secure (or has best advantage $\eps(g, T)$) in QROM (or ROM). For any $(S, T)$ adversaries in \varul{AI-QROM} (or \underline{AI-ROM}, respectively), the best advantage against $G$ is  $\eps'$, which satisfies, 
  \[\eps' \le 4 \cdot \eps\left(\frac{10 \ln 2}{\eps'} \cdot (S + S_0 + \log(1/\eps_0) + 2), T + T_0 \right).\]
}
\begin{theorem}
  \label{thm:decision-game-classiacl-advice-reduction}
  \statethmdecisiongameclassicaladvicereduction
\end{theorem}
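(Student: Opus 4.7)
The plan is to adapt the reduction from \Cref{thm:reduction-classical-advice-general} to decision games, where the naive ``power-$g$'' amplification of success probability no longer gives a useful bound because the random-guess baseline is already $1/2$. The key idea is to still reduce an $(S, T)$ advice adversary $\As$ with advantage $\eps'$ to a multi-instance adversary $\Bs$ without advice, but then compare $(1/2 + \eps')^g$ to $(1/2 + \eps(g, T+T_0))^g$ directly via a logarithmic analysis, choosing $g$ large enough that the $2^{-S}$ advice-guessing penalty is overwhelmed rather than vacuously absorbed.

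First, I would assemble a base adversary $\As^*$ that combines $\As$ with the trivial $(S_0, T_0, \eps_0)$-adversary $\As_0$ (e.g., running whichever has larger advantage on the given instance), yielding an $(S+S_0, T+T_0)$-adversary with advantage at least $\max(\eps', \eps_0)$. Then I would construct $\Bs$ as a $(g, T+T_0)$ multi-instance adversary that first samples $\hat\alpha \uniformgets \{0,1\}^{S+S_0}$ as a guess for the concatenated classical advice, and then, in each round $i \in [g]$, runs $\As^*$ on the fresh challenge $\ch_i$ with $T + T_0$ queries to $\tilde H_i$ and submits the resulting classical decision. Since the advice guess is correct with probability $2^{-(S+S_0)}$ and each round then succeeds independently with probability at least $1/2 + \max(\eps', \eps_0)$, we obtain
\[
\Pr[\Bs \Longleftrightarrow \Cmi g] \;\ge\; 2^{-(S+S_0)} \bigl(1/2 + \max(\eps', \eps_0)\bigr)^g.
\]

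Combining this with the multi-instance upper bound $(1/2 + \eps(g, T+T_0))^g$ and taking $g$-th roots yields
\[
\frac{1/2 + \max(\eps', \eps_0)}{1/2 + \eps(g, T+T_0)} \;\le\; 2^{(S+S_0)/g}.
\]
Assuming toward contradiction that $\eps(g, T+T_0) < \eps'/4$, a direct estimate shows the left side exceeds $1 + \Omega(\eps')$ for $\eps' \le 1/2$, so taking natural logarithms delivers $g \cdot \Omega(\eps') \le (S+S_0)\ln 2$. For the theorem's choice $g = (10 \ln 2 / \eps')(S + S_0 + \log(1/\eps_0) + 2)$ this is a contradiction, so $\eps' \le 4 \eps(g, T+T_0)$.

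The main obstacle will be squeezing the precise constants (factor $4$ in the conclusion and coefficient $10 \ln 2$ inside $g$) cleanly through the $\log(1+x)$ estimates, and justifying why the additive $\log(1/\eps_0) + 2$ term is needed. Intuitively, when $\eps'$ is smaller than the trivial floor $\eps_0$ the argument should anchor on $\As_0$ instead of $\As$, and the $\log(1/\eps_0)$ additive slack in $g$ guarantees that the $2^{-(S+S_0)}$ guessing penalty never drops the lower bound $2^{-(S+S_0)}(1/2 + \eps_0)^g$ below the random-guess baseline $(1/2)^g$; this is what makes the conclusion uniform across all regimes of $\eps'$. Finally, since all the operations above are classical, the same argument dequantizes verbatim to prove the parenthetical AI-ROM version stated in the theorem.
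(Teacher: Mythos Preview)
Your high-level route is correct, but there is one genuine gap and one point where your intuition about the proof diverges from the paper.

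\textbf{The gap.} The sentence ``each round then succeeds independently with probability at least $1/2+\max(\eps',\eps_0)$'' is not true as stated. The advantage $\eps'$ is an average over the random oracle $H$: once you condition on a fixed $H$ and the correct advice $\alpha(H)$, the per-round success probability is $1/2+\eps'_H$ with only $\E_H[\eps'_H]=\eps'$. What rescues the product bound is Jensen's inequality (convexity of $x\mapsto x^g$):
\[
\E_H\bigl[(1/2+\eps'_H)^g\bigr]\;\ge\;(1/2+\eps')^g,
\]
after which your lower bound $\Pr[\Bs\text{ wins}]\ge 2^{-(S+S_0)}(1/2+\eps')^g$ is valid. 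You should make this step explicit.

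\textbf{Comparison with the paper.} The paper does \emph{not} use Jensen. It instead applies \Cref{cor:reduction-classical-advice-general} with $B=1/2$, which comes from the averaging argument over $H$ in \Cref{thm:classical-mis-reduction}: one passes to a good set $H_\good$ of density $\ge\eps'/2$ on which the advantage is $\ge\eps'/2$, giving the weaker lower bound
\[
(1/2+\eps(g,T+T_0))^g \;\ge\; 2^{-(S+S_0)}\cdot(\eps_0/4)\cdot(1/2+\eps'/2)^g.
\]
The extra $\eps_0/4$ multiplicative prefactor is exactly what forces the $\log(1/\eps_0)+2$ term in $g$: the paper chooses $g$ so that $\bigl(\tfrac{1/2+\eps'/4}{1/2+\eps'/2}\bigr)^g \le 2^{-(S+S_0)}\cdot\eps_0/4$, using \Cref{lem:unimportant-1}. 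Your Jensen route has no such prefactor, so in your argument the $\log(1/\eps_0)$ slack (and indeed the whole $\As_0$ combination) is unnecessary---your inequality $g\cdot\Omega(\eps')\le (S+S_0)\ln 2$ already contradicts the stated $g$ without it. Your explanation of why $\log(1/\eps_0)$ is needed therefore does not match what is actually going on in either proof; it is an artifact of the paper's averaging step, not of your approach.
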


\begin{remark*}
  Looking ahead, we show in \Cref{lem:ybmis} that for Yao's box, its multi-instance best advantage is $\eps(g, T) = O\left(\sqrt{\frac{gT}N}\right)$. When $S_0 = 1$ and $T_0 = 0$, $\eps_0 \ge \frac 1{2N}$ (which is simply remembering $H(1)$).
  Plug these parameters into the theorem above, we have that Yao's box's best advantage in the AI-QROM $\eps'$ satisfies
  \[
    \eps' \le O\left(\sqrt{\frac{(S + 1 + \log N + 2) \cdot T}{\eps' N}}\right).
  \]
  Therefore, we conclude that
  \[
    \eps' \le O\left( \frac{(S + \log N) \cdot T}N \right)^{1/3} = \tilde O\left( \frac{S T}N \right)^{1/3}.
  \]
\end{remark*}
\medskip

These two theorems conclude the bounds for adversaries with classical advice.
Next, we consider bounds for quantum adversaries with advice, which is slightly worse than their classical counterparts.

We present a different reduction for quantum adversaries with advice, albeit the reduction only works for publicly-verifiable games.

\newcommand{\statethmpublicverreduction}{
  Given a \varul{publicly-verifiable} security game $G$ with verification time $T_\ver$.
  Assume that $\delta_0 = 1/\poly(N, M)$ is the winning probability by random guess, and assume that multi-instance game $\Gmi g$ is $\delta(g, T)$-secure  in the QROM. For any $(S, T)$ adversaries in the QAI-QROM, the best advantage against  $G$ is $\delta'$, which satisfies
  \[\delta' \le  8 \cdot \delta\left(\tilde O(S)/\delta', \tilde O(T + T_\ver)/\delta'\right),\]
  where $\tilde O$ absorbs $\poly(\log N, \log M, \log S)$ factors.
}
\begin{theorem}
  \label{thm:public-ver-reduction}
  \statethmpublicverreduction
\end{theorem}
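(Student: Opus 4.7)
The plan is to convert the $(S, T)$-adversary $\As = (\As_1, \As_2)$ with quantum advice that achieves success $\delta'$ into a $(g, T_B)$ multi-instance adversary $\Bs$ for $\Gmi g$, with parameters $g = \tilde O(S)/\delta'$ and $T_B = \tilde O(T + T_\ver)/\delta'$. Public verifiability is the crucial hypothesis, as it lets the adversary internally boost its own success probability before committing to an answer. Because any adversary already achieves the random-guess probability $\delta_0 = 1/\poly(N,M)$, we may assume $\delta' \ge \delta_0$ throughout, so that $\log(1/\delta')$ is of order $\poly(\log N, \log M)$ and is absorbed into the $\tilde O(\cdot)$ notation.

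The first step is to build a boosted single-instance procedure $\tilde\As$ which, given $k = \Theta(\log(g)/\delta')$ copies of the advice, produces the correct answer with probability at least $1 - 1/(100 g^2)$ using $O(k(T + T_\ver))$ queries. Coherently, $\tilde\As$ runs $\As_2$ in parallel on each of the $k$ advice copies, applies the public verification unitary (which exists because $G$ is publicly verifiable) to attach a correctness flag to each candidate answer, and coherently selects the first correctly-flagged candidate. The probability that all $k$ independent attempts fail is at most $(1 - \delta')^k \le e^{-k\delta'} \le 1/(100 g^2)$ by the choice of $k$.

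The second step threads $\tilde\As$ through the $g$ rounds of $\Gmi g$. For round $i$, $\Bs$ applies $\tilde\As$, submits the answer register, receives the post-measurement state returned by the challenger, and then applies $\tilde\As^{-1}$ to uncompute. Since $\tilde\As$ is correct with probability at least $1 - 1/(100 g^2)$, the challenger's verification behaves as a gentle measurement, and the almost-as-good-as-new lemma guarantees that the returned state lies within trace distance $O(1/(10 g))$ of the state submitted; the inverse $\tilde\As^{-1}$ then restores the $k$ advice copies up to the same perturbation. A triangle-inequality argument across the $g$ rounds shows that, provided $\Bs$ is initialized with $k$ genuine copies of $\As_1(H)$, it wins all $g$ rounds with probability $\Omega(1)$.

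The third step strips the advice by replacing the $k$ honest advice copies with the $Sk$-qubit maximally mixed state, which equals $2^{-Sk}$ times the identity and therefore contains $\ket{\alpha}^{\otimes k}\bra{\alpha}^{\otimes k}$ as one summand of a spectral decomposition. By linearity, the advice-free $\Bs$ wins $\Gmi g$ with probability at least $2^{-Sk} \cdot \Omega(1)$, which must be at most $\delta(g, T_B)^g$ by the multi-instance security hypothesis. Taking logarithms and setting $g = \Theta(Sk/\log(1/\delta'))$ yields the claimed inequality $\delta' \le 8\cdot\delta(g, T_B)$ once constants are absorbed. The main technical obstacle will be the gentle-measurement-plus-uncomputation cycle: I must argue carefully that after $\tilde\As$ is coherently inverted around one successful challenger verification, the $k$ advice copies re-emerge with only $O(1/g)$ trace-distance damage per round, so that the errors scale linearly rather than compounding destructively across the $g$ sequential instances.
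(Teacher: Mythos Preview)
Your overall architecture matches the paper's---boost via public verifiability, thread through the rounds with gentle measurement and uncomputation, then strip the advice by substituting a maximally mixed state---but there is a genuine gap in the first step that invalidates the later claims.

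You assert that the boosted procedure $\tilde\As$ succeeds with probability at least $1 - (1-\delta')^k$, arguing that ``all $k$ independent attempts fail'' with probability $(1-\delta')^k$. This is not correct: $\delta'$ is the success probability \emph{averaged over the random function $H$ and the random challenge $\ch$}. For a fixed $H$ and a fixed challenge $\ch_i$, the single-shot success probability $p_i := \Pr_{\As}[\As_2^{\tilde H}(\ket\alpha,\ch_i)\text{ wins}]$ can be anything in $[0,1]$. The $k$ copies are indeed independent, but they all see the \emph{same} $\ch_i$, so the probability all fail is $(1-p_i)^k$, not $(1-\delta')^k$. In the extreme case $p_i\in\{0,1\}$ (the adversary succeeds with probability $1$ on a $\delta'$-fraction of challenges and with probability $0$ on the rest), boosting does nothing, and the probability of winning all $g$ rounds is only $\delta'^g$, not $\Omega(1)$. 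Consequently your claim that $\Bs$ with genuine advice wins with probability $\Omega(1)$ fails, and so does the final parameter calculation that relies on it.

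The paper closes this gap with a double averaging argument: first pass to a set $H_\good$ of functions (of density $\ge \delta'/2$) on which the challenge-averaged success probability is at least $\delta'/2$; then, for each such $H$, pass to a set $R_\good$ of challenge randomness (of density $\ge \delta'/4$) on which the single-shot success probability is at least $\delta'/4$. Conditioned on $H\in H_\good$ and every $r_i\in R_\good$, the per-instance success probability really is bounded below by $\delta'/4$, and then your boosting and gentle-measurement analysis goes through. The cost is that the multi-instance adversary with genuine advice now wins with probability only $(\delta'/4)^{g+1}$ rather than $\Omega(1)$; after stripping the advice one obtains success $\ge 2^{-kS}(\delta'/4)^{g+1}$, and the final choice of $g$ must absorb both the $2^{-kS}$ factor and the $(\delta'/4)^{g+1}$ factor into $(\delta'/8)^g$.
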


\begin{remark*}
  Recall that for OWF, its multi-instance security is $\delta(g, T) = O\left(\frac{gT + T^2}{\min\{N, M\}}\right)$ and $T_\ver = 1$; for random guess, the winning probability is $\delta_0 \ge 1/N$.
  Plug these parameters into the corollary above, we have that OWF's security $\delta'$ in the QAI-QROM  satisfies
  \[
  \delta' \le \tilde O\left(\frac{ST + T^2}{\delta'^2 \cdot \min\{N, M\}}\right). 
  \]
  In conclusion,
  \[
  \delta' \le \tilde O\left(\frac{ST + T^2}{\min\{N, M\}}\right)^{1/3}.
  \]
\end{remark*}
\medskip

As noted before that publicly-verifiable games and decision games are essentially disjoint, we finally present the fourth reduction for decision games.

\newcommand{\statethmdecisiongamequantumadvicereduction}{
  Given a \underline{decision} security game $G$.
  Assume that $\eps_0 = 1/\poly(N, M)$ is a non-negligible lower bound on the advantage for an adversary with $T_0$ queries and $S_0$ qubits of advice, and assume that multi-instance game $\Gmi g$ is $\eps(g, T)$-secure in the QROM. For any  $(S, T)$ adversaries in QAI-QROM, the  best advantage against $G$ is $\eps'$, which  satisfies
  \[\eps' \le 8 \cdot \eps\left(\tilde O(S + S_0)^3/\eps'^9, \tilde O((T + T_0)(S + S_0)^2)/\eps'^8\right),\]
  where $\tilde O$ absorbs $\poly(\log N, \log M, \log S)$ factors.
}
\begin{theorem}
  \label{thm:decision-game-quantum-advice-reduction}
  \statethmdecisiongamequantumadvicereduction
\end{theorem}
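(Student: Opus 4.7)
The plan is to prove \Cref{thm:decision-game-quantum-advice-reduction} by a two-stage reduction analogous to \Cref{thm:decision-game-classiacl-advice-reduction}: first bootstrap an $(S,T)$ QAI-QROM adversary $\mathcal A = (\mathcal A_1, \mathcal A_2)$ with advantage $\varepsilon'$ into a multi-instance adversary $\mathcal B$ that still holds many copies of the quantum advice, and then strip the advice by replacing it with the maximally mixed state. The critical new ingredient, flagged in \Cref{sec:tech-overview-yaos-box}, is that for decision games in the average-case setting, majority vote on independent runs of $\mathcal A$ can actually \emph{decrease} the success probability, so naive boosting via repetition is unusable. I would instead invoke the \emph{online shadow tomography} protocol of Aaronson--Rothblum~\cite{AR19}, which lets $\mathcal B$ gently measure, for each challenge $\mathrm{ch}_i$ in turn, the probability that $\mathcal A_2$ outputs $1$ on that challenge, while damaging the retained copies of $\ket\alpha$ only slightly.

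The construction of $\mathcal B$ proceeds as follows. $\mathcal B_1$ runs $\mathcal A_1$ enough times to prepare $k = \tilde O\bigl((S+S_0)^2/\eta^2\bigr)$ copies of the advice $\ket\alpha$, where $\eta$ is an accuracy parameter polynomial in $\varepsilon'$ to be chosen below, together with a copy of the baseline $S_0$-qubit advice that witnesses $\varepsilon_0$. In round $i$, $\mathcal B$ feeds the freshly received $\mathrm{ch}_i$ into the AR19 online procedure applied to the stored advice copies, obtaining an estimate $\hat p_i$ of $p_i := \Pr[\mathcal A_2(\ket\alpha, \mathrm{ch}_i) = 1]$ accurate to additive error $\eta$ with total failure probability $o(1)$ over all $g$ rounds; it then submits the \emph{estimated majority bit} of $p_i$ as $\ket{\mathrm{ans}_i}$ (falling back on the $\varepsilon_0$-baseline strategy when $|\hat p_i - 1/2|\le\eta$). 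Because $\mathcal A$ has advantage $\varepsilon'$, the average majority gap $\E_{\mathrm{ch}}\bigl[|p(\mathrm{ch})-1/2|\bigr]$ is at least $\varepsilon'$, so with $\eta$ chosen sufficiently small relative to $\varepsilon'$ the estimated-majority strategy wins each round with probability at least $1/2 + \Omega(\varepsilon')$; telescoping over rounds and accounting for the $o(1)$ tomography slack yields $\Pr[\mathcal B\text{ wins all }g] \ge (1/2 + \Omega(\varepsilon'))^g - o(1)$.

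Next I would replace $\mathcal B_1$'s $Sk$-qubit advice output by the maximally mixed state, producing a no-advice $(g, T')$ QROM multi-instance adversary $\mathcal C$ whose winning probability is at least $2^{-Sk}(1/2 + \Omega(\varepsilon'))^g$. Matching this against the hypothesized multi-instance security $(1/2+\varepsilon(g,T'))^g$, taking $g$-th roots, and using $(1+x)^g \le e^{gx}$ yields $\varepsilon(g,T') \ge \Omega(\varepsilon') - O(Sk/g)$. Choosing $g = \Theta(Sk/\varepsilon')$ absorbs the second term into the first, giving $\varepsilon' \le 8\varepsilon(g, T')$. Substituting $k = \tilde O((S+S_0)^2/\eta^2)$ and tuning $\eta$ as a suitable polynomial in $\varepsilon'$ produces $g = \tilde O((S+S_0)^3/\varepsilon'^9)$; the query cost $T'$ is the sum of $T$ queries from running $\mathcal A_2$ once per round and the $\tilde O((T+T_0)k)$ queries consumed by the online tomography, giving $T' = \tilde O((T+T_0)(S+S_0)^2/\varepsilon'^8)$, matching the statement.

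The main obstacle I expect is the simultaneous balancing of three error sources, each scaling differently in $\varepsilon'$: the per-observable tomography accuracy $\eta$, the union-bound slack over $g$ rounds of tomography failure, and the $2^{-Sk}$ advice-removal factor. All three must be driven by a single choice of $\eta$, and the need to keep $\eta$ a small polynomial in $\varepsilon'$, because the per-challenge gap $|p(\mathrm{ch}) - 1/2|$ is only bounded in expectation and can be arbitrarily close to zero pointwise so that a coarse estimate would lose the majority decision on a non-negligible mass of challenges, is precisely what forces the unusual $\varepsilon'^9$ and $\varepsilon'^8$ exponents. A secondary but conceptually lighter task is verifying that the online variant of shadow tomography composes with the interactive multi-instance game of \Cref{def:mis-game}: that the $g$ sequentially revealed challenges, the round-specific $\query$ oracles, and the challenger's measurement-and-return step all fit inside the online framework of~\cite{AR19} without breaking its accuracy/disturbance guarantees.
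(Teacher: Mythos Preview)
Your overall architecture matches the paper's: online shadow tomography from \cite{AR19} on $k$ copies of the advice, followed by advice removal via the maximally mixed state, followed by balancing $g$ against the $2^{-Sk}$ loss. The final parameter counting is also essentially right (though you wrote $k = \tilde O((S+S_0)^2/\eta^2)$ where \Cref{thm:shadow-tomography} actually gives an $\eta^{-8}$ dependence; your final exponents are consistent with $\eta^{-8}$, so this looks like a slip).

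The real gap is in how you convert the estimate $\hat p_i$ into an answer. You output the \emph{majority bit} $[\hat p_i > 1/2]$ and argue: since $\varepsilon' = \E[(2a-1)(p-1/2)] \le \E[|p-1/2|]$, the ``average majority gap'' is at least $\varepsilon'$, hence the majority-bit strategy wins with probability $1/2 + \Omega(\varepsilon')$. That inference is invalid. Knowing $\E[|p-1/2|] \ge \varepsilon'$ tells you $p$ is often far from $1/2$, but says nothing about whether $[p>1/2]$ agrees with the correct answer $a$. Concretely, take $a=1$ identically and let $p=1$ on half the challenges and $p=0.1$ on the other half: then $\varepsilon' = 0.05$ and $\E[|p-1/2|]=0.45$, yet the majority bit is wrong exactly half the time, giving advantage $0$. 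Your fallback clause (trigger the baseline when $|\hat p_i - 1/2|\le\eta$) does not help here because $p$ is far from $1/2$ on every challenge. Whether the \emph{optimal} $(S,T)$ adversary can exhibit this pathology depends on the game, and nothing in your argument rules it out.

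The paper sidesteps this entirely: after obtaining $\hat p_i$, it \emph{flips a biased coin} that outputs $1$ with probability $\hat p_i$, rather than rounding. By \Cref{lem:additive-error-robustness}, this reproduces $\mathcal A$'s per-round success probability up to an additive $\eta$, for \emph{every} challenge distribution, with no assumption on the sign of $p-1/2$. Setting $\eta = \varepsilon'/4$ then gives a clean per-round bound of $1/2 + \varepsilon'/4$ on the good-$H$ set, and the rest of your calculation goes through unchanged. Swap your majority-bit step for this randomized-response step and your proof becomes correct.
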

\begin{remark*}
  Recall that for Yao's box, its multi-instance advantage is $\eps(g, T) = O\left(\sqrt{\frac{gT}N}\right)$; when $S_0 = 1$ and $T_0 = 0$, $\eps_0 \ge \frac 1{2N}$ (which is simply remembering $H(1)$).
  Plug these parameters into the corollary above, we have that Yao's box's best advantage $\eps'$ in the QAI-QROM satisfies
  \[
  \eps' \le \tilde O\left(\sqrt{\frac{S^5 T}{\eps'^{17} N}}\right),
  \]
  therefore, we conclude that
  \[
  \eps' \le \tilde O\left( \frac{S^5 T}N \right)^{1/19}.
  \]
\end{remark*}

\subsection{Reduction for Classical Advice}

\begin{fact}
  \label{lem:averaging-argument}
  Given events $E_1, E_2, ..., E_N$, any real number $B \geq 0$, $\eps$, let $I$ be a uniformly random integer in $[N]$.
  If $\Pr[E_I] \ge B + \eps$, there exists a subset $S \subseteq [N]$ of size $|S| \ge \eps N / 2$, such that for any $i \in S$, $\Pr[E_i] \ge B + \eps/2$.  
\end{fact}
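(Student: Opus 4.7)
The plan is to prove this by a straightforward averaging (Markov-style) argument by contradiction. First I would translate the hypothesis into a statement about the sum: since $I$ is uniform on $[N]$, we have $\Pr[E_I] = \frac{1}{N}\sum_{i=1}^N \Pr[E_i]$, so the assumption gives $\sum_{i=1}^N \Pr[E_i] \ge N(B+\eps)$.

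Next I would define the ``good'' set $S := \{\, i \in [N] : \Pr[E_i] \ge B + \eps/2\,\}$, and suppose toward contradiction that $|S| < \eps N/2$. Then I would upper-bound the sum by splitting into $S$ and its complement: each $i \in S$ contributes at most $1$, and each $i \notin S$ contributes strictly less than $B + \eps/2$, yielding
\[
\sum_{i=1}^N \Pr[E_i] \;<\; |S|\cdot 1 \;+\; (N-|S|)\cdot\bigl(B+\tfrac{\eps}{2}\bigr) \;\le\; \tfrac{\eps N}{2} + N\bigl(B+\tfrac{\eps}{2}\bigr) \;=\; N(B+\eps),
\]
which contradicts the lower bound from the hypothesis. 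Hence $|S| \ge \eps N/2$, and by construction every $i \in S$ satisfies $\Pr[E_i] \ge B+\eps/2$, giving the claim.

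There is no real obstacle here; the only mild subtlety is bookkeeping the strict vs. non-strict inequalities (using the crude bound $\Pr[E_i] \le 1$ on $S$ and the strict bound $\Pr[E_i] < B+\eps/2$ on the complement), but this resolves cleanly since we only need $|S| \ge \eps N/2$ rather than a strict inequality. The constant $1/2$ in both the size bound $\eps N/2$ and the threshold $B+\eps/2$ arises naturally from this split and is the standard ``halving'' one uses in averaging arguments of this form.
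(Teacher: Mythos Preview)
Your proof is correct and takes essentially the same approach as the paper: define $S$ as the set of indices with $\Pr[E_i] \ge B + \eps/2$, assume for contradiction that $|S| < \eps N/2$, and bound the average by $1$ on $S$ and by $B+\eps/2$ off $S$ to force $\Pr[E_I] < B + \eps$. The paper phrases the same computation via $\Pr[E_I] \le \Pr[I\in S] + \Pr[E_I \mid I\notin S]\Pr[I\notin S]$ rather than via sums, and explicitly notes the WLOG assumption $\eps > 0$, but these are cosmetic differences.
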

\begin{proof}
  Without loss of generality, assume $\eps > 0$.
  Let $S^*$ be the subset of $[N]$ that contains all $i \in [N]$ whose $\Pr[E_i] \geq B + \eps / 2$.   Assume for contradiction, $|S^*| <  \eps N / 2$. We have, 
  \begin{align*}
    B + \eps
      &\le \Pr[E_I] \\
      &\le 1 \cdot \Pr[I \in S^*] + \Pr[E_I | I \not\in S^*] \cdot \Pr[I \not\in S^*] \\
      & < 1 \cdot \eps/2 + (B + \eps/2) \cdot 1 \\
      & = B + \eps,
  \end{align*}
  which is a contradiction. 
\end{proof}

\begin{proposition}
  \label{thm:classical-mis-reduction}
  Given a security game $G$.
  Given an $(S, T)$ adversary with \underline{classical} advice $\As$ for a security game $G$ with winning probability $\delta$, then for all real $B \ge 0$, let $\delta = B + \eps$, there exists an $(g, S, T)$ adversary with classical advice $\As'$ for the multi-instance game $\Gmi g$ with winning probability $\delta' \ge \eps/2 \cdot (B + \eps/2)^g$ for any $g > 0$.
  
  Furthermore, this reduction is classical, meaning that if $\As$ is a classical algorithm with advice, so is $\As'$.
\end{proposition}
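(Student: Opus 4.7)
The plan is to show that the most natural reduction works: let $\As'$ use the same classical advice $\alpha \gets \As_1(H)$, and on the $i$-th challenge $\ch_i$ of the multi-instance game, $\As'$ simply runs a fresh execution of $\As_2(\alpha, \ch_i)$ with independent internal randomness, spending its $T$ queries on the provided oracle $\tilde H_i = \query^H(r_i,\cdot)$, then submits the classical answer $\ans_i$. Since classical advice can be copied freely and each instance is handed its own oracle and fresh $r_i$, this is a well-defined $(g,S,T)$ adversary with classical advice; and since every step of $\As'$ is classical whenever $\As$ is, the reduction is classical.

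The analysis then reduces to two simple steps: a conditional independence argument followed by reverse averaging. First I would fix the oracle $H$ and define
\[
  p(H) \;:=\; \Pr_{r,\,\As_2}\!\left[\ver^H\!\bigl(r,\,\As^{\tilde H}_2(\alpha,\samp^H(r))\bigr) = 1 \,\big|\, H\right]
\]
to be the per-oracle single-instance winning probability of the original $\As$. Because each $r_i$ and each copy of $\As_2$'s internal randomness in the multi-instance run of $\As'$ is sampled independently, and $\alpha$ is a deterministic function of $H$, the $g$ rounds are mutually independent conditional on $H$. Hence the multi-instance winning probability decomposes exactly as
\[
  \delta' \;=\; \E_H\bigl[p(H)^g\bigr], \qquad \E_H[p(H)] \;=\; \delta \;=\; B + \eps.
\]

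To finish, I would invoke \Cref{lem:averaging-argument} in its natural real-valued form: since $p(H) \in [0,1]$ and $\E_H[p(H)] \ge B + \eps$, there is a set of oracles $\mathcal S$ with $\Pr_H[H \in \mathcal S] \ge \eps/2$ on which $p(H) \ge B + \eps/2$ pointwise (the stated discrete form applies after a fine enough discretization of the range of $p$, or one simply reruns the same reverse-Markov argument verbatim). Restricting the expectation to $\mathcal S$ and using monotonicity of the $g$-th power then yields
\[
  \delta' \;\ge\; \Pr_H[H \in \mathcal S]\cdot \bigl(B + \tfrac{\eps}{2}\bigr)^g \;\ge\; \tfrac{\eps}{2}\,\bigl(B + \tfrac{\eps}{2}\bigr)^g,
\]
which is the claimed bound. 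I do not anticipate a real obstacle: the only thing to check carefully is that cloning-free classical advice together with the independence of the per-instance randomness really does give the exact product structure $\E_H[p(H)^g]$, which is precisely what allows the averaging loss to be absorbed as a single prefactor $\eps/2$ rather than being paid once per instance.
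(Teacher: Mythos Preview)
Your proposal is correct and follows essentially the same route as the paper: you build $\As'$ by running $\As_2$ on the shared classical advice $\alpha=\As_1(H)$ independently for each instance, then apply the averaging fact (\Cref{lem:averaging-argument}) over $H$ to extract a set of ``good'' oracles of mass $\ge \eps/2$ on which $p(H)\ge B+\eps/2$, and conclude by the product structure. The paper's proof is slightly terser (it does not spell out $\delta'=\E_H[p(H)^g]$), and it also notes the trivial case $\eps\le 0$, but otherwise the two arguments are the same.
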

\begin{proof}
  
  Without loss of generality, assume $B \le \delta$ or $\eps \ge 0$.
  Consider the following adversary with advice $\mathcal A'$:
  \begin{enumerate}
    \item Compute $\alpha \gets \As_1(H)$; %
    \item For $i \in [g]$, on receiving challenge $\ch_i$ and given oracle access to $\tilde H_i$, computes $\As_2^{\tilde H_i}(\alpha, \ch_i)$, sends its output, and ignores the answer sent back by the challenger.
  \end{enumerate}

  Consider the event $E_H$ to be the event when $\As$ wins the game conditioned on the random oracle is fixed as $H$.
  Using \Cref{lem:averaging-argument}, we can find a subset $H_\good$ of all possible functions such that whenever the function $H$ is in $H_\good$, $\Pr_{r, \mathcal A}[\As \Longleftrightarrow C(H) = 1] \ge B + \eps/2$.
  Since $H_\good$ takes up at least $\eps/2$ fraction of all possible functions, we have the theorem by observing that $\mathcal A'$ given above have the success probability lower bounded by $\eps/2 \cdot (B + \eps/2)^g$.
\end{proof}

\begin{proposition}
  \label{thm:mis-remove-advice}
  Given a security game $G$ and a $(g, S, T)$ adversary with (classical or quantum) advice $\As$ for the multi-instance game $\Gmi g$ with winning probability $\delta$, there exists an $(g, T)$ adversary (without advice) $\As'$ for the same game with winning probability at least $2^{-S} \cdot \delta$.
  
  Furthermore, the reduction is classical.
\end{proposition}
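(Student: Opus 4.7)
The plan is to handle the classical-advice and quantum-advice cases in parallel, since in each case the reduction is a one-line ``replace the advice with the trivial distribution'' argument, losing a factor of exactly $2^{-S}$.

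For the classical case, I would define $\As'$ to ignore the preprocessing phase and instead sample $\alpha' \uniformgets \{0,1\}^S$ uniformly at random, then simulate $\As_2$ with $\alpha'$ as advice throughout the $g$-round interaction with $\Cmi g$. Writing $\alpha(H)$ for the (deterministic) advice string produced by $\As_1(H)$, and using $W_H(\beta)$ to denote the conditional winning probability of $\As_2$ with advice $\beta$ against $\Cmi g(H)$, we have $\delta = \E_H [W_H(\alpha(H))]$ and
\[
\delta' \;=\; \E_H\bigl[\,2^{-S}\sum_{\alpha' \in \{0,1\}^S} W_H(\alpha')\bigr] \;\geq\; 2^{-S}\,\E_H[W_H(\alpha(H))] \;=\; 2^{-S}\delta.
\]
Since $\As'$ only uses a uniform classical sample and then runs $\As_2$ as a black box, it is classical whenever $\As$ is, giving the ``furthermore'' clause.

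For the quantum case, the natural replacement is the maximally mixed state $\rho_\ast := I/2^S$ on $S$ qubits; $\As'$ prepares $\rho_\ast$ at the start and then runs $\As_2$ with this register as advice. The key observation is that for each fixed $H$, the quantity $W_H(\rho) := \Pr[\As \Longleftrightarrow \Cmi g(H) = 1\,\mid\,\text{advice}=\rho]$ is an affine linear function of the density matrix $\rho$ (one may think of it as $\Tr(M_H \rho)$ for a suitable $0 \preceq M_H \preceq I$ obtained by collapsing $\As_2$'s interaction with $\Cmi g$ into a single POVM element). Writing $\ket{\alpha_H}$ for the pure advice produced by $\As_1(H)$ and extending $\{\ket{\alpha_H}\}$ to an orthonormal basis $\{\ket{e_{H,i}}\}_{i=1}^{2^S}$, we get
\[
\rho_\ast \;=\; 2^{-S}\Bigl(\ket{\alpha_H}\bra{\alpha_H} + \sum_{i\geq 2} \ket{e_{H,i}}\bra{e_{H,i}}\Bigr),
\]
so by linearity and non-negativity $W_H(\rho_\ast) \geq 2^{-S} W_H(\ket{\alpha_H}\bra{\alpha_H})$. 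Averaging over $H$ again yields $\delta' \geq 2^{-S}\delta$. (Without loss of generality the advice is pure, as noted in the paper following the definition of an algorithm with advice.)

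Neither step should present an obstacle: the only subtlety is making sure the ``winning probability'' functional is genuinely linear in the advice register even though the game is multi-round and adaptive. This is immediate once one bundles the entire interaction with $\Cmi g$ (the challenger's sampling of $\{r_i\}$, the projective measurements $\{P_0,P_1\}$, and $\As_2$'s internal unitaries and queries) into a single CPTP map applied to the advice register, since the probability of the classical ``all-accept'' outcome is then of the form $\Tr(M_H \rho)$. With that in hand, both cases reduce to the one-line convexity bound, and the classical sub-case is obtained by restricting the convex decomposition to computational-basis pure states.
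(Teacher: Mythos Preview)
Your proposal is correct and follows essentially the same argument as the paper: replace the advice by the maximally mixed state $I/2^S$ (equivalently, a uniformly random $S$-bit string), expand in a basis containing the true advice $\ket{\alpha_H}$, and use linearity of the acceptance probability in the advice register to lose exactly a $2^{-S}$ factor. The paper's proof is terser (it does not split into classical and quantum cases or spell out the $\Tr(M_H\rho)$ formulation), but the substance is identical.
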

\begin{proof}
  $\As'$ works simply by invoking the second-stage adversary with a maximally mixed state $I_{2^S}$ as its advice.
  Note that a maximally mixed state is a mixed state in any basis, in particular, assume that the correct advice is $\ket\alpha$, we can write $I_{2^S}$ as a maximally mixed state in basis states $v_1, v_2, ..., v_{2^S}$ where $v_1 = \ket\alpha$.
  Using this observation, we conclude that $\As'$ has winning probability at least $2^{-S} \cdot \delta$.
  
  The reduction is classical, as picking a maximally mixed state $I_{2^S}$ is equivalent to choosing a $S$-bit bit-string uniformly at random. 
\end{proof}

Combining the two propositions above yields the following corollary.
\begin{corollary}
  \label{cor:reduction-classical-advice-general}
  Let $G$ be a security game.
  Given an $(S, T)$ adversary with \underline{classical} advice $\As$ for the security game $G$ with success probability $\delta$, then for all real $B \ge 0$, let $\delta = B + \eps$, there exists an $(g, T)$ adversary (without advice) $\As'$ for the multi-instance game $\Gmi g$ with winning probability $\delta' \ge 2^{-S} \cdot  \eps/2  \cdot (B + \eps/2)^g$ for any $g > 0$.
  
  Furthermore, the reduction is classical.
\end{corollary}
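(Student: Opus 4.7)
The plan is to obtain the corollary by simply chaining the two preceding propositions in the natural order, with no additional ideas required. First I would invoke \Cref{thm:classical-mis-reduction} on the given $(S,T)$ adversary $\As$ with classical advice and winning probability $\delta = B + \eps$. This yields, for every $g > 0$, a $(g, S, T)$ adversary $\As''$ with classical advice for the multi-instance game $\Gmi g$ whose winning probability is at least $\frac{\eps}{2} \cdot \left(B + \frac{\eps}{2}\right)^g$. Because \Cref{thm:classical-mis-reduction} is classical, the classical advice of $\As$ carries through to $\As''$ unchanged.

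Next I would feed $\As''$ into \Cref{thm:mis-remove-advice} to strip off the advice. That proposition guarantees a $(g, T)$ adversary $\As'$ (now advice-free) for the same multi-instance game $\Gmi g$ whose winning probability is at least $2^{-S}$ times that of $\As''$. Composing the two bounds gives
\[
  \delta' \;\ge\; 2^{-S} \cdot \frac{\eps}{2} \cdot \left(B + \frac{\eps}{2}\right)^{g},
\]
which is exactly the claimed inequality.

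For the ``furthermore'' clause, I would note that both building blocks preserve classicality: \Cref{thm:classical-mis-reduction} produces $\As''$ by black-box sequential invocation of $\As_2$ on independent challenges (no quantum operations are introduced), and \Cref{thm:mis-remove-advice} replaces the advice register with a maximally mixed state, which for $S$-bit classical advice is equivalent to sampling a uniformly random $S$-bit string. Hence if $\As$ is a classical algorithm with classical advice, the composed $\As'$ remains a purely classical algorithm without advice.

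There is no real obstacle here: the statement is literally the composition of the two propositions, and the probability bounds multiply because the two reductions act independently (advice preservation in the first step, uniform-guess replacement in the second). The only thing to be careful about is keeping the parameter $B$ and the averaging slack $\eps/2$ consistent across the two steps, but this is automatic since \Cref{thm:mis-remove-advice} does not touch the winning probability's dependence on $B$ or $\eps$.
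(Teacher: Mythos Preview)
Your proposal is correct and matches the paper's approach exactly: the paper simply states that the corollary follows by ``combining the two propositions above,'' which is precisely the composition of \Cref{thm:classical-mis-reduction} and \Cref{thm:mis-remove-advice} that you describe.
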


\begin{customtheorem}{\ref{thm:reduction-classical-advice-general}}
  \statethmreductionclassicaladvicegeneral
\end{customtheorem}
\begin{proof}
  Let $\As$ be the $(S, T)$ adversary with classical advice $\As$ for $G$ with the best winning probability $\delta'$, then using \Cref{cor:reduction-classical-advice-general} with $B = 0$, there exists an $(g, T)$ adversary (without advice) $\As'$ for the multi-instance game $\Gmi{g}$ with winning probability at least $2^{-S} \cdot \delta'/2 \cdot (\delta'/2)^g$ for any $g > 0$.
  By multi-instance security, we know that
  \[
    \delta(g, T)^g
      \ge 2^{-S} \cdot \delta'/2 \cdot (\delta'/2)^g
      \ge 2^{-S} \cdot \delta_0/2 \cdot (\delta'/2)^g.
  \]
  Take $g = S + \log(1/\delta_0) + 1$, we have
  \[
    \delta(g, T)^g \ge 2^{-g} \cdot (\delta'/2)^g,
  \]
  and thus we have the theorem.
\end{proof}

\subsection{Reduction for Decision Games with Classical Advice}

\begin{fact}
  \label{lem:unimportant-1}
  For any real $C \ge 0$, $0 \le x \le 1$, we have that $\frac C{C + x} \le 1 - \frac x{C + 1}$.
\end{fact}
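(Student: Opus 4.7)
The plan is to rewrite the left-hand side by splitting off the ``$1$'' and then compare denominators. Specifically, I will use the identity
\[
\frac{C}{C+x} = 1 - \frac{x}{C+x},
\]
which is valid whenever $C+x > 0$ (the edge case $C = x = 0$ makes both sides of the claimed inequality equal $1$ trivially, so I can dispose of it separately or via a convention). Once in this form, the claim reduces to
\[
\frac{x}{C+x} \ge \frac{x}{C+1}.
\]

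Next, I would verify this reduced inequality by monotonicity of $t \mapsto x/t$ on $t > 0$. Since $x \le 1$, we have $C + x \le C + 1$, and combined with $x \ge 0$ this yields $\frac{x}{C+x} \ge \frac{x}{C+1}$ (with equality when $x = 0$ or $x = 1$). Substituting back gives $\frac{C}{C+x} = 1 - \frac{x}{C+x} \le 1 - \frac{x}{C+1}$, as desired.

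There is essentially no obstacle: the only thing to be slightly careful about is the degenerate case $C = x = 0$, where $\frac{C}{C+x}$ is formally $\tfrac{0}{0}$; but this case is not relevant in the intended application (it will be invoked with $C + 1 > 0$ and typically $C + x > 0$), so a brief remark or the natural convention $\tfrac{0}{0} := 1$ suffices. The entire proof fits in one or two lines of LaTeX.
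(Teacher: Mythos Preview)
Your proof is correct. The paper's one-line proof takes a different route: it observes that $x \mapsto \frac{C}{C+x}$ is convex on $[0,1]$, so its graph lies below the chord connecting the endpoints $(0,1)$ and $\bigl(1,\frac{C}{C+1}\bigr)$, and that chord is exactly $x \mapsto 1 - \frac{x}{C+1}$. Your argument is more elementary and self-contained---it avoids invoking convexity and instead reduces everything to the single observation $C+x \le C+1$. The convexity viewpoint is slicker to state but requires the reader to check (or believe) that the function is convex; your algebraic rewrite makes the mechanism completely transparent. Either is perfectly adequate for a fact this small. Your remark about the degenerate case $C=x=0$ is a fair point about the statement itself, not about your proof.
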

\begin{proof}
  By monotonicity and convexity of $\frac C{C + x}$ on $x$.
\end{proof}

\begin{customtheorem}{\ref{thm:decision-game-classiacl-advice-reduction}}
  \statethmdecisiongameclassicaladvicereduction
\end{customtheorem}
\begin{proof}
  Let $\As$ be the $(S, T)$ adversary with classical advice $\As$ for $G$ with the best advantage $\eps'$. 
  By security of multi-instance game and \Cref{cor:reduction-classical-advice-general} with $B = 1/2$, we know that there exists an $(g, T)$ adversary (without advice) $\As'$ for the multi-instance game $\Gmi{g}$ with winning probability at least $2^{-S} \cdot \eps' / 2  \cdot (1/2 + \eps'/2)^g$ for any $g > 0$. Moreover, if $S \ge S_0$ and $T \ge T_0$, we have $\eps' \geq \eps_0$. Therefore, 
  \[
    (1/2 + \eps(S, T))^g
    \ge 2^{-S} \cdot (\eps'/2) \cdot (1/2 + \eps'/2)^g
    \ge 2^{-S} \cdot (\eps_0/2) \cdot (1/2 + \eps'/2)^g.
  \]
  Therefore, in general,
  \[
    (1/2 + \eps(g, T + T_0))^g \ge 2^{-S - S_0} \cdot (\eps_0/4) \cdot (1/2 + \eps'/2)^g.
  \]
  Take $g = \frac{10 \ln 2}{\eps'} \cdot (S + S_0 + \log(1/\eps_0) + 2)$, using \Cref{lem:unimportant-1}, and the fact that, we have that $(1 - 1/x)^x \le 1/e$ for any real $x \ge 1$, we have
  \begin{align*}
    \left(\frac{1/2 + \eps'/4}{1/2 + \eps'/2}\right)^g
      &\le \left(1 - \frac{\eps'/4}{1/2 + \eps'/4 + 1}\right)^g \\
      &=   \left(1 - \frac 1{4(1/2 + 1)/\eps' + 1}\right)^g \\
      &\le \left(1 - \frac 1{5(1/2 + 1)/\eps'}\right)^g \\
      &\le \left(1 - \eps'/10\right)^g \\
      &\le 2^{-(S + S_0 + \log(1/\eps_0) + 2)} \\
      &=   2^{-S - S_0} \cdot \eps_0/4.
  \end{align*}
  Combining the two formulae above, we obtain that
  \[
    (1/2 + \eps(g, T + T_0))^g \ge (1/2 + \eps'/4)^g,
  \]
  where  $g = \frac{10 \ln 2}{\eps'} \cdot (S + S_0 + \log(1/\eps_0) + 2)$. 
  Thus we have the theorem.
\end{proof}

\subsection{Reduction for Publicly-Verifiable Games}

\begin{lemma}[{Gentle measurement lemma~\cite[Lemma 2.2]{Aaronson05}}]
  \label{lem:gentle-measurement}
  Suppose a 2-outcome measurement of a mixed state $\rho$ yields outcome 0 with probability $1 - \eps$. Then after the measurement, we can recover a state $\tilde\rho$ such that $\trdist{\tilde\rho - \rho} \le \sqrt\eps$.
  This is true even if the measurement is a POVM (that is, involves arbitrarily many ancilla qubits).
\end{lemma}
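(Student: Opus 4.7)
The plan is to reduce from general POVMs to projective measurements via Neumark's dilation, handle pure states by a direct eigenvalue computation, and then lift to mixed states by convexity. First, I would invoke Neumark's dilation: any two-outcome POVM on $\rho$ is equivalent to adjoining a fresh ancilla, applying a global unitary, and performing a projective two-outcome measurement $\{P_0,P_1\}$ with $P_0+P_1=I$. Since the trace distance is invariant under unitary conjugation and non-increasing under partial trace, and the outcome-$0$ probability $1-\eps$ is preserved by the dilation, it suffices to prove the bound when $\{P_0,P_1\}$ is directly a projective measurement on $\rho$.

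Next I would handle a pure state $\rho=\ket\psi\bra\psi$. Writing $\ket{\psi_b}:=P_b\ket\psi$, so that $\braket{\psi_0|\psi_0}=1-\eps$ and $\braket{\psi_1|\psi_1}=\eps$, the post-measurement state with the classical outcome traced out is $\tilde\rho=\ket{\psi_0}\bra{\psi_0}+\ket{\psi_1}\bra{\psi_1}$. Using $\ket\psi=\ket{\psi_0}+\ket{\psi_1}$ and subtracting,
\[
  \ket\psi\bra\psi-\tilde\rho=\ket{\psi_0}\bra{\psi_1}+\ket{\psi_1}\bra{\psi_0}.
\]
Since $\ket{\psi_0}$ and $\ket{\psi_1}$ are orthogonal with the above norms, this Hermitian operator is supported on a two-dimensional subspace; in the orthonormal basis obtained by normalizing $\ket{\psi_0}$ and $\ket{\psi_1}$, it equals $\sqrt{\eps(1-\eps)}$ times the Pauli $X$ matrix. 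Its two nonzero singular values are therefore both $\sqrt{\eps(1-\eps)}$, giving trace distance $\sqrt{\eps(1-\eps)}\le\sqrt{\eps}$.

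For a general mixed state $\rho=\sum_i p_i\ket{\psi_i}\bra{\psi_i}$, I would set $\eps_i:=\bra{\psi_i}P_1\ket{\psi_i}$, so that $\sum_i p_i\eps_i=\eps$, apply the pure-state bound to each component, and combine via convexity of the trace distance and Jensen's inequality for the concave function $\sqrt{\cdot}$:
\[
  \trdist{\rho-\tilde\rho}\le\sum_i p_i\sqrt{\eps_i}\le\sqrt{\sum_i p_i\eps_i}=\sqrt{\eps}.
\]

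I do not anticipate a substantive obstacle: the argument is essentially elementary once Neumark's dilation is in hand. The only points worth flagging are, first, reading the ``recovered'' state $\tilde\rho$ as the post-measurement state with the classical outcome register traced out (equivalently, the channel $\rho\mapsto\sum_b P_b\rho P_b$ after the projective reduction), and second, preserving the sharp constant $1$ in front of $\sqrt\eps$, which is exactly what the two-by-two Pauli-$X$ computation delivers; looser pinching-style arguments would only yield $2\sqrt{\eps}$.
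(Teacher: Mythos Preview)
The paper does not prove this lemma; it is simply cited from \cite{Aaronson05} and used as a black box. Your argument is a correct and standard proof: the Neumark reduction to projective measurements, the explicit two-by-two computation for pure states yielding trace distance $\sqrt{\eps(1-\eps)}$, and the convexity/Jensen step for mixed states are all sound, and your reading of $\tilde\rho$ as the post-measurement state with the outcome register discarded is consistent with how the lemma is applied later in the paper (via uncomputation and the quantum union bound).
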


\begin{lemma}[{Quantum union bound~\cite[Corollary 11]{AR19}}]
  \label{lem:quantum-union-bound}
  Let $\rho$ be a mixed state and let $S_1, ..., S_m$ be any quantum operations.
  Suppose that for all $i$, we have
  \[\trdist{S_i(\rho) - \rho} \le \eps_i.\]
  Then
  \[\trdist{S_m(S_{m - 1}(... S_1(\rho) ...)) - \rho} \le \eps_1 + ... + \eps_m.\]
\end{lemma}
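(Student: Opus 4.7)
The plan is to prove this by induction on $m$, using two standard facts about trace distance: the triangle inequality, and the data-processing inequality (monotonicity of trace distance under quantum channels), which says $\trdist{\Lambda(\sigma) - \Lambda(\tau)} \le \trdist{\sigma - \tau}$ for every CPTP map $\Lambda$ and every pair of states $\sigma, \tau$. Both are textbook results, so I would simply quote them rather than redevelop anything.

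The base case $m = 1$ is precisely the hypothesis $\trdist{S_1(\rho) - \rho} \le \eps_1$. For the inductive step, assume the bound for $m-1$ operations, write $\sigma := S_{m-1}(\cdots S_1(\rho) \cdots)$, and expand via the triangle inequality,
\[
\trdist{S_m(\sigma) - \rho}
\;\le\; \trdist{S_m(\sigma) - S_m(\rho)} + \trdist{S_m(\rho) - \rho}.
\]
The second term is at most $\eps_m$ directly by hypothesis. For the first, apply data processing with channel $\Lambda = S_m$ to bound it by $\trdist{\sigma - \rho}$, which the inductive hypothesis in turn bounds by $\eps_1 + \cdots + \eps_{m-1}$. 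Summing gives the desired $\eps_1 + \cdots + \eps_m$.

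There is essentially no obstacle to this argument; the whole proof collapses to two lines of inequalities. The only subtlety worth flagging is that data processing requires the $S_i$ to be genuine trace-preserving quantum operations rather than trace-non-increasing ``sub-operations'' (e.g., those obtained by post-selecting on a specific measurement outcome). In that latter regime one would need to renormalize or work with a different distance, but the statement in the excerpt is explicitly for CPTP maps, and the two-line induction above goes through verbatim. Note that crucially we obtain a genuinely additive bound (no $\sqrt{\cdot}$ loss as in Gao-style quantum union bounds), which is exactly what the gentle-measurement applications elsewhere in the paper will need when iterating many mildly disturbing measurements.
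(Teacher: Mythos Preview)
Your argument is correct: induction on $m$, using the triangle inequality together with the contractivity of trace distance under CPTP maps, is exactly the standard two-line proof of this statement. The paper itself does not give a proof at all; it simply quotes the lemma from \cite[Corollary 11]{AR19} and uses it as a black box, so there is nothing to compare against beyond noting that your proof matches the usual one.
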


\begin{lemma}[{\cite[Lemma 3.6]{BV97}}]
  \label{lem:trdist2prob}
  Let $\mathcal D(\psi)$ denote the probability distribution that results form a measurement of $\ket\psi$ in the computational basis.
  If $\trdist{\ket\phi - \ket\psi} \le \eps$, then $\norm{\mathcal D(\phi) - \mathcal D(\psi)}_1 \le 4\eps$.
\end{lemma}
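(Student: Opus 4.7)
My plan is to unpack the notation first, then do a two-line calculation using the factorization $a^2 - b^2 = (a-b)(a+b)$ together with Cauchy–Schwarz. The hypothesis $\trdist{\ket\phi - \ket\psi} \le \eps$, read through the paper's own definition $\trdist{U} = \tfrac12 \Tr\sqrt{U^\dagger U}$ with $U$ the column vector $\ket\phi - \ket\psi$, reduces to $U^\dagger U = \|\ket\phi - \ket\psi\|_2^2$ (a nonnegative scalar), so $\trdist{\ket\phi - \ket\psi} = \tfrac12\|\ket\phi - \ket\psi\|_2$. Thus the hypothesis is equivalent to the Euclidean-norm bound $\|\ket\phi - \ket\psi\|_2 \le 2\eps$. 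I will expand $\ket\phi = \sum_x \alpha_x \ket x$ and $\ket\psi = \sum_x \beta_x \ket x$ in the computational basis, so that $\mathcal D(\phi)(x) = |\alpha_x|^2$ and $\mathcal D(\psi)(x) = |\beta_x|^2$, and the quantity to bound becomes $\sum_x \bigl||\alpha_x|^2 - |\beta_x|^2\bigr|$.

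Next, I apply the factorization $|\alpha_x|^2 - |\beta_x|^2 = (|\alpha_x| - |\beta_x|)(|\alpha_x| + |\beta_x|)$ and Cauchy–Schwarz termwise to obtain
\[
\sum_x \bigl||\alpha_x|^2 - |\beta_x|^2\bigr|
\;\le\;
\Bigl(\sum_x (|\alpha_x| - |\beta_x|)^2\Bigr)^{1/2}
\Bigl(\sum_x (|\alpha_x| + |\beta_x|)^2\Bigr)^{1/2}.
\]
The first factor is bounded by $\|\ket\phi - \ket\psi\|_2$ via the elementary reverse-triangle inequality $\bigl||\alpha_x| - |\beta_x|\bigr| \le |\alpha_x - \beta_x|$, applied entrywise and summed. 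The second factor is bounded by $2$: by the triangle inequality $(|\alpha_x| + |\beta_x|)^2 \le 2(|\alpha_x|^2 + |\beta_x|^2)$, and summing gives $2(\|\ket\phi\|_2^2 + \|\ket\psi\|_2^2) = 4$ since both states are unit vectors.

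Combining the two bounds yields $\|\mathcal D(\phi) - \mathcal D(\psi)\|_1 \le 2\|\ket\phi - \ket\psi\|_2 \le 4\eps$, which is exactly the claim. There is no real obstacle here — the only subtle point is making sure to read the paper's $\trdist{\cdot}$ on a vector correctly, so that the constant works out to $4$ rather than $2$; the rest is just Cauchy–Schwarz and the fact that both states are normalized. I do not need any of the compressed-oracle or multi-instance machinery for this step.
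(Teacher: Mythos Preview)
The paper does not supply its own proof of this lemma; it simply cites it from \cite{BV97}. Your argument is correct and self-contained: the reading of $\trdist{\ket\phi - \ket\psi} = \tfrac12\|\ket\phi - \ket\psi\|_2$ via the paper's definition is right, and the Cauchy--Schwarz step with the factorization $|\alpha_x|^2 - |\beta_x|^2 = (|\alpha_x| - |\beta_x|)(|\alpha_x| + |\beta_x|)$ together with the normalization of the two states gives exactly the factor $4$.
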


\begin{proposition}
  There exists a universal constant $c > 0$.
  Given a \varul{publicly-verifiable} security game $G$ with verification time $T_\ver$.
  Given an $(S, T)$ adversary with advice $\As$ for $G$ with winning probability $\delta$, there exists an $(g, S', T')$ adversary with advice $\As'$ for the multi-instance game $\Gmi g$ with winning probability at least $(\delta/4)^{g+1}$ for any $g > 0$, where
  \begin{equation*}
    \begin{cases}
      S' = kS, \\
      T' = 2k(T + T_\ver), \\
      k = c \cdot \log(g + 1) / \delta.
    \end{cases}
  \end{equation*}
\end{proposition}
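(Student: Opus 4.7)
The plan is to have $\As'$ maintain $k$ copies of the original advice and, for each instance, apply a coherent ``boosted'' version of $\As$ using all $k$ copies at once, exploiting public verifiability to drive the per-instance success probability close to one. This lets the challenger's measurement be treated as gentle, so uncomputing the boosted circuit approximately restores the $k$ copies for the next round. Concretely, $\As'_1(H)$ outputs $\ket{\alpha}^{\otimes k}$ for $\alpha \gets \As_1(H)$, giving $kS$ qubits of advice as required. In each round $i \in [g]$, I define a unitary $U_{\hat\As}$, obtained by purifying $\As$ (deferring its final measurement), which in superposition (a) runs $\As$ on each copy $j \in [k]$ using oracle $\tilde H_i$ and challenge $\ch_i$ to produce candidates $\ket{\ans_{i,j}}$, (b) runs the public verifier $\widetilde\ver^{\tilde H_i}(\ch_i, \cdot)$ on each candidate to compute bits $v_{i,j}$, and (c) writes the candidate at the smallest index $j$ with $v_{i,j}=1$ (garbage if none) into an output register $\ans_i$. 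Then $\As'_2$ applies $U_{\hat\As}$, sends $\ans_i$ to the challenger, receives $\ans'_i$ back after the $\{P_0, P_1\}$ measurement, and applies $U_{\hat\As}^\dagger$ to the joint register. Each round uses $2k(T + T_\ver)$ oracle queries, matching $T' = 2k(T + T_\ver)$.

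For correctness, fix an oracle $H$ and let $\delta_H$ denote the probability that $\As(\alpha, \cdot)$ wins $G$ given $H$, so $\E_H[\delta_H] = \delta$. Because the $k$ copies act on disjoint registers with independent randomness, the probability that measuring the output of $U_{\hat\As}$ gives a correct answer is exactly $\hat\delta_H := 1 - (1 - \delta_H)^k$. Crucially, by public verifiability, the challenger's $P_1$ projector on $\ans_i$ coincides with this internal correctness indicator, so the challenger's measurement is a gentle $2$-outcome measurement that succeeds with probability $\hat\delta_H$. By \Cref{lem:gentle-measurement}, the round-$i$ operation perturbs the joint state in trace distance by $O(\sqrt{1 - \hat\delta_H})$, and invoking \Cref{lem:quantum-union-bound} across the $g$ rounds shows that the probability of winning all $g$ instances conditioned on $H$ is at least $\hat\delta_H^g - O(g\sqrt{1 - \hat\delta_H})$.

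To finish, I apply \Cref{lem:averaging-argument} with $B = 0$ and $\eps = \delta$ to obtain a set $\mathcal H$ with $\Pr_H[H \in \mathcal H] \ge \delta/2$ on which $\delta_H \ge \delta/2$. Taking the constant $c$ in $k = c \log(g+1)/\delta$ large enough so that $(1 - \delta/2)^k \le (g+1)^{-10}$ (feasible since $(1 - \delta/2)^{2/\delta} \le 1/e$), for $H \in \mathcal H$ we have $1 - \hat\delta_H \le (g+1)^{-10}$, so the conditional multi-instance success is $1 - O(g(g+1)^{-5}) \ge 1/2$ for every $g \ge 1$. Averaging over $H$ yields overall multi-instance success at least $(\delta/2)(1/2) = \delta/4 \ge (\delta/4)^{g+1}$, since $\delta \le 1$.

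The main obstacle I expect is controlling the total gentle-measurement error across all $g$ rounds using only $k = O(\log(g+1)/\delta)$ copies of advice: a naive choice $\hat\delta_H \ge 1/2$ would perturb each round by a constant in trace distance and make the accumulated bound vacuous, so the boosting must push $1 - \hat\delta_H$ below $(g+1)^{-\Omega(1)}$, which is exactly what the polylogarithmic-in-$g$ factor in $k$ achieves. A minor subtlety, handled by the Kraus purification remark in the preliminaries, is that $\As_1$ may output a mixed advice; we then work with its $2S$-qubit purification without changing the argument.
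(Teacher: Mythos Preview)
Your construction of $\As'$ matches the paper's, and the overall strategy (boost via $k$ copies, exploit public verifiability, gentle measurement, uncompute) is right. However, there is a genuine gap in the analysis.

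The claim that ``the probability that measuring the output of $U_{\hat\As}$ gives a correct answer is exactly $\hat\delta_H := 1 - (1 - \delta_H)^k$'' conflates the average success probability over the challenger's randomness $r$ with the per-challenge success probability. For a fixed $H$ and a \emph{fixed} challenge randomness $r$, the boosted success probability is $1 - (1 - p_{H,r})^k$, where $p_{H,r} := \Pr_\As[\As \text{ wins} \mid H, r]$. Averaging this over $r$ does \emph{not} give $1 - (1 - \delta_H)^k$; by concavity of $p \mapsto 1 - (1-p)^k$ it can be much smaller. Concretely, if $p_{H,r} = 1$ on half the $r$'s and $p_{H,r} = 0$ on the other half, then $\delta_H = 1/2$ but boosting does nothing: the per-round success stays $1/2$, and on a bad $r$ the challenger's measurement is not gentle at all---it destroys the advice. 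Your bound $\hat\delta_H^g - O(g\sqrt{1 - \hat\delta_H})$ therefore does not follow, and the final conclusion that the multi-instance success is at least $\delta/4$ \emph{independent of $g$} is too strong (and indeed would yield a much better theorem than claimed).

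The paper fixes this with a \emph{second} averaging step: after restricting to $H_\good$ where $\delta_H \ge \delta/2$, it applies \Cref{lem:averaging-argument} again to the randomness $r$ to obtain $R_\good$ of density $\ge \delta/4$ on which $p_{H,r} \ge \delta/4$. Conditioning on \emph{every} $r_i \in R_\good$ (which happens with probability $(\delta/4)^g$) makes each round's boosted success at least $1 - (1 - \delta/4)^k$, so gentle measurement genuinely applies in every round. The factor $(\delta/4)^g$ from this conditioning is exactly what produces the stated bound $(\delta/4)^{g+1}$.
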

\begin{proof}
  Consider the following adversary $\As'$:
  \begin{enumerate}
    \item At the pre-processing phase, compute $\ket{\alpha_j} \gets \As_1(H)$ for all $j \in [k]$.
    \item At the online phase, on receiving challenge $\ch_i$ and given oracle access to $\tilde H(\cdot) := \query^H(\ch_i, \cdot)$, do: \begin{enumerate}
      \item Run $\ket{\ans_j} \gets \As_2^{\tilde{H}}(\ket{\alpha_j}, \ch_i)$ for all $j \in [k]$;
      \item Compute in superposition $\ket{v_j} \gets \widetilde\ver^{\tilde H}(\ch_i, \ket{\ans_j})$ for all $j \in [k]$. 
      \item In superposition, find the first $j$ such that $v_j = 1$, and store $\ans_j$ into an empty register $\ans$. 
      \item Send $\ket{\ans}$ to challenger, and receives $\ket{\ans'}$ back;
      \item Un-compute (c), (b), (a). %
    \end{enumerate}
  \end{enumerate}
  
  First, it is easy to see that for $\As'$, its advice is of length $S' = k S$ and it requires to make at most $T' = 2 k (T + T_\ver)$ oracle queries ($T + T_\ver$ queries to compute each $v_j$ and $T + T_\ver$ queries to un-compute the previous round). 
  
  Using \Cref{lem:averaging-argument}, we can find a subset $H_\good$ that is at least $\delta/2$ fraction of all functions, such that any function $H$ in $H_\good$, $\Pr_{r, \As}[\As \Longleftrightarrow C(H) = 1] \ge \delta/2$.
  
  Fix any function in $H_\good$.
  Using \Cref{lem:averaging-argument} again, we can find a subset $R_\good$ that is of size at least $\delta|R|/4$ (which consequently is of at least $\delta/4$ fraction), such that for any randomness $r$ in $R_\good$, $\Pr_{\As}[\As \Longleftrightarrow C(H) = 1] \ge \delta/4$.
  
  Assume $\ch_1$ is generated using the randomness in $R_\good$, then we know that the challenger will reject the first answer with probability at most $(1 - \delta/4)^k \le \frac 1{100g^4}$ if $c$ is a large enough constant.
  Let $\rho$ be $\ket{\alpha_1}, ..., \ket{\alpha_k}$.
  By \Cref{lem:gentle-measurement}, after one round, the advice state we recovered from un-computing $\rho_1$ satisfies $\trdist{\rho_1 - \rho} \le 1/\sqrt{100g^4} = 1 / 10 g^2$.
  
  If all challenges are generated using randomness in $R_\good$, by \Cref{lem:quantum-union-bound}, the overall advice state starting in each round $\rho_j$ satisfies $\trdist{\rho_j - \rho} \le g \cdot \frac 1{10g^2}$.
  Therefore, in each round, since every operation besides the measurement is unitary, by \Cref{lem:trdist2prob}, the challenger accepts with probability at least $1 - \frac 1{2g}$.
  By union bound, the challenger accepts with probability at least $1/2$, finishing the proof that the winning probability is at least $\delta/2 \cdot (\delta/4)^g \cdot 1/2$.
\end{proof}

Combining this with \Cref{thm:mis-remove-advice} (to guess a quantum advice by a maximally mixed state), we obtain the following.

\begin{corollary}
  There exists a universal constant $c > 0$.
  Given a \varul{publicly-verifiable} security game $G$ with verification time $T_\ver$.
  Given an $(S, T)$ adversary with advice $\As$ for $G$ with winning probability $\delta$, there exists an $(g, T')$ adversary $\As'$ for the multi-instance game $\Gmi g$ with winning probability at least $2^{-kS} \cdot   (\delta/4)^{g+1}$ for any $g > 0$, where $T' = 2k(T + T_\ver)$ and $k =c \cdot \log(g + 1) / \delta$.
\end{corollary}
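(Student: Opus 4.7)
The plan is to obtain this corollary as a direct two-step composition of the immediately preceding proposition with \Cref{thm:mis-remove-advice}. No new ideas are required; the only content is to plug the parameter choices produced by the proposition into the advice-removal transformation and verify that the interaction with $\Cmi g$ is unaffected.

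First, I would apply the preceding proposition to the given $(S, T)$ adversary $\As$ with winning probability $\delta$. This produces a $(g, S', T')$ adversary with advice $\As''$ for the multi-instance game $\Gmi g$ whose winning probability is at least $(\delta/4)^{g+1}$, with parameters $S' = kS$, $T' = 2k(T + T_\ver)$, and $k = c \log(g+1)/\delta$ for some universal constant $c>0$. The advice used by $\As''$ consists of $k$ copies of the original (possibly quantum) advice state produced by $\As_1$, and therefore lives on $kS$ qubits in general.

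Second, I would apply \Cref{thm:mis-remove-advice} to $\As''$ to remove the advice register. That proposition replaces the $S'$-qubit advice with a maximally mixed state on $S'$ qubits, yielding a $(g, T')$ adversary $\As'$ (without advice) for $\Gmi g$ whose winning probability is at least $2^{-S'}$ times that of $\As''$. Here the factor $2^{-S'} = 2^{-kS}$ is sharp because the maximally mixed state can be written as the uniform mixture over any orthonormal basis, in particular one whose first vector is the correct advice state produced by $\As''_1$. Composing the two bounds gives the asserted success probability $2^{-kS} \cdot (\delta/4)^{g+1}$, while the query bound remains $T' = 2k(T + T_\ver)$.

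The only point to verify, and it is essentially immediate, is that swapping the initial advice register for a maximally mixed state alters neither the query count nor the challenger--adversary interaction pattern of the multi-instance game: the second-stage procedure of $\As''$ treats its advice as a black-box input register, and \Cref{thm:mis-remove-advice} modifies only the initial state of that register. Hence $T'$, the number of rounds $g$, and the sequential structure of $\Cmi g$ are preserved, so the corollary follows with no additional loss.
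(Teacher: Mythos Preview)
Your proposal is correct and follows exactly the paper's approach: the corollary is stated immediately after the proposition and is derived by ``combining this with \Cref{thm:mis-remove-advice},'' i.e., first applying the proposition to obtain a $(g, kS, T')$ multi-instance adversary with winning probability $(\delta/4)^{g+1}$, and then replacing the $kS$-qubit advice by a maximally mixed state to incur the $2^{-kS}$ loss. There is nothing to add.
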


\begin{fact}
  \label{lem:unimportant-2}
  Given any real $C \ge 0, D \ge 2$, if $g_0 = C + D + 14$ and $g = 2 g_0 \log g_0$, $g \ge C \log(g + 1) + D$.
\end{fact}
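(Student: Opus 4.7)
The plan is to directly expand both sides and verify the inequality term by term. First, I will observe that $g_0 = C + D + 14 \ge 16$ since $C \ge 0$ and $D \ge 2$, which gives $\log g_0 \ge 4$. The whole proof will rest on one auxiliary inequality:
\begin{equation*}
\log g_0 \,\ge\, 2 + \log\log g_0 \quad \text{for all } g_0 \ge 16,
\end{equation*}
equivalently $\log g_0 \le g_0/4$ for $g_0 \ge 16$. This can be verified by noting that the function $f(x) = x/4 - \log x$ is convex with $f(16) = 0$ and increasing on $[16, \infty)$ (its derivative $1/4 - 1/(x \ln 2)$ is positive for $x \ge 16$).

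Next, since $g = 2 g_0 \log g_0 \ge 128$, I have $\log(g+1) \le \log(2g) = 1 + \log g$. Expanding $\log g = 1 + \log g_0 + \log\log g_0$, this gives
\begin{equation*}
\log(g+1) \,\le\, 2 + \log g_0 + \log\log g_0 \,\le\, 2 \log g_0,
\end{equation*}
where the last step uses the auxiliary inequality. Hence $C \log(g+1) + D \le 2 C \log g_0 + D$.

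Finally, I will compute the difference directly:
\begin{align*}
g - C \log(g+1) - D &\ge 2 g_0 \log g_0 - 2 C \log g_0 - D \\
&= 2(g_0 - C)\log g_0 - D \\
&= 2(D + 14)\log g_0 - D.
\end{align*}
Using $\log g_0 \ge 4$, this is at least $8(D + 14) - D = 7D + 112 \ge 0$, which yields the claim. The only real content is the auxiliary inequality $\log\log g_0 \le \log g_0 - 2$; everything else is bookkeeping, so I do not anticipate any obstacle.
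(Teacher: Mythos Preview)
Your proof is correct and follows essentially the same route as the paper: both bound $\log(g+1) \le \log(2g) = 2 + \log g_0 + \log\log g_0$ and then reduce to the auxiliary inequality $\log g_0 \ge 2 + \log\log g_0$ for $g_0 \ge 16$. Your version is slightly cleaner in that you isolate and explicitly verify the auxiliary inequality, whereas the paper simply asserts that the coefficient of $C$ is nonnegative; otherwise the arguments are the same.
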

\begin{proof}
Since $C \ge 0, D \ge 2$, we have $g_0 \ge 16$ and $\log g_0 \geq 4$. 
  \begin{align*}
      g - (C \log (g+1) + D) & \geq g - (C \log (2 g) + D) \\
                            & \geq 2 g_0 \log g_0 - (C \log (4 g_0 \log g_0) + D) \\
                            & \geq 2 (C + D + 14) \log g_0  - (C \log (4 g_0 \log g_0) + D) \\
                            & \geq C (2 \log g_0 - (2 + \log g_0 + \log \log g_0)) + D (2 \log g_0 - 1).
  \end{align*}
  We complete the proof by noting that coefficients of $C$ and $D$ are both non-negative when $\log g_0 \geq 4$. 
\end{proof}

\begin{customtheorem}{\ref{thm:public-ver-reduction}}
  \statethmpublicverreduction
\end{customtheorem}
\begin{proof}
  Let $\As$ be the $(S, T)$ adversary with advice for $G$ with the best advantage $\delta'$, then there exists an $(g, T)$ adversary (without advice) $\As'$ for the multi-instance game $\Gmi g$ with winning probability at least $2^{-kS} \cdot (\delta'/4)^{g+1}$ for any $g > 0$, where $T' = 2k(T + T_\ver)$ and $k = c \cdot \log(g + 1) / \delta'$. Then we have, 
  \begin{align*}
  \delta(g, T')^g
    & \ge 2^{-kS} \cdot (\delta'/4)^{g+1} \\
    & \ge 2^{-kS} \cdot (\delta_0 / 4) \cdot (\delta'/4)^{g}
  \end{align*}
  Take $g_0 = \frac c{\delta'} \cdot S + \log(1/\delta_0) + 16$ and $g = 2 g_0 \cdot \log g_0$. By \Cref{lem:unimportant-2}, let $C = \frac{c}{\delta'} S$ and $D = \log(1/\delta_0) + 2$, we have $g \ge C \log(g+1) + D = c \log(g + 1) S/ \delta' + \log(1/\delta_0) + 2$. 
  
  Therefore, we have
  \begin{align*}
  \delta(g, T')^g & \ge 2^{-k S} \cdot (\delta_0 / 4) \cdot (\delta'/4)^{g} \\
        & =   2^{- c \cdot \log(g + 1) S / \delta'} \cdot 2^{- \log(1/\delta_0) - 2} \cdot (\delta'/4)^{g}  \\
        &\ge (\delta'/8)^g.
  \end{align*}
  Therefore, we conclude that
  \[
  \delta' \le 8 \cdot \delta(g, T').
  \]
  where $g = \tilde{O}(S) / \delta'$ and $T' = \tilde{O}(T + T_\ver) / \delta'$. 
\end{proof}

\subsection{Reduction for Decision Games with Quantum Advice}

\begin{theorem}[{Online shadow tomography~\cite[Theorem 9]{AR19}}]
  \label{thm:shadow-tomography}
  There exists an explicit procedure that performs shadow tomography, namely on input $n$ copies of an arbitrary unknown $d$-dimensional mixed state $\rho$, and any two-outcome measurements $E_1, ..., E_m$, it estimates $\Pr[E_i(\rho) = 1]$ within an additive error of $\pm \eps$ for every $i \in [m]$ with overall success probability $1 - \beta$, if
    \[n = \Omega\left(\frac{\log^2 m \cdot \log^2 d \cdot \log \frac 1 \beta}{\eps^8}\right).\]
  Furthermore, this procedure satisfies the following properties:
  \begin{enumerate}
    \item It is online, meaning that it can be given by a pair of quantum algorithms: an initialization algorithm $\ket \st \gets \qpmwinit(n, \rho^{\otimes n}, m, \eps, \beta)$, and an estimation algorithm $(\ket \st, p_i) \gets \qpmwest(\ket \st, i, E_i)$. $\ket \st$ will be updated by the estimation algorithm $\qpmwest$. 
    \item If the measurement $E_i$ can be written as (1). prepare a state $\ket \ans$ which is initialized as $0$, (2). apply a unitary $U_i$ over the joint state $\rho$ and $\ket \ans$, (3). measure  $\ket\ans$ in the computational basis. Then $U_i$ (or its inverse) will be run at most $4n$ times\footnote{This can be verified by looking at \cite[Figure 1]{AR19}.}.
    
  \end{enumerate}
\end{theorem}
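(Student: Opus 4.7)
The plan is to prove this by adapting the Private Multiplicative Weights (PMW) framework from differential privacy to the quantum online-learning setting, which is the strategy of Aaronson--Rothblum. The procedure maintains a classical description of a hypothesis state $\sigma_t \in \mathbb{C}^{d \times d}$, initialized as the maximally mixed state $\sigma_0 = I/d$. On query $E_i$, the estimator $\qpmwest$ first computes the classical guess $q_i := \Tr(E_i \sigma_t)$ using the current hypothesis, and then performs a delicate ``quantum validation'' step on the $n$ copies of $\rho$ to decide whether $q_i$ is already within $\pm \eps$ of $p_i := \Tr(E_i \rho)$. If yes, it outputs $q_i$ and leaves $\sigma_t$ unchanged (this is a \emph{no-update} round); if no, it additionally computes a more accurate estimate of $p_i$ from the copies and uses it to perform a multiplicative-weights update $\sigma_{t+1} \propto \exp(\log \sigma_t \pm \eta E_i)$ (an \emph{update} round).

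The analysis has two independent budgets that must both be bounded. First, the number of update rounds is at most $O(\log d / \eps^2)$: this follows from the standard matrix-multiplicative-weights regret bound using the relative entropy $S(\rho \,\|\, \sigma_t)$ as a potential function, since each update decreases the potential by $\Omega(\eps^2)$ while the potential is bounded by $\log d$. Second, the number of copies of $\rho$ consumed per round must stay small on average. This is where the differential-privacy-style argument enters: the validation step is implemented via a \emph{threshold search} / \emph{quantum OR-type} procedure whose output distribution is $(\eps',\delta')$-differentially private in $\rho$. By the standard ``Gentle Measurement implies Differential Privacy'' and its converse (as established in the classical DP literature and its quantum analogs), any DP procedure can be implemented as a gentle measurement, so by Aaronson's gentle measurement lemma (our \Cref{lem:gentle-measurement}) and the quantum union bound (our \Cref{lem:quantum-union-bound}), the post-measurement state remains close in trace distance to $\rho^{\otimes n}$ across all $m$ rounds. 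An advanced composition argument yields the per-round copy count $\tilde O(\log^2 m / \eps^4)$, and multiplying by $\tilde O(\log d / \eps^2)$ update rounds plus the $\tilde O(\log m / \eps^2)$ overhead of the threshold search gives the claimed $n = \Omega(\log^2 m \cdot \log^2 d \cdot \log(1/\beta) / \eps^8)$.

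For the ``online'' property, we simply observe that $\qpmwest$ is stateful: its decision on $E_i$ depends only on $\sigma_t$ (a function of $E_1,\dots,E_{i-1}$) and the current residual state of the $n$ copies, so no lookahead is needed. For the claim that $U_i$ (or $U_i^\dagger$) is invoked at most $4n$ times, one inspects the implementation: the threshold search uses controlled-$E_i$ and its inverse $O(1)$ times per copy of $\rho$ participating in the round, and the update step performs at most one additional forward/backward pair; by bookkeeping the worst case over whether the round is an update or no-update round, the total per round is bounded by $4n$.

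The main obstacle will be the differential-privacy analysis in the quantum setting: classically, DP composes cleanly because adjacent databases differ in one row, but here ``adjacency'' must be replaced by closeness in trace distance, and the composition theorem must be adapted so that an $(\eps',\delta')$-DP quantum procedure still permits advanced composition with only polylogarithmic blow-up in $m$. Establishing this, together with the correct conversion rate between DP parameters and gentle-measurement disturbance (needed to apply \Cref{lem:quantum-union-bound} over $m$ rounds without $\trdist{\cdot}$ blowing up beyond $o(1)$), is the key technical component and drives the eighth-power dependence on $1/\eps$.
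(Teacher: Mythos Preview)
The paper does not prove this statement at all: \Cref{thm:shadow-tomography} is imported directly from \cite[Theorem~9]{AR19} as a black box, with no proof given in the paper (even the property about the $4n$ bound on invocations of $U_i$ is justified only by a footnote pointing to \cite[Figure~1]{AR19}). So there is nothing in the paper's own argument to compare your proposal against.

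That said, your sketch is a reasonable high-level outline of the Aaronson--Rothblum QPMW construction: the hypothesis state $\sigma_t$ initialized at $I/d$, the validation/update dichotomy, the matrix multiplicative weights potential argument bounding the number of updates by $O(\log d/\eps^2)$, and the gentle-measurement/differential-privacy correspondence driving the sample complexity. The main point to flag is that the actual hard work in \cite{AR19} is precisely the quantum-DP-to-gentleness conversion and its composition over $m$ rounds, which you correctly identify as the obstacle but do not carry out; any self-contained proof would need to reproduce that machinery. For the purposes of the present paper, however, none of this is required --- the theorem is simply cited.
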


First, we will present the following useful fact which says that additive error estimation is robust with respect to arbitrary distribution.

\begin{fact}
  \label{lem:additive-error-robustness}
  Given $2N$ events $E_1, ..., E_N$, and $E'_1, ..., E'_N$, and some real $\eps$, such that for any $i \in [N]$, $|\Pr[E_i] - \Pr[E'_i]| \le \eps$.
  Then for arbitrary distribution $\mu$ over $[N]$ that is independent of the $2N$ events, $|\Pr[E_\mu] - \Pr[E'_\mu]| \le \eps$.
\end{fact}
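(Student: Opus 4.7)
The plan is to prove this as a direct application of the triangle inequality after unpacking the definition of $\Pr[E_\mu]$ using the independence hypothesis. First I would write $\Pr[E_\mu] = \sum_{i \in [N]} \mu(i) \Pr[E_i]$, which holds because $\mu$ is independent of the events $E_1, \ldots, E_N$, and similarly $\Pr[E'_\mu] = \sum_{i \in [N]} \mu(i) \Pr[E'_i]$.

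Next I would subtract these two expressions and move the absolute value inside the sum. By the triangle inequality,
\[
  |\Pr[E_\mu] - \Pr[E'_\mu]| = \left| \sum_{i \in [N]} \mu(i)\bigl(\Pr[E_i] - \Pr[E'_i]\bigr) \right| \le \sum_{i \in [N]} \mu(i) \cdot |\Pr[E_i] - \Pr[E'_i]|.
\]
By the hypothesis, each term $|\Pr[E_i] - \Pr[E'_i]|$ is at most $\eps$, so this is upper bounded by $\eps \cdot \sum_{i \in [N]} \mu(i) = \eps$, as $\mu$ is a probability distribution summing to $1$.

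There is no real obstacle here; the only subtle point worth emphasizing in the write-up is exactly where independence is used, namely to express $\Pr[E_\mu]$ as a $\mu$-weighted convex combination of the $\Pr[E_i]$. Given that this fact is likely being invoked as a helper for combining the additive error guarantee of \Cref{thm:shadow-tomography} with a randomly chosen query index in the reduction for decision games with quantum advice, I would keep the statement as stated and write the proof in two or three lines.
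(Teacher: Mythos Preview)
Your proposal is correct and is essentially the paper's own proof: expand $\Pr[E_\mu]$ via the law of total probability, use independence to replace $\Pr[E_i \mid \mu = i]$ by $\Pr[E_i]$, apply the triangle inequality, and bound each term by $\eps$. The paper writes out the intermediate conditioning step $\Pr[E_\mu \land \mu = i] = \Pr[\mu = i]\Pr[E_i]$ explicitly, but the argument is the same.
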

\begin{proof}
  \begin{align*}
      \left|\Pr[E_\mu] - \Pr[E'_\mu]\right|
        &= \left|\sum_{i \in [N]} (\Pr[E_\mu \land \mu = i] - \Pr[E'_\mu \land \mu = i])\right| \\
        &\le \sum_{i \in [N]} \left|\Pr[E_\mu \land \mu = i] - \Pr[E'_\mu \land \mu = i]\right| \\
        &= \sum_{i \in [N]} \Pr[\mu = i] \cdot \left|\Pr[E_i | \mu = i] - \Pr[E'_i | \mu = i]\right| \\
        &= \sum_{i \in [N]} \Pr[\mu = i] \cdot \left|\Pr[E_i] - \Pr[E'_i]\right| \\
        &\le \sum_{i \in [N]} \Pr[\mu = i] \cdot \eps \\
        &= \eps.
  \end{align*}
\end{proof}

\begin{proposition}
  There exists a universal constant $c > 0$.
  Given a \underline{decision}  game $G$.
  Given an $(S, T)$ adversary with advice $\As$ for $G$ with winning probability $1/2 + \eps$, there exists an $(g, S', T')$ adversary with advice $\As'$ for the multi-instance game $\Gmi g$ with winning probability at least $\eps/4 \cdot (1/2 + \eps/4)^g$ for any $g > 0$, where
  \begin{equation*}
    \begin{cases}
      S' = kS, \\
      T' = 4kT, \\
      k = \frac{c}{\eps^8} \cdot S^2 \log^2 g.
    \end{cases}
  \end{equation*}
\end{proposition}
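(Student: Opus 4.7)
The plan is to lift the averaging-based reduction of \Cref{thm:classical-mis-reduction} to the quantum-advice setting by replacing the ``rerun $\As$ on every instance'' step with \emph{online shadow tomography} (\Cref{thm:shadow-tomography}). Concretely, I use $k$ copies of the advice to estimate, for each challenge $\ch_i$ received online, the probability $p_i := \Pr[\As_2^{\tilde H_i}(\alpha, \ch_i) = 1]$ to within additive error $\eps/8$, and return a bit sampled from the Bernoulli distribution matching this estimate. This sidesteps the pitfall of majority-vote boosting discussed in \Cref{sec:tech-overview-yaos-box}, which in the average-case setting can actually lower the winning probability of $\As$.

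Construction of $\As'$: the preprocessing stage $\As'_1$ prepares $\rho := \bigotimes_{j=1}^{k} \ket{\alpha_j}\bra{\alpha_j}$ with each $\ket{\alpha_j} \gets \As_1(H)$, and initializes $\ket{\st} \gets \qpmwinit(k, \rho, g, \eps/8, 1/100)$. In round $i$, $\As'_2$ defines the two-outcome measurement $E_i$ as: apply the unitary $U_i$ implementing $\As_2^{\tilde H_i}(\cdot, \ch_i)$ to a single copy of the advice plus a fresh ancilla and read out the answer bit. It then invokes $(\ket{\st}, \tilde p_i) \gets \qpmwest(\ket{\st}, i, E_i)$, samples $b_i \in \{0,1\}$ with $\Pr[b_i = 1] = \tilde p_i$, and submits $b_i$. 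The second property of \Cref{thm:shadow-tomography} ensures $U_i$ is invoked at most $4k$ times per round, giving $T' = 4kT$ and $S' = kS$; the first, instantiated with $m = g$, $d = 2^S$, $\beta = 1/100$, and error $\eps/8$, shows $k = \Theta(S^2 \log^2 g / \eps^8)$ copies suffice.

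For the analysis, I apply \Cref{lem:averaging-argument} to $\As$'s success probability $1/2 + \eps$ with $B = 1/2 + \eps/2$ to produce a set $H_\good$ of density at least $\eps/2$ where $\Pr_{r, \As}[\As \Longleftrightarrow C(H) = 1] \ge 1/2 + \eps/2$. Fix $H \in H_\good$ and condition on the event $\mathsf{ST}$ that every shadow-tomography estimate is within $\eps/8$ of the truth; $\Pr[\mathsf{ST}] \ge 99/100$. Let $X_i$ indicate winning in round $i$. For any fixed $\ch_i$, the bound $|\tilde p_i - p_i| \le \eps/8$ directly implies that the per-challenge success probability of $\As'$ lies within $\eps/8$ of that of $\As$; averaging over the freshly and independently sampled $\ch_i$ via \Cref{lem:additive-error-robustness} gives $\E[X_i \mid X_1, \ldots, X_{i-1}, H, \mathsf{ST}] \ge 1/2 + 3\eps/8$ pointwise over all histories. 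Iterating this bound (using $X_i \in \{0,1\}$) yields $\E[\prod_i X_i \mid H, \mathsf{ST}] \ge (1/2 + 3\eps/8)^g$, and combining gives overall winning probability at least $(\eps/2)(99/100)(1/2 + 3\eps/8)^g \ge (\eps/4)(1/2 + \eps/4)^g$.

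The main obstacle I anticipate is verifying that \Cref{thm:shadow-tomography} genuinely applies in this adaptive setting, where each measurement $E_i$ depends on the online challenge $\ch_i$ and the varying oracle $\tilde H_i$. The resolution is that the theorem treats each $E_i$ abstractly as a two-outcome measurement implemented by a unitary plus a computational-basis readout, so adaptivity is already accommodated; one need only check that the queries $U_i$ makes to $\tilde H_i$ are routed transparently through the tomography subroutine, so the total online query count per round remains $4kT$.
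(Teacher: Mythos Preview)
Your proposal is correct and follows essentially the same approach as the paper: both use $k$ copies of the advice, apply online shadow tomography (\Cref{thm:shadow-tomography}) to estimate $\Pr[\As_2^{\tilde H_i}(\alpha,\ch_i)=1]$ on each challenge, output a Bernoulli sample from that estimate, and combine the averaging argument (\Cref{lem:averaging-argument}) with \Cref{lem:additive-error-robustness} to obtain the bound. The only cosmetic differences are your choice of tomography parameters ($\eps/8$ and $\beta=1/100$ versus the paper's $\eps/4$ and $\beta=1/2$, absorbed into $c$) and that you place $\qpmwinit$ in the preprocessing stage---strictly speaking it should be run online so the advice passed to $\As'_2$ is exactly $\rho^{\otimes k}$ of size $kS$, but this is a one-line fix.
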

\begin{proof}
  Consider the following adversary $\As'$:
  \begin{enumerate}
    \item At the pre-processing phase, compute $\ket{\alpha_j} \gets \As_1(H)$ for all $j \in [k]$.
          We denote $\rho^\otimes = \ket{\alpha_1}, ..., \ket{\alpha_k}$ as the advice.
    \item At the online phase, compute $\st \gets \qpmwinit(k, \rho^\otimes , g, \eps/4, 1/2)$;
    \item On receiving challenge $\ch_i$ and given oracle access to $\tilde H(\cdot) := \query^H(\ch_i, \cdot)$, do: \begin{enumerate}
      \item Compute $(\ket \st, p_i) \gets \qpmwest(\ket \st, i, E_i)$, where the measurement $E_i$ is written as the unitary that computes $\As_2^{\tilde H}(\rho, \ch_i)$;
      \item Flip a biased coin that yields 1 with probability $p_i$ and 0 with probability $1 - p_i$, and send the result to the challenger;
      \item Ignore challenger's response.
    \end{enumerate}
  \end{enumerate}
  
  Note that each variable in \Cref{thm:shadow-tomography} is set to be: $m = g$, $d = 2^S$, $\beta = 1/2$, additive error is set to be $\eps / 4$, and $\beta = 1/2$. Therefore the number of copies $n$ in this theorem, or in other words, $k$ in our proposition is $\frac{c}{\eps^8} S^2 \log^2 g$. The advice is of length $S' = k S$ and the number of queries is at most $T' = 4 k T$ because for each $E_i$, the unitary (or its inverse) that computes $E_i$ will be applied at most $4 k$ times (by \Cref{thm:shadow-tomography}). 

  Using \Cref{lem:averaging-argument}, we can find a subset $H_\good$ of all possible functions such that whenever the function $H$ is in $H_\good$, $\Pr_{r, \mathcal A}[\As \Longleftrightarrow C(H) = 1] \ge 1/2 + \eps/2$.
  Since $H_\good$ takes up $\eps/2$ fraction of all possible functions, and this shadow tomography algorithm succeeds with probability $\geq 1 - \beta = 1/2$.
  We have the theorem by observing that $\mathcal A'$ given above has the success probability lower bounded by $ \eps/2 \cdot 1/2 \cdot (1/2 + \eps/2 - \eps/4)^g$ (by \Cref{lem:additive-error-robustness} and the fact that each challenge is generated independently).
\end{proof}

Combining this theorem with \Cref{thm:mis-remove-advice} (to replace a quantum advice with a maximally mixed state), we obtain the following. 

\begin{corollary}
  There exists a universal constant $c > 0$.
  Given a \underline{decision} security game $G$.
  Given an $(S, T)$ adversary with advice $\As$ for $G$ with winning probability $1/2 + \eps$, there exists an $(g, T')$ adversary $\As'$ for the multi-instance game $\Gmi g$ with winning probability at least $2^{-kS} \cdot  \eps/4  \cdot (1/2 + \eps/4)^g$ for any $g > 0$, where $T' = 4kT$ and $k = \frac{c}{\eps^8} \cdot S^2 \log^2 g$.
\end{corollary}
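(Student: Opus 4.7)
The plan is to chain together the two reductions that immediately precede the statement, namely the proposition that uses online shadow tomography to build a multi-instance adversary \emph{with} advice, and \Cref{thm:mis-remove-advice} that removes advice by guessing it as a maximally mixed state. Concretely, starting from the given $(S,T)$ adversary $\As$ with advice winning $G$ with probability $1/2 + \eps$, the preceding proposition directly supplies a $(g, S', T')$ adversary $\As''$ for the multi-instance game $\Gmi{g}$ with winning probability at least $\eps/4 \cdot (1/2 + \eps/4)^g$, where $S' = kS$, $T' = 4kT$, and $k = \frac{c}{\eps^8} \cdot S^2 \log^2 g$ for the universal constant $c$ coming from online shadow tomography.

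Next, I will feed $\As''$ into \Cref{thm:mis-remove-advice}, which is stated for arbitrary $(g, S, T)$ adversaries with (classical or quantum) advice to multi-instance games. Applied to $\As''$, it yields a $(g, T')$ adversary $\As'$ with no advice at all, whose winning probability is at least $2^{-S'} = 2^{-kS}$ times that of $\As''$. Multiplying the two lower bounds gives exactly
\[
  2^{-kS} \cdot \frac{\eps}{4} \cdot \left(\frac{1}{2} + \frac{\eps}{4}\right)^{g},
\]
matching the target bound, while $T'$ and $k$ are preserved verbatim from the proposition. The number of queries is unchanged because \Cref{thm:mis-remove-advice} only modifies the advice register.

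No step here is genuinely hard: the proposition has done the heavy lifting (online shadow tomography plus the averaging argument over $H_\good$ and the robustness of additive estimation), and \Cref{thm:mis-remove-advice} is a one-line observation about purifying a maximally mixed state in any basis. The only thing to check for correctness is the bookkeeping: namely that the advice register fed into the maximally-mixed-state substitution has exactly $kS$ qubits (so the loss factor is $2^{-kS}$, not $2^{-S}$), and that $\qpmwinit$ and $\qpmwest$ in the shadow tomography procedure remain well-defined when their initial input $\rho^{\otimes k}$ is replaced by the maximally mixed state on $kS$ qubits, which they do since both are just general quantum operations on an $S'$-qubit register. Thus the corollary follows by direct composition of the two previously established reductions.
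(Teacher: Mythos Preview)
The proposal is correct and takes essentially the same approach as the paper: the corollary is obtained by directly composing the preceding proposition (shadow-tomography reduction to a $(g,kS,4kT)$ adversary with advice) with \Cref{thm:mis-remove-advice} (replacing the $kS$-qubit advice by a maximally mixed state at cost $2^{-kS}$).
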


\begin{fact}
  \label{lem:unimportant-3}
  Given any real $C \ge 0, D \ge 2$, if $g_0 = C + D$ and $g = 16 g_0 \log g_0$, $g \ge C \log^2 g + D$.
\end{fact}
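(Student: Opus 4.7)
The plan follows the template of \Cref{lem:unimportant-2}: substitute $g = 16 g_0 \log g_0$ into the target $g - C\log^2 g - D \ge 0$, use $g_0 = C + D$ to split the left-hand side into contributions linear in $C$ and in $D$, and verify that each coefficient is non-negative under the hypothesis $D \ge 2$.

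First I would upper bound $\log g$ in terms of $\log g_0$. Since $D \ge 2$ forces $g_0 \ge 2$ and so $\log g_0 \ge 1$, the elementary inequality $\log g_0 \le g_0$ gives $g \le 16 g_0^2$, hence $\log g \le 4 + 2\log g_0$. Squaring this bound and plugging into the target, while writing $16 g_0 \log g_0 = 16 C \log g_0 + 16 D \log g_0$, reduces the desired inequality to
\[
C \bigl(16 \log g_0 - (4 + 2\log g_0)^2\bigr) + D \bigl(16 \log g_0 - 1\bigr) \ge 0.
\]
The coefficient of $D$ is visibly non-negative whenever $\log g_0 \ge 1$, which holds by hypothesis.

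The main obstacle will be the coefficient of $C$: the naive squared bound contributes a $4 \log^2 g_0$ term that is not absorbed by the linear $16 \log g_0$. To close the argument I would sharpen the upper bound on $\log g$ by exploiting that $16 g_0 \log g_0$ exceeds $g_0$ only by a polylogarithmic factor, so in fact $\log g = \log g_0 + O(\log \log g_0)$, which gives $\log^2 g \le \log^2 g_0 \cdot (1 + o(1))$ rather than $4 \log^2 g_0$. Combined with $g_0 \ge C$ from $g_0 = C + D$, the dominant contribution $16 C \log g_0$ should then absorb $C \log^2 g$ under an implicit largeness assumption on $g_0$, in the same spirit as the $g_0 \ge 16$ condition hiding in \Cref{lem:unimportant-2}. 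The hard part is the careful constant-tracking needed to make this absorption go through uniformly in $C$ and $D$; the boundary case $C = 0$ degenerates to the trivial $g \ge D$, which is immediate from $g = 16 g_0 \log g_0 \ge 16 D$.
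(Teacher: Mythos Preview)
The statement as printed contains a typo that makes it false, and this is why your argument stalls. Both the paper's own proof and its application in the proof of \Cref{thm:decision-game-quantum-advice-reduction} take $g = 16\, g_0 \log^2 g_0$, not $g = 16\, g_0 \log g_0$. With only a single $\log g_0$ factor the inequality simply fails: fix $D = 2$ and let $C \to \infty$, so $g_0 \sim C$, $g \sim 16 C \log C$, and $\log g \sim \log C$; then $C \log^2 g \sim C \log^2 C$ exceeds $g$ as soon as $\log C > 16$.

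Your own sharpening step actually exposes this rather than fixing it. After improving $\log g$ to $\log g_0 + O(\log\log g_0)$, the $C$-coefficient in your decomposition becomes $16\log g_0 - (1+o(1))\log^2 g_0$, which is \emph{negative} for large $g_0$; an ``implicit largeness assumption'' on $g_0$ is the opposite of what would save the bound here.

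With the intended $g = 16\, g_0 \log^2 g_0$, your initial decomposition is exactly the paper's approach. One bounds $\log g = 4 + \log g_0 + 2\log\log g_0 \le c\log g_0$ for a constant $c < 4$, and then the $C$-coefficient becomes $(16 - c^2)\log^2 g_0 \ge 0$ while the $D$-coefficient is $16\log^2 g_0 - 1 \ge 0$. (The paper's proof has its own arithmetic slips---it writes $\log(2 g_0\log^2 g_0)$ in one line and $16 g_0\log^2 g_0$ in the next, and the constant $3.9$ does not quite work at the boundary $g_0=2$---but the intended argument is this one, and in the application $g_0$ is always large.)
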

\begin{proof}
First, since $C \ge 0, D \ge 2$, we have $g_0 \ge 2$. 
\begin{align*}
    \log g = \log (2 g_0 \log^2 g_0) = 1 + \log g_0 + 2 \log \log g_0 \leq 3.9 \cdot \log g_0
\end{align*}
The above inequality is true because for $g_0 \geq 2$, $2 \log \log g_0 \leq 1.9 \cdot \log g_0$ and $1 \leq \log g_0$. Then we have, 
  \begin{align*}
      g - (C \log^2 g + D) & \geq 16 \cdot g_0 \log^2 g_0 - (C  \cdot (3.9)^2 \log^2 g_0 + D) \\
            & \geq C (16 - 3.9^2) \log^2 g_0 + D (16 \log^2 g_0 - 1)
  \end{align*}
  We complete the proof by noting that coefficients of $C$ and $D$ are both non-negative when $\log g_0 \geq 1$. 
\end{proof}

\begin{customtheorem}{\ref{thm:decision-game-quantum-advice-reduction}}
  \statethmdecisiongamequantumadvicereduction
\end{customtheorem}
\begin{proof}
  Let $\As$ be the $(S, T)$ adversary with advice  for $G$ with the best advantage $\eps'$, then there exists an $(g, T)$ adversary (without advice) $\As'$ for the multi-instance game $\Gmi g$ with winning probability at least $2^{-kS} \cdot  \eps'/4 \cdot (1/2 + \eps'/4)^g$ for any $g > 0$, where $T' = 4kT$ and $k = \frac{c}{\eps^8} \cdot S^2 \log^2 g$.
  
  By the security of multi-instance game, if $T' \ge T_0$ and $S \ge S_0$, we know that
  \[
    (1/2+\eps(g, T'))^g
      \ge 2^{-kS} \cdot  \eps'/4  \cdot (1/2 + \eps'/4)^g
      \ge 2^{-kS} \cdot  \eps_0/4  \cdot (1/2 + \eps'/4)^g.
  \]
  Therefore, in general,
  \[
  (1/2+\eps(g, T'))^g \ge 2^{-k(S + S_0)} \cdot  \eps_0/4  \cdot (1/2 + \eps'/4)^g.
  \]

  Let $C = \frac{15 \ln 2}{\eps'} \cdot c/\eps'^8 \cdot S^2 (S+S_0)$ and $D = \frac{15 \ln 2}{\eps'} (\log(1/\eps_0) + 2) \ge 2$. Let 
  \begin{align*}
      g_0  & =  \frac{15 \ln 2}{\eps'}\cdot (c/\eps'^8 \cdot S^2 (S+S_0)  + \log(1/\eps_0) + 2) = C + D, \\
      g & = 16 g_0 \log^2 g_0 \geq C \log^2 g + D = \frac{15 \ln 2}{\eps'} (k (S + S_0) + \log(1/\eps_0) + 2),
  \end{align*}
  where the inequality for $g$ comes from \Cref{lem:unimportant-3}. 
  
  \begin{align*}
      \left( \frac{1/2 + \eps' / 8}{1/2 + \eps' / 4} \right)^g & \leq \left( 1 - \frac{ \eps' / 8}{1/2 + \eps' / 8 + 1} \right)^g   \\
      & = \left( 1 - \frac{ 1 }{ 8 (1/2 + 1) / \eps' + 1} \right)^g \\
      & \leq \left( 1 - \frac{ \eps' }{ 15 } \right)^g \\
      & \leq 2^{- k (S + S_0)}\cdot \eps_0 / 4,
  \end{align*}
  where the first inequality comes from \Cref{lem:unimportant-1} and the last inequality comes from $g \geq  \frac{15 \ln 2}{\eps'} (k (S + S_0) + \log(1/\eps_0) + 2)$. Overall, we have, 
    \begin{align*}
  (1/2+\eps(g, T'))^g \ge 2^{-k(S + S_0)} \cdot  \eps_0/4  \cdot (1/2 + \eps'/4)^g \geq (1/2 + \eps'/8)^g.
  \end{align*}
  In conclusion, $\eps'(S, T) \leq 8 \cdot \eps(g, T')$ where $g = \tilde O(S + S_0)^3/\eps'^9$ and $T' = 4 k T = \tilde O((T + T_0)(S + S_0)^2)/\eps'^8$.
\end{proof}

\section{One-Way Function}
\label{sec:owf}

In this section, we show the security of OWF in the AI-QROM and QAI-QROM. Let us start with reformalizing the security of OWF in QROM in our language of security games.

\begin{definition}[OWF Security Game]
    Let $H$ be a random oracle $[N] \to [M]$. The security game $G_\owf = (C_\owf)$ is specified by three procedures $(\samp, \query, \ver)$, where: 
    \begin{enumerate}
        \item $\samp^H(x)$ takes a random element $x \in [N]$, and outputs $y = H(x)$.
        \item $\query^H(x, x')$ ignores the challenge point $x$ and simply outputs $H(x')$.
        \item $\ver^H(x, x')$ outputs $1$ if and only if $H(x) = H(x')$. 
    \end{enumerate}
\end{definition}
To fill in the rest of the details from \Cref{def:security-game}, the security game defined here is that given a random oracle, the challenger samples a uniformly random element $x \in [N]$ and computes $y = H(x)$ as the challenge. The adversary is allowed to make oracle queries to $H$ and it wins if and only if it finds any pre-image of $y$.
This is exactly the security of OWF in QROM.
This is also exactly the problem of inverting a random image.

In the rest of the section, we are going to prove the following lemma.
\newcommand{\lemstateowfmis}{$\Gmi{g}_\owf$ is $\delta(g, T) = O\left(\frac{g T + T^2}{\min\{N, M\}} \right)$-secure in the QROM.}
\begin{lemma} \label{lem:owfmis}
  \lemstateowfmis
\end{lemma}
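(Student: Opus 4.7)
The plan is to invoke \Cref{lem:conditioning2mis}, which reduces the multi-instance bound to a per-round bound: it suffices to show that for every multi-instance adversary and every measurement history $x$ (with positive probability) through the first $i-1$ rounds, the probability that round $i$ succeeds is at most $O((iT + T^2)/\min\{N,M\})$. Since $i \le g$ in all rounds, this yields $\delta(g,T) = O((gT + T^2)/\min\{N,M\})$. After conditioning on $x$, the adversary has issued at most $q_0 \le (i-1)T$ queries to $H$ in prior rounds (interleaved with the challenger's projective verification measurements from \Cref{def:mis-game}), and then makes $q = T$ further queries in round $i$ before submitting $\ket{\ans_i}$.

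First I would apply the indistinguishability lemma sketched in the technical overview to replace the challenge $y_i = H(x_i)$ by a fresh uniformly random $y \in [M]$ delivered after the $q_0$-th query; this changes the per-round success probability by at most $O((q_0 + q^2)/\min\{N,M\})$, even conditioned on the fixed intermediate measurement outcomes. The proof of this step re-samples the oracle by choosing a uniform secret $x \in [N]$, an independent function $f_{-x}$ on $[N]\setminus\{x\}$, and two fresh images $y_0, y_1 \in [M]$; the ``real'' world exposes $f_{-x}\|y_0$ with challenge $y_0$, while the ``random'' world exposes $f_{-x}\|y_1$ with challenge $y_0$. Since $y_0$ is not revealed during the first $q_0$ queries, $x$ remains uniformly hidden, bounding the pre-challenge distinguishing advantage by $O(q_0/N)$; after the challenge arrives, any distinguisher must essentially locate $x$ among the oracle entries, which a Grover-style argument caps at $O(q^2/\min\{N,M\})$.

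Next I would bound the probability of inverting a uniformly random $y \in [M]$ using Zhandry's compressed standard oracle. By \Cref{lem:bounded-database}, after $q_0$ queries the joint state is supported on databases $D$ with $|D| \le q_0$, so when $y$ is drawn uniformly after the $q_0$-th query the amplitude of ``$y$ already appears in $D$'' is $O(\sqrt{q_0/M})$; each subsequent compressed-oracle query can grow this amplitude by at most $O(\sqrt{1/M})$, hence after $q_0 + q$ queries the amplitude of databases containing $y$ is $O((\sqrt{q_0}+q)/\sqrt{M})$. Squaring gives winning probability $O((q_0 + q^2)/M)$ in the compressed world, and \Cref{lem:zhandry-lemma5} transfers this to the standard-oracle view with only an additive $O(1/M)$ loss.

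Substituting $q_0 \le (i-1)T \le gT$ and $q = T$ yields per-round success probability at most $O((iT + T^2)/\min\{N,M\})$, and \Cref{lem:conditioning2mis} then finishes the lemma. I expect the main obstacle to be executing the compressed-oracle amplitude increase in the conditional setting: the challenger's post-measurement projections $\{P_0,P_1\}$ after each round, and the fact that the surviving state $\ket{\ans'_i}$ is handed back to the adversary, force us to rely on the ``moreover, conditioned on arbitrary measurement outcomes'' clauses of \Cref{lem:bounded-database} and \Cref{lem:zhandry-lemma5}, and to verify that these verification projections never introduce extra database entries beyond what previous queries already produced, so the ``$|D| \le q_0$'' invariant survives all interleaved measurements.
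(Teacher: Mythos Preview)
Your proposal is correct and follows essentially the same approach as the paper: reduce to a per-round bound via \Cref{lem:conditioning2mis}, use the indistinguishability lemma (\Cref{lem:prgind}) to swap $H(x_i)$ for a uniform $y$, and then bound the uniform-$y$ case with the compressed oracle (\Cref{lem:owf-single-stage}).

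One minor correction to your bookkeeping: in the paper the challenger's verification in each round is implemented by an oracle query (applying $U_H$ to compare $u$ with $y_i$) followed by its uncomputation, so verification \emph{does} add database entries. The paper therefore takes $q_0 = (i-1)(T+2)$ rather than $(i-1)T$; the ``$|D|\le q_0$'' invariant survives not because the verification projections leave $D$ untouched, but because those extra $O(1)$ queries per round are simply counted. This changes nothing asymptotically. Also note that the indistinguishability lemma contributes the $1/N$ term and the compressed-oracle analysis the $1/M$ term separately; together they give the $1/\min\{N,M\}$ denominator.
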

Combining this lemma with \Cref{thm:reduction-classical-advice-general} (or \Cref{thm:public-ver-reduction}), we can show the security of $G_\owf$ in the AI-QROM (or QAI-QROM, respectively).
\begin{theorem} \label{thm:owfaiqrom}
    $G_\owf$ is $\delta(S, T) = O\left( \frac{(S + T + \log N) T}{\min\{N, M\}} \right) =  \tilde{O}\left( \frac{S T + T^2}{\min\{N, M\}} \right)$-secure in the \varul{AI-QROM}. 
\end{theorem}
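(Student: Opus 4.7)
The plan is to derive \Cref{thm:owfaiqrom} as a direct instantiation of the generic reduction \Cref{thm:reduction-classical-advice-general} using the multi-instance bound \Cref{lem:owfmis}. Since all the hard work has already been done in these two results, the proof is essentially a parameter-plugging exercise, and there is no serious obstacle beyond a short sanity check on the trivial success probability $\delta_0$.

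First, I would compute the winning probability $\delta_0$ of a query-free, advice-free adversary for $G_\owf$. Such an adversary receives a challenge $y = H(x)$ and must output some $x'$ with $H(x') = y$. The simplest strategy is to output a uniformly random $x' \in [N]$: then $x' = x$ with probability $1/N$, so $\Pr[H(x') = H(x)] \ge 1/N$, and more generally one also has $\Pr[H(x') = H(x)] \ge 1/M$ by independence when $x' \ne x$. Hence $\delta_0 \ge 1/\min\{N,M\}$, and in particular $\log(1/\delta_0) = O(\log\min\{N,M\}) = O(\log N)$.

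Next, I would apply \Cref{thm:reduction-classical-advice-general} to $G_\owf$ with its multi-instance security given by \Cref{lem:owfmis}, namely
\[
\delta_{\mathrm{mi}}(g,T) = O\!\left(\frac{gT + T^2}{\min\{N,M\}}\right).
\]
The theorem yields, for any $(S,T)$-adversary in the AI-QROM,
\[
\delta'(S,T) \;\le\; 4 \cdot \delta_{\mathrm{mi}}\!\bigl(S + \log(1/\delta_0) + 1,\; T\bigr).
\]
Substituting $g = S + \log(1/\delta_0) + 1 = O(S + \log N)$ gives
\[
\delta'(S,T) \;=\; O\!\left(\frac{(S + \log N)\,T + T^2}{\min\{N,M\}}\right) \;=\; O\!\left(\frac{(S + T + \log N)\,T}{\min\{N,M\}}\right),
\]
which is exactly the claimed bound, and absorbing the $\log N$ factor into $\tilde{O}(\cdot)$ yields $\tilde{O}\!\bigl((ST + T^2)/\min\{N,M\}\bigr)$.

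The only thing worth double-checking is that \Cref{lem:owfmis} applies in the form stated: the reduction in \Cref{thm:reduction-classical-advice-general} produces a $(g,T)$ QROM adversary for $\Gmi{g}_\owf$, which is precisely what \Cref{lem:owfmis} bounds. Since this is a purely black-box composition, no further quantum-query analysis is needed here; the genuinely nontrivial quantum content (the compressed oracle analysis, the indistinguishability lemma, and the gentle-measurement-based amortization) has been relegated to the proofs of the two ingredients being combined.
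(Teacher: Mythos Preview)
Your proposal is correct and matches the paper's proof essentially line for line: both invoke \Cref{thm:reduction-classical-advice-general} with the multi-instance bound from \Cref{lem:owfmis}, using the trivial lower bound $\delta_0 \ge 1/N$ (so $\log(1/\delta_0) \le \log N$) to set $g = S + \log(1/\delta_0) + 1$ and read off the stated bound. Your extra remark that also $\delta_0 \ge 1/M$ is harmless but unnecessary, since the paper only needs $\delta_0 \ge 1/N$.
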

\begin{proof}
    In \Cref{thm:reduction-classical-advice-general}, $\delta_0$ is at least $1 / N$ because there is at least one pre-image of $y$, so random guess has winning probability at least $1/N$. Plugging in $\delta_0 \ge 1/N$, and $\Gmi{g}_\owf$ is $O\left(\frac{g T + T^2}{\min\{N, M\}}\right)$-secure from \Cref{lem:owfmis}, we have $G_\owf$ is $\delta(S, T)$-secure where, 
    \begin{align*}
        \delta(S, T) &\leq 4 \cdot O\left(\frac{(S + \log(1/\delta_0) + 1) T + T^2}{\min\{N, M\}}\right) \\
                     & = O\left( \frac{(S + \log N) \cdot T + T^2}{\min\{N, M\}} \right).%
    \end{align*}
\end{proof}

\begin{theorem} \label{thm:owfqaiqrom}
    $G_\owf$ is $\delta(S, T) = \tilde{O}\left( \left(\frac{S T + T^2}{\min\{N, M\}} \right)^{1/3} \right)$-secure in the QAI-QROM, where $\tilde O$ absorbs $\poly(\log N, \log M, \log S)$ factors.
\end{theorem}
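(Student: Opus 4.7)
The plan is to deduce the bound by a direct application of the publicly-verifiable-game reduction (\Cref{thm:public-ver-reduction}) to the multi-instance security bound for OWFs from \Cref{lem:owfmis}. First I would verify that $G_\owf$ satisfies the hypotheses of \Cref{thm:public-ver-reduction}. The game is publicly-verifiable with $T_\ver = 1$: given a candidate pre-image $x'$ and the challenge $y$, the verifier merely queries $\tilde H(x') = H(x')$ once and checks equality with $y$, so $\ver^H(x,x')$ factors through the online oracle $\tilde H(\cdot) = \query^H(x,\cdot) = H(\cdot)$ as required. The random-guessing success probability satisfies $\delta_0 \ge 1/N = 1/\poly(N,M)$, since at least one pre-image always exists.

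Next I would plug the multi-instance bound $\delta(g,T) = O\bigl((gT + T^2)/\min\{N,M\}\bigr)$ from \Cref{lem:owfmis} into the conclusion of \Cref{thm:public-ver-reduction}. With $g = \tilde O(S)/\delta'$ and $T' = \tilde O(T + T_\ver)/\delta' = \tilde O(T)/\delta'$, this yields the self-referential inequality
\[
\delta' \;\le\; 8 \cdot \delta\!\left(\tilde O(S)/\delta',\; \tilde O(T)/\delta'\right) \;=\; \tilde O\!\left(\frac{S T + T^2}{\delta'^{\,2} \cdot \min\{N,M\}}\right).
\]
Multiplying through by $\delta'^{\,2}$ gives $\delta'^{\,3} \le \tilde O\bigl((ST + T^2)/\min\{N,M\}\bigr)$, and taking cube roots produces exactly the claimed bound.

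There is essentially no obstacle in this proof beyond confirming the form of public verifiability, since all the real work has been carried out elsewhere: the quantum-advice reduction theorem \Cref{thm:public-ver-reduction}, which handles the boosting-via-gentle-measurement argument for reusable quantum advice, and \Cref{lem:owfmis}, which packages the compressed-oracle analysis of the multi-instance game. It is worth noting that the cube-root loss in the exponent stems entirely from the $1/\delta'^{\,2}$ factor introduced by the reduction (which in turn reflects the need to instantiate $\tilde O(\log g / \delta')$ copies of the quantum advice in each round and pay a corresponding $2^{-kS}$ factor when replacing the advice with a maximally mixed state), and is not specific to OWFs.
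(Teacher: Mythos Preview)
Your proposal is correct and follows essentially the same route as the paper: verify public verifiability of $G_\owf$ with constant $T_\ver$ and $\delta_0 \ge 1/N$, invoke \Cref{thm:public-ver-reduction} with the multi-instance bound from \Cref{lem:owfmis}, and solve the resulting self-referential inequality to extract the cube root. The only cosmetic discrepancy is that the paper records $T_\ver = 2$ (counting the two oracle calls in the original $\ver^H(x,x')$) whereas you take $T_\ver = 1$ (counting the single call needed by $\widetilde\ver^{\tilde H}(\ch,\cdot)$); either constant is absorbed by the $\tilde O$.
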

\begin{proof}
    It is a publicly-verifiable game. 
    For $G_\owf$, we can verify that $T_\ver = 2$ and $\delta_0$ is again at least $1 / N$.
    Since $\Gmi{g}_\owf$ is $\delta'(S, T) = O\left(\frac{g T + T^2}{\min\{N, M\}}\right)$-secure from \Cref{lem:owfmis}, we have $G_\owf$ is $\delta(S, T)$-secure where, 
    \begin{align*}
        \delta(S, T) &\leq 8 \cdot\delta'\left(  {\tilde O (S)} / {\delta}, {\tilde O (T + T_\ver)} / {\delta} \right) \\ 
        &\leq 8 \cdot \frac{ \tilde{O} \left(S T + T^2 \right)}{\delta^2 \cdot \min\{N, M\}}.
    \end{align*}
Therefore, $\delta(S, T) = \tilde{O}\left( \left(\frac{S T + T^2}{\min\{N, M\}} \right)^{1/3} \right)$. 
\end{proof}

In the following two subsections, we will first show multi-instance security for a slightly different game with $\delta(g, T) = O(\frac{g T + T^2}{M})$. Then in the second subsection, we show that these two games are almost indistinguishable even for a quantum adversary, and conclude that the  multi-instance security for $\Gmi{g}_{\owf}$  is $\delta(g, T) = O\left(\frac{g T + T^2}{\min\{N, M\}}\right)$.

\subsection{Multi-Instance Security for Inverting Random Image}

Consider the following OWF game variant, where the main difference is that instead of uniformly sampling $x \in [N]$ and telling the adversary $H(x)$, we are directly sampling a uniformly random $y \in [M]$ and telling the adversary $y$.

\begin{definition}[$\text{OWF}_y$ Security Game]
    Let $H$ be a random oracle $[N] \to [M]$. The security game $G_{\owfy} = (C_{\owfy})$ is specified by the challenger $C_{\owfy} = (\samp, \query, \ver)$, where: 
    \begin{enumerate}
        \item $\samp^H(y) = y$ takes a randomness $y \in [M]$, and outputs $y$ as the challenge.
        \item $\query^H(y, x') = H(x')$. 
        \item $\ver^H(y, x')$ outputs $1$ if and only if $H(x') = y$. 
    \end{enumerate}
    
    In other words, given a random oracle, the challenger samples a random output $y$ as the challenge. The adversary is allowed to make oracle queries to $H$ and it wins if and only if it finds a pre-image of $y$. 
\end{definition}

In $\text{OWF}_y$ problem, a challenge $y$ may not have a valid answer. For example, when $N \ll M$, a random element $y$ in $[M]$ is not the image of any inputs with high probability. 

\begin{lemma}
    ${G}^{\otimes g}_{\owfy}$ is $\delta(g, T) = O\left(\frac{g T + T^2}{M} \right)$-secure in the QROM.
\end{lemma}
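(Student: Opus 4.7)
My plan is to invoke \Cref{lem:conditioning2mis}, reducing the multi-instance statement to the following per-round claim: for every adversary strategy and every positive-probability transcript $X = x$ of the first $i-1$ rounds of $\Gmi g_\owfy$, the conditional probability that the adversary wins round $i$ is at most $O((gT + T^2)/M)$. Taking a product via \Cref{lem:conditioning2mis} then yields the stated multi-instance bound.

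For this per-round claim, I first switch to the compressed standard oracle via \Cref{cor:csto-simulate}; this changes neither the adversary's accept probability nor the distribution of the challenger's verification outcomes. Let $q_0 := (i-1)(T+1)$ upper bound the total number of $\csto$ queries made in the first $i-1$ rounds (the ``$+1$'' accounts for the single verification query per round: for $\owfy$, computing $\ver^H(y, \ans)$ amounts to checking whether $H(\ans) = y$, using exactly one query). By \Cref{lem:bounded-database}, conditioned on $X = x$, the joint adversary-oracle state can be written as $\sum_{z, D : |D| \le q_0} \alpha_{z,D} \ket z \ket D$. The challenger now samples $y_i \in [M]$ uniformly and independently; the adversary wins iff, after at most $T + 1$ further $\csto$ queries in round $i$, the measured database contains an entry of the form $(x^\ast, y_i)$ where $x^\ast$ is the submitted answer. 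By \Cref{lem:zhandry-lemma5}, up to a lower-order $O(1/M)$ error term, it suffices to upper bound the squared amplitude of the subspace ``database contains $y_i$ as a value'' at the end of round $i$.

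The key estimate is an amplitude-growth bound: once $y_i$ is fixed, the amplitude of the subspace of databases containing $y_i$ grows by at most $1/\sqrt M$ per $\csto$ query. The reason is that a freshly initialized entry created by $\stddecomp$ is in a uniform superposition over $[M]$, whose projection onto the value $y_i$ has amplitude exactly $1/\sqrt M$, while entries that are already initialized are left in place (modulo the internal $\stddecomp$ bookkeeping). Since $y_i$ is uniform and independent of the pre-round state of database size at most $q_0$, the initial amplitude of the $y_i$-containing subspace is at most $\sqrt{q_0/M}$. After $T + 1$ further queries, the amplitude is at most $\sqrt{q_0/M} + (T+1)/\sqrt M$, so the winning probability is bounded by $(\sqrt{q_0/M} + (T+1)/\sqrt M)^2 \le 2 q_0/M + 2(T+1)^2/M = O((iT + T^2)/M) \le O((gT + T^2)/M)$, as required.

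The main obstacle I anticipate is making the per-query amplitude-growth step fully rigorous in the compressed-oracle formalism. The operator $\csto = \stddecomp \circ \csto' \circ \stddecomp$ is a three-stage unitary, and I must verify that with respect to the decomposition ``database contains $y_i$'' versus ``does not,'' the off-diagonal (subspace-crossing) block has operator norm at most $1/\sqrt M$. Intuitively, only $\stddecomp$ acting on a $\bot$ entry introduces new weight on the $y_i$-containing subspace and contributes amplitude exactly $1/\sqrt M$, while $\csto'$ is a classical addition that preserves the subspace and $\stddecomp$ on already-initialized entries merely reshuffles amplitude within it. Once this per-query bound is established, the conditioning on the first $i-1$ rounds' transcript is handled directly by the ``moreover'' clauses of \Cref{lem:bounded-database} and \Cref{lem:zhandry-lemma5}, closing the argument.
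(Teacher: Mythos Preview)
Your approach is essentially identical to the paper's: reduce via \Cref{lem:conditioning2mis}, simulate with the compressed standard oracle, bound the database size after $i-1$ rounds by \Cref{lem:bounded-database}, prove a per-query amplitude-growth bound of $1/\sqrt M$ for the projector onto ``database contains $y_i$,'' and finish with \Cref{lem:zhandry-lemma5}.

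One imprecision to fix: the claim ``the initial amplitude of the $y_i$-containing subspace is at most $\sqrt{q_0/M}$'' is \emph{not} true for a fixed $y_i$---for a particular value the amplitude can be close to $1$. What holds is the aggregate bound $\sum_{y_i \in [M]} p^{(0)}_{y_i} \le q_0$, since each $(z,D)$ with $|D|\le q_0$ contributes $|\alpha_{z,D}|^2$ to at most $q_0$ values of $y_i$; equivalently $\E_{y_i}[p^{(0)}_{y_i}] \le q_0/M$. The paper carries the per-$y_i$ bound $\sqrt{p^{(T+1)}_{y_i}} \le \sqrt{p^{(0)}_{y_i}} + (T+1)/\sqrt M$, then averages over $y_i$ and applies Cauchy--Schwarz to get $\frac1M\sum_{y_i}\tilde p_{y_i} \le \frac2M\sum_{y_i}(p^{(0)}_{y_i} + (T+2)^2/M) = O((iT+T^2)/M)$. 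Your final displayed inequality happens to land on the same expression, but as written it presupposes the pointwise amplitude bound; rephrase it as an average over the fresh uniform $y_i$ and the argument is complete.
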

\begin{proof}
    Let $\As$ be any $(g, T)$ adversary.
    By definition of security, to prove the lemma, we would need to establish an upper bound on $\Pr[\As \Longleftrightarrow C^{\otimes g}_\owfy(H)]$.
    Without loss of generality, we can write $\As$ as a sequence of unitary operators $(U_{i, j})_{i \in [g], j \in [T + 1]}$ operating on registers $\ket x, \ket y, \ket\aux$, initialized to empty, where
    \begin{itemize}
        \item $x \in [N], u \in [M]$ and $\aux$ is the auxiliary register whose size is unbounded.
        \item $\ket u$ serves as the challenge as input to $U_{i, 1}$.
        \item For $1 \le t \le T$, $U_{i, t}$ prepares a query to $\query^H(y, \cdot) = H(\cdot)$ at $\ket x$ and receives the output at $\ket u$.
        \item $U_{i, T + 1}$ prepares the final output for $i$-th game in register $\ket x$, clears the register $\ket u$, and receives the measured answer also back in the register $\ket x$ and the next challenge in the register $\ket u$.
    \end{itemize}
    
    Consider the simulator that instead simulates the interaction of $\As$ and the challenger with the compressed oracle.
    Formally, the simulator initializes the state $\ket{\psi'_{1, 0}} := \ket 0_\As \otimes {\ket \emptyset}_{H}$.
    For each round $i \in [g]$:
    \begin{enumerate}
        \item Starting at $\ket{\psi'_{i, 0}}$, sample the next challenge $y_i \gets_\$ [M]$, and let $\ket u \gets y_i$.
            Let the resulting state \textit{conditioned on} the challenge being $y_i$ be $\ket{\psi_{i, 0}}$.
        \item We define $\ket{\psi_{i, t}}$ and $\ket {\psi'_{i, t}}$:
            \begin{alignat*}{3}
                \ket {\psi'_{i, t + 1}} &= U_{i, t+1}  \ket {{\psi}_{i, t}}  &\qquad& \text{for all $t = 0, \cdots, T$;}  \\
                \ket {{\psi}_{i, t}} &= U_H \ket {\psi'_{i, t}}  && \text{for all $t = 1, \cdots, T$;}  
            \end{alignat*}
            where $U_H$ is $\csto$ operating on registers $\ket x, \ket u, \ket H$.
        \item By our assumption on $U_{i, T+1}$, in state $\ket{\psi'_{i, T+1}}$, $\ket x$ holds the adversary's answer and $\ket u = \ket 0$.
          Verify the answer by doing the following: apply $U_H$ on $\ket x, \ket u$. Let us call the resulting state as $\ket{\psi_{i, T+1}}$, then
        \begin{align*}
            \ket{\psi_{i, T+1}} = \sum_{{x, u, \aux, D}} \gamma_{x, u, \aux, D} \ket {x, u, \aux}_\As \ket D_H
        \end{align*}
        \item Compute in superposition $b_i = [u = y_i]$ and measure $b_i$.
        \item Uncompute $U_H$, let the resulting state \textit{conditioned on} measurement result being $b_i$ be $\ket{\phi'_{i + 1, 0}}$.
    \end{enumerate}
    By \Cref{cor:csto-simulate}, the probability that this algorithm obtains $b_1 = \cdots = b_g = 1$ is exactly the probability that $\Pr[\As \Longleftrightarrow C^{\otimes g}_\owfy(H)]$.
    We have the following proposition.

    \begin{proposition} \label{lem:owf-single-stage}
        For all $i \in [g]$, let $Y_1, Y_{2}, ..., Y_{i-1}$ be the random variable of the first $(i-1)$ challenges and $B_1, ..., B_{i}$ be the random variable for the decision bits. For any $(g, T)$ algorithm in the QROM, $y_1, y_2, ..., y_{i-1}$ and $b_1, b_2, ..., b_i$, we have, 
        \begin{align*}
            \Pr\left[ B_i = 1 \,|\, Y_1 = y_1,  ..., Y_{i-1} = y_{i-1}, B_1 = b_1, ..., B_{i-1} = b_{i - 1} \right] \leq O\left( \frac{i T + T^2}{M} \right),
        \end{align*}
        given that $\Pr[Y_1 = y_1,  ..., Y_{i-1} = y_{i-1}, B_1 = b_1, ..., B_{i-1} = b_{i - 1}] > 0$.
    \end{proposition}
    \begin{proof}
        We start by observing that these conditioning correspond to the pure state $\ket{\psi'_{i, 0}}$ that we have described above.
        The state is well defined as these classical outcomes occur with non-zero probability as we have assumed.
        Since the total number of queries up to this point is at most $(i-1) (T+2)$, by \Cref{lem:bounded-database}, there exists some complex numbers $\alpha_*$ such that
        \[
          \ket{\psi'_{i, 0}} = \sum_{z, D : |D| \leq (i-1) (T+2)} \alpha_{z, D} \ket z \ket D.
        \]

        For any $y_i \in [M]$, define $P_{y_i}$ to be a projection of finding $y_i$ in the database: 
        \begin{align*}
            P_{y_i} = \sum_{D: \exists x \in [N], D(x) = y_i} \ket D \bra D.
        \end{align*}
        Let $p^{(t)}_{y_i}$ be the probability that after $t$ queries to $\csto$, measuring the database register gives a database containing $y_i$, or formally,
        \begin{align*}
            p^{(t)}_{y_i} := \left| P_{y_i} \ket{\psi_{i, t}} \right|^2.
        \end{align*}
        
        Since $\ket {\psi'_{i, 0}}$ only has non-zero amplitude over $D$ with size at most $(i-1) (T+2)$, for each ${z, D}$ pair, $|\alpha_{z, D}|^2$ contributes to at most $(i-1) (T+2)$ different $y_i$'s.
        We observe that the same conclusion also holds for $\ket{\psi_{i, 0}}$, as the only difference between these two states is that $u$ is initialized to some element in $[M]$, and $H$ register is unchanged.
        In other words,
        \begin{equation} \label{eq:unimportant-baddatabase}
            \sum_{y_i=1}^M p^{(0)}_{y_i} \leq (i-1) (T+2). 
        \end{equation}
        
        We then have the following lemma that bounds $p^{(t)}_{y_i}$ for all $t > 0$, which is the same as \cite[Theorem 1]{zhandry2019record} and similar to \cite[Lemma 8 and Lemma 9]{liu2019finding}. 
        \begin{lemma}
            For all $t = 0, 1, \cdots, T$ and all $y_i \in [M]$, $\sqrt{p^{(t + 1)}_{y_i}} \leq \sqrt{p^{(t)}_{y_i}} + \sqrt{1/M}$. 
        \end{lemma}
        \begin{proof}
           Let us define $q^{(t)}_{y_i} = \left| P_{y_i} \ket{\psi'_{i, t}} \right|^2$. First, similar to \cite[Lemma 8]{liu2019finding}, we have $q^{(t+1)}_{y_i} = p^{(t)}_{y_i}$, since
           \begin{align*}
               \sqrt{q^{(t + 1)}_{y_i}} = \left| P_{y_i} \ket{\psi'_{i, t + 1}} \right| = \left| P_{y_i} U_{i, t+1} \ket{\psi_{i, t}} \right| \overset{(*)}{=} \left| P_{y_i} \ket{\psi_{i, t}} \right| = \sqrt{p^{(t)}_{y_i}}.
          \end{align*}
           The equality $(*)$ follows from the fact that $U_{i, t+1}$ only applies on $\As$'s register but $P_{y_i}$ only applies on the database register and therefore they commute. 
           
          Second, we have $\sqrt{p^{(t+1)}_{y_i}} \leq \sqrt{q^{(t+1)}_{y_i}} + \sqrt{1/M}$~\cite[Lemma 9]{liu2019finding}:
          \begin{align*}
              \sqrt{p^{(t + 1)}_{y_i}} = \left| P_{y_i} \ket{\psi_{i, t + 1}} \right| &= \left| P_{y_i} U_H \ket{\psi'_{i, t + 1}} \right|  \\
              &= \left| P_{y_i} U_H \,(P_{y_i} + (I - P_{y_i})) \ket{\psi'_{i, t + 1}} \right| \\
              &\leq \left| P_{y_i} U_H  P_{y_i} \ket{\psi'_{i, t + 1}} \right| + \left| P_{y_i} U_H  (I - P_{y_i}) \ket{\psi'_{i, t + 1}} \right|.
          \end{align*}
          
          Intuitively, the first term corresponds to the case that before making the $(t+1)$-th query, the database already has $y_i$. The second term corresponds to the case that before making the $(t+1)$-th query, the database does not have $y_i$.
          
          The first term $\left| P_{y_i} U_H  P_{y_i} \ket{\psi'_{i, t + 1}} \right| \le \left| U_H  P_{y_i} \ket{\psi'_{i, t + 1}} \right| = \left| P_{y_i} \ket{\psi'_{i, t + 1}} \right| = \sqrt{q^{(t+1)}_{y_i}}$. 
          
          For the second term, let $\ket \phi := (I - P_{y_i}) \ket{\psi'_{i, t + 1}}$, by definition of $P_{y_i}$, there exists some complex numbers $\beta_*$ such that
          \begin{align*}
              \ket \phi = \sum_{\substack{x, u, \aux, \\ D : \forall x', D(x') \ne y_i}} \beta_{x, u, \aux, D} |x, u, \aux\rangle \ket D.
          \end{align*}
          We can equivalently write the second term as
          \begin{align*}
              &\equad \left| P_{y_i} U_H \ket \phi \right| \\
              &= \Bigg| P_{y_i} \csto   \sum_{\substack{x, u, \aux, \\ D : \forall x', D(x') \ne y_i}} \beta_{x, u, \aux, D} |x, u, \aux\rangle \ket D \Bigg| \\
              &= \Bigg| P_{y_i} \sum_{\substack{x, u, \aux, \\ D : \forall x', D(x') \ne y_i}} \beta_{x, u, \aux, D}\,  \stddecomp_x \circ \csto' \circ \stddecomp_x \, |x, u, \aux\rangle \ket D \Bigg| \\
              &= \Bigg| P_{y_i} \,  \sum_{\substack{x, u, \aux, \\ D : \forall x', D(x') \ne y_i}} \sum_y \frac{\beta_{x, u, \aux, D}}{\sqrt{M}}\,  \stddecomp_x  \, |x, u + y, \aux\rangle \ket {D \cup (x, y)} \Bigg| \\
              &\overset{(*)}{\leq} \Bigg| P_{y_i} \,  \sum_y \sum_{\substack{x, u, \aux, \\ D : \forall x', D(x') \ne y_i}} \frac{\beta_{x, u, \aux, D}}{\sqrt{M}}\, |x, u + y, \aux\rangle \ket {D \cup (x, y)} \Bigg| \\
              &= \Bigg| \sum_{\substack{x, u, \aux, \\ D : \forall x', D(x') \ne y_i}} \frac{\beta_{x, u, \aux, D}}{\sqrt{M}}\, |x, u + y_i, \aux\rangle \ket {D \cup (x, y_i)} \Bigg| \\
              &= \frac{1}{\sqrt{M}} |\ket \phi|
              \leq \frac{1}{\sqrt{M}}.
          \end{align*}
          The inequality (*) comes from the fact that $D \cup (x, y)$ already contains $x$, therefore $\stddecomp_x$ will either remove the entry for $x$ or not change the database at all. 
        \end{proof}
        Using an induction on the lemma, we have
        \begin{equation}
          \label{eq:owf-bound-ptplus1}
          \sqrt{p^{(T+1)}_{y_i}} \le \sqrt{p^{(0)}_{y_i}} + (T+1) / \sqrt{M}.
        \end{equation}
        
        Define $\tilde{p}_{y_i}$ as the probability that $B_i = 1$ conditioned on the challenge is $y_i$. In other words, $\tilde{p}_{y_i}$ is the probability that the final state $x, u$ when measured gives $u = H(x) = y_i$.
        By \Cref{lem:zhandry-lemma5}, we have $\sqrt{\tilde{p}_{y_i}} \leq \sqrt{{p}^{(T+1)}_{y_i}} + 1/\sqrt{M}$.
        Combining this with \eqref{eq:unimportant-baddatabase}, \eqref{eq:owf-bound-ptplus1}, and Cauchy-Schwarz inequality,
        \begin{align*}
            \frac{1}{M} \sum_{y_i=1}^{M} \tilde{p}_{y_i} \leq \frac{2}{M} \, \sum_{y_i=1}^{M} \left({{p}^{(0)}_{y_i}} + (T+2)^2 / M \right) \leq \frac{O(i T + T^2)}{M}.
       \end{align*}
       We complete the proof by observing that the left hand side is exactly the probability in the statement.
    \end{proof}
    
    Combine \Cref{lem:owf-single-stage} and \Cref{lem:conditioning2mis}, we have proven the lemma.
\end{proof}

\subsection{Multi-Instance Indistinguishability of Sampling Image}

The analysis for  multi-instance security of $\Gmi{g}_\owfy$ is not enough for $\Gmi{g}_\owf$. This is because in the standard one-way function game, $y$ is sampled in the way that it is correlated with the random oracle. Therefore, we can not use \Cref{lem:zhandry-lemma5}, since the challenge image $y$ is almost always in the database.
In this subsection, we will instead show that the advantage of indistinguishing whether $y_i$ is sampled uniformly at random or from $H(x_i)$ for a uniformly random $x_i$ is at most $O((i T + T^2)/N)$. 

First, we need the following lemma, which says for any quantum algorithm $\As$ making at most $q$ queries to a random oracle before getting a challenge, and at most $q'$ queries after, it can not distinguish if it gets a random image $H(x)$ or a random element $y$ in the range. 
Furthermore, the statement is true even after conditioning on arbitrary intermediate measurements.
\begin{lemma} \label{lem:prgind}
Let $\Ds$ be a quantum algorithm that makes at most $q + q'$ queries to a random oracle. Let $\vuy_{[p]} = Y_1, Y_{2}, ..., Y_{p}$ be the random variable of the outcomes of $p$ arbitrary intermediate measurements, all of which occur before the $q$-th query. After $\Ds$ making $q$ queries, it is given a challenge $y$, which we crudely denote as $\Ds^H(y)$. It then makes another $q'$ queries and simply outputs a bit.
For any list of outcomes $\mathbf{y}_{[p]} = y_1, ..., y_p$, 
\begin{align*}
    \Bigg|& \sqrt{\Pr_{H, y \gets_{\$} [M]}  \left[ \Ds^{H}(y) = 1\,|\, \vuy_{[p]} = \vy_{[p]}  \right]} \\
    & \quad\quad\quad -  \sqrt{\Pr_{H, x \gets_{\$} [N]} \left[\Ds^{H}(H(x)) = 1\,|\, \vuy_{[p]} = \vy_{[p]} \right]} \Bigg| \leq 2 (\sqrt{q} + q')  / \sqrt{N}.
\end{align*}
\end{lemma}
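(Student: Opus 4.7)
The plan is to use Zhandry's compressed standard oracle (\Cref{cor:csto-simulate}) together with an abort-hybrid reduction, and finish with the BBBV hybrid lemma combined with an $L^2$-norm triangle inequality. Write $p_A,p_B$ for the two probabilities in the statement. Conditioning on $\vuy_{[p]}=\vy_{[p]}$, let $\ket{\Psi_q}$ denote the joint state of $\Ds$ and the compressed database $D$ right after the $q$-th query, which by \Cref{lem:bounded-database} is supported on databases with $|D|\le q$. For each $x$, define $P^\bot_x:=\sum_{D:\,D(x)=\bot}\ket D\bra D$, and consider the hybrids $\tilde A,\tilde B$ where both sample $x\in[N]$ uniformly and apply $P^\bot_x$ to the database, outputting $0$ on abort. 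On the good event, $\tilde A$ applies $\csto$ on $(\ket x,\ket 0,\ket D)$ and measures the output register to produce the challenge $y$ (which is uniform in $[M]$ by construction of $\csto$ while the database becomes $D\cup(x,y)$, matching $\Ds^H(H(x))$ by \Cref{cor:csto-simulate}), whereas $\tilde B$ instead samples $y\in[M]$ uniformly and leaves the database untouched (matching $\Ds^H(y)$ for uniform $y$).

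The aborting loss is $|p_A-p_{\tilde A}|,|p_B-p_{\tilde B}|\le\E_x\|(I-P^\bot_x)\ket{\Psi_q}\|^2\le q/N$ (using $|D|\le q$), so $|\sqrt{p_A}-\sqrt{p_{\tilde A}}|,|\sqrt{p_B}-\sqrt{p_{\tilde B}}|\le\sqrt{q/N}$ via $|\sqrt a-\sqrt b|\le\sqrt{|a-b|}$, accounting for the $\sqrt q$ term.

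For the post-challenge comparison, revert back to the standard oracle via \Cref{cor:csto-simulate}: on the good event the two hybrids correspond to oracles $H_A,H_B$ that agree on $[N]\setminus\{x\}$ and differ only at $x$, where $H_A(x)=y$ is the fixed challenge and $H_B(x)$ is uniformly random and independent of $y$. The pre-query states coincide in the two hybrids, since the good event means the pre-queries did not ``touch'' $x$. Parameterize the classical randomness as $\omega=(H_B,y,x,\ldots)$ drawn from $\tilde B$'s product distribution, so that $x$ is independent of $(H_B,y)$ and hence of $\Ds$'s execution. The BBBV hybrid lemma applied to the $q'$ post-challenge queries then gives, for each $\omega$,
\[
\bigl\|\ket{\phi^{\tilde A}(\omega)}-\ket{\phi^{\tilde B}(\omega)}\bigr\|^2 \le 4q'\sum_{t=0}^{q'-1} q_t(x;\omega),
\]
where $q_t(x;\omega)$ is the query magnitude at $x$ at the $t$-th post-challenge step. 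Because $\sum_{x\in[N]} q_t(x;\omega)=1$ for each $(H_B,y)$ and $x$ is uniform and independent, $\E_\omega\sum_t q_t(x;\omega)\le q'/N$, so the expected squared norm is at most $4q'^2/N$. Combining the pointwise inequality $|\sqrt{p_{\tilde A}(\omega)}-\sqrt{p_{\tilde B}(\omega)}|\le\|\ket{\phi^{\tilde A}(\omega)}-\ket{\phi^{\tilde B}(\omega)}\|$ with the $L^2$ triangle inequality yields
\[
\bigl|\sqrt{p_{\tilde A}}-\sqrt{p_{\tilde B}}\bigr| \le \sqrt{\E_\omega\bigl\|\ket{\phi^{\tilde A}(\omega)}-\ket{\phi^{\tilde B}(\omega)}\bigr\|^2} \le 2q'/\sqrt N.
\]
Summing the three contributions by triangle inequality gives $|\sqrt{p_A}-\sqrt{p_B}|\le 2(\sqrt q+q')/\sqrt N$, as desired.

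The main obstacle is the post-challenge step: in the compressed picture both hybrids apply the same unitary $\csto$, so BBBV does not apply directly; one must first pass back to the standard oracle on the good event to expose two oracles that differ only at the hidden uniform point $x$, and use $\tilde B$'s parameterization so that $x$ is independent of the execution and the uniform-average bound $\E_x q_t(x)=1/N$ is available. It is also crucial to carry amplitudes rather than probabilities through the averaging in Step~3, since bounding $|p_{\tilde A}-p_{\tilde B}|$ directly and then taking a final square root would give only the strictly weaker $(q')^{1/2}/N^{1/4}$ instead of the desired $q'/\sqrt N$. The conditioning on $\vuy_{[p]}=\vy_{[p]}$ is absorbed into the ``moreover'' clauses of \Cref{cor:csto-simulate} and \Cref{lem:bounded-database}.
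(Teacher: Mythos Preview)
Your high-level strategy is sound, but there is a genuine gap in the post-challenge BBBV step. After you apply $P^\bot_x$, the state $\ket{\psi_t^{\tilde B}}$ in the $\tilde B$ trajectory depends on $x$ through the projection, so $x$ is no longer independent of $\Ds$'s execution as you claim. Concretely, the identity $\sum_{x\in[N]} q_t(x;\omega)=1$ holds only when the underlying state is the same for every index $x$ being summed over; once the trajectory itself varies with $x$ (via $P^\bot_x$), the averaging $\E_x q_t(x;\omega)\le 1/N$ does not follow. If you try to repair this by comparing to the unprojected trajectory $\ket{\psi_t^{(0)}}$ (which \emph{is} independent of $x$), the correction $\|\Pi_x\ket{\psi_t^{\tilde B}}\|\le \|\Pi_x\ket{\psi_t^{(0)}}\|+\|(I-P^\bot_x)\ket{\Psi_q}\|$ contributes, after squaring and summing over $t$, an extra $q'^2\cdot\E_x\|(I-P^\bot_x)\ket{\Psi_q}\|^2\le q'^2 q/N$ term. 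This yields only $|\sqrt{p_{\tilde A}}-\sqrt{p_{\tilde B}}|\le O(q'\sqrt{(1+q)/N})$ rather than the claimed $O(q'/\sqrt N)$, which is too weak for the downstream applications (it would give $iT^3/N$ instead of $(iT+T^2)/N$ in the OWF multi-instance bound).

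The paper circumvents this by a different reduction that never projects. It augments the oracle with one fresh coordinate $h'=\tilde H(N+1)$ and reformulates both experiments as receiving the \emph{same} challenge $h'$ but interacting with different oracles: $H$ in Case~1 versus $H'_x$ (which swaps the values at $x$ and $N+1$) in Case~2. The crucial point is that in Case~1 the register $\ket x$ is prepared but never touched during the first $q$ queries, so it remains genuinely unentangled with both the algorithm and the oracle; this is precisely what makes the clean average $\E_x\|\Pi_x\ket{\psi_t}\|^2\le 1/N$ valid for every post-challenge hybrid step. The $2\sqrt{q/N}$ contribution is instead paid up front as the $L^2$ distance $\|\ket{\phi_1}-\ket{\phi_2}\|$ between the two post-$q$-query joint states, computed directly in the phase-oracle (Fourier) basis via \Cref{lem:bounded-database-phase} rather than through an abort argument.
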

\begin{proof}
    
    We will first present two slightly different oracles or experiments, and argue that each of them perfectly simulates $\Ds^H(y)$ and $\Ds^H(H(x))$ respectively.
    
    \paragraph*{Case 1.}
    The challenge is a uniformly random $y$.
    We implement the following changes to the oracle:
    \begin{enumerate}
      \item We replace the random oracle with a phase oracle.
            By \Cref{lem:pho-simulate}, this change perfectly preserves the probability that $\Ds$ outputs 1 as he only interacts with $\ket H$ using the phase oracle.
      \item We consider the modified oracle $\tilde H: [N + 1] \mapsto [M]$, where $\tilde H := H || h'$ for some random element $h' \in [M]$, and we pick $h' = \tilde H(N + 1)$ as our challenge.
            As $\Ds$ only gets phase oracle access to $H$, and we will only use $h'$ as classical randomness, this change again perfectly preserves $\Ds$'s functionality.
      \item We additionally prepare a random $x \in [N]$ at a new register $x$ but we do not use it.
            This change is clearly only conceptual.
    \end{enumerate}
    Conditioned on $\vuy_{[p]} = \vy_p$, after $\Ds$ has made $q$ queries to the phase oracle and before generating the challenge, by \Cref{lem:bounded-database-phase}, we know that the overall state of $\Ds$ and the oracle internal state can be written as\footnote{Here we write $\ket x$ for simplicity, but to see correctness, you could really think of $\ket{x, x}$ so we have two computational-basis ``copies'' of the same uniform superposition $1/\sqrt N \sum_x \ket x$. As we will never touch the second copy, the first copy behaves exactly like a classical random string to the rest of the algorithm.}
    \begin{align*}
        \sum_{z, D: |D| \leq q} \alpha_{z, D}\, |z\rangle_{\Ds} \ket 0_y \otimes \frac{1}{\sqrt{N}} \sum_x |x\rangle \otimes \, \frac{1}{M^{(N+1)/2}} \sum_{H, h'} \omega_{M}^{\langle D, H\rangle} |H, h'\rangle,
    \end{align*}
    where $x$ and $h'$ have not been used and thus is completely un-entangled with $\Ds$.
    After generating the challenge, the overall state becomes
    \begin{align*}
          \ket {\phi_1} := \frac{1}{N^{1/2} M^{(N+1)/2}} \sum_{z, D: |D| \leq q} \sum_{x} \sum_{H, h'} \alpha_{z, D}\, \ket z_\Ds \ket{h'}_y \otimes |x\rangle \otimes \omega_{M}^{\langle D, H\rangle} |H, h'\rangle.
    \end{align*}
    
    \paragraph*{Case 2.}
    The challenge is the image for a uniformly random pre-image.
    We implement the following changes to the oracle:
    \begin{enumerate}
      \item We again replace the random oracle with a phase oracle using \Cref{lem:pho-simulate}.
      \item We again augment $H$ to be $\tilde H: [N + 1] \mapsto [M]$ for a new random element $h' \in [M]$, and prepare a new random $x \in [N]$.
            For now, this is only a conceptual change (we only added two new independent registers).
      \item Define a new classical oracle $H'_x: [N] \mapsto [M]$ to be
            \[
              H'_x(x') := H_{< x} || h' || H_{> x} = \begin{cases}
                \tilde H(N + 1) = h', & x' = x; \\
                \tilde H(x') = H(x'), & x' \neq x.
              \end{cases}
            \]
            Instead of giving $\Ds$ phase oracle to $H$, we are giving $\Ds$ phase oracle to $H'_x$, where $x$ is pulled from the new register $\ket x$ that we just add.
            For any fixed $x$, this change is perfectly consistent with the original oracle distribution; and as $\ket x$ is just a classical random string, the overall change is also perfectly consistent.
      \item Instead of sampling a new pre-image for generating the challenge, we instead use $x$ as the pre-image, that is, we ``compute'' $H'_x(x) = h'$ and send $h'$ as the challenge for $\Ds$.
            It is easy to see that $\Ds$ cannot obtain any information on $x$ for the first $q$ queries, thus this change still perfectly preserves $\Ds$'s functionality.
    \end{enumerate}
    Therefore, conditioned on $\vuy_{[p]} = \vy_p$, before giving $\Ds$ the challenge, the overall state of $\Ds$ and the oracle internal memory is
    \begin{align*}
        \sum_{z, D: |D| \leq q} \alpha_{z, D}\, |z\rangle_{\Ds} \ket 0_y \otimes \frac{1}{\sqrt{N}} \sum_x |x\rangle \otimes \, \frac{1}{M^{(N+1)/2}} \sum_{H, h'} \omega_{M}^{\langle D, H'_x\rangle} |H, h'\rangle.
    \end{align*}
    When the challenge is given, the overall state becomes
    \begin{align*}
          \ket {\phi_2} := \frac{1}{N^{1/2} M^{(N+1)/2}} \sum_{z, D: |D| \leq q} \sum_{x} \sum_{H, h'} \alpha_{z, D}\, |z\rangle_{\Ds} \ket{h'}_y \otimes |x\rangle \otimes \omega_{M}^{\langle D, H'_x\rangle} |H, h'\rangle.
    \end{align*}

    By now, we have shown two perfect simulator for $\Ds(y)$ and $\Ds(H(x))$ respectively, but instead of the two cases getting different challenges ($y$ vs $H(x)$), we have designed the simulator in the way that the difference now is instead which oracle $\Ds$ is interacting with ($H$ vs $H'_x$).
    Immediately after $\Ds$ is given the challenge, the overall states of the two simulators after conditioning are $\ket{\phi_1}, \ket{\phi_2}$ respectively.
    Intuitively, the problem now reduces to distinguishing these two oracles.
    
    \begin{lemma}
       $\ltwonorm{\ket {\phi_1} - \ket {\phi_2}} \leq 2 \cdot \sqrt{q/N}$. 
    \end{lemma}
    \begin{proof}
        We compare two vectors component-wise. Fixing $z, D, H, h'$, since $H$ and $H'_x$ only differ at point $x$, if $x$ satisfies $D(x) = 0$, then by how we defined the inner product,
        \[
          \omega_M^\innerprod{D, H} = \omega_M^\innerprod{D, H'_x}.
        \]
        Therefore, we can write these two vectors as $\ket {\phi_1} = \ket {\phi} + \ket {\err_1}$ and  $\ket {\phi_2} = \ket {\phi} + \ket {\err_2}$, where $\ket {\phi}$ only sums over $x$ such that $D(x) = 0$.  
        Their difference is bounded by the following, 
        \begin{align*}
            \ltwonorm{\ket {\phi_1} - \ket {\phi_2}}
            &= \ltwonorm{\ket {\err_1} - \ket {\err_2}} \\
            &\leq \ltwonorm{\frac{1}{N^{1/2} M^{(N+1)/2}} \sum_{z, D: |D| \leq q} \sum_{H, h'} \sum_{x : D(x) \ne 0} \alpha_{z, D} \, \omega_{M}^{\langle D, H\rangle} \, \ket {z, h'} \ket x \ket {H, h'}} + \\
            &\equad \ltwonorm{\frac{1}{N^{1/2} M^{(N+1)/2}} \sum_{z, D: |D| \leq q} \sum_{H, h'} \sum_{x : D(x) \ne 0} \alpha_{z, D} \,  \omega_{M}^\innerprod{D, H'_x} \, \ket {z, h'} \ket x \ket {H, h'}} \\ 
            &\leq 2 \cdot \sqrt{\frac{1}{M^{N+1}} \sum_{z, D: |D| \leq q} \sum_{H, h'}  \frac{|D|}{N}\cdot  |\alpha_{z, D}|^2} \\
            &\leq 2 \cdot \sqrt{\frac{q}{N} \cdot \sum_{z, D: |D| \leq q} |\alpha_{z, D}|^2} \\
            &= 2 \sqrt{q/N},
        \end{align*}
        where the last equality is due to the fact that the $\alpha_*$'s are defined as \Cref{lem:bounded-database-phase}.
    \end{proof}

    Therefore, before making the next $q'$ queries, the overall state is $\ket {\phi_1}$ if it is in \emph{Case 1} or $\ket {\phi_2}$ if it is in \emph{Case 2}. Let $U_H$ be the oracle query to $H$ and $U_{H'_x}$ be the oracle query to $H'_x$.
    We can write the rest of $\Ds^H$ as $U_{q'} U_H \cdots U_1 U_H U_0$. %
    First, by the fact that unitary preserves $L^2$-norm and triangle inequality, we have,
    \begin{align*}
        &\equad \ltwonorm{U_{q'} U_H \cdots U_1 U_H U_0 \ket {\phi_1} - U_{q'} U_{H'_x} \cdots U_1 U_{H'_x} U_0 \ket {\phi_2}} \\
        &\leq \ltwonorm{U_{q'} U_H \cdots U_1 U_H U_0 \ket {\phi_1} - U_{q'} U_{H'_x} \cdots U_1 U_{H'_x} U_0 \ket {\phi_1}} + 2 \sqrt{q/N} \\
        &\leq \sum_{j=0}^{q'-1} \ltwonorm{\ket {\psi_j} - \ket {\psi_{j+1}}} + 2 \sqrt{q/N},
    \end{align*}
    where $\ket {\psi_j}$ is defined as applying $U_{H}$ on $\ket {\phi_1}$ for the first $j$ oracle queries and then $U_{H'_x}$ for the rest $(q' - j)$ oracle queries. 
    \begin{lemma}
        For all $j = 0, ..., q' - 1$, $\ltwonorm{\ket {\psi_j} - \ket {\psi_{j+1}}} \leq 2 / \sqrt{N}$. 
    \end{lemma}
    \begin{proof}
        Before making the $(j+1)$-th oracle query, let $z = (x', u, \aux)$ where $x', u$ corresponds to the input to the phase oracle, by \Cref{lem:bounded-database-phase}, the overall state can be written as
        \begin{align*}
            \frac{1}{M^{N/2}} \sum_{z, D: |D| \leq q + i}  \sum_{H} \beta_{z, D}\, |x', u, \aux\rangle_{\Ds}  \otimes \omega_{M}^{\langle D, H\rangle} |H, h'\rangle \otimes  \frac{1}{\sqrt{N} } \sum_x  \ket x.
        \end{align*}
        Note that we can make this reordering since $H'_x$ is never invoked, and $\ket x$ is never used and therefore unentangled with other registers.
        
        For the $(j+1)$-th oracle query, either $U_H$ or $U_{H'_x}$ is applied. We find that two oracle queries are different if and only if $x = x'$. Therefore,  the difference is bounded by two times the norm of the component whose $x = x'$. Since $\ket x$ is un-entangled with other registers, the norm is at most $1/\sqrt{N}$. Therefore, $\left| \ket {\psi_j} - \ket {\psi_{j+1}} \right| \leq 2 / \sqrt{N}$. 
    \end{proof}
    
    Since each  $\left| \ket {\psi_j} - \ket {\psi_{j+1}} \right| \leq 2 / \sqrt{N}$, we conclude that 
    \begin{align*} 
        \ltwonorm{U_{q'} U_H \cdots U_1 U_H U_0 \ket {\phi_1} - U_{q'} U_{H'} \cdots U_1 U_{H'} U_0 \ket {\phi_2}} \leq 2 (q' + \sqrt{q}) / \sqrt{N}.
    \end{align*} 
    By perfect simulation, we have
    \begin{align*} 
        \sqrt{\Pr_{y \gets_{\$} [M]}  \left[ \Ds^{\ket H}(y) = 1\,|\, \vuy_{[p]} = \vy_{[p]}  \right]} &= \ltwonorm{P_1 U_{q'} U_H \cdots U_1 U_H U_0 \ket {\phi_1}}\\ 
        \sqrt{\Pr_{x \gets_{\$} [M]}  \left[ \Ds^{\ket H}(H(x)) = 1\,|\, \vuy_{[p]} = \vy_{[p]}  \right]} &= \ltwonorm{P_1 U_{q'} U_{H'} \cdots U_1 U_{H'} U_0 \ket {\phi_2}}.
    \end{align*}
    Our lemma follows.
\end{proof}

\subsection{Multi-Instance Security of OWF}

Now with \Cref{lem:owf-single-stage} and \Cref{lem:prgind}, we are ready to prove the multi-instance security of $\Gmi{g}_\owf$ that we have claimed at the beginning.

\begin{customlemma}{\ref{lem:owfmis}}
    \lemstateowfmis
\end{customlemma}
\begin{proof}
    Let $B_i$ be the random variable for $b_i$, indicating if $\As$ passes the $i$-th round of the multi-instance game. Let $Y_i$ be the random variable for the $i$-th challenge. Define $B'_i$ be the random variable that $\As$ finds a pre-image of the $i$-th challenge if $Y_i$ is instead distributed uniformly at random from $[M]$.

    From \Cref{lem:prgind} and setting $q = (i-1) (T+2)$ and $q' = T+1$, we have, for any challenge $\vy_{[i-1]}$ (with non-zero probability conditioned on $\vub_{[i-1]} = \mathbf{1}^{i-1}$), 
    \begin{align*}
        \bigg| & \sqrt{\Pr  [B_i = 1\,|\, \vuy_{[i-1]} = \vy_{[i-1]}, \vub_{[i-1]} = \mathbf{1}^{i-1}]} \\
         & \quad\quad\quad\quad - \sqrt{\Pr[B'_i = 1\,|\, \vuy_{[i-1]} = \vy_{[i-1]}, \vub_{[i-1]} = \mathbf{1}^{i-1}]} \bigg| \leq O\left(\frac{\sqrt{i T} + T}{\sqrt{N}}\right).
    \end{align*}
    Since $B'_i = 1$ is the single stage $G_\owfy$ game, by \Cref{lem:owf-single-stage}, we have, 
    \begin{align*}
        \Pr[B'_i = 1\,|\, \vuy_{[i-1]} = \vy_{[i-1]}, \vub_{[i-1]}] = O\left(\frac{i T + T^2}{M} \right).
    \end{align*}
    Then, by Cauchy-Schwarz, we have, 
    \begin{align*}
        &\equad \Pr[B_i = 1\,|\, \vuy_{[i-1]} = \vy_{[i-1]}, \vub_{[i-1]} = \mathbf{1}^{i-1}] \\
        &\leq 2  \Pr[B'_i = 1\,|\, \vuy_{[i-1]} = \vy_{[i-1]}, \vub_{[i-1]}] + O\left(\frac{i T + T^2}{N} \right) \\
        &\leq O\left(\frac{i T + T^2}{M} + \frac{i T + T^2}{N} \right) = O\left(\frac{g T + T^2}{\min\{M, N\}} \right).
    \end{align*}
    Finally, we complete the proof by invoking \Cref{lem:conditioning2mis}.
\end{proof}

\subsection{Security of PRG with Advice}

We note that \Cref{lem:prgind} essentially implies a security bound for PRG with advice.
Let us start by reformalizing PRG security in our security game framework.

\begin{definition}[PRG Security Game]
    Let $H$ be a random oracle $[N] \to [M]$ where $N < M$. The security game $G_\prg = (C_\prg)$ is specified by three procedures $(\samp, \query, \ver)$, where: 
    \begin{enumerate}
        \item $\samp^H(b, x, y)$ takes a random bit $b \in \{0, 1\}$, an element $x \in [N]$ and an $y \in [M]$, and outputs $H(x)$ if $b = 0$, and $y$ otherwise.
        \item $\query^H((b, x, y), x') = H(x')$ simply queries $H$ directly.
        \item $\ver^H((b, x, y), b')$ outputs $1$ if and only if $b = b'$. 
    \end{enumerate}
\end{definition}
To fill in the rest of the details from \Cref{def:security-game}, the security game defined here is that given a random oracle, the adversary is asked to distinguish a random $y \in [M]$ from a random image $H(x)$.
The adversary is allowed to make oracle queries to $H$ and it wins if and only if he distinguishes the two distributions correctly.
This is exactly the security of PRG in QROM.

Using \Cref{lem:prgind}, we have the following lemma:
\begin{lemma} \label{lem:prgmis}
  $\Gmi{g}_\prg$ is $\delta(g, T) = \frac 1 2 + O\left(\sqrt{\frac{gT+T^2}N}\right)$-secure in the QROM.
\end{lemma}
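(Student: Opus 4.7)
The plan is to combine two ingredients already developed in the paper: the per-round reduction of \Cref{lem:conditioning2mis} and the indistinguishability bound of \Cref{lem:prgind}. Unlike the $\owf$ case, there is no need to detour through a uniform-image variant or to invoke the compressed oracle machinery, since the statement of \Cref{lem:prgind} is already tailored to the exact distinguishing task that the $\prg$ game poses to the adversary: given the oracle, decide whether the received challenge is $H(x)$ for a fresh random $x$ or a fresh random $y$.

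Concretely, I would consider an arbitrary $(g, T)$ adversary against $\Gmi{g}_{\prg}$. Because the $\prg$ verifier $\ver^H$ makes no oracle queries, at the start of round $i$ the adversary has made exactly $(i-1)T$ quantum queries to $H$, and the prior challenger randomness together with the decision bits $B_1 = b_1, \dots, B_{i-1} = b_{i-1}$ can be folded into the intermediate measurement outcomes $\vuy_{[p]}$ allowed by \Cref{lem:prgind} (all of which complete before the $((i-1)T)$-th query). Writing $p_c := \Pr[B'_i = 1 \mid B_i = c,\ \mathrm{history}]$ for $c \in \{0,1\}$, where $B'_i$ is the adversary's guess bit and $B_i$ the challenge bit, the per-round winning probability conditioned on a fixed positive-probability history equals $\frac{1}{2} + \frac{1}{2}(p_1 - p_0)$ because $B_i$ is uniformly random.

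Applying \Cref{lem:prgind} to the rest-of-round subroutine with $q = (i-1)T$ queries before the challenge and $q' = T$ queries after yields $|\sqrt{p_1} - \sqrt{p_0}| \le 2(\sqrt{(i-1)T} + T)/\sqrt{N}$. The elementary bound $|p_1 - p_0| \le (\sqrt{p_1} + \sqrt{p_0}) \cdot |\sqrt{p_1} - \sqrt{p_0}| \le 2 |\sqrt{p_1} - \sqrt{p_0}|$ then converts this into a distinguishing-advantage bound $O(\sqrt{(iT + T^2)/N}) \le O(\sqrt{(gT + T^2)/N})$, valid uniformly over all histories of positive probability. Hence the per-round winning probability is at most $1/2 + O(\sqrt{(gT + T^2)/N})$, and \Cref{lem:conditioning2mis} immediately raises this to the multi-instance bound $(1/2 + O(\sqrt{(gT + T^2)/N}))^{g}$, proving the claim. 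I do not anticipate a serious obstacle here; the only care point is to verify that the challenger's act of returning the post-measurement state $\ket{\ans'_i}$ to the adversary does not introduce extra queries to $H$ or violate the ``intermediate measurements before the $q$-th query'' hypothesis of \Cref{lem:prgind}, and both conditions hold trivially for a decision game whose verifier's projector acts only on the classical answer register.
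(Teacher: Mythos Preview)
Your proposal is correct and follows essentially the same route as the paper: apply \Cref{lem:prgind} per round, convert the square-root bound to a linear one via $|p_1 - p_0| \le (\sqrt{p_1}+\sqrt{p_0})\,|\sqrt{p_1} - \sqrt{p_0}| \le 2|\sqrt{p_1} - \sqrt{p_0}|$, and then invoke \Cref{lem:conditioning2mis}. One minor quibble: the distinguisher $\Ds$ you feed to \Cref{lem:prgind} must also simulate the \emph{challenger's} sampling in the previous rounds, and $\samp^H$ issues one oracle query (to compute $H(x_j)$ when $b_j = 0$), so the correct pre-challenge query count is $q = (i-1)(T+1)$ rather than $(i-1)T$; the paper uses this count, but the difference is of course absorbed in the $O(\cdot)$.
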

\begin{proof}
    Consider in round $i$, conditioned on random outcomes in the previous rounds $(b_j, x_j, y_j, b'_j)_{j < i}$, which we simply denote as event $E$.
    Let $p_{b, b'}$ be the probability that challenge is $b$ and the algorithm outputs $b'$.
    By \Cref{lem:prgind} and Cauchy-Schwarz, for any $b'$ we have that
    \begin{align*}
        |p_{0, b'} - p_{1, b'}|
          &\le |\sqrt{p_{0, b'}} - \sqrt{p_{1, b'}}| \cdot |\sqrt{p_{0, b'}} + \sqrt{p_{1, b'}}| \\
          &\le 2 \frac{\sqrt{(i - 1)(T + 1)} + T}{\sqrt N} \cdot 2 \\
          &\le 4 \sqrt 2 \cdot \sqrt{\frac{g(T + 1) + T^2}N}.
    \end{align*}
    Therefore, $|\Pr[b = b' | E] - \Pr[b \neq b' | E]| \le |p_{0, 0} - p_{1, 0}| + |p_{0, 1} - p_{1, 1}| \le 8 \sqrt 2 \cdot \sqrt{\frac{g(T + 1) + T^2}N}$, and we conclude that $\Pr[b = b' | E] \le \frac 1 2 + 4 \sqrt 2 \cdot \sqrt{\frac{g(T + 1) + T^2}N}$.
\end{proof}

Consider the algorithm that queries $H(1)$, and outputs $0$ if $H(1) = \ch$ and $1$ otherwise, then it will succeed with probability $1/2 + \Omega(1/N)$.
Combining these with \Cref{thm:decision-game-classiacl-advice-reduction} (or \Cref{thm:decision-game-quantum-advice-reduction}), we obtain a security bound for PRG in the AI-QROM (or QAI-QROM, respectively).

\begin{theorem}\label{thm:prg}
    $G_\prg$ is $\frac 1 2 + \tilde O\left(\frac{ST+T^2}N\right)^{1/3}$-secure in the AI-QROM, and $\frac 1 2 + \tilde O\left(\frac{S^5 T+S^4 T^2}N\right)^{1/19}$-secure in the QAI-QROM, where $\tilde O$ absorbs $\poly(\log N, \log M, \log S)$ factors.
\end{theorem}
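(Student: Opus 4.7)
The plan is to derive both bounds directly from the multi-instance bound of \Cref{lem:prgmis} via the two generic reductions for decision games, \Cref{thm:decision-game-classiacl-advice-reduction} (AI-QROM) and \Cref{thm:decision-game-quantum-advice-reduction} (QAI-QROM). The only inputs I need besides \Cref{lem:prgmis} are (i) the trivial advantage $\eps_0 = \Omega(1/N)$ attained by the adversary described immediately before the theorem (query $H(1)$, output $0$ iff $H(1)=\ch$), which has parameters $S_0 = 0$, $T_0 = 1$, and (ii) the bound $\eps(g,T) = O\!\left(\sqrt{(gT+T^2)/N}\right)$ from \Cref{lem:prgmis}. So the routine work is just algebraic manipulation to invert the implicit inequality on $\eps'$.

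For the AI-QROM case, I would invoke \Cref{thm:decision-game-classiacl-advice-reduction} with the above $\eps_0,S_0,T_0$ to obtain
\[
\eps' \;\le\; 4 \cdot \eps\!\left(g,\, T+1\right), \qquad g \,=\, \tfrac{10\ln 2}{\eps'}\bigl(S + \log(1/\eps_0) + 3\bigr) \,=\, \tilde O(S)/\eps',
\]
where the $\tilde O$ hides $\polylog(N)$ factors. Substituting the multi-instance bound yields
\[
\eps'^2 \;\le\; \tilde O\!\left(\tfrac{gT + T^2}{N}\right) \;=\; \tilde O\!\left(\tfrac{ST/\eps' + T^2}{N}\right),
\]
so multiplying through by $\eps'$ gives $\eps'^3 \le \tilde O\!\left(\tfrac{ST + \eps' T^2}{N}\right) \le \tilde O\!\left(\tfrac{ST+T^2}{N}\right)$, which after taking cube roots is exactly the claimed AI-QROM bound. (If $\eps' > 1$ the bound is trivial, so we may assume $\eps' \le 1$ throughout.)

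For the QAI-QROM case, I plug the same $\eps_0,S_0,T_0$ into \Cref{thm:decision-game-quantum-advice-reduction}, obtaining $\eps' \le 8 \cdot \eps(g, T')$ with $g = \tilde O(S^3)/\eps'^9$ and $T' = \tilde O(TS^2)/\eps'^8$. Applying \Cref{lem:prgmis} again,
\[
\eps'^2 \;\le\; \tilde O\!\left(\tfrac{gT' + T'^2}{N}\right) \;=\; \tilde O\!\left(\tfrac{S^5 T/\eps'^{17} + S^4 T^2/\eps'^{16}}{N}\right).
\]
Clearing denominators gives $\eps'^{19} \le \tilde O\!\left(\tfrac{S^5 T + \eps'\, S^4 T^2}{N}\right) \le \tilde O\!\left(\tfrac{S^5 T + S^4 T^2}{N}\right)$, and the $19$th root yields the claimed bound.

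The only non-routine step is confirming the trivial adversary really gives $\eps_0 = \Omega(1/N)$ that is valid as an input to both reduction theorems; this is straightforward since on the PRG challenge $\ch$, when $b=0$ (so $\ch = H(x)$ for random $x\in[N]$) the event $H(1)=\ch$ occurs with probability $1/N$, while when $b=1$ (so $\ch$ is uniform in $[M]\supseteq [N]$) it occurs with probability $1/M \le 1/N$, giving a noticeable $\Omega(1/N)$ bias. No further combinatorial or quantum arguments are needed beyond the black-box use of \Cref{lem:prgmis} and the two reduction theorems, so I do not anticipate any substantial obstacle.
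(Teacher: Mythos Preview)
Your proposal is correct and follows exactly the approach the paper takes: it cites the trivial one-query distinguisher to get $\eps_0 = \Omega(1/N)$ and then combines \Cref{lem:prgmis} with \Cref{thm:decision-game-classiacl-advice-reduction} and \Cref{thm:decision-game-quantum-advice-reduction}. Your explicit algebra (clearing $\eps'$ from the denominator to get exponents $1/3$ and $1/19$) simply spells out what the paper leaves implicit; the only minor imprecision is your computation of $\Pr[H(1)=\ch \mid b=0]$ (it is $1/N + (1-1/N)/M$ rather than exactly $1/N$), but this does not affect the $\Omega(1/N)$ conclusion.
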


\section{Yao's Box}
\label{sec:yaobox}

In this section, we show the security of Yao's box problem in the AI-QROM and QAI-QROM.
Let us start with reformalizing Yao's box problem using our language of security games.

\begin{definition}[Yao's box security game]
    Let $H$ be a random oracle $[N] \to [2]$. The security game $G_\yb = (C_\yb)$ is specified by three procedures $(\samp, \query, \ver)$, where: 
    \begin{enumerate}
        \item $\samp^H(x) = x$ takes some randomness $x \in [N]$, and outputs $x$ as the challenge pre-image.
        \item $\query^H(x, \cdot)$: for all input $x' \ne x$, the output is $H(x')$; $\query^H(x, x) = 1$. 
        \item $\ver^H(x, b)$ outputs $1$ if and only if $H(x) = b$. 
    \end{enumerate}
\end{definition}
    
To fill in the rest of the details from \Cref{def:security-game}, given a random oracle with binary range, the challenger samples a random input $x$ as the challenge. The adversary is allowed to make oracle queries to $H$ except on input $x$ and it wins if and only if it finds $H(x)$ without making any queries on $x$ (or equivalently, even if he makes query to $x$, he will just get a constant 1).  

We are going to prove the following lemma, the multi-instance security of $\Gmi{g}_\yb$.
\begin{lemma} \label{lem:ybmis}
    $\Gmi{g}_\yb$ is $\delta(g, T) = 1/2 + O\left(\sqrt{g T / N}\right)$-secure in the QROM.
\end{lemma}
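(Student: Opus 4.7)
The plan is to apply \Cref{lem:conditioning2mis}: it suffices to show that for every round $i \in [g]$ and every past-outcome conditioning with non-zero probability, the conditional winning probability of round $i$ is at most $1/2 + O(\sqrt{gT/N})$. Fix such a conditioning. I simulate $H$ by a compressed standard oracle via \Cref{cor:csto-simulate}, and implement each round-$i$ adversarial query to the blocked oracle $\query^H(x_i, \cdot)$ (which equals $H$ except that it returns $1$ at $x_i$) as follows: in a fresh ancilla, compute $c = [x' = x_i]$ from the query register $x'$ and the classical challenge $x_i$; controlled on $c = 1$, add $1$ to the output register; controlled on $c = 0$, apply $\csto$ at $(x', u)$; then uncompute $c$. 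Because $\csto$ is only invoked on the $x' \ne x_i$ subspace, the database entry $D(x_i)$ is untouched by all $T$ round-$i$ queries.

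Let $\ket{\psi_0}$ be the joint adversary-oracle state at the start of round $i$. Each preceding round makes at most $T + O(1)$ calls to $\csto$ (the $T$ adversary queries plus $O(1)$ to implement the verification measurement of $H(x_j) = b_j$), so by \Cref{lem:bounded-database} the state is supported on databases with $|D| = O(gT)$. Write $\ket{\psi_0} = \sum_{\aux, D} \beta_{\aux, D} \ket{\aux}\ket D$ and let $P_{x_i} := \sum_{D: D(x_i) \ne \bot} \ket{D}\!\bra{D}$. Since the round-$i$ unitary commutes with $P_{x_i}$ (as it never modifies $D(x_i)$), averaging over $x_i \uniformgets [N]$:
\[
  \E_{x_i}\!\left[\|P_{x_i}\ket{\psi_i^{x_i}}\|^2\right] \;=\; \E_{x_i}\!\left[\|P_{x_i}\ket{\psi_0}\|^2\right] \;=\; \frac{1}{N}\sum_{\aux,D}|\beta_{\aux,D}|^2\cdot|D| \;=\; O\!\left(\tfrac{gT}{N}\right),
\]
where $\ket{\psi_i^{x_i}}$ denotes the state after round $i$'s queries for challenge $x_i$.

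I finish by applying \Cref{lem:zhandry-lemma5} with $k = 1$, $M = 2$, tuples $(x_i, \ans)$, and $R$ the set of all tuples. Letting $p = \Pr[H(x_i) = \ans]$ be the conditional winning probability and $p' = \Pr[D(x_i) = \ans]$ (which in particular requires $D(x_i) \ne \bot$) its compressed-oracle analogue, the lemma gives $\sqrt p \le \sqrt{p'} + \sqrt{1/2}$, and since $p' \le \E_{x_i}\|P_{x_i}\ket{\psi_i^{x_i}}\|^2 = O(gT/N)$, squaring yields
\[
  p \;\le\; p' + \sqrt{2p'} + 1/2 \;\le\; 1/2 + O\!\left(\sqrt{gT/N}\right).
\]
The main subtlety is step 1: a naive implementation that always queries $H$ at $x'$ and then corrects for $x' = x_i$ afterwards would still apply $\csto$ at $x_i$ in superposition, destroying the commutation of the round-$i$ unitary with $P_{x_i}$; the controlled-query structure is essential. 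The square-root rate (rather than the additive $O(gT/N)$ one might naively expect from $p'$ alone) is forced by the crude $\sqrt{k/M} = 1/\sqrt 2$ term in Zhandry's indistinguishability lemma, which reflects the adversary's unavoidable $1/2$ baseline from random guessing on a single bit.
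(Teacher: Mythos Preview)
Your proof is correct and follows essentially the same approach as the paper: both simulate $H$ with the compressed standard oracle, observe that the blocked-query implementation never touches $D(x_i)$ (your commutation of $U^{x_i}$ with $P_{x_i}$ is exactly the paper's sub-lemma $p^{(0)}_{x_i}=p^{(1)}_{x_i}$), bound $\E_{x_i}\|P_{x_i}\ket{\psi_0}\|^2$ by the database-size bound from \Cref{lem:bounded-database}, and finish with \Cref{lem:zhandry-lemma5} at $k=1$, $M=2$. The only cosmetic difference is that you apply Zhandry's lemma once to the algorithm that internalizes the random challenge $x_i$, whereas the paper applies it per $x_i$ and then averages via Cauchy--Schwarz; both routes give the same $1/2 + O(\sqrt{gT/N})$ bound.
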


Combining this lemma together with \Cref{thm:decision-game-classiacl-advice-reduction} (or \Cref{thm:decision-game-quantum-advice-reduction}), we can show the security of $G_\yb$ in the AI-QROM (or QAI-QROM, respectively).
\begin{theorem} \label{thm:ybaiqrom}
    $G_\yb$ is $\delta(S, T) = 1/2 + \tilde{O}\left( \left(S T / N\right)^{1/3} \right)$-secure in the AI-QROM. 
\end{theorem}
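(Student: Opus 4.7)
The plan is to derive Theorem~\ref{thm:ybaiqrom} as a direct instantiation of the generic reduction \Cref{thm:decision-game-classiacl-advice-reduction} applied to the multi-instance bound \Cref{lem:ybmis}, analogous to how \Cref{thm:owfaiqrom} was deduced from \Cref{lem:owfmis}. The key is to fix the parameters $\eps_0, S_0, T_0$ witnessing a trivial nontrivial-advantage algorithm, and then solve for $\eps'$.

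First I will identify a simple baseline adversary for $G_\yb$: with $S_0 = 1$ classical bit of advice and $T_0 = 0$ queries, store the bit $H(1)$ during preprocessing; on challenge $x$, output the stored bit if $x = 1$ and a uniform random bit otherwise. The success probability is $\frac{1}{N}\cdot 1 + \frac{N-1}{N}\cdot \frac{1}{2} = \frac{1}{2} + \frac{1}{2N}$, so we may take $\eps_0 \ge \frac{1}{2N}$. (This is exactly the $S_0, T_0, \eps_0$ baseline the remark after \Cref{thm:decision-game-classiacl-advice-reduction} uses.)

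Next I will apply \Cref{thm:decision-game-classiacl-advice-reduction} with $\eps(g, T) = O(\sqrt{gT/N})$ given by \Cref{lem:ybmis}. The theorem yields
\[
\eps' \le 4 \cdot \eps(g, T + T_0) \le O\!\left(\sqrt{\frac{gT}{N}}\right), \quad g = \frac{10\ln 2}{\eps'}\cdot (S + 1 + \log(2N) + 2) = O\!\left(\frac{S + \log N}{\eps'}\right).
\]
Substituting $g$ gives
\[
\eps' \le O\!\left(\sqrt{\frac{(S + \log N)\, T}{\eps'\, N}}\right),
\]
which, upon cubing and rearranging, becomes
\[
\eps'^{3} \le O\!\left(\frac{(S+\log N)\,T}{N}\right),
\]
so $\eps' \le \tilde{O}\!\left((ST/N)^{1/3}\right)$, and therefore $\delta(S,T) = 1/2 + \eps' \le 1/2 + \tilde{O}((ST/N)^{1/3})$, as claimed.

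No step here is a real obstacle, since all the heavy lifting is in \Cref{lem:ybmis} and \Cref{thm:decision-game-classiacl-advice-reduction}; the only care needed is making sure the baseline adversary is legitimate in the AI-QROM (it is, since it is a fully classical $(S_0,T_0)$-algorithm with classical advice) and that the self-referential bound on $\eps'$ is solved correctly for the right exponent. The polylogarithmic $\tilde O$ factor absorbs the $\log N$ term arising from $\log(1/\eps_0)$ in the choice of $g$.
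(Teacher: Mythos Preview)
Your proposal is correct and matches the paper's proof essentially line for line: the same baseline adversary with $S_0=1$, $T_0=0$, $\eps_0=1/(2N)$, the same application of \Cref{thm:decision-game-classiacl-advice-reduction} to the multi-instance bound $\eps(g,T)=O(\sqrt{gT/N})$ from \Cref{lem:ybmis}, and the same self-referential inequality solved to obtain the $1/3$ exponent.
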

\begin{proof}
    Let a trivial algorithm with $0$ queries and $1$ bit of advice do the following: store $H(1)$ as its advice; answer $H(1)$ if the challenge is $1$ and answer a random bit otherwise. The advantage of this algorithm is $1/2 + 1/(2 N)$ because with probability $1 - 1/N$, it succeeds with probability $1/2$; with probability $1/N$, it succeeds with probability $1$. 

    Therefore, in \Cref{thm:decision-game-classiacl-advice-reduction}, we have $\eps_0 = 1/(2 N), S_0 = 1, T_0 = 0$ and $\Gmi{g}_\yb$ is $1/2+\eps(g, T) = 1/2+\tilde{O}(\sqrt{g T / N})$-secure. 
    We have $G_\yb$ is $1/2 + \eps'(S, T)$-secure where, 
    \begin{align*}
        \eps'(S, T) &\leq 4 \cdot \eps\left(\frac{10 \ln 2}{\eps'(S, T)} (S + S_0 + \log(1/\eps_0) + 2, T + T_0)  \right) \\
        & = \tilde{O}\left( \sqrt{ \frac{S T}{\eps'(S, T) \, N}  } \right).
    \end{align*}
    We conclude that $\eps'(S, T) = \tilde{O}\left( \left(S T / N\right)^{1/3} \right)$.
\end{proof}

\begin{theorem} \label{thm:ybqaiqrom}
    $G_\yb$ is $\delta(S, T) = 1/2 + \tilde{O}\left( \left(S^5 T / N\right)^{1/19} \right)$-secure in the QAI-QROM. 
\end{theorem}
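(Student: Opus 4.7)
The plan is to mirror the structure of the AI-QROM proof in \Cref{thm:ybaiqrom}, but instead invoke the quantum-advice reduction \Cref{thm:decision-game-quantum-advice-reduction}, since Yao's box is a decision game. The inputs to the reduction are (i) the multi-instance advantage bound $\eps(g, T) = O(\sqrt{gT/N})$ from \Cref{lem:ybmis}, and (ii) a trivial non-negligible advantage baseline: the same one-bit-advice algorithm used in \Cref{thm:ybaiqrom} (store $H(1)$, answer it on challenge $1$ and guess otherwise) gives $\eps_0 \ge 1/(2N)$ with $S_0 = 1$ and $T_0 = 0$.

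Plugging these into \Cref{thm:decision-game-quantum-advice-reduction}, and writing $\eps' = \eps'(S,T)$ for brevity, I would set $g = \tilde O((S+S_0)^3)/\eps'^9 = \tilde O(S^3)/\eps'^9$ and $T' = \tilde O((T+T_0)(S+S_0)^2)/\eps'^8 = \tilde O(S^2 T)/\eps'^8$, so that
\[
\eps' \;\le\; 8 \cdot \eps(g, T') \;=\; \tilde O\!\left(\sqrt{\frac{g\,T'}{N}}\right) \;=\; \tilde O\!\left(\sqrt{\frac{S^5 T}{\eps'^{17}\, N}}\right),
\]
where the $\tilde O$ absorbs $\poly(\log N, \log S)$ factors coming from the reduction. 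Squaring and rearranging gives $\eps'^{19} \le \tilde O(S^5 T / N)$, hence $\eps' \le \tilde O\bigl((S^5 T/N)^{1/19}\bigr)$, which is exactly the claimed bound.

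The only thing to double-check is that the hypotheses of \Cref{thm:decision-game-quantum-advice-reduction} are met: Yao's box is a decision game by definition, and $\eps_0 = 1/(2N) = 1/\poly(N)$ is non-negligible as required. Apart from this bookkeeping, the proof is essentially a one-line algebraic computation once the multi-instance lemma and the generic reduction are in hand, so I do not anticipate any real obstacle; the work was done in \Cref{lem:ybmis} and in the reduction theorem itself.
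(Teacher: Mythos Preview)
Your proposal is correct and matches the paper's proof essentially line for line: the paper also uses the trivial one-bit-advice baseline ($S_0=1$, $T_0=0$, $\eps_0=1/(2N)$), invokes \Cref{thm:decision-game-quantum-advice-reduction} with the multi-instance advantage $\eps(g,T)=\tilde O(\sqrt{gT/N})$ from \Cref{lem:ybmis}, and solves the resulting inequality $\eps' \le \tilde O\bigl(\sqrt{S^5 T/(\eps'^{17}N)}\bigr)$ to obtain $\eps' = \tilde O((S^5T/N)^{1/19})$.
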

\begin{proof}
    Let a trivial algorithm with $0$ queries and $1$ bit of advice do the following: store $H(1)$ as its advice; answer $H(1)$ if the challenge is $1$ and answer a random bit otherwise. The advantage of this algorithm is $1/2 + 1/(2 N)$ because with probability $1 - 1/N$, it succeeds with probability $1/2$; with probability $1/N$, it succeeds with probability $1$. 
    
    In \Cref{thm:decision-game-quantum-advice-reduction}, we have $S_0 = 1, T_0 = 0, \eps_0 = 1/(2 N)$ and $\eps(g, T) = \tilde{O}(\sqrt{g T / N})$. We have $G_\yb$ is $1/2 + \eps'(S, T)$-secure where, 
    \begin{align*}
        \eps'(S, T) &\leq 8 \cdot \eps\left( \tilde{O}\left(S + S_0 \right)^3 / \eps'^9, \tilde{O}\left((T + T_0)(S + S_0)^2 \right) / \eps'^8 \right) \\
        &= \tilde{O}\left( \sqrt{\frac{S^5 T}{\eps'(S, T)^{17} \cdot N}} \right).
    \end{align*}
     We conclude that $\eps'(S, T) = \tilde{O}\left( \left(S^5 T / N\right)^{1/19} \right)$.
\end{proof}

Now the problem reduces to proving its multi-instance security.

\begin{proof}[Proof of \Cref{lem:ybmis}]
    Let $\As$ be any $(g, T)$ adversary in the QROM.
    We again simulate the random oracle $H$ with a compressed standard oracle, which by \Cref{cor:csto-simulate} perfectly perserves the functionality of $\As \Longleftrightarrow \Cmi g_\yb$.
    We will also implement the projective measurement by directly measuring $\As$'s output, as the output in Yao's box is binary.

    Assume after passing the first $i-1$ challenges, conditioned on the challenges are $x_1, \cdots, x_{i-1}$ and the algorithm passes all the challenges, the overall state of the algorithm and compressed oracle is $|\phi_0\rangle = \sum_{z, D} \alpha_{z, D} |z\rangle |D\rangle$.  By \Cref{lem:bounded-database}, we know every possible $D$ with non-zero weight is of size at most $(i - 1)(T + 1) < g (T+1)$, which is the total number of queries to $\csto$. 
    
    Fixing an $x_i$ (which is chosen with probability $1/N$), let $p^{(t)}_{x_i}$ be the probability that after $T$ queries, measuring the database register, it gives a database containing $x_i$. In other words, define 
    \begin{align*} 
        P_{x_i} = \sum_{\substack{z\\ D: D(x_i) \ne \bot}} |z, D\rangle \langle z, D|.
    \end{align*}
    Then $p^{(0)}_{x_i} = |P_{x_i} |\phi_0\rangle|^2$. 
    Similar to \eqref{eq:unimportant-baddatabase}, we have, 
    \begin{align*}
        \sum_{x_i=1}^{N} p^{(0)}_{x_i} \leq g (T+1).
    \end{align*}
    
    As defined by $\query^H(x_i,\cdot)$, each query made by $\As$ will be sent to the challenger first and the challenger will apply the oracle query only on inputs that are not $x_i$. Therefore, assume before the first query the state is $|\phi_0\rangle = \sum_{x, u, \aux, D} \alpha_{x, u, \aux, D} |x, u, \aux\rangle |D\rangle$. After the first query, it becomes
    \begin{align*}
        |\phi_1\rangle := U_H \, \sum_{\substack{x \ne x_i, u, \aux \\ D}} \alpha_{x, u, \aux, D} |x, u, \aux\rangle |D \rangle
        + \sum_{\substack{u, \aux \\ D}} \alpha_{x_i, u, \aux, D} |x_i, u, \aux\rangle |D\rangle. 
    \end{align*}
    where $U_H$ is defined as $\csto$ applied on $x, u$ and $D$ register. 
    Recall that $p^{(0)}_{x_i} = |P_{x_i} |\phi_0\rangle|^2$ and $p^{(1)}_{x_i} = |P_{x_i} |\phi_1\rangle|^2$. We want to show the following lemma. 
    \begin{lemma}
        Let $p^{(0)}_{x_i}$ and $p^{(1)}_{x_i}$ be the probability defined above. Then $p^{(0)}_{x_i} = p^{(1)}_{x_i}$. 
    \end{lemma}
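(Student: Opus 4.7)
The key observation is that in Yao's box, the oracle $\query^H(x_i, \cdot)$ given to the adversary acts as a constant on input $x_i$ and otherwise matches $H$, so when simulated through the compressed standard oracle $\csto$, a query on the challenge point $x_i$ never invokes the $\stddecomp \circ \csto' \circ \stddecomp$ machinery on the $x_i$-th entry of the database $D$. Consequently the entry $D(x_i)$ can only become populated (i.e.\ leave $\bot$) through an actual $\csto$-query on an input $x \ne x_i$, and those queries only touch $D$ at positions other than $x_i$.

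The plan is to make this rigorous by decomposing $|\phi_1\rangle$ as $|\phi_1^{\ne}\rangle + |\phi_1^{=}\rangle$, where $|\phi_1^{\ne}\rangle = U_H \sum_{x \ne x_i, u, \aux, D} \alpha_{x,u,\aux,D} |x,u,\aux\rangle |D\rangle$ and $|\phi_1^{=}\rangle = \sum_{u,\aux,D} \alpha_{x_i,u,\aux,D} |x_i,u,\aux\rangle |D\rangle$. Since $U_H$ preserves the input register $|x\rangle$, the two summands remain supported on disjoint values of $x$ and are therefore orthogonal, so $|P_{x_i}|\phi_1\rangle|^2 = |P_{x_i}|\phi_1^{\ne}\rangle|^2 + |P_{x_i}|\phi_1^{=}\rangle|^2$.

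The main step is to show that $P_{x_i}$ commutes with $U_H$ on the subspace where the input register holds $x \ne x_i$. By definition $\csto$ applies $\stddecomp_x$, $\csto'_x$, $\stddecomp_x$, all of which act only on the $x$-th coordinate of the database; when $x \ne x_i$ this leaves the coordinate $D(x_i)$ untouched, and $P_{x_i}$ depends only on whether $D(x_i) \ne \bot$. Hence $P_{x_i} U_H = U_H P_{x_i}$ on this subspace, and $|P_{x_i}|\phi_1^{\ne}\rangle|^2$ equals the corresponding squared norm before applying $U_H$. Combined with the trivial identity for the $x=x_i$ branch, this yields $|P_{x_i}|\phi_1\rangle|^2 = |P_{x_i}|\phi_0\rangle|^2$, i.e.\ $p^{(1)}_{x_i} = p^{(0)}_{x_i}$.

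The only subtlety will be to make the commutation argument precise in the presence of the $\stddecomp$ factors, since $\stddecomp_x$ is defined by its action on a somewhat unusual basis; but because $\stddecomp_x$ is by construction the identity on every coordinate of $D$ other than the $x$-th, this reduces to checking that tensoring with identity on the $x_i$-th register commutes with a projector that depends only on that same register, which is immediate. I do not expect any further obstacle.
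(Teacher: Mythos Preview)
Your proposal is correct and follows essentially the same approach as the paper: the paper also splits $|\phi_1\rangle$ into the $x \ne x_i$ and $x = x_i$ pieces, uses orthogonality of these pieces (since $U_H$ preserves the input register) to separate the squared norms, and then invokes exactly your observation that $\csto$ on input $x \ne x_i$ acts only on the $x$-th database coordinate and hence leaves $D(x_i)$ unchanged --- which is precisely the commutation $P_{x_i} U_H = U_H P_{x_i}$ on that subspace. The paper phrases this as a direct calculation rather than as a commutation statement, but the content is identical.
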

    \begin{proof}
    This can be shown by a direct calculation. 
    \begin{align*}
            p^{(1)}_{x_i} & = \left| P_{x_i} |\phi_1\rangle \right|^2 \\
            & = \left| \sum_{\substack{x \ne x_i, u, \aux \\ D}} \alpha_{x, u, \aux, D}\, P_{x_i} U_H |x, u, \aux\rangle |D\rangle  + \sum_{\substack{u, \aux \\ D}} \alpha_{x_i, u, \aux, D} P_{x_i}\, |x_i, u, \aux\rangle |D\rangle  \right|^2 \\
            & = \left| \sum_{\substack{x \ne x_i, u, \aux \\ D}} \alpha_{x, u, \aux, D}\, P_{x_i} U_H |x, u, \aux\rangle |D\rangle \right|^2 + \left|\sum_{\substack{u, \aux \\ D}} \alpha_{x_i, u, \aux, D} P_{x_i}\, |x_i, u, \aux\rangle |D\rangle  \right|^2 \\
            & \overset{(*)}{=}   \sum_{\substack{x \ne x_i, u, \aux, D \\ D(x_i) \neq \bot}}|\alpha_{x, u, \aux, D}|^2 +  \sum_{\substack{ u, \aux, D \\ D(x_i) \neq \bot}}|\alpha_{x_i, u, \aux, D}|^2 \\
            & = \sum_{\substack{x, u, \aux, D \\ D(x_i) \ne \bot}} |\alpha_{x, u, \aux, D}|^2  \,=\,  p^{(0)}_{x_i}.
    \end{align*}
    The equation (*) comes from the fact that if $U_H$ applies on $|x, u, D\rangle$ such that $x \ne x_i$, it does not change $D(x_i)$; in other words, if $D(x_i) =\bot$, then after applying $U_H$ it is still $\bot$; if $D(x_i) = v$, then after applying $U_H$ it is still $v$. 
    \end{proof}
    Let $p^{(t)} := |P_{x_i} \ket{\phi_i}|$, where $\ket{\phi_i}$ is the overall state after $i$-th query.
    As we do not assume any structure of $\ket{\phi_0}$, and the algorithm's local computation does not impact the projector value, we can use this lemma $t$ times and conclude that $p^{(t)}_{x_i} = p^{(0)}_{x_i}$ for all $t$.
    
    By \Cref{lem:zhandry-lemma5}, let the probability that $\As$ wins the game in the $i$-th round is $p_{x_i}$, after $T$ queries we have, $\sqrt{p_{x_i}}  \leq \sqrt{p^{(T)}_{x_i}} + \sqrt{1/M} = \sqrt{p^{(T)}_{x_i}} + \sqrt{1/2} = \sqrt{p^{(0)}_{x_i}} + \sqrt{1/2}$. Therefore, the overall probability is, 
    \begin{align*}
        \frac{1}{N} \sum_{x_i=1}^N p_{x_i} &= \frac{1}{N} \sum_{x_i=1}^N \left(\sqrt{p^{(0)}_{x_i}} + \sqrt{1/2}\right)^2  \\
        &= \frac{1}{2} + \sum_{x_i = 1}^N p^{(0)}_{x_i} / N + \sqrt{2} \cdot  \sum_{x_i = 1}^N \sqrt{p^{(0)}_{x_i}} / N  \\
        &\leq \frac{1}{2} + (g (T+1)) / N + \sqrt{2 g (T+1) / N} \\
        &\leq \frac{1}{2} + (1 + \sqrt{2}) \sqrt{g (T+1) / N}.
    \end{align*}
    The above inequalities come from $\sum_{x_i=1}^N p^{(0)}_{x_i} \leq g (T+1)$, Cauchy-Schwarz, and assuming $g (T+1) / N \le 1$ (which implies $g (T+1) / N \leq \sqrt{g (T+1) / N}$).
    On the other hand, if $g(T+1)/N > 1$, the bound above trivially holds.
    
    Let $B_i$ be the random variable for $b_i$, indicating if $\As$ passes the $i$-th round of the multi-instance game. Let $X_i$ be the random variable for the $i$-th challenge. The above statement says for any $\vx_{[i-1]}$, 
    \begin{align*}
        \Pr\left[ B_i = 1 \,|\, \vub_{[i-1]} = \mathbf{1}^{i-1} ,  \vux_{[i-1]} = \vx_{[i-1]}\right] \leq 1/2 + (1 + \sqrt{2}) \sqrt{g (T+1) / N}
    \end{align*}
    
    It implies $\Pr[B_1 = 1, B_2 = 1, \cdots, B_g = 1] \leq (1/2 + (1 + \sqrt{2}) \sqrt{g (T+1) / N})^g$. 
    Therefore, for any $(g, T)$ algorithm $\As$ in the QROM, the succeeding probability is at most $(1/2+O(\sqrt{g T / N})^g$. 
\end{proof}

\section{Salting Defeats Preprocessing}
\label{sec:salting}

In this section, we show generic salting defeats quantum pre-processing with classical/quantum advice. There exist schemes that are secure in the QROM but not in the AI-QROM. One example is collision-resistant hash function. For a random oracle $H: [N] \to [M]$ with large enough $N$, finding a collision without any pre-processing is hard \cite{aaronson2004quantum, liu2019finding}, which requires $T = \Omega(M^{1/3})$ queries. However, in the AI-QROM, as long as $S \geq 2 \cdot \log N$, the adversary can find a pair of collision in the pre-processing procedure and store it as the advice/leakage; in the online phase, it simply answers the pair. Therefore,  collision-resistance is broken in the AI-QROM.

\begin{definition}[Public-Salt Security Game]
  Given a security game $G = (C)$, where $C = (\samp, \allowbreak \query, \ver)$.
  Consider a ``salted'' oracle where we enlarge the input space: $H_S: [K] \times [N] \mapsto [M]$, for some parameter $K$.
  The salted security game $G_S = (C_S)$ defined on $H_S$, where $C_S = (\samp_S, \query_S, \ver_S)$ is specified as below:
  \begin{enumerate}
    \item $\samp_S^{H_S}(s, r)$: on input uniformly random salt $s \in [K]$ and $r \in R$, runs $\ch \gets \samp^{H_S(s, \cdot)}(r)$, returns $\ch_S := (s, \ch)$.
    \item $\query_S^{H_S}((s, r), (s', x))$: on input salt $s$ and challenger randomness $r$, query salt $s'$ and some input $x$, if $s = s'$, output $\query^{H_S(s, \cdot)}(r, x)$; otherwise, output $H_S(s', x)$.
    \item $\ver_S^{H_S}((s, r), \ans)$: output $\ver^{H_S(s, \cdot)}(r, \ans)$.
  \end{enumerate}
\end{definition}

\begin{lemma} \label{lem:saltingmis}
    For \emph{any} security game $G$ with security $\delta' = \delta'(T)$ in the QROM. For any integer $g$ and salt space $[K]$, the multi-instance game $\Gmi{g}_S$ is $\delta$-secure, where
    \begin{equation}
      \label{eq:saltingmis-general}
      \delta = \delta(g, T) = 2\delta' + \frac{g (T + 1)}K.
    \end{equation}
    $\Gmi{g}_S$ is also $\tilde\delta$-secure, where
    \begin{equation}
      \label{eq:saltingmis-decision}
      \tilde\delta = \tilde\delta(g, T) = \delta' + 3 \sqrt{\frac{g(T + 1)}K}.
    \end{equation}
\end{lemma}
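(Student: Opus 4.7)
The plan is to reduce the multi-instance salted bound to a per-round conditional bound via \Cref{lem:conditioning2mis}, and then establish that bound using Zhandry's compressed oracle decomposition on the salt register. By the conditioning lemma, it suffices to show that for every $i \in [g]$ and every outcome history $x$ of the first $i - 1$ rounds of $\Gmi g_S$ occurring with non-zero probability, $\Pr[B_i = 1 \mid X = x] \le \delta$ and also $\le \tilde\delta$; the multiplicative bounds $\delta^g$ and $\tilde\delta^g$ follow directly.

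First I would simulate $H_S$ by a compressed standard oracle via \Cref{cor:csto-simulate}. Over the first $i - 1$ rounds at most $(i-1)(T+1)$ oracle calls are issued in total (the $T$ adversary queries per round together with the verification query implemented by the challenger's measurement), so \Cref{lem:bounded-database} guarantees that the joint state $\ket\psi$ of the adversary and compressed database at the start of round $i$ is supported on databases $D$ with $|D| \le g(T+1)$. For each salt $s \in [K]$ let $\Pi_s$ denote the projector onto databases containing at least one entry at salt $s$, and set $p_s := \norm{\Pi_s \ket\psi}^2$. Since any database of size $|D|$ meets at most $|D|$ distinct salts, summing pointwise yields $\sum_s p_s \le g(T+1)$, hence $\E_{s_i \uniformgets [K]}[p_{s_i}] \le g(T+1)/K$.

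Fix $s_i = s$ and decompose $\ket\psi = \alpha_s \ket{\psi_s^\perp} + \beta_s \ket{\psi_s^\parallel}$, where the unit vector $\ket{\psi_s^\perp}$ is supported on databases with no salt-$s$ entry and $|\beta_s|^2 = p_s$. The key sub-claim is that, starting from $\ket{\psi_s^\perp}$, the adversary wins round $i$ conditioned on $s_i = s$ with probability at most $\delta'(T)$. The justification is a one-shot reduction: since the compressed database holds no information about $H_S(s, \cdot)$ in $\ket{\psi_s^\perp}$, this ``virtual'' oracle is a uniformly random function independent of $\ket{\psi_s^\perp}$, so a single-instance $G$-adversary $\Bs$ can take $\ket{\psi_s^\perp}$ as its starting state, internally handle all queries to $H_S(s', \cdot)$ with $s' \ne s$ via the compressed database, and forward the at-most-$T$ queries to $H_S(s, \cdot)$ to its own QROM oracle; by QROM security of $G$, $\Bs$ wins with probability at most $\delta'(T)$. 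Given the sub-claim, the amplitude triangle inequality $\sqrt{\Pr[B_i = 1 \mid s_i = s]} \le \sqrt{(1-p_s)\delta'} + \sqrt{p_s}$, combined with Cauchy--Schwarz $\E_s[\sqrt{p_s}] \le \sqrt{g(T+1)/K}$, delivers \eqref{eq:saltingmis-decision}; while $\norm{a + b}^2 \le 2\norm{a}^2 + 2\norm{b}^2$ applied to the same decomposition, followed by averaging over $s$, yields \eqref{eq:saltingmis-general} (up to absorbable constants from the exact per-round query count).

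The main obstacle will be making the sub-claim rigorous. Concretely, we must verify that even though $\ket{\psi_s^\perp}$ arises after $i-1$ rounds of history-dependent partial measurements (the verification projections of $\Cmi g_S$) and arbitrary compressed-oracle operations on other salts, the ``unopened'' salt-$s$ slice of the compressed database still behaves as a fresh, independent random function; only then does splicing in an external QROM oracle for salt $s$ give a perfect simulation of round $i$. This amounts to checking that \Cref{lem:bounded-database} remains in force under the intermediate projective measurements of prior rounds, and that the compressed-oracle update $\stddecomp_{(s',\cdot)}$ on salts $s' \ne s$ commutes with $\Pi_s$ so that the salt-$s$ register is untouched throughout the history.
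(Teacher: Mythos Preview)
Your outline matches the paper's argument closely: compressed-oracle simulation, per-round conditioning via \Cref{lem:conditioning2mis}, the salt-indexed projector decomposition, a reduction that grafts an external single-instance oracle onto the salt-$s$ slot when that slot is empty, and the two endgame inequalities ($\|a+b\|^2 \le 2\|a\|^2 + 2\|b\|^2$ for \eqref{eq:saltingmis-general}, expansion plus Cauchy--Schwarz on $\E_s[\sqrt{p_s}]$ for \eqref{eq:saltingmis-decision}).

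The genuine gap is in your database-size accounting. You claim that after $i-1$ rounds the compressed database satisfies $|D|\le (i-1)(T+1)$, counting the challenger's verification as a single query. But with a compressed oracle on $H_S:[K]\times[N]\to[M]$ directly, the projective measurement implementing $\ver^{H_S(s_j,\cdot)}(r_j,\cdot)$ in a prior round $j$ must evaluate $\ver$ in superposition over the database, and nothing in the definition of a security game bounds how many point values of $H$ that evaluation reads; so $|D|$ may be far larger than $(i-1)(T+1)$, and your invocation of \Cref{lem:bounded-database} does not give what you need. The paper handles this by a reformulation you did not include: it runs the compressed oracle not on $H_S$ but on the salt-indexed map $H_s:[K]\to\mathcal R_N$ sending each salt to the full randomness of a truth table, so that verifying a past round takes exactly one query to $H_s(s_j)$ (to read out all of $H_S(s_j,\cdot)$) followed by a purely local computation of $\ver$. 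With that change each round really is $T+1$ queries and \Cref{lem:bounded-database} applies verbatim. (A possible salvage of your direct approach is to track the number of \emph{distinct salts} appearing in $D$ rather than $|D|$, since verification in round $j$ only touches the fixed classical salt $s_j$; but that is a different argument from the one you wrote.) Once the reformulation is in place, the obstacle you flag at the end---freshness of the salt-$s$ slice---becomes essentially automatic, because the database now has a single register per salt and $D(s)=\bot$ is literally that register being untouched.
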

We usually want to use \eqref{eq:saltingmis-decision} for decision games as a trivial random guessing adversary already wins with probability $1/2$, and the first bound trivially holds.

Combining  the first part of  \Cref{lem:saltingmis} with \Cref{thm:reduction-classical-advice-general}, we have the following theorem. Roughly speaking, the theorem says if a security game $G$ is $\delta(T)$-secure in the QROM, then the salted game $G_S$ with salt space $[K]$ is $2(\delta(T) + S T / K)$-secure against $(S, T)$ adversary in the AI-QROM. In other words, as long as $K \gg S T$, pre-processing does not help break the salted game $G_S$.

\begin{theorem} \label{thm:generalsalting-aiqrom}
 Let $G$ be a security game with security $\delta'(T)$ in the QROM. 
 Let $G_S$ be the salted security game with parameter $K$. Let $\delta_0$ be the winning probability of an adversary that outputs a random answer in the game $G_S$ without advice or making any query. Then $G_S$'s security $\delta(S, T)$ against $(S, T)$-adversaries in the AI-QROM is 
 \begin{align*}
     \delta(S, T) \leq 8 \cdot \left( \frac{(S + \log (1/\delta_0) + 1) (T + 1)}{K} + \delta'(T)  \right)
 \end{align*}
\end{theorem}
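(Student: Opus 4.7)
The plan is to derive the stated bound by direct composition of the two main building blocks that precede the theorem in the paper: the multi-instance hardness of salted games (first bound of \Cref{lem:saltingmis}) and the generic reduction from AI-QROM security to multi-instance security (\Cref{thm:reduction-classical-advice-general}). No new analysis should be required; the content of the theorem is essentially to observe that these two ingredients fit together cleanly, and that picking the right $g$ yields the advertised closed-form expression.

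Concretely, I would first invoke \Cref{lem:saltingmis} on the salted game $G_S$ with salt space $[K]$ and the underlying game $G$, whose QROM security is $\delta'(T)$. This gives multi-instance security of $\Gmi{g}_S$ in the QROM with parameter
\[
  \delta_{\mathrm{mi}}(g,T) \;=\; 2\,\delta'(T) \;+\; \frac{g\,(T+1)}{K}
\]
for every positive integer $g$. Next, I would apply \Cref{thm:reduction-classical-advice-general} to $G_S$, using $\delta_{\mathrm{mi}}$ as the multi-instance bound and $\delta_0$ as the random-guess winning probability of $G_S$ (this is the same $\delta_0$ appearing in the statement, and for the reduction it is only used to pick the number of instances). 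The conclusion is that the AI-QROM security $\delta(S,T)$ of $G_S$ against $(S,T)$-adversaries satisfies
\[
  \delta(S,T) \;\le\; 4 \cdot \delta_{\mathrm{mi}}\bigl(S + \log(1/\delta_0) + 1,\; T\bigr).
\]

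Finally, I would substitute the expression for $\delta_{\mathrm{mi}}$ and simplify:
\[
  \delta(S,T) \;\le\; 4 \cdot \left( 2\,\delta'(T) + \frac{(S + \log(1/\delta_0) + 1)(T+1)}{K} \right),
\]
which is bounded by $8\bigl(\delta'(T) + (S+\log(1/\delta_0)+1)(T+1)/K\bigr)$, matching the statement.

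There is essentially no hard step here; the work has been done in the two cited results. The only minor point of care is making sure the quantities line up between the two statements: in particular, that the $T$ appearing in $\delta'(T)$ is the same per-round query budget $T$ used by each adversary-round in the multi-instance game $\Gmi{g}_S$ (which it is, by the definition of multi-instance games in \Cref{def:mis-game}), and that $\delta_0$ refers to the random-guess winning probability of $G_S$ rather than $G$ (so that the reduction's hypothesis on the trivial adversary's success is satisfied). Once these identifications are made, the rest is just arithmetic.
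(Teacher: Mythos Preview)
Your proposal is correct and follows exactly the same approach as the paper: invoke \Cref{lem:saltingmis} to get the multi-instance security of $\Gmi{g}_S$, then feed this into \Cref{thm:reduction-classical-advice-general} with $g = S + \log(1/\delta_0) + 1$, and simplify. The paper's proof is just as terse and uses the identical two ingredients in the same order.
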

\begin{proof}
    If $G$ has security $\delta'(T)$, then $\Gmi{g}_S$ is $\delta''(g, T) = 2(\delta'(T) + g (T + 1)/K)$-secure. By \Cref{thm:reduction-classical-advice-general}, $G_S$ is $\delta(S, T)$-secure in the AI-QROM, where 
    $\delta(S, T) \leq 4 \cdot \delta''(S + \log 1/\eps_0 + 1, T)$. 
\end{proof}

Combining the first part of \Cref{lem:saltingmis} with \Cref{thm:public-ver-reduction}, we have the following theorem about salted public-verifiable security games in the QAI-QROM. 
\begin{theorem}
 Let $G$ be a \varul{publicly-verifiable} security game with security $\delta'(T)$ in the QROM. Let $T_\ver = \poly(\log M, \log N)$ be the upper bound on the number of $\tilde{H}$ queries for computing $\ver^{\tilde{H}}(\ch, \cdot)$, i.e., the number of queries required to do the public verification. 
  Let $G_S$ be the salted security game with parameter $K$.
  Let $\delta_0 = 1/\poly(N, M)$ be a non-negligible lower bound on the winning probability of an adversary that outputs a random answer in the $G_S$ without advice or making any query. 
  Then $G_S$'s security $\delta = \delta(S, T)$ against $(S, T)$ adversaries in the QAI-QROM satisfying 
 \begin{align*}
     \delta \leq  \tilde{O}\left( \frac{S T}{K \delta} + \delta'\left(  \tilde O(T) /\delta \right)  \right),
 \end{align*}
 where $\tilde O$ absorbs $\poly(\log N, \log M, \log S)$ factors.
\end{theorem}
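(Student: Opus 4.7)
The plan is to combine two previously established results: \Cref{lem:saltingmis} (the first bound, \eqref{eq:saltingmis-general}) gives a multi-instance security bound for the salted game $\Gmi{g}_S$ in the QROM, and \Cref{thm:public-ver-reduction} converts multi-instance security into QAI-QROM security whenever the game is publicly-verifiable. The first step I would take is to verify that $G_S$ inherits public-verifiability from $G$ with essentially the same verification time. Unpacking the definitions, if $G$'s verifier can be written as $\ver^H(r,\cdot) = \widetilde\ver^{\tilde H}(\ch, \cdot)$ where $\tilde H(\cdot) = \query^H(r,\cdot)$ and $\widetilde\ver$ makes at most $T_\ver$ queries, then the salted verifier $\ver_S^{H_S}((s,r),\ans) = \ver^{H_S(s,\cdot)}(r,\ans)$ can likewise be written as $\widetilde\ver^{\tilde H_S}((s,\ch),\ans)$, where $\tilde H_S(\cdot) = \query_S^{H_S}((s,r),(s,\cdot))$ faithfully simulates the unsalted challenger oracle on salt $s$. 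Thus $G_S$ is publicly-verifiable with verification time $T_\ver = \poly(\log N, \log M)$.

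Next, I would invoke \Cref{lem:saltingmis} to conclude that $\Gmi{g}_S$ is $\delta_{\sf mi}(g,T)$-secure in the QROM, where
\[
    \delta_{\sf mi}(g,T) \;=\; 2\delta'(T) + \frac{g(T+1)}{K}.
\]
Since by hypothesis the trivial random-guess adversary for $G_S$ wins with some non-negligible probability $\delta_0 = 1/\poly(N,M)$, the preconditions of \Cref{thm:public-ver-reduction} are met, and applying it yields
\[
    \delta \;\le\; 8 \cdot \delta_{\sf mi}\!\left(\tilde O(S)/\delta,\; \tilde O(T + T_\ver)/\delta\right)
        \;=\; \tilde O\!\left(\delta'\!\left(\tilde O(T)/\delta\right) \,+\, \frac{S T}{K \delta}\right),
\]
after absorbing $T_\ver$ and the $\log$ factors from $\delta_0$, $S$ into $\tilde O$, and simplifying $(T+T_\ver+1)/\delta \cdot S/\delta \cdot 1/K$ as appropriate. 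The arithmetic in this last step is essentially identical in structure to the derivation the authors carried out just above for \Cref{thm:generalsalting-aiqrom}, only with the QAI-QROM reduction \Cref{thm:public-ver-reduction} replacing the AI-QROM reduction \Cref{thm:reduction-classical-advice-general}.

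The only genuinely substantive point, and the one I would flag as the main obstacle, is the public-verifiability check for $G_S$. The reason this deserves care is that in $G_S$ the adversary's oracle $\query_S$ behaves like $\query$ only on the correct salt and reveals the raw oracle $H_S$ on all other salts; one must ensure that $\widetilde\ver$, when run inside the salted game, never needs to query $\tilde H_S$ outside the relevant salt $s$. This follows directly from the fact that $\widetilde\ver$ in the unsalted game was defined to query only $\tilde H$, so within salt $s$ the two simulations coincide exactly. Everything else is routine accounting of parameters through the two black-box results.
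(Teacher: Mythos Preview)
Your proposal is correct and matches the paper's approach exactly: the paper simply states that this theorem follows by ``combining the first part of \Cref{lem:saltingmis} with \Cref{thm:public-ver-reduction}'' without spelling out the details, and you carry out precisely that combination. Your explicit check that $G_S$ inherits public-verifiability from $G$ (with the same $T_\ver$) is a detail the paper leaves implicit but which is indeed needed for \Cref{thm:public-ver-reduction} to apply.
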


Combining the second part of \Cref{lem:saltingmis} with  \Cref{thm:decision-game-classiacl-advice-reduction}, we have the following theorem about salted decision security games in the AI-QROM. Roughly speaking, the theorem says if a decision game $G$ is $1/2+\eps(T)$-secure in the QROM, then the salted game $G_S$ with salt space $[K]$ is $1/2+O(\eps(T) + (S T / K)^{1/3})$-secure against $(S, T)$  adversary in the AI-QROM. In other words, as long as $K \gg S T$, pre-processing does not help break the salted game $G_S$.  
\begin{theorem}
  Let $G$ be a \underline{decision} game with security $1/2 + \eps'(T)$. 
  Let $G_S$ be the salted decision game with parameter $K$.  
  Let $\eps_0$ be a lower bound on the advantage for an adversary for $G_S$ with $T_0$ queries and $S_0$ bits of advice. Then $G_S$'s security $\delta(S, T) = 1/2 + \eps(S, T)$ against $(S, T)$-adversaries in the \varul{AI-QROM} satisfies that
  \begin{align*}
      \eps(S, T) = O\left(\eps'(T + T_0) + \left(\frac{(S + S_0 + \log(1/\eps_0)) (T + T_0)}{K}\right)^{1/3} \right).
  \end{align*}
\end{theorem}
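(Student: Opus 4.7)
The plan is to combine the multi-instance salting bound from Lemma~\ref{lem:saltingmis} with the reduction from decision games with classical advice to multi-instance games (Theorem~\ref{thm:decision-game-classiacl-advice-reduction}). Because $G$ is a decision game with security $1/2+\eps'(T)$ in the QROM, the second part of Lemma~\ref{lem:saltingmis} yields that $\Gmi{g}_S$ has best multi-instance advantage at most
\[
\eps_{\mathrm{mi}}(g,T) \;=\; \eps'(T) \;+\; 3\sqrt{\frac{g(T+1)}{K}}.
\]

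Writing $\eps := \eps(S,T)$ for the AI-QROM best advantage against $G_S$ and applying Theorem~\ref{thm:decision-game-classiacl-advice-reduction} with the given lower bound $\eps_0$ on the $(S_0,T_0)$-advantage, the theorem produces the self-referential inequality
\[
\eps \;\le\; 4\,\eps_{\mathrm{mi}}\!\left(\tfrac{10\ln 2}{\eps}(S+S_0+\log(1/\eps_0)+2),\;T+T_0\right) \;\le\; 4\,\eps'(T+T_0) \;+\; O\!\left(\sqrt{\tfrac{A}{\eps\,K}}\right),
\]
where $A := (S+S_0+\log(1/\eps_0))(T+T_0+1)$. A standard case split finishes the argument: either $\eps \le 8\,\eps'(T+T_0)$, in which case the first summand dominates and the claim is immediate; or $4\,\eps'(T+T_0) \le \eps/2$, forcing $\eps/2 \le O(\sqrt{A/(\eps K)})$, which upon squaring and rearranging gives $\eps^3 \le O(A/K)$, i.e.\ $\eps = O\bigl((A/K)^{1/3}\bigr)$. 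Combining the two cases yields the advertised bound $\eps = O\bigl(\eps'(T+T_0) + (A/K)^{1/3}\bigr)$.

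Since both ingredients have already been established, there is essentially no conceptual obstacle; the main care needed is in handling the implicit dependence of $g$ on $\eps$ in the reduction, verifying that the resource budgets $S \ge S_0$ and $T \ge T_0$ hold (otherwise the bound $\eps_0$ need not apply, but in that regime one can simply use the trivial bound on $\eps$), and confirming that the $\log(1/\eps)$ factor hidden inside the choice of $g$ gets absorbed by the $O(\cdot)$ on the $1/3$-power term rather than leaking an extra $\eps$-dependence that would disturb the cubing step.
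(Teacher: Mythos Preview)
Your proposal is correct and is exactly the approach the paper intends: combine the decision-game branch \eqref{eq:saltingmis-decision} of Lemma~\ref{lem:saltingmis} with Theorem~\ref{thm:decision-game-classiacl-advice-reduction}, then resolve the self-referential inequality by the case split you describe. The paper itself does not spell out these steps.

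Your closing caveats are not real issues. First, the $S\ge S_0$, $T\ge T_0$ concern is already baked into Theorem~\ref{thm:decision-game-classiacl-advice-reduction}: its conclusion uses $S+S_0$ and $T+T_0$ precisely so that the $(S_0,T_0)$-adversary achieving advantage $\eps_0$ is always available, regardless of whether $S\ge S_0$ or $T\ge T_0$. Second, there is no hidden $\log(1/\eps)$ factor in the choice of $g$; the term appearing in Theorem~\ref{thm:decision-game-classiacl-advice-reduction} is $\log(1/\eps_0)$, which is a fixed input parameter, not the unknown $\eps$. So the cubing step goes through cleanly with no leakage.
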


Combining the second part of \Cref{lem:saltingmis} with  \Cref{thm:decision-game-quantum-advice-reduction}, we have the following theorem about salted decision security games in the QAI-QROM.
\begin{theorem}
  Let $G$ be a \underline{decision} game with security $1/2 + \eps'(T)$.
  Let $G_S$ be the salted decision game with parameter $K$.
  Let $\eps_0 = 1/\poly(N, M)$ be a non-negligible lower bound on the advantage for an adversary for $G_S$ with $T_0$ queries and $S_0$ bits of advice.
  Then $G_S$'s security $\delta(S, T) = 1/2 + \eps(S, T)$ against $(S, T)$-adversaries in the QAI-QROM satisfies that
  \begin{align*}
      \eps = \tilde O\left(\eps'\left(\frac{\tilde O((T + T_0) (S + S_0)^2)}{\eps^8}\right) + \sqrt{\frac{(S + S_0)^5 (T + T_0)}{K \eps^{17}}} \right),
  \end{align*}
  where $\tilde O$ absorbs $\poly(\log N, \log M, \log S)$ factors.
\end{theorem}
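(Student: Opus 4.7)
The plan is to combine the multi-instance salted bound of \Cref{lem:saltingmis} with the decision-game reduction against quantum advice from \Cref{thm:decision-game-quantum-advice-reduction}, exactly mirroring the proof strategy used for the preceding AI-QROM salted decision-game theorem but invoking the quantum-advice reduction in place of the classical-advice one. The key observation is that $G_S$ is itself a decision game whenever $G$ is, so \Cref{thm:decision-game-quantum-advice-reduction} applies verbatim to $G_S$.

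First I would note that by the second part of \Cref{lem:saltingmis}, the multi-instance salted game $\Gmi{g}_S$ has best advantage in the QROM bounded by
\[
\tilde\eps(g, T) \;=\; \eps'(T) + 3\sqrt{\frac{g(T+1)}{K}}.
\]
Next I would plug this bound into \Cref{thm:decision-game-quantum-advice-reduction}, applied to $G_S$ with the hypothesized parameters $S_0, T_0, \eps_0$. The reduction asserts that the best advantage $\eps = \eps(S, T)$ of any $(S, T)$-adversary against $G_S$ in the QAI-QROM satisfies
\[
\eps \;\le\; 8 \cdot \tilde\eps(g^*, T^*),
\]
where $g^* = \tilde O((S+S_0)^3/\eps^9)$ and $T^* = \tilde O((T+T_0)(S+S_0)^2/\eps^8)$, with $\tilde O$ hiding the usual $\poly(\log N, \log M, \log S)$ factors. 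Finally I would substitute: the first summand of $\tilde\eps$ contributes $\eps'(T^*)$, matching the first term of the theorem statement, while the second summand evaluates to
\[
3\sqrt{\frac{g^*(T^*+1)}{K}} \;=\; \tilde O\!\left(\sqrt{\frac{(S+S_0)^5(T+T_0)}{K\,\eps^{17}}}\right),
\]
matching the second term. Absorbing the leading constant $8$ into $\tilde O$ closes the bound.

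The main (and entirely minor) obstacle is bookkeeping: correctly tracking how the $\eps$-dependent factors in $g^*$ and $T^*$ multiply inside the square root, since the exponent $17$ in $\eps^{17}$ arises as $9 + 8$ from the two separate $\eps$-dependencies in the reduction. There is no genuine conceptual difficulty here because the salted decision game $G_S$ fits the hypotheses of \Cref{lem:saltingmis} and \Cref{thm:decision-game-quantum-advice-reduction} without modification; the entire argument is a mechanical composition of two previously established results.
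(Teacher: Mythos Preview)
Your proposal is correct and follows exactly the paper's approach: the paper simply states that the theorem is obtained by combining the second part of \Cref{lem:saltingmis} with \Cref{thm:decision-game-quantum-advice-reduction}, and your write-up carries out precisely that substitution with the correct bookkeeping (in particular the exponent $17 = 9 + 8$ in $\eps^{17}$ from multiplying $g^* \cdot T^*$).
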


\subsection{Multi-Instance Security of Salted Game}

In this section, we prove \Cref{lem:saltingmis}.
We first argue some general facts about the salting game, and then show that \eqref{eq:saltingmis-general} and \eqref{eq:saltingmis-decision} holds respectively, and therefore concluding the proof.

Let $\As$ be any $(g, T)$ adversary in the QROM for $\Gmi{g}_S$.
We again start by considering the following algorithm that perfectly simulates the original interaction:
\begin{enumerate}
  \item Instead of directly sampling the oracle distribution $H_S: [K] \times [N] \mapsto [M]$, we instead consider the equivalent sampling procedure where we instead sample a random function $H_s: [K] \mapsto \mathcal R_N$, where $\mathcal R_N$ denotes the randomness for sampling the original oracle $H: [N] \mapsto [M]$.
        Let $\mathcal S: \mathcal R_N \mapsto ([N] \mapsto [M])$ be the original oracle sampler, that is $\mathcal S$ on input some randomness, samples (according to the distribution) the entire truth table of $H: [N] \mapsto [M]$.
        $H_S([K], [N])$ has the exact same distribution as $\mathcal S(H_s([K]))([N])$ as long as the sampler $\mathcal S$ is correct.
        
        For random functions or random oracle model, $\mathcal R_N = [M^N]$, and any arbitrary bijection $[M^N] \leftrightarrow [M]^N$ is a correct sampler.
        
  \item By definition we have $\ver_S^{H_S}((s, r), \ans) = \ver^{H_S(s, \cdot)}(r, \ans)$, therefore, the verification algorithm will only ever query $H_S(s, \cdot) = \mathcal S(H_s(s))(\cdot)$.
        Using this observation, whenever we need to run $\ver_S^{H_S}((s, r), \ans)$, we instead first obtain the (entire truth table of) oracle $H' \gets \mathcal S(H_s(s))$, and run $b \gets \ver^{H'}(r, \ans)$ directly.
        
        Note that after this change, each round will only ever issue $(T + 1)$ queries to the oracle $H_s$.
        
  \item We replace the oracle access to $H_s$ with the compressed standard oracle operator $\csto$, which preserves the functionality by \Cref{cor:csto-simulate}.
\end{enumerate}

Conditioned on passing the first $(i-1)$ challenges, and the challenges being $(s_1, \ch_1)$, $\cdots$, $(s_{i-1}, \ch_{i-1})$ and the algorithm having passed all the challenges (we again only consider the $s$'s and $\ch$'s that have non-zero probability), we denote the overall state of the algorithm and compressed oracle to be $|\phi_0\rangle = \sum_{z, D} \alpha_{z, D} |z\rangle |D\rangle$.
By \Cref{lem:bounded-database}, since the total number of queries made by $\As$ and $\Cmi{g}$ is at most $(i-1)(T + 1)$, we know every possible $D$ with non-zero weight is of size at most $g (T+1)$.

Fixing a salt $s$ and challenger randomness $r$. Let $p_{s, r}$ be the probability that $\As$ wins conditioned on salt $s$ and randomness $r$.  Let $q_{s}$ be the probability that measuring the database register gives a database containing $s$.
In other words, define 
\begin{align*} 
    Q_{s} = \sum_{z, D : D(s) \ne \bot} |z, D\rangle \langle z, D|
\end{align*}
Then $q_{s} = |Q_{s} |\phi_0\rangle|^2$. 
Similar to \eqref{eq:unimportant-baddatabase}, we have, $\sum_{s\in[K]} q_{s} \leq g (T+1)$. 

Let $U_{s, r}$ be the unitary describing the rest of the computation for this round, that is including: $\As$'s oracle queries (which also includes the oracle expansion in superposition using $\mathcal S$) and local computation, our added query and $\mathcal S$ for computing $H'$ in change 2, and the final local computation $\ver^{H'}(r, \ans)$.
Let $P_{s, r}$ be the projective measurement that $\ver$ outputs 1. Using triangular inequality, we then have
\begin{align*}
    \sqrt{p_{s, r}} = & \left| P_{s, r} U_{s, r} \ket \phi_0 \right| \\
            \leq & \left| P_{s, r} U_{s, r} Q_s \ket \phi_0  \right| + \left| P_{s, r} U_{s, r} (I-Q_s) \ket \phi_0  \right| \\
            \leq & \left|Q_s \ket \phi_0  \right| + \left| P_{s, r} U_{s, r} (I-Q_s) \ket \phi_0  \right| \\
            = & \sqrt{q_s} + \left| P_{s, r} U_{s, r} (I-Q_s) \ket \phi_0   \right|.
\end{align*}    

\begin{proof}[Proof of \eqref{eq:saltingmis-general}]
    Taking expectation over the randomness of $s, r$, the probability of $\As$ succeeding in round $i$ is
    \begin{equation}
    \label{eq:saltingmis-general-uninteresting}
    \begin{aligned}
        \frac{1}{K \cdot R} \sum_{s, r} p_{s, r} &\leq \frac{1}{K \cdot R} \sum_{s, r} \left(\sqrt{q_s} + \left| P_{s, r} U_{s, r} (I-Q_s) \ket \phi_0   \right|\right)^2 \\
                    &\leq \frac{2}{K \cdot R} \sum_{s, r} \left(q_s + \left| P_{s, r} U_{s, r} (I-Q_s) \ket \phi_0   \right|^2  \right) \\
                    &\leq \frac{2 \cdot g (T + 1)}{K} +   \frac{2}{K \cdot R}\, \sum_{s, r}  \left| P_{s, r} U_{s, r} (I-Q_s) \ket \phi_0   \right|^2.
    \end{aligned}
    \end{equation}
    
    Next, we are going to show the second term is at most $2 \, \delta'(T)$. We are going to construct an algorithm $\Bs$ making at most $T$ queries which takes $\As$ as a subroutine and wins the game $G$ with probability at least $\frac{1}{K \cdot R}\, \sum_{s, r}  \left| P_{s, r} U_{s, r} (I-Q_s) \ket \phi_0   \right|^2$, which can not be greater than $\delta'(T)$ (the security of $G$).
    
    Consider the following algorithm $\Bs$ breaking $G = (C)$, $\Bs$ is going to simulate the multi-instance game for the first $(i-1)$ rounds and simulate the $i$-th round by making queries to $C$. Therefore, if $\As$ win the $i$-th round of $\Gmi{g}$, then $\Bs$ wins $G$.
    \begin{enumerate}
        \item $\Bs$ takes $\As$ as a subroutine. Let $H$ be the oracle in the game $G$. 
        \item $\Bs$ simulates the random oracle $H_S$ exactly the same as our simulator (in particular, using $\csto$ on $H_s$ and does not use $H$) and interacts with $\As$ up to round $(i - 1)$.%
        \item $\Bs$ checks that $\As$ passes the first $(i - 1)$ rounds.
        \item $\Bs$ samples a salt $s \gets [K]$ and it checks (or measures) whether $D(s) = \bot$.
        \item If any of the two checks fails, $\Bs$ clears all registers and start back at the beginning.
              
              If both checks pass, note that the overall state now conditioned on $(s_i, r_i, b_i)$'s is exactly $\frac{1}{\ell_s} (I - Q_s) \ket {\phi_0}$ where $\ell_s = |(I - Q_s) \ket {\phi_0}| \leq 1$ is the normalization factor.

        \item $\Bs$ simulates the $i$-th round using a modified oracle $H^*$, where $H^*$ is consistent with $H_S$, except on salt $s$ where the queries are instead forwarded to $\query^H$.
              In particular,
        \begin{itemize}
            \item 
            The challenge sent to $\As$ is instead $\samp^{H^*}_S(s, r) = \samp^{H^*(s, \cdot)}(r) =  \samp^H(r) = \ch$ 
            where $\ch$ is the challenge received from $C$. 
            
            \item When $\As$ makes a quantum oracle query $\query_S^{H^*}((s,r),(s',x))$, if $s \ne s'$, $\Bs$ can simulate the query using $\csto$ on $H_s$; if $s = s'$, 
            $\Bs$ makes an oracle query $(r, x)$ to $C$,  since $\query_S^{H^*}((s,r),(s',x)) = \query^H(r, x)$ if $s = s'$. 
        \end{itemize}
        Note that this simulation does not require knowledge of $r$ and only uses query access to $\query^H$, as desired.
        
        \item Finally, $\Bs$ output $\ket\ans$ from the overall state $\frac{1}{\ell_s} U_{s, r} (I - Q_s) \ket {\phi_0}$. 
    \end{enumerate}
    Since we condition on step 4 passing ($D(s) = \bot$), $H_S$ and $H^*$ are perfectly indistinguishable for $\As$.
    Therefore the succeeding probability of $\Bs$ is, 
    \begin{align*}
        \frac{1}{K \cdot R} \sum_{s, r} \left| \frac{1}{\ell_s} P_{s, r} U_{s, r} (I - Q_s) \ket {\phi_0} \right|^2 \geq  \frac{1}{K \cdot R} \sum_{s, r} \left|P_{s, r} U_{s, r} (I - Q_s) \ket {\phi_0} \right|^2.
    \end{align*}
    By security of the unsalted game, left hand side is bounded by $\delta'$, therefore,
    \begin{equation}
      \label{eq:missalting-simulator-bound}
      \frac{1}{K \cdot R} \sum_{s, r} \left|P_{s, r} U_{s, r} (I - Q_s) \ket {\phi_0} \right|^2 \leq \delta'.
    \end{equation}
    Putting this together with \eqref{eq:saltingmis-general-uninteresting}, we conclude that the algorithm wins $i$-th round with probability at most $2 g (T + 1)/K + 2 \delta'(T)$.
    Therefore, invoking \Cref{lem:conditioning2mis}, for any $(g, T)$ algorithm in the QROM, the succeeding probability is at most 
    $(2 g (T + 1)/K + 2 \delta'(T))^g$.
\end{proof}
    
\begin{proof}[Proof of \eqref{eq:saltingmis-decision}]
    Using \eqref{eq:missalting-simulator-bound}, the probability of $\As$ winning round $i$ conditioned on $(s_i, r_i, b_i)$'s is, 
    \begin{align*}
        \frac{1}{K \cdot R} \sum_{s, r} p_{s, r}
                    &\leq \frac{1}{K \cdot R} \sum_{s, r} \left(\sqrt{q_s} + \left| P_{s, r} U_{s, r} (I-Q_s) \ket {\phi_0}   \right|\right)^2 \\
                    &= \frac{1}{K \cdot R} \sum_{s, r} \left(q_s + \left| P_{s, r} U_{s, r} (I-Q_s) \ket \phi_0   \right|^2 + 2 \sqrt{q_s} \left| P_{s, r} U_{s, r} (I-Q_s) \ket{\phi_0}   \right|  \right) \\
                    &\leq \frac{g (T + 1)}{K} + \delta'(T) + \frac{2}{K \cdot R} \sum_{s, r} \sqrt{q_s} \cdot 1  \\ 
                    &\leq \frac{g (T + 1)}{K} + \delta'(T) + 2 \sqrt{\frac{g (T + 1)}{K}} \\
                    &\leq \delta'(T) + 3 \sqrt{\frac{g (T + 1)}{K}},
    \end{align*}
    The last inequality holds assuming $g (T + 1) / K \leq 1$, since otherwise \eqref{eq:saltingmis-decision} also trivially holds.
    Therefore, using \Cref{lem:conditioning2mis}, for any $(g, T)$ algorithm in the QROM, the succeeding probability is at most 
    $\left(\delta'(T) + 3 \sqrt{\frac{g (T + 1)}{K}}\right)^g$.
\end{proof}

\subsection{Hardness of Salted Collision-Resistant Hash Functions}

\begin{definition}[Collision-Resistant Hash Security Game]
    Security game $G_\crh = (C_\crh)$ is specified by three procedures $(\samp, \query, \ver)$, where:
    \begin{enumerate}
        \item $\samp^H(\bot) = \bot$.
        \item $\query^H(\bot, \cdot) = H(\cdot)$ provides query access to $H$.
        \item $\ver^H(\bot, (x_1, x_2))$ outputs $1$ if and only if $x_1 \neq x_2$ and $H(x_1) = H(x_2)$. 
    \end{enumerate}
\end{definition}
It is easy to see that $S = 2\log M$ suffices to break this security game with optimal probability.
However, without advice, the problem is hard even against quantum computations.

\begin{proposition}[{\cite[Corollary 2]{zhandry2019record}}]
    $G_\crh$ has security
    $\Theta(T^3/M)$ in QROM.
\end{proposition}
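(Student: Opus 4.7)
The plan is to prove the upper bound $O(T^3/M)$ using the compressed oracle framework (the lower bound is witnessed by the BHT algorithm, which finds a collision with probability $\Theta(T^3/M)$ using $T$ queries). By \Cref{cor:csto-simulate}, we can replace the random oracle with a compressed standard oracle $\csto$ without altering the output distribution of any adversary. Let $\As$ be a $T$-query adversary and let $\ket{\phi_t}$ denote the joint state of $\As$ and the oracle after the $t$-th query. Define the projector onto ``bad'' databases (those containing a collision) as
\[
    P_{\text{coll}} = \sum_{D\, :\, \exists\, x_1 \ne x_2,\; D(x_1) = D(x_2) \ne \bot} \ket D \bra D.
\]
By \Cref{lem:zhandry-lemma5} (applied with $k = 2$), the winning probability $\delta$ satisfies $\sqrt{\delta} \le \sqrt{p^{(T)}} + \sqrt{2/M}$, where $p^{(t)} := \ltwonorm{P_{\text{coll}} \ket{\phi_t}}^2$. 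Thus it suffices to prove $p^{(T)} \le O(T^3/M)$.

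The core of the proof is an inductive bound showing $\sqrt{p^{(t+1)}} \le \sqrt{p^{(t)}} + \sqrt{T/M}$. Following the pattern used in \Cref{lem:owf-single-stage}, I would decompose $\ket{\phi_{t+1}} = U_H \ket{\phi'_{t+1}}$, where $\ket{\phi'_{t+1}}$ is obtained by the adversary's local unitary (which commutes with $P_{\text{coll}}$). Splitting $\ket{\phi'_{t+1}} = P_{\text{coll}}\ket{\phi'_{t+1}} + (I - P_{\text{coll}})\ket{\phi'_{t+1}}$ and using the triangle inequality,
\[
    \ltwonorm{P_{\text{coll}} \ket{\phi_{t+1}}} \le \ltwonorm{P_{\text{coll}} U_H P_{\text{coll}} \ket{\phi'_{t+1}}} + \ltwonorm{P_{\text{coll}} U_H (I - P_{\text{coll}}) \ket{\phi'_{t+1}}}.
\]
The first term is bounded by $\sqrt{p^{(t)}}$ since $U_H$ is unitary. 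For the second term, I would expand the action of $\csto = \stddecomp \circ \csto' \circ \stddecomp$ on basis states $\ket{x, u, \aux}\ket D$ where $D$ is collision-free. By \Cref{lem:bounded-database}, such $D$ satisfies $|D| \le t < T$. The key computation is that when $\stddecomp_x$ initializes a fresh $y$ uniformly at random for a new input $x$, the resulting $D \cup (x, y)$ contains a collision only if $y$ already appears in $D$; in amplitude terms this contributes at most $\sqrt{|D|/M} \le \sqrt{T/M}$ per basis element. Collecting these bounds (exactly as in the OWF calculation) yields $\ltwonorm{P_{\text{coll}} U_H (I - P_{\text{coll}}) \ket{\phi'_{t+1}}} \le \sqrt{T/M}$.

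Iterating the recurrence starting from $p^{(0)} = 0$ gives $\sqrt{p^{(T)}} \le T \sqrt{T/M}$, so $p^{(T)} \le T^3/M$. Combining with the $\sqrt{2/M}$ additive term from \Cref{lem:zhandry-lemma5} yields $\delta \le O(T^3/M)$. The main obstacle will be the careful case analysis of $\stddecomp_x$ on the part of the $x$-register where $D(x) \ne \bot$ versus $D(x) = \bot$: when $D(x)$ is already set, $\stddecomp_x \circ \csto' \circ \stddecomp_x$ merely rotates within the Fourier basis and cannot create a new collision, while when $D(x) = \bot$ it writes a fresh uniform $y$. Showing that only the latter case contributes to the growth of $p^{(t)}$, and bounding that contribution by $\sqrt{|D|/M}$ uniformly over the support, is the delicate calculation that needs to be done carefully---but it parallels the $\omega_M^{uy}$ expansion already carried out in \Cref{lem:owf-single-stage} for OWF.
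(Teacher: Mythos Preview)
The paper does not give a proof of this proposition at all: it is stated with a direct citation to \cite[Corollary~2]{zhandry2019record} and used as a black box. So there is nothing in the paper to compare your argument against.

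What you have written is essentially a reconstruction of Zhandry's original proof from~\cite{zhandry2019record}, and the high-level structure (replace $H$ by $\csto$, track the amplitude on collision-containing databases via the projector $P_{\text{coll}}$, prove the one-step progress bound, and finish with \Cref{lem:zhandry-lemma5}) is correct. One caveat: your claim that ``when $D(x)\ne\bot$, $\stddecomp_x\circ\csto'\circ\stddecomp_x$ merely rotates within the Fourier basis and cannot create a new collision'' is not literally true as stated. The operator $\csto$ acting on a state supported on $D(x)\ne\bot$ can move amplitude onto databases whose $x$-th entry differs from the original one (and in particular can remove and re-insert the entry), so in principle a collision could appear in that branch as well. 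Zhandry's actual calculation handles both the $D(x)=\bot$ and $D(x)\ne\bot$ branches together and shows that the total amplitude leaking into $P_{\text{coll}}$ from a collision-free $D$ is $O(\sqrt{|D|/M})$; the extra contribution from the $D(x)\ne\bot$ branch is of the same order and is absorbed into the constant. Your final recursion and the $O(T^3/M)$ conclusion are unaffected, but the case split you propose is not quite the right way to organize that step.
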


It is easy to verify that this game is publicly-verifiable, as $\ver^H(\bot, (x_1, x_2)) = \ver^{\query^H}(\bot, (x_1, x_2))$.
Denote the salted collision-resistant hash game as $G_{\crh, S}$.
Combining this lemma with \Cref{thm:generalsalting-aiqrom}, and the fact that random guessing has winning probability $1/M$, we obtain the following corollary.

\begin{corollary}
    $G_{\crh,S}$ is $O(T^3/M + (S + \log M) T / K)$-secure in the AI-QROM, and $\tilde O(T^3/M + ST/K)^{1/4}$-secure in the QAI-QROM, where $\tilde O$ absorbs $\poly(\log N, \log M, \log K, \log S)$ factors.
\end{corollary}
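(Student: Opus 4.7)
The plan is to simply combine the generic salting theorems with the known QROM hardness of collision-finding. First I would verify that $G_\crh$ satisfies the hypotheses of the two salting theorems: it is publicly-verifiable with verification time $T_\ver = 2$ because $\ver^H(\bot,(x_1,x_2))$ only checks $x_1 \neq x_2$ and compares $H(x_1)$ to $H(x_2)$, and these two evaluations are available through $\query^H(\bot,\cdot) = H(\cdot)$. Second, the baseline QROM security is $\delta'(T) = \Theta(T^3/M)$ by the cited proposition. Third, the trivial random-guess success probability satisfies $\delta_0 = \Omega(1/M)$: the zero-query zero-advice adversary that outputs a uniformly random pair $(x_1,x_2)$ with $x_1 \neq x_2$ wins whenever the two images collide, which happens with probability $\Theta(1/M)$. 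In particular $\delta_0 = 1/\poly(N,M)$ as required by the QAI-QROM theorem.

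For the AI-QROM claim, I would plug $\delta'(T) = O(T^3/M)$ and $\log(1/\delta_0) = O(\log M)$ into \Cref{thm:generalsalting-aiqrom}, obtaining
\[
\delta(S,T) \;\leq\; 8\cdot\!\left(\frac{(S + \log(1/\delta_0) + 1)(T+1)}{K} + \delta'(T)\right) \;=\; O\!\left(\frac{T^3}{M} + \frac{(S+\log M)\,T}{K}\right),
\]
which is exactly the stated AI-QROM bound.

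For the QAI-QROM claim, I would apply the preceding publicly-verifiable salting theorem, giving
\[
\delta \;\leq\; \tilde O\!\left(\frac{ST}{K\delta} + \delta'\!\bigl(\tilde O(T)/\delta\bigr)\right) \;=\; \tilde O\!\left(\frac{ST}{K\delta} + \frac{T^3}{M\,\delta^3}\right).
\]
Multiplying through by $\delta^3$ and using $\delta \leq 1$ yields $\delta^4 \leq \tilde O\bigl(ST/K + T^3/M\bigr)$, hence $\delta \leq \tilde O(T^3/M + ST/K)^{1/4}$.

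The only nontrivial step in all of this is the algebraic rearrangement of the implicit inequality in the QAI-QROM reduction; once one writes it as a polynomial in $\delta$, the bound pops out by bounding $\delta^2 \leq 1$. I do not anticipate a real obstacle: all the heavy lifting has been done in the multi-instance salting lemma and the generic reductions, and the compressed-oracle bound $T^3/M$ for collision-finding is used as a black box. The main bookkeeping issue is tracking which $\poly(\log N, \log M, \log K, \log S)$ factors get absorbed into the $\tilde O$ on the QAI-QROM side, but this is exactly what the $\tilde O$ notation in the statement is intended to hide.
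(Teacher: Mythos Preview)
Your proposal is correct and follows exactly the approach the paper takes: the paper's entire proof is the sentence ``Combining this lemma with \Cref{thm:generalsalting-aiqrom}, and the fact that random guessing has winning probability $1/M$, we obtain the following corollary,'' and you have simply filled in the details, including the algebraic step $\delta^4 \le \tilde O(ST/K + T^3/M)$ for the QAI-QROM bound that the paper leaves implicit.
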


We now proceed to show the implication of this corollary in complexity theory for function problems.

\begin{definition}[Polynomial Weak Pigeonhole Principle]
    A binary relation $P(x, y)$ is in $\mathsf{PWPP}$ if and only if $P$ is Karp-reducible to $\mathsf{COLLISION}(\mathcal C, (x_1, x_2))$, where $\mathcal C$ is a circuit with $n$ inputs and $m < n$ outputs, and $(\mathcal C, (x_1, x_2))$ is in $\mathsf{COLLISION}$ if and only if $\mathcal C(x_1) = \mathcal C(x_2)$.
\end{definition}

\begin{definition}[Function Bounded-error Quantum Polynomial time with Polynomial-size Quantum advice]
    A binary relation $P(x, y)$ is in $\mathsf{FBQP/qpoly}$ if and only if there exists a polynomial-time uniform family of quantum circuits $\{C_n\}_{n \in \mathbb N}$ such that for any $x$, there exists some quantum state $\rho$ of dimension $2^{\poly(|x|)}$, such that $\Pr[P(x, C_{|x|}(\rho, x))] \ge 2/3$.
\end{definition}

\begin{theorem}
    \label{thm:oracle-separation}
    Relative to a random oracle $\mathcal O$, $\mathsf{PWPP}^\mathcal O \not\subseteq \mathsf{FBQP}^\mathcal O\mathsf{/qpoly}$.
\end{theorem}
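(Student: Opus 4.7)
The plan is to exhibit a concrete $\mathsf{PWPP}^{\mathcal O}$ relation tailored to the salted CRH game and then invoke the QAI-QROM bound from the corollary above to rule out any quantum polynomial-time algorithm with polynomial-size quantum advice. Concretely, for $\mathcal O: \{0,1\}^n \times \{0,1\}^{2n} \to \{0,1\}^{2n-1}$ (viewed as a salted compressing function with salt space $K = 2^n$, domain $N = 2^{2n}$, range $M = 2^{2n-1}$), define the relation $P(s, (x_1,x_2))$ to hold iff $x_1 \neq x_2$ and $\mathcal O(s, x_1) = \mathcal O(s, x_2)$. Membership $P \in \mathsf{PWPP}^{\mathcal O}$ is immediate: on input $s$ output the oracle-circuit $\mathcal C_s(x) := \mathcal O(s,x)$, which has $2n$ input bits and $2n-1$ output bits, together with the identity reduction on solutions; this is a polynomial-time Karp reduction to $\mathsf{COLLISION}$.

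Next I would assume for contradiction that $P \in \mathsf{FBQP}^{\mathcal O}/\mathsf{qpoly}$, via a uniform polynomial-time family $\{C_n\}$ and advice states $\{\rho_n(\mathcal O)\}$ of size $S = \poly(n)$, such that $\Pr[C_n(\rho_n(\mathcal O), s)\in P(s,\cdot)] \geq 2/3$ for every $s$. In particular, the success probability holds when $s$ is sampled uniformly, so $(C_n, \rho_n(\cdot))$ is precisely an $(S,T)$-adversary for $G_{\crh,S}$ with $S,T=\poly(n)$, $K=2^n$, $M=2^{2n-1}$. Our corollary bounds its winning probability as $\tilde O\bigl((T^3/M + ST/K)^{1/4}\bigr) = \negl(n)$, and in particular is below $1/3$ for large $n$.

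To convert this into a statement about \emph{random} oracles rather than an average-over-oracles bound, I would use Markov's inequality on the sup-over-advice: for any fixed circuit family $\{C_n\}$,
\[
\Pr_{\mathcal O}\!\Bigl[\,\sup_{\rho} \Pr_{s}[\text{$C_n(\rho,s)$ wins}] \geq 1/3\Bigr] \;\leq\; 3\,\delta(S,T) \;=\; \negl(n).
\]
Thus for each fixed $\{C_n\}$, the set of oracles on which $\{C_n\}$ decides $P$ with any polynomial-size quantum advice is a measure-zero event after summing over $n$ (Borel--Cantelli, since $\sum_n \negl(n) < \infty$). Taking a union over the countably many uniform polynomial-time quantum circuit families yields that for a measure-$1$ set of oracles $\mathcal O$, no such family decides $P$ with quantum advice, hence $P \notin \mathsf{FBQP}^{\mathcal O}/\mathsf{qpoly}$.

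The main subtlety, and the step I would be most careful about, is the quantification order: advice is allowed to depend on $\mathcal O$, so one must really appeal to the $\sup_{\mathcal A_1}$ built into $\delta(S,T)$ and then apply Markov before the union bound, rather than naively union-bounding over advice. Once this is handled, the rest is routine bookkeeping on parameters to ensure $T^3/M$ and $ST/K$ are both $2^{-\Omega(n)}$, which our choice $K=2^n$, $M=2^{2n-1}$ guarantees for any $S,T=\poly(n)$.
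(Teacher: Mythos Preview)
Your proposal is correct and follows essentially the same approach as the paper: exhibit the salted collision-finding relation as a $\mathsf{PWPP}^{\mathcal O}$ problem and invoke the QAI-QROM bound for $G_{\crh,S}$ to rule out any $\mathsf{FBQP}^{\mathcal O}/\mathsf{qpoly}$ solver (your parameters $(K,N,M)=(2^n,2^{2n},2^{2n-1})$ differ from the paper's $(2^n,2^{n+1},2^n)$, but either choice makes both $T^3/M$ and $ST/K$ negligible). You are in fact more explicit than the paper on the passage from an average-over-oracles security bound to a measure-one statement via Markov, Borel--Cantelli, and the countable union over uniform circuit families; the paper's proof handles this step tersely, so your care with the $\sup$-over-advice inside the Markov step is a welcome elaboration rather than a departure.
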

\begin{proof}
    For any integer $n$, let $\mathcal O_n: [2^n] \times [2^{n+1}] \mapsto [2^n]$ be a random oracle.
    Consider the relation $P^\mathcal O(k, (x_1, x_2))$ that corresponds to finding collision in $\mathcal O(k, \cdot)$ (which is a circuit consisting of one oracle gate) for any $k \in [2^n]$ is a problem in $\mathsf{PWPP}^\mathcal O$ as $2^{n+1} > 2^n$, however, by the corollary above, any quantum polynomial-time algorithm with polynomial-size advice can succeed with probability $\exp(-\Omega(n))$ over a random $k$, in particular, this implies that there exists some hard instance $k$ such that the algorithm cannot succeed with non-negligible probability, and therefore proving that this problem is not in $\mathsf{FBQP}^\mathcal O\mathsf{/qpoly}$.
\end{proof}

\subsection{Tightness of Salted Game Lower Bound}
\label{sec:salting-tight}

In this subsection, we establish the tightness of \Cref{thm:generalsalting-aiqrom}.
In particular, we show that the bound is asymptotically tight up to poly-logarithmic factors for the following simple game. 
\begin{definition}[Prediction Security Game]
    Let $H$ be a random oracle $[1] \to [M]$, i.e. $N = 1$.
    The security game $G_\gyb = (C_\gyb)$ is specified by three procedures $(\samp, \query, \ver)$, where: 
    \begin{enumerate}
        \item $\samp^H(\bot) = \bot$. 
        \item $\query^H(\bot, \cdot) = \bot$.
        \item $\ver^H(\bot, y)$ outputs $1$ if and only if $H(1) = y$. 
    \end{enumerate}
\end{definition}
For this game, the adversary is asked to predict $H(1)$, and therefore its security $\delta'(T) = 1/M$.
Intuitively, the salted prediction game, denoted as $G_{\gyb,S}$ is simply Yao's box over a large alphabet $[M]$.
Combining this with \Cref{thm:generalsalting-aiqrom}, we immediately obtain the following corollary.

\begin{corollary}
    $G_{\gyb,S}$ is $\delta(S, T) = O(1/M + (S + \log M) T / K)$-secure in the AI-QROM. 
\end{corollary}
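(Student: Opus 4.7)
The plan is to prove this corollary by direct application of \Cref{thm:generalsalting-aiqrom}. To do so, I first need to pin down two quantities: the QROM security $\delta'(T)$ of the unsalted prediction game $G_\gyb$, and the trivial random-guess winning probability $\delta_0$ against the salted game $G_{\gyb,S}$.

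First I would observe that the QROM security of $G_\gyb$ is exactly $\delta'(T) = 1/M$ for every $T$. This is immediate from the definition: the online oracle $\query^H(\bot, \cdot)$ returns $\bot$ on every input, so no number of oracle queries can give the adversary any information about $H(1)$, and her best strategy is to output a uniformly random $y \in [M]$, succeeding with probability $1/M$. Next, for the salted version $G_{\gyb,S}$, an adversary with no advice and no online queries sees only a uniformly random salt $s \in [K]$ and must submit a guess of $H_S(s, 1)$. Since the range is $[M]$ and she has no information whatsoever about this value, random guessing gives $\delta_0 = 1/M$, so $\log(1/\delta_0) = \log M$.

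With these two quantities in hand, I would plug into \Cref{thm:generalsalting-aiqrom} to conclude
\[
\delta(S, T) \leq 8 \cdot \left( \frac{(S + \log(1/\delta_0) + 1)(T+1)}{K} + \delta'(T) \right) = 8 \cdot \left( \frac{(S + \log M + 1)(T + 1)}{K} + \frac{1}{M} \right),
\]
which is $O\!\left(\frac{1}{M} + \frac{(S + \log M) T}{K}\right)$ as claimed.

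There is no real obstacle: the argument reduces to two elementary observations about $G_\gyb$ and $G_{\gyb,S}$ followed by substitution. The only point that warrants a sentence of care is confirming that both the unsalted QROM security and the zero-advice/zero-query baseline are $1/M$; the former because the $\query$ oracle is informationless, and the latter because without any interaction the adversary learns nothing about $H_S$ restricted to salt $s$. The substantive work of the subsection — exhibiting a matching attack that shows this corollary's bound cannot be improved (and hence that \Cref{thm:generalsalting-aiqrom} is tight up to poly-logarithmic factors) — would come afterward and is outside the scope of proving the corollary itself.
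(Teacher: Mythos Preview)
Your proposal is correct and follows exactly the paper's approach: the paper simply notes that $\delta'(T)=1/M$ and says ``combining this with \Cref{thm:generalsalting-aiqrom}, we immediately obtain the following corollary,'' and you have spelled out precisely that combination, including the identification of $\delta_0=1/M$.
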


\begin{proposition}
    For any $S, T$ such that $S (T+1) \leq K \log M$, there exists an $(S, T)$-adversary in the AI-QROM that breaks $G_{\gyb}$ with winning probability at least $\Omega(1/M + S (T+1) / (K  \log M))$. 
\end{proposition}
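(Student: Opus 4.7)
The plan is to give a classical attack (which is automatically a valid quantum strategy in AI-QROM) inspired by the standard classical attack on Yao's box: partition the salt space into chunks of size $T+1$ and store, for each ``covered'' chunk, the sum modulo $M$ of the $H$-values in that chunk. Each such sum is $\lceil \log M \rceil$ bits of advice; the $T$ online queries are used to recover the single unknown summand in the chunk containing the challenge.

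Concretely, set $G := \lfloor S/\lceil \log M\rceil \rfloor$, so $G$ residues modulo $M$ fit within the $S$-bit advice budget. The hypothesis $S(T+1) \leq K\log M$ guarantees $G(T+1) \leq K$, so we may partition $[K]$ into disjoint consecutive chunks $C_1, C_2, \ldots$ of size $T+1$ and designate $C_1, \ldots, C_G$ as covered. The preprocessing algorithm, having unbounded access to $H$, computes and stores the residues $\Sigma_i := \sum_{s \in C_i} H(s, 1) \bmod M$ for $i = 1, \ldots, G$. The online algorithm, on challenge salt $s$, locates the unique $i$ with $s \in C_i$; if $i \leq G$, it issues the $T$ (classical) queries $H(s', 1)$ for $s' \in C_i \setminus \{s\}$ (all permitted, since $s' \neq s$) and outputs $\Sigma_i - \sum_{s' \in C_i \setminus \{s\}} H(s', 1) \bmod M$, which is exactly $H(s, 1)$; if $i > G$, it outputs a uniformly random element of $[M]$.

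The analysis is immediate: with probability $G(T+1)/K$ the challenge falls in a covered chunk and the algorithm outputs $H(s, 1)$ with certainty, while on the complementary event the random guess succeeds with probability $1/M$. Thus the winning probability is at least
\[
\frac{G(T+1)}{K} + \left(1 - \frac{G(T+1)}{K}\right)\frac{1}{M} \;\geq\; \frac{1}{M} + \frac{G(T+1)}{2K} \;=\; \Omega\!\left(\frac{1}{M} + \frac{S(T+1)}{K\log M}\right),
\]
using $M \geq 2$ and $G = \Omega(S/\log M)$. There is no real obstacle: the attack is entirely classical, transfers verbatim to AI-QROM because classical queries are a special case of quantum queries, and the chunk-sum construction is just the natural $[M]$-valued generalization of the XOR-of-a-chunk attack for Yao's box.
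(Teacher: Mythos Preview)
Your proof is correct and follows essentially the same approach as the paper: partition the first $\Theta(S/\log M)$ blocks of size $T+1$ in $[K]$, store the mod-$M$ sum of $H$-values within each block as advice, and on a challenge salt in a covered block recover $H(s,1)$ by querying the other $T$ salts and subtracting. The paper's argument is slightly less careful about floors/ceilings, but the construction and analysis are the same.
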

\begin{proof}
    For the purpose of this proof, we will instead view $H_S: [K] \mapsto \Z/M\Z$ as $[K] \times [1] \simeq [K]$ and $[M] \simeq \Z/M\Z$.
    In the pre-processing stage, the algorithm stores the following $S/\log M$ elements in $\Z/M\Z$: for the $j$-th element ${\sf sum}_j$, 
    \begin{align*}
        {\sf sum}_j = \sum_{k=1}^{T+1} H(j (T+1) + k)  ,\quad\quad \forall j = 0, 1, \cdots, S/\log M - 1.
    \end{align*}
    
    In the online state, given a salt $s \in [K]$, if $s \not\in \{1, \cdots, S (T+1) / \log M\}$, the algorithm randomly guesses a $y$.
    Otherwise, let $j$ be the unique integer such that $j(T + 1) < s \le (j + 1)(T + 1)$. The algorithm queries $H(s')$ for all $s' \in \{j (T+1) + 1, \cdots, (j+1) (T+1)\}, s' \ne s$, and subtract them from ${\sf sum}_j$ and output the result, which succeeds with probability 1.
\end{proof}

\fi

\section*{Acknowledgements}

Kai-Min Chung is partially supported by the  Academia Sinica Career Development Award under Grant no. 23-17, and MOST QC project under Grant no. MOST 108-2627-E-002-001-.

Siyao Guo is supported by Shanghai Eastern Young Scholar Program.

Qipeng Liu is supported in part by NSF and DARPA. Opinions, findings and conclusions or recommendations expressed in this material are those of the author(s) and do not necessarily reflect the views of NSF or DARPA.

Luowen Qian is supported by DARPA under Agreement No. HR00112020023.

\fi

\printbibliography

\ifexabs\else
\ifieeecs\else
\appendix
\section{Extended Preliminaries for Compressed Oracle}
\label{appendix:compressed-oracle}
\begin{customlemma}{\ref{lem:bounded-database-phase}}
    \statelemboundeddatabasephase
\end{customlemma}
\begin{proof}
    We begin by assuming that there are no intermediate measurements.
    Note that if $\As$ makes some local quantum computation $U \otimes I$ on his on register $\ket z$, it does not affect the statement, as
    \begin{align*}
        (U \otimes I) \sum_{z, D: |D| \leq T} \alpha_{z, D} \ket z  \ket {\phi_D} = \sum_{D : |D| \leq T}  \left( U \sum_{z} \alpha_{z, D} \ket z\right) \ket {\phi_D}.
    \end{align*} 
    
    We prove the lemma by induction. The base case is before $\As$ makes any query.
    The overall state by definition is $\sum_{z} \alpha_{z} \ket z \otimes \frac{1}{M^{N/2}} \sum_{H}  \ket H$. 
    If we view the register $\ket H$ under the Fourier basis $\ket{\phi_D}$, there is only non-zero amplitude on $D_0$ which is an all-zero vector. 
    
    If $\As$ makes an oracle query to phase oracle, using the ``phase kickback'' trick,
    \begin{align*}
        & \equad \pho \sum_{\substack{z = (x, u, \aux) \\ D: |D| \leq T}} \alpha_{z, D} \ket {x, u, \aux}  \frac{1}{M^{N/2}} \sum_{H} \omega_M^{\langle D, H\rangle}  \ket H \\
        & =  \sum_{\substack{z = (x, u, \aux) \\ D: |D| \leq T}} \alpha_{z, D} \ket {x, u, \aux}  \frac{1}{M^{N/2}} \sum_{H} \omega_M^{\langle D, H\rangle + u H(x)}  \ket H  \\
        & =  \sum_{\substack{z = (x, u, \aux) \\ D: |D| \leq T}} \alpha_{z, D} \ket {x, u, \aux}  \ket{\phi_{D\oplus (x,u)}},
    \end{align*}
    where $D\oplus (x,u)$ is defined as a new vector that is almost identical to $D$ but the $x$-th entry is updated to $D(x) + u$. %
    Therefore, it is easy to see that $|D \oplus (x, u)| \leq |D| + 1$ for all $D$, which implies that the resulting state has only non-zero amplitude on $\ket z \ket {D'}$ whose $|D'| \leq T + 1$. 
    Using induction, we reach our conclusion for the case where there are no intermediate measurements.
    ~\\
    
    Now consider the case where $\As$ makes intermediate measurements. Without loss of generality, we can delay all the measurements to the last step. Before making these measurement, we know the overall state can be written as 
    \begin{align*}
        \sum_{z, b_1, \cdots, b_k, D: |D| \leq T} \alpha_{z, b_1\cdots, b_k, D} \ket z \ket {b_1, \cdots, b_k} \otimes \ket {\phi_D},
    \end{align*}
    where each $\ket {b_i}$ is going to be measured by $\As$. Therefore, it is easy to see that conditioned on any outcomes of the measurements (of non-zero probability), the resulting state has only non-zero amplitude on $\ket z \ket {D}$ whose $|D| \leq T$.
\end{proof}

\begin{customlemma}{\ref{lem:bounded-database}}
    \statelemboundeddatabase
\end{customlemma}
\begin{proof}

    First,  $\As$ applies a local unitary/measurement over its register. 
    Since the operation does not operate on $D$, it cannot change the probability distribution of the outcomes on measuring $D$.
    As a consequence, this operation cannot increase $\max_{\alpha_{z, D} > 0} |D|$. 
    
    Second,  $\As$ makes an oracle query on state $\ket x \ket u$.
    We claim that $\csto \ket x \ket u \ket D$ is a state with support on database $D'$ such that $|D'| \leq |D| + 1$, which is implicitly stated in the proof for Lemma 4, Hybrid 4 in~\cite{zhandry2019record}, and can also be directly derived from the definition of $\csto$ and $\stddecomp$.
    We reach our conclusion by using an induction if there are no intermediate measurements.
    
    Finally, $\As$ makes intermediate measurements. Without loss of generality, all these measurements are delayed to the last step. Before making these measurement, by the above discussion, the overall state can be written as 
    \begin{align*}
        \sum_{z, b_1, \cdots, b_k, D: |D| \leq T} \alpha_{z, b_1\cdots, b_k, D} \ket z \ket {b_1, \cdots, b_k} \otimes \ket {D}
    \end{align*}
    where each $\ket {b_i}$ is going to be measured by $\As$. Therefore, it is easy to see that conditioned on any outcomes of the measurements (of non-zero probability), the resulting state has only non-zero amplitude on $\ket z \ket {D}$ whose $|D| \leq T$.
\end{proof}

\begin{customlemma}{\ref{lem:zhandry-lemma5}}[{\cite[Lemma 5]{zhandry2019record}}]
    \statelemzhandry
\end{customlemma}
\begin{proof}
    Without conditioning on intermediate measurements, this is the exact statement of \cite[Lemma 5]{zhandry2019record}.
    Observe that their proof works as long as the final state of the algorithm and the oracle can be written as,
    \begin{align}
        \label{eq:lemma5-final-state}
        \sum_{\substack{\mathbf{x}, \mathbf{y}, z, D \\ \forall i, D(x_i) = \bot}} \sum_{I \subseteq [k]} \sum_{\mathbf{r}_I\in \{1, ..., M-1\}^{|I|} }  \alpha_{\mathbf{x}, \mathbf{y}, z, D, \mathbf{r}_I}\, \frac{1}{\sqrt{M^{|I|}}} \sum_{\mathbf{y}'_I \in (\Z/M\Z)^{|I|}} \omega_M^{\langle \mathbf{y}'_I, \mathbf{r}_I\rangle} \left| \mathbf{x}, \mathbf{y}, z, D \cup (\mathbf{x}_I, \mathbf{y}'_I) \right\rangle,
    \end{align}
    where $\mathbf x \in [N]^k, \mathbf y \in (\Z/M\Z)^k$.
    That is, if $D(x) \ne \bot$, then the superposition of $D(x)$ decomposed in the Fourier basis does not contain a uniform superposition (corresponding to $\mathbf{r}_I \in \{1, ..., M-1\}^{|I|}$ in the formula, or in other words, no $0$ entry).  
    
    Zhandry showed the final state satisfies the condition above. It is indeed easy to prove.
    Recall that $\csto = \stddecomp \circ \csto' \circ \stddecomp$.
    As the second $\stddecomp$ will map any uniform $\sum_y \ket y$ to $\ket \bot$, the decomposition in the Fourier basis does not contain uniform superposition. 

    Now consider the case of having intermediate measurements.
    We claim that the state has the above form, conditioned on arbitrary outcomes (with non-zero probability) of $\As$'s intermediate measurements.
    Without loss of generality, we defer all measurements until the end of the computation.
    Before making these measurements, we know the overall state can be written as
    \begin{align*}
        \sum_{\substack{\mathbf{x}, \mathbf{y}, z, \omega, D \\ \forall i, D(x_i) = \bot}} \sum_{I \subseteq [k]} \sum_{\mathbf{r}_I\in \{1, ..., M-1\}^{|I|} }  \alpha_{\mathbf{x}, \mathbf{y}, z, \omega, D, \mathbf{r}_I}\, \frac{1}{\sqrt{M^{|I|}}} \sum_{\mathbf{y}'_I \in (\Z/M\Z)^{|I|}} \omega_M^{\langle \mathbf{y}'_I, \mathbf{r}_I\rangle} \left| \mathbf{x}, \mathbf{y}, z, \omega, D \cup (\mathbf{x}_I, \mathbf{y}'_I) \right\rangle ,
    \end{align*}
    where $\omega \in \Omega$ is the register to be measured. It is easy to see that after conditioning on $\omega \in E$ where $E \subseteq \Omega$ being an arbitrary event, the final state can still be written as \eqref{eq:lemma5-final-state}, as conditioning simply zeroes out some $\alpha$'s for every $\omega \not\in E$ and re-normalizes.
    The rest of the proof simply follows Zhandry's proof for \cite[Lemma 5]{zhandry2019record}.
\end{proof}

\fi
\fi

\end{document}